\newtheorem{claim}{Claim}
\newtheorem{claimapp}{Claim}
\newcommand{\Iff}{\Leftrightarrow}
\newcommand{\id}{ \textrm{id}} 
\newcommand{\rank}{ \textrm{rank}}
\newcommand{\Eval}{ \textrm{Eval}}
\newcommand{\upset}[1]{\uparrow \!\!{#1}}
\newcommand{\nat}{\mathbb{N}}
\newcommand{\ck}{{\bf K }}
\newcommand{\cdomain}{{\bf Dom }}
\newcommand{\ccbdomain}{{\bf \omega Dom }}
\newcommand{\cdom}{{\bf Dom^{e} }}
\newcommand{\ccbdom}{{\bf \omega Dom^{e} }}
\newcommand{\cdomainwp}{{\bf PER(Dom) } }
\newcommand{\cclcdomwp}{{\bf clcDP} }
\newcommand{\cqcb}{{\bf QCB_{0} }}
\newcommand{\funcf}{{\bf F }}
\newcommand{\funcg}{{\bf G }}
\begin{document}

\def\doi{6 (3:9) 2010}
\lmcsheading%
{\doi}
{1--47}
{}
{}
{Oct.~22, 2009}
{Aug.~25, 2010}
{}   

\title[Strictly positive induction]{Domain representable spaces defined by strictly positive induction}

\author[P.~K.~Køber]{Petter Kristian Køber}
\address{University of Oslo, Department of Mathematics, PO Box 1053, Blindern, N-0316 Oslo, Norway}
\email{petterk@math.uio.no}

\keywords{Domain Theory, Computable Analysis, Domain representations,
$qcb_0$ spaces, Inductive definitions, Least fixed points}
\subjclass{F.1.1, F.4.1}



\begin{abstract}
Recursive domain equations have natural solutions.
In particular there are domains defined by strictly positive induction.
The class of countably based domains gives a computability theory for possibly non-countably based  topological spaces. 
A  $ qcb_{0} $ space is  a  topological space characterized by its strong representability over domains.

In this paper, we study strictly positive inductive definitions for $ qcb_{0} $ spaces by  means of domain representations, i.e.
we show that there exists a canonical fixed point of every strictly positive operation on $qcb_{0} $ spaces.
\end{abstract}

\maketitle

\section*{Introduction}

\noindent The domains we consider in this paper are consistently complete, algebraic cpos, so called Scott domains \cite{Sco_72}.
The initial motivation for these domains was to provide a denotational semantic for the $\lambda$-calculus.

An important aspect of domain theory is the existence of solutions of recursive domain equations, which are equalities
 between terms built from certain basic operations and a finite list of parameters.
Category theory is applied to solve  recursive domain equations, and the category $\cdom$ used has domains as objects
and embedding-projection pairs as morphisms. The canonical solution is a least fixed point of some functor over
$\cdom$ \cite{SP_82} and  occurs as the limit of an inductively defined  $\omega$-chain of domains.

Some recursive domain equations can be solved iteratively within set theory \cite{SLG_94}.
These solutions are referred to as positive inductive definitions. We will focus on 
 definitions  by {\em strictly} positive induction, where all function spaces involved  have fixed input domains.
This is a  natural  restriction from a computer science point of view, which is also technically beneficial.
A fundamental example is the domain $ D = A + [ B \rightarrow D ] $, with $ A $ and $ B$ some parameters.

A domain representation of a topological space $X$ is a triple $ (D, D^R, \delta) $, where $D$ is a domain, $D^R $ is a 
subspace of the domain and $\delta: D^R \to X $ is a continuous representation map. Countably based domains carry a natural notion of 
computability. Via domain representations we get a computability theory for a wide range of topological spaces \cite{Bla_00,ST_07}.

The topological $ T_{0} $ quotients of countably based spaces, the {\em $qcb_{0} $ spaces}, form an interesting class of topological 
spaces \cite{BSS_07,ELS_04}. The category $\cqcb$ of $ qcb_{0} $ spaces with continuous functions is Cartesian closed,
so it admits finite products as well as an exponential. 
$Qcb_{0} $  spaces have been characterised as the topological spaces with an admissible quotient TTE representation \cite{Sch_03}.
This result has been generalised to admissible quotient domain representations \cite{Ham_05}.

A quotient domain representation might as well be considered as  a domain with a partial equivalence relation.
The class of domains with partial equivalence relations is also of great interest in its own right \cite{BBS_04,MS_02}. It is  strongly
related to domains with totality \cite{Ber_93}.

In this paper we show that we can define $qcb_{0} $ spaces  by strictly positive induction. The fundamental example is
$ X = A \uplus [ B \Rightarrow^s X ] $, with $ A$ and $B$ some parameters, $\cdot \uplus \cdot $ the disjoint union and
$ [ \cdot \Rightarrow^s \cdot ] $ the exponential of $\cqcb$.

The category $\cqcb$ is known to have countable inductive limits \cite{ELS_04,BSS_07}. Still, our result is highly non-trivial,
 as it is apparent that  a transfinite  and  possibly uncountable inductive construction is required.

Topological domains are $qcb_0$ spaces with a domain-like structure. Many important results for domains have already been generalised to
topological domains, including solutions of recursive domain equations \cite{Bat_06,Bat_08}.
Our aim here, however, is to show that in the simple case of a strictly positive induction, such solutions exist for all $qcb_0$ spaces.
If we restricted ourselves to topological domains, we would also throw away most spaces of interest in computable analysis.
For this purpose, it is essential that we use some kind of representation of the $qcb_0$ spaces.
We will choose to work with domain representations.
Our result could be regarded as a further justification for the utility of domain representations.

In brief, we proceed as follows: 
We first define a category $\cclcdomwp$ of certain well-structured partial equivalence relations on domains. 
This category will be designed to fulfill the following requirements:
\begin{enumerate}[(1)]
\item It contains representations of all $qcb_0$ spaces.
\item Strictly positive operations are functorial.
\item It admits transfinite inductive limits.
\end{enumerate}
It is then possible to construct least fixed points of all strictly positive functors.
We then show that this least fixed point construction can be performed with dense partial equivalence relations on domains
and with a dense least fixed point as the outcome.
We also prove that  this dense least fixed point induces an admissible domain representation if all the parameters involved
are admissible, and this is the main technical difficulty of the paper.

On the other hand, if we have a strictly positive operation $\Gamma$ on $qcb_0$ spaces, we can represent it by a strictly
 positive endofunctor $\funcf$ over $\cclcdomwp$ in a standard way. The dense  least fixed point of $\funcf$ gives us a 
fixed point of $\Gamma$ which  is independent of the actual representing functor.
This is a   $qcb_{0} $ space defined by strictly positive induction.

In section~\ref{s_backg}, we give a short introduction to  domain theory, strictly positive induction,  $qcb_{0} $ spaces and 
admissible domain  representations.
In section~\ref{s_dwp}, we  study domains with partial equivalence relations, and in particular the category $\cclcdomwp$ and
its least fixed point construction. 
In section~\ref{s_qcb} we apply the results from  section~\ref{s_dwp} to prove our main result, theorem~\ref{t_fixedpoint}, 
that a strictly positive operation on  $qcb_{0} $ spaces has a canonical fixed point.

Some of the results have very long and technical proofs. 
For the sake of readability, the proof of all claims made in these proofs are moved to appendix~\ref{app.proofs}. 
An overview of the notation used in different proofs can be found in appendix~\ref{app.notation}.

\section{Background}  \label{s_backg}

\noindent We review some of the basic theory. Our only intention is to present the notation we will use.
For an introduction to domain theory, see  \cite{AJ_94,GHKLMS,SLG_94}, and in particular \cite{SLG_94}
for background on inductive definitions and recursive domain equations.
For more on $qcb_0$ spaces, see \cite{BSS_07,ELS_04,Sch_03}. For details on the theory of domain representations,
we refer to \cite{Bla_00,ST_07}. The listed results concerning admissible domain representations are from \cite{Ham_05}.

\subsection{Domain theory}

A {\em cpo} is a partial order with a least element, $ D = ( D; \sqsubseteq , \bot ) $, 
for which every directed subset $ \Delta \subseteq D $ has a least upper bound $ \bigsqcup \Delta \in D$.
A  $ p \in D $ is {\em compact} if whenever $ \Delta \subseteq D $ is directed and $ p \sqsubseteq \bigsqcup \Delta $, there exists $ d \in \Delta $ with $p \sqsubseteq d $. 
We  denote by $ D_{c} $ the set of compact elements. 
We let $ \textrm{approx} (x) := \{ p \in D_{c} : p \sqsubseteq x \} $,  the set of compact approximations of $ x \in D$.
A cpo is {\em algebraic} if, for every $ x \in  D$, the set  $ \textrm{approx} (x) $ is directed with $ x = \bigsqcup  \textrm{approx} (x) $.

A subset of a partial order is {\em consistent} if it has an upper bound. 
A cpo $ D $ is {\em consistently complete} if every consistent $ A \subseteq D $ has a least upper bound $ \bigsqcup A $.
For a consistent pair of elements $ x,y  \in D $, we  usually denote  the least upper bound by $ x \sqcup y $.
We let $ \upset{x} : = \{ y \in D : x \sqsubseteq y \} $.

A {\em domain} is a consistently complete, algebraic cpo.
We will consider a domain $D$ as a topological space with the {\em Scott topology}.
A base for this topology is given by $ \{ \upset{p} : p \in D_{c} \}  $ and a domain is {\em separable} or
{\em countably based} if $ D_{c} $ is countable.

A function $ f : D \rightarrow E $ is {\em continuous} if firstly it is monotone, 
i.e. $ x \sqsubseteq y \Rightarrow  f(x) \sqsubseteq f(y) $ for all $ x,y \in D$,
and secondly $ f  ( \bigsqcup \Delta ) = \bigsqcup f [ \Delta] $ whenever $ \Delta \subseteq D $ is directed.
Every monotone function $ f: D_{c} \rightarrow E $   has a unique extension to a continuous function $ f : D \rightarrow E $ with $ f(x) = \bigsqcup  f [  \textrm{approx} (x) ]  $ for every $ x \in D $.
In fact, every  continuous $ f : D \rightarrow E $ can be recovered from its restriction to $ D_{c}$ in this way.
We let $ \cdomain $ be the category of domains with continuous functions as morphisms, and 
we denote by $ \ccbdomain $ its full subcategory of countably based domains.

An {\em embedding-projection pair} $  (f, g ) : D \rightarrow E $ is a pair of continuous functions 
$ f : D \rightarrow E $, the embedding,  and $ g : E \rightarrow D  $, the projection, such that 
$  g (f (x)) = x  $ for every $  x \in D $ and $ f ( g  (y)) \sqsubseteq y $ for every $ y \in   E $.
Each one of these functions  is uniquely determined by the other.
Usually we will refer simply to the embedding $ f : D \rightarrow E $ and denote the associated projection by $ f^{-}$.
We let $ \cdom $ be the category of domains with embeddings as morphisms and $ \ccbdom $ its full subcategory of countably based domains.

More generally, an {\em adjunction pair} $ (f,g) : D \rightarrow E $ is a pair of continuous functions $ f: D \rightarrow E  $, the lower adjoint, 
and $ g: E \rightarrow D $, the upper adjoint, such that
$   x  \sqsubseteq  g (f (x)) $ for every $  x \in D $ and $ f ( g  (y)) \sqsubseteq y $ for every $ y \in   E $.

If $ D$ and $ E$ are cpos, we let {\em the disjoint sum} $ D + E $ be
\[ \{ (0, x) : x \in D \} \cup \{ ( 1, y) : y \in E \} \cup \{ \bot \} ,\]
with the separated partial order, i.e. $ (i,x) \sqsubseteq (j,y) $ if
and only if $ i = j $ and $ x \sqsubseteq y $, and with $ \bot \notin
D \cup E $ as a least element.  The disjoint sum of two domains is
itself a domain.  If $\{ D_i\}_{i \in I} $ is any finite, non-empty
set of domains, we define the disjoint sum $ \biguplus_{ i \in I} D_i
$ in the same way. Observe that the disjoint sum of just one domain
$D$ is the {\em lifting} $D_\bot$ of $D$, that is $ D$ with a new
least element added.  The {\em strict sum} $ D \oplus E $ of cpos (or
domains) $D$ and $E$ is the disjoint sum $D +E$ with $(0, \bot_D ) $
and $(1, \bot_E) $ removed.

If $D$ and $E$ are domains, we let the {\em Cartesian product} $ D
\times E $ be the domain obtained from the Cartesian product of sets
and the product order.  The {\em strict product} $ D \otimes E $ of
$D$ and $E $ is the Cartesian product $ D \times E $ with all pairs
$(x,y) $ such that either $x= \bot_D$ or $y= \bot_E $ removed.

If $ p \in D_{c} , q \in E_{c} $,  the {\em step function} $ [p ; q] :D \rightarrow E $  is the continuous function defined by
\[  [p;q] (x) =
\left\{
\begin{array}{ll}
q \mbox{ if } p \sqsubseteq x  \\
\bot \mbox{ otherwise } \\
\end{array}
\right. \]
The {\em function space} $ [D \rightarrow E] $, the set of continuous functions with the point-wise order,
is a domain with least upper bounds of finite consistent sets of step functions as compact elements. So, an
element of $[D\to E]_c$ is written as $ \bigsqcup_{ j \in J } [ p^{j} ; q^{j} ]$, with $J$ a finite set and 
$p^j\in D_c$ and $q^j\in E_c$ for every $j\in J$.
We will occasionally refer to $[D\to E]$ as the {\em exponentiation} of $E$ by $D$.

A continuous function $ f :D \to E $ is {\em strict} if $ f( \bot_D) = \bot_E$. The {\em strict function space} $ [D \to_\bot E] $ is the domain of strict continuous functions with point-wise order.

Each of the above-mentioned operations on domains give countably based domains from countably based domains $D$ and $E$.

If $ f : D \rightarrow D' $ and $ g: E \rightarrow E' $ are domain functions,
there are natural definitions of functions
$ (f + g) : ( D + E ) \rightarrow (D'+ E') $ and
$ (f \times  g) : ( D \times E ) \rightarrow (D' \times  E') $.
If $ f$ and $g $ are continuous (resp. embeddings), then $ f + g $ and $ f \times g $ are continuous (resp. embeddings) as well.
If $ f : D' \rightarrow D $ (note that $ D $ and $D'$ have changed positions) 
and $ g: E \rightarrow E' $ are continuous functions, then the function
$ ( f \rightarrow g) : [ D \rightarrow E ] \rightarrow [ D' \rightarrow E' ] $ defined by
$ ( f \rightarrow g)  (x) =  g \circ x \circ f $ is continuous.
If $ f : D \rightarrow D' $ and $ g: E \rightarrow E' $ are embeddings, then
$  (f^{-} \rightarrow g ) : [ D \rightarrow E ] \rightarrow [ D' \rightarrow E' ] $  is an embedding with $ f \rightarrow g^{-}   $ as associated projection.

\subsection{Strictly positive induction}

An operation $ \Gamma $ on domains is {\em strictly positive} if it is constructed from a finite list of fixed domains using 
the {\em basic} operations identity, disjoint sum, Cartesian product and exponentiation by a fixed domain.
We will refer to fixed domains occurring as exponents, i.e. on the left hand side of a function space, 
as the {\em non-positive parameters} of $ \Gamma $ and the remaining parameters in $ \Gamma $ as the {\em positive} ones.
In our fundamental example  $ \Gamma (X) = A + [ B \rightarrow X ] $, $A$ is the positive parameter and $B$ is the non-positive
parameter.

We have seen that the operations $ \cdot + \cdot $, $ \cdot \times  \cdot $ and $[ \cdot \rightarrow \cdot] $ have strict counterparts
$ \cdot \oplus \cdot $, $ \cdot \otimes \cdot  $ and $ [\cdot  \rightarrow_{\bot} \cdot ] $. 
It may seem natural to include these as well as the lifting operation $ \cdot_{ \bot  } $
as basic operations above.
However, our main concern here is the theory of domain representations and not domain theory itself, 
and a domain representation  $  (D, D^{R} , \delta ) $ can always be chosen such that $ \bot_{D} \notin D^{R} $.
Therefore, the lifting operation and the strict sum and product can safely be omitted from our discussion, since they differ 
from  the identity operation and the respective non-strict operations on $ D \setminus D^{R} $ only.
When we go from the function space to the strict function space, we throw away many total elements, since total continuous functions by no means have to be strict.
However, under the assumption that least elements are not total, the represented space remains unchanged. This explains why even the strict function space is irrelevant for us here and therefore ignored.

If $ \ck $ is a category, an operation $ \Gamma : \textrm{Obj} ( \ck ) \rightarrow \textrm{Obj} (\ck ) $ is 
{\em functorial} in $ \ck $ if there exists a functor  $ \funcf : \ck \rightarrow \ck $ extending $ \Gamma $, 
that is $ \funcf (X ) = \Gamma (X  )  $ for every $ X    \in  \textrm{Obj} (\ck ) $.
In particular, it is easily verified that strictly positive operations on domains are functorial in  $ \cdom $.
A functor $ \funcf : \cdom \rightarrow \cdom $ is  said to be {\em strictly positive } if it is the functorial 
extension of a strictly positive operation on domains.
This  generalises to multivariate operations and multifunctors.

\begin{defi}
Let $ \ck $ be a category and let $ \funcf : \ck \rightarrow \ck $ be a functor.
A {\em fixed point} of $ \funcf $ is an $ X \in  \textrm{Obj} (\ck ) $ which is isomorphic to $ \funcf (X) $ in $ \ck$.
An {\em $ \funcf$-algebra} is a pair $ ( X , f ) $, where $ X$ is an object of $ \ck $ and
$ f : \funcf  (X) \rightarrow X $  is a morphism.
If $ (X, f) $ and $ (Y,g) $ are $ \funcf$-algebras, an {\em $ \funcf$-morphism} from 
$ (X, f) $ to $ (Y,g) $ is a morphism $ h: X \rightarrow Y $ such that
$ h \circ f = g \circ \funcf (h) $.
An  $ \funcf$-algebra $ (X, f) $ is {\em initial} if for every other $ \funcf$-algebra $ (Y,g) $,
there exists a unique $ \funcf$-morphism $h$ 
from $ (X, f) $ to $ (Y,g) $.

Finally, $ X$ is a {\em least fixed point} of $ \funcf $ if there exists some $f$ such that  $ (X, f) $ is an  initial 
$ \funcf$-algebra.
\end{defi}

Note that initial $\funcf$-algebras correspond to initial objects in the category of $ \funcf$-algebras and $\funcf$-morphisms (for a fixed $ \funcf$), thus initial $\funcf$-algebras are unique up to isomorphism.
It can also be proved that if $ (X, f) $ is an initial $ \funcf$-algebra, 
then $ (\funcf  (X) , \funcf (f) ) $ is an initial $ \funcf$-algebra:
If $ (Y,g) $ is another $ \funcf$-algebra and $ h: X \rightarrow Y $ is the unique $ \funcf$-morphism 
from $ (X, f) $ to $ (Y,g) $,
then $ h \circ f $ is the unique $ \funcf$-morphism 
from $ ( \funcf (X) , \funcf ( f) ) $ to $ (Y,g) $.
As a consequence, $f$ is an isomorphism in $ \ck $ and the least fixed point  is indeed a fixed point.
Moreover, a least fixed point is, when it exists, unique up to isomorphism.

If $ \funcf $ is an endofunctor over $ \cdom $, a least fixed point of $ \funcf $ is a fixed point $D$ of $ \funcf $ with  a natural and unique embedding $ h: D \rightarrow E $ into every other fixed point $ E $ of $ \funcf $.
The categorical presentation using $ \funcf $-algebras makes it possible to generalise this concept to other categories, and at the same time it guarantees that a least fixed point is unique up to isomorphism.

\begin{defi}
Let $ \ck $ be a category and let $ (I, \leq ) $ be a directed partial order. 
A {\em directed system} over $I$  in $ \ck $ consists of a family $ \{ X_{i}  \}_{ i \in I } $  of objects from $ \ck $ and
a family of morphisms $ f_{i,j} : X_{i} \to X_{j} $ for all $i \leq j \in I $ satisfying
\begin{enumerate}[$\bullet$]
\item{} $ f_{i,i} = \id_{ X_{i}} $ for every $ i \in I $; and
\item{} $ f_{i,k} = f_{j,k} \circ f_{i,j} $ for all $ i,j,k \in I $ with $ i \leq j \leq k $.
\end{enumerate}
An {\em inductive limit} over this directed system consists of an $ X \in \textrm{Obj} ( \ck ) $ and morphisms
$ f_{i} : X_{i} \rightarrow X  $ for all $ i \in I  $ such that
$ f_{i}  = f_{j} \circ f_{i,j } $ whenever $i \leq j $.
It is universal in the sense that for every other such pair $ ( Y,  \{ g_{i}  \}_{ i \in I } ) $,
there exists a unique {\em mediating morphism} $ g_{I} : X \rightarrow Y $ 
such that $ g_{I} \circ f_{i} = g_{i} $ for every $ i \in I $.
In categorical terms, $ ( X,  \{ f_{i}  \}_{ i \in I } ) $  is a co-limiting cocone in $ \ck $.
\end{defi}

If  $ ( \{ D_{i}  \}_{ i \in I }, \{ f_{i,j} \}_{ i \leq j \in I })  $ is a directed system in $ \cdom $,
there exists an inductive limit $ ( D,  \{ f_{i}  \}_{ i \in I } ) $, defined as follows:
Let $ D = (D, \sqsubseteq ) $ be the domain with
\[ D = \{ x \in \prod_{i \in I} D_{i} : \forall i,j \in I  ( i\leq j \to f^{-}_{i,j} ( x_{j} ) = x_{i}) \} \]
and $ x \sqsubseteq y \Iff  \forall i \in I ( x_{i} \sqsubseteq_{D_{i}}  y_{i} ) $
Let $ f_{i} : D_{i} \to D $ be the embedding  such that $ f^{-}_{i} (x) = x_{i} $ for all $ x \in D $.
It is worth noting that $ f_{i} (x)_{j}  = f_{i,j} (x) $ whenever $ i \leq j $ and $ x \in D_{i} $.

A directed system over a limit ordinal $ \gamma $ is also called an $ \gamma $-chain and an endofunctor is {\em $ \gamma $-continuous} if it preserves inductive limits of $ \gamma $-chains.  
A classical result from domain theory says that every $\omega $-continuous functor $ \funcf : \cdom \rightarrow \cdom $  has a least fixed point, see
\cite{SLG_94}.
For a sketch of the proof,  
consider the $ \omega $-chain $ ( \{ D_{n} \}_{ n \in \omega } , \{ f_{m,n} \}_{ m \leq n \in \omega } ) $  defined inductively as follows:
\begin{enumerate}[$\bullet$]
\item
Let $ D_{0} = \{ \bot  \} $ and $ D_{n+1} = \funcf ( D_{n} ) $.
\item
Let $ f_{0,n} $ be the unique embedding from $ D_{0} $ into $ D_{n} $ and let
\[ f_{ m+1 , n+1} := \funcf ( f_{ m , n } ) : D_{m+1} \to D_{n+1} .\]
\end{enumerate}
Let $ ( D_\omega  , \{ f_{ n} \}_{ n \in \omega  }  ) $ be the inductive limit of this chain.
Then there is an isomorphism $ f : D_\omega \rightarrow \funcf (D_\omega) $, and $ D_\omega$ is a least fixed point of $ \funcf $.
Since this least fixed point is obtained as a countable limit, the result holds even for $\omega $-continuous endofunctors 
over $  \ccbdom  $.

All strictly positive functors over $\cdom$ are $ \omega $-continuous, see \cite{SLG_94}. This means that if $ \Gamma $ is strictly 
positive, we obtain a least solution to the recursive domain equation $ X = \Gamma (X ) $ by a least fixed point construction.
This is of course not set-theoretical equality, but equality of domains up to isomorphism.

\subsection{The category of $qcb_0$ spaces}

If $ X $ and $ Y $ are topological spaces, we let $ X \uplus Y $ be the disjoint union of $ X $ and $ Y$,
i.e. the set $ \{ (0,x ) : x \in X \} \cup \{ (1,y) : y \in Y \} $ provided with the finest topology which 
makes both inclusion maps continuous.

If $X$ is a topological space, then $ U \subseteq X $ is {\em sequentially open} in $X$ 
if for every sequence $ \{ x_{n} \}_{n} $ converging to $ x \in U $, there exists $ n_{0} $ such that
$ \{ x_{n} : n \geq n_{0} \} \subseteq U $.
Every open set is sequentially open, and we say that $ X $ is {\em sequential} if, conversely, every sequentially open set is open.
The family of sequentially open sets defines a sequential topology refining the original topology on  $X$. 
We denote this new topological space by  $  \mathcal{S} X $, the {\em sequentialisation} of $X$, see \cite{Fra_65}.

Let $ X $ and $ Y $ be topological spaces.
A function $ f: X \rightarrow Y $ is {\em sequentially continuous} if it maps convergent sequences in $ X$ to convergent sequences in $Y$.
In particular, every continuous function is sequentially continuous, and if $ X$ is sequential, the two notions coincide.
Let   $ [ X \rightarrow_{  \omega } Y ] $ be the topological space  with the set of sequentially  continuous functions $f: X\to Y$ as underlying set and 
topology generated from sub-basic open sets of the form $  O( n_0 ; U) $. Here,
\[ O( n_0 ; U) := \{ f  :   f  [  \{  x_{n } : n_{0}  \leq n \leq \infty \}  ] \subseteq U  \}, \]
with $n_0 $ some natural number,  $ x_{ \infty } $ the limit of a convergent sequence $ \{ x_{n}\}_{n \in \nat } $ in $ X $ and 
$ U $  an open subset of $ Y$.

A (sequential) {\em pseudobase} for a topological space $ X $ is  a set 
$ \mathsf{P}  $ of non-empty subsets of $ X$,
containing $ X$, closed under non-empty finite intersections and such that
if $ \lim x_{n} = x_{ \infty } \in U $ and $U$ is open in $ X$, 
there exists $ B \in \mathsf{P} $  and $ n_{0} \in \nat $
such that $ \{  x_{n } : n_{0}  \leq n \leq \infty \} \subseteq B \subseteq U $.
The closure under finite intersections of an arbitrary superset of $ \mathsf{P}  $   is a pseudobase for $X$ as well.

A topological space is said to be a {\em $qcb$ space} if it is the topological quotient of some countably based space.
It is {\em $qcb_{0} $} if, in addition, it is $T_{0} $, see \cite{BSS_07}. 
It is well-known that a $T_{0} $  space is $ qcb $ if and only if it is sequential and has a countable pseudobase, see \cite{Sch_03}. 

Let $ \cqcb $ be the category with $qcb_{0}$ spaces as objects and continuous functions as morphisms.
The category $ \cqcb $ admits countable products and coproducts and is Cartesian closed, see \cite{BSS_07,ELS_04}.
The finite product in $ \cqcb $ is the sequentialisation of the usual product, denoted by $ \cdot \times^{s} \cdot $ in the binary case.
The exponentiation in $ \cqcb $ is the sequentialisation of $ [  \cdot \rightarrow_{  \omega }  \cdot ] $, 
denoted by $ [ \cdot   \Rightarrow^{s} \cdot ]  $.
This topology can similarly be obtained as the sequentialisation of the compact-open topology on the set of continuous functions.
The disjoint sum $ X \uplus Y $ of $ qcb_{0} $ spaces $X$ and $Y$ is trivially $qcb_{0} $.


\subsection{Admissible domain representations}
A {\em domain with totality} is a pair $ (D, D^{R} ) $, where $ D $ is a domain and 
$ D^{R} $ is a subspace of $ D$ (with the Scott topology). For most purposes, we may 
assume $ \bot_{D} \notin D^{R} $. A  domain with totality $ (D, D^{R} ) $ is {\em dense} 
if $ D^{R} $ is dense as a subspace of $ D$.

A {\em domain representation} of a topological space $X$ consists of a domain with totality 
$ (D, D^{R} ) $ and  a representation map  $ \delta : D^{R} \to X $, which is a surjective, 
continuous function. The representation is {\em countably based} if  $ D$ is separable,
and {\em dense} if  $ (D, D^{R} ) $ is dense.

If $ (D, D^{R} , \delta ) $ and $ (E, E^{R}, \varepsilon ) $ are domain representations of 
$ X $ and $Y$, respectively, a map $ g : X \rightarrow Y $ is {\em $(\delta , \varepsilon 
)$-representable}  if there exists some continuous function $ f:D\rightarrow E $ with $  
f [ D^{R} ] \subseteq E^{R} $ and  $  g \circ \delta   = \varepsilon \circ f\vert_{ D^{R} }  $.
Such an  $ f : D \rightarrow E $ is  {\em $(\delta , \varepsilon )$-total}, which means that
$ f [ D^{R} ] \subseteq E^{R} $ and 
$ \delta (x) = \delta (y) \Rightarrow \varepsilon( f( x)) = \varepsilon ( f(y)) $ for all $ x,y \in D^{R} $.
A $(\delta , \varepsilon )$-total function  $ f : D \rightarrow E $ represents a unique map $ g: X \rightarrow Y $.
If $ \delta $ is a quotient map, all $(\delta , \varepsilon )$-representable maps are continuous.

\begin{defi}
A countably based domain representation $ (D, D^{R}, \delta ) $ of a topological space  $ X$ is {\em admissible} 
if every continuous map $ \varphi : E^{R} \rightarrow  X $,
with $ (E , E^{R} ) $ a countably based, dense domain with totality, factors through $ \delta $,
i.e. there exists a continuous map
$ \hat { \varphi } : E \rightarrow D $ such that
$ \hat { \varphi } [  E^{R} ] \subseteq D^{R} $ and 
$ \delta \circ  \hat { \varphi } ( e) = \varphi ( e) $ for every $ e \in E^{R} $.
\end{defi}

\begin{rem}
This is actually the definition of  $\omega$-admissibility, but for countably based domain representations the notions
of admissibility and  $\omega$-admissibility coincide,  see  \cite{Ham_05}. 
The general definition of admissibility, which is of no interest in this paper, is more restrictive.
\end{rem}

\begin{thm}  \label{t_adm1}
A topological space $X$ has a countably based, admissible domain representation if and only if it is $T_{0} $ and 
has a countable pseudobase.
\end{thm}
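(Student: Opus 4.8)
The plan is to prove the two implications separately; the ``only if'' direction reads a countable pseudobase off the representation, while the ``if'' direction builds an admissible representation by hand, and that this hand-built representation really is admissible is the step I expect to be the real obstacle.

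For ``only if'', let $(D,D^R,\delta)$ be a countably based admissible domain representation of $X$. The $T_0$ property of $X$ is a consequence of admissibility (see \cite{Ham_05}). To extract a pseudobase I would first fix a convenient countably based dense domain with totality representing the convergent sequence space $\natext$: take the domain $E$ with compact elements $\bot\sqsubset q_0\sqsubset q_1\sqsubset\cdots$ together with pairwise incomparable elements $n$ ($n\in\nat$) such that $q_k\sqsubseteq n\Iff k\le n$, set $\infty:=\bigsqcup_k q_k$, and put $E^R=\{n:n\in\nat\}\cup\{\infty\}$; one checks that the subspace topology on $E^R$ is that of $\natext$, via a homeomorphism $\nu$, and that $(E,E^R)$ is dense. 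Now, given an open $U\subseteq X$ and a convergent sequence $\lim_n x_n=x_\infty\in U$, the assignment $n\mapsto x_n$, $\infty\mapsto x_\infty$ is a continuous function $\natext\to X$ precisely because $x_n\to x_\infty$, so by admissibility it factors through $\delta$, giving points $d_n,d_\infty\in D^R$ with $\delta(d_n)=x_n$, $\delta(d_\infty)=x_\infty$ and $d_n\to d_\infty$ in $D$ (the factoring map is Scott-continuous and $\nu^{-1}(n)\to\nu^{-1}(\infty)$ in $E$). Since $\delta^{-1}(U)$ is open in $D^R$ and contains $d_\infty$, there is a compact $p\sqsubseteq d_\infty$ with $\upset p\cap D^R\subseteq\delta^{-1}(U)$, and then $d_n\in\upset p$ for all large $n$, so $\{x_n:n\ge n_0\}\cup\{x_\infty\}\subseteq\delta[\upset p\cap D^R]\subseteq U$. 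Hence the closure under finite intersections of $\{X\}\cup\{\delta[\upset p\cap D^R]:p\in D_{c}\}$, with empty members removed, is a countable pseudobase for $X$.

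For ``if'', let $\mathsf{P}=\{B_i:i\in\nat\}$ be a countable pseudobase of the $T_0$ space $X$, closed under finite intersections, with $B_0=X$ and every $B_i$ nonempty. I would let $D$ be the domain whose compact elements are the finite sequences $\langle i_0,\dots,i_{k-1}\rangle$ over $\nat$ with $B_{i_0}\cap\cdots\cap B_{i_{k-1}}\ne\emptyset$, ordered by extension; two such sequences are consistent exactly when one extends the other, so this is a countably based domain. For $D^R$ I would take the infinite sequences $p=\langle i_0,i_1,\dots\rangle$ for which the sets $B_{i_0}\cap\cdots\cap B_{i_{k-1}}$ ($k\in\nat$) have a common point $x$ and every open neighbourhood of $x$ contains one of them, and set $\delta(p)=x$. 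Then $T_0$-ness of $X$ makes $x$ unique, so $\delta$ is well defined; $\delta$ is continuous because if $\delta(p)\in U$ with $U$ open then some finite prefix of $p$ already sends all of its $D^R$-extensions into $U$; $\delta$ is surjective because an enumeration of $\{i:x\in B_i\}$ is a name of $x$, by the pseudobase property applied to the constant sequence at $x$; and $(D,D^R)$ is dense because every compact extends to a name of a point of the corresponding intersection.

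The remaining point, which I expect to be the main obstacle, is that $(D,D^R,\delta)$ is admissible. Given a continuous $\varphi:E^R\to X$ with $(E,E^R)$ a countably based dense domain with totality, I must produce a continuous $\hat\varphi:E\to D$ with $\hat\varphi[E^R]\subseteq D^R$ and $\delta\circ\hat\varphi=\varphi$ on $E^R$. My approach would be to define $\hat\varphi$ on the compact elements of $E$ by a monotone bookkeeping: fixing enumerations of $E_{c}$ and of $\mathsf{P}$, let the value at a compact $r$ list, in order of discovery, those indices $i$ for which some already-examined compact $q\sqsubseteq r$ satisfies $\varphi[\upset q\cap E^R]\subseteq B_i$. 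The crux is to show that for every total $e$ the resulting sequence $\hat\varphi(e)$ is a $\delta$-name of $\varphi(e)$, i.e.\ that for each open $U\ni\varphi(e)$ some discovered index $i$ has $B_i\subseteq U$; equivalently, that there is a compact $q\sqsubseteq e$ and an index $i$ with $\varphi[\upset q\cap E^R]\subseteq B_i\subseteq U$. This is where the pseudobase property of $X$, the continuity of $\varphi$, and the countable-basedness of $E$ must all be brought together, and it is in essence Schr\"oder's admissibility theorem \cite{Sch_03} transported to domain representations; one clean way to run the argument is to note that $E^R$, having the traces $\upset q\cap E^R$ ($q\in E_{c}$) as a countable pseudobase, is itself a $T_0$ space with a countable pseudobase, so that $\varphi$ factors through the Baire-space admissible representations of $E^R$ and of $X$, and then to transport that factorisation to $D$ along the standard dense embedding of the Baire space into the domain of finite and infinite sequences.
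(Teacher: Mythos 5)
Your ``only if'' direction is sound and amounts to the verification the paper delegates to \cite{Ham_05}: lifting the convergent-sequence space along $\delta$ by admissibility and reading off the sets $\delta[\upset{p}\cap D^{R}]$ as pseudobase elements is exactly the right argument.

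The ``if'' direction has a genuine gap, and it sits in the choice of domain rather than in the admissibility lemma you postponed. Your $D$ is a tree of finite index-sequences ordered by extension, so two elements are consistent only if one extends the other, and every \emph{infinite} sequence is maximal in $D$. This already destroys admissibility for non-$T_1$ spaces. Take $X$ to be Sierpi\'{n}ski space $\{0,1\}$ with $\{1\}$ open and $\mathsf{P}=\{B_0,B_1\}=\{X,\{1\}\}$: a $\delta$-name of $0$ may only list the index $0$, so the all-zero sequence is the unique name of $0$ and it is maximal in $D$, while every name of $1$ must list the index $1$ somewhere and is therefore inconsistent with it. Now let $E=\{\bot\sqsubset a\sqsubset b\}$ with $E^{R}=\{a,b\}$ (dense) and $\varphi(a)=0$, $\varphi(b)=1$; this $\varphi$ is continuous, but any factorisation $\hat\varphi$ would need $\hat\varphi(b)$ to be a name of $1$ lying above the all-zero sequence $\hat\varphi(a)$, which is impossible. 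The same rigidity is what breaks your bookkeeping even where the representation is admissible: for $q\sqsubseteq r$ the indices ``discovered'' for $q$ form a \emph{subsequence}, not a prefix, of those discovered for $r$ (a pair may qualify for $r$ before any pair qualifies for $q$), and non-prefix-comparable sequences are inconsistent in $D$, so your $\hat\varphi$ is not monotone. Both defects vanish with the paper's construction, the ideal completion of $(\mathsf{P},\supseteq)$: there finitely many consistent compacts always have a join, the assignment $r\mapsto\{B\in\mathsf{P}:\exists q\sqsubseteq r\;(\varphi[\upset{q}\cap E^{R}]\subseteq B)\}$ is an ideal and is monotone for free, and the name $\{X\}$ of $0$ sits properly below the name $\{X,\{1\}\}$ of $1$. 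The crux you did identify correctly --- that for total $e$ and open $U\ni\varphi(e)$ there exist $q\sqsubseteq e$ and $B\in\mathsf{P}$ with $\varphi[\upset{q}\cap E^{R}]\subseteq B\subseteq U$, provable by diagonalising the pseudobase property over the countable filter base $\{\upset{q}\cap E^{R}:q\sqsubseteq e\}$ --- only helps once the target domain can record what is discovered.
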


\begin{proof}
We give a sketch of the proof. For details, see \cite{Ham_05}. 

If $ \mathsf{P} $ is a pseudobase for $X$, then  $ ( \mathsf{P}, \supseteq ) $ is a cusl, so let 
$ D = \textrm{Idl}  ( \mathsf{P}, \supseteq )  $, the domain obtained by ideal completion.
Define a relation $ \rightarrow_{  \mathsf{P} }  $ as follows:
If $ I \in D $ and $ x \in X $, let $ I \rightarrow_{  \mathsf{P} }  x $  
if firstly $ x \in B $ for every $ B \in I $ and secondly there exists $ B \in I $ with $ x \in B 
\subseteq U $ for every open $ U \subseteq X $ with $ x \in U $.
Let $ D^{R}  $ be the set of ideals $ I$ such that  $ I \rightarrow_{  \mathsf{P} }  x $ for some $ x \in X $,
and on condition that $X$ is $T_0$, define $ \delta : D^{R} \rightarrow X $ by $ \delta (I) = x $.
Then it can be verified that $ (D, D^{R}, \delta ) $  is an admissible domain representation of $ X $.

Conversely, if $ (D, D^{R}, \delta ) $  is an admissible domain representation of $ X $, it can be verified
that $ \{ \delta [ \upset{ p } \cap D^{R} ] : p \in D_{c} \} $ is a pseudobase for $ X $ and that $X$ is $T_0$.
\end{proof}

\begin{rem}
The admissible domain representation of $ X $ constructed from  $ \mathsf{P} $ in the proof above, is known as the {\em standard} representation of $X$ w.r.t.  $ \mathsf{P} $.
An important aspect of this representation is the existence of a greatest representative
$ I^{x} = \{ B \in  \mathsf{P} : x \in B \}  $ for every  $x \in  X$.
Since every $ B \in  \mathsf{P} $ is non-empty, it is clear that the representation is dense. 
\end{rem}

\begin{rem}
An alternative approach to dense, admissible domain representations is via continuous reductions, as defined in \cite{Bla_08}.
If $ (D, D^{R} , \delta ) $  and  $ (E, E^{R}, \varepsilon ) $ are domain representations of $X$, a {\em continuous reduction} is a 
$ ( \delta , \varepsilon ) $-total map representing $ \id_{X} $.
A dense representation of $ X$ is then admissible if and only if it is universal among all dense representations of $X$ w.r.t. continuous reductions.
\end{rem}

An important motivation for  admissible domain representations is the lifting of continuous functions.

\begin{lem} \label{l_adm3}
Let $ (D, D^{R}, \delta ) $  and  $ (E, E^{R}, \varepsilon ) $ be admissible domain representations of $X$ and $ Y $, respectively.
If $ g: X \rightarrow Y $ is $(\delta , \varepsilon )$-representable, then it is sequentially continuous.

Conversely, if $  D^{R} $ is dense in $D$, then every sequentially continuous function $ g: X \rightarrow Y $ is $(\delta , \varepsilon )$-representable.
\end{lem}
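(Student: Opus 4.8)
\noindent The plan is to prove the two implications separately, using throughout that an admissible domain representation is by definition countably based, and that a countably based domain $D$ is first countable (at $x$ the sets $\upset{p}$ with $p\in\textrm{approx}(x)$ form a countable neighbourhood base, since $\textrm{approx}(x)\subseteq D_{c}$). For the first implication, suppose $g$ is $(\delta,\varepsilon)$-representable by a continuous $f:D\to E$ with $f[D^{R}]\subseteq E^{R}$ and $\varepsilon\circ f|_{D^{R}}=g\circ\delta$, and let $x_{n}\to x_{\infty}$ in $X$. A convergent sequence in $X$ is the same thing as a continuous map $s:\natext\to X$, where $\natext=\nat\cup\{\infty\}$ is the one-point compactification of the discrete naturals, so the idea is to transport $s$ to the domain level and exploit that continuous maps between domains are sequentially continuous. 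I would fix once and for all a countably based, dense domain representation $(F,F^{R},\rho)$ of $\natext$ in which the convergence $n\to\infty$ is visible: take $F$ to be the ideal completion of the poset with least element $\bot$, pairwise incomparable maximal elements $p_{n}$ ($n\in\nat$), and a chain $q_{0}\sqsubseteq q_{1}\sqsubseteq\cdots$ with $q_{k}\sqsubseteq p_{n}$ iff $k\le n$; let $F^{R}$ be the set of maximal ideals and let $\rho$ send $\downset{p_{n}}$ to $n$ and $d_{\infty}:=\{q_{k}:k\in\nat\}\cup\{\bot\}$ to $\infty$. One checks routinely that $F$ is a countably based domain, that $F^{R}$ is dense in $F$, that $\rho$ is continuous and surjective, and --- the essential point --- that in $F$ the maximal ideals $d_{n}:=\downset{p_{n}}$ converge to $d_{\infty}$, because any Scott-open set containing $d_{\infty}=\bigsqcup_{k}\downset{q_{k}}$ must contain some $\downset{q_{k}}$ and hence all $d_{n}$ with $n\ge k$.

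Granting this, I apply the admissibility of $(D,D^{R},\delta)$ to the continuous map $s\circ\rho:F^{R}\to X$, obtaining a continuous $\psi:F\to D$ with $\psi[F^{R}]\subseteq D^{R}$ and $\delta\circ\psi=s\circ\rho$ on $F^{R}$. Then $f\circ\psi:F\to E$ is continuous, carries $F^{R}$ into $E^{R}$, and satisfies $\varepsilon(f(\psi(d_{n})))=g(\delta(\psi(d_{n})))=g(s(n))=g(x_{n})$ and $\varepsilon(f(\psi(d_{\infty})))=g(x_{\infty})$. Since $f\circ\psi$ is continuous it is sequentially continuous, so $f(\psi(d_{n}))\to f(\psi(d_{\infty}))$ in $E$, hence in the subspace $E^{R}$; and since $\varepsilon$ is continuous, applying it gives $g(x_{n})\to g(x_{\infty})$ in $Y$. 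Thus $g$ is sequentially continuous.

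For the converse, assume $D^{R}$ is dense in $D$ and let $g:X\to Y$ be sequentially continuous. I claim $g\circ\delta:D^{R}\to Y$ is continuous: $\delta$ is continuous, hence sequentially continuous, so $g\circ\delta$ is sequentially continuous; and $D^{R}$, as a subspace of the countably based --- hence first countable --- domain $D$, is itself first countable and therefore sequential, so sequential continuity of $g\circ\delta$ upgrades to genuine continuity. Since $(D,D^{R})$ is a countably based, dense domain with totality, the admissibility of the representation $(E,E^{R},\varepsilon)$ of $Y$, applied to $\varphi:=g\circ\delta$, provides a continuous $f:D\to E$ with $f[D^{R}]\subseteq E^{R}$ and $\varepsilon\circ f|_{D^{R}}=g\circ\delta$. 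By definition, this $f$ witnesses that $g$ is $(\delta,\varepsilon)$-representable.

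The converse is thus almost immediate once one observes that $g\circ\delta$ is continuous; the bulk of the work is the forward direction, and specifically the task of producing, uniformly for all convergent sequences of $X$, a domain-level witness of their convergence. The cost is concentrated in the construction and verification of the auxiliary representation $(F,F^{R},\rho)$ of $\natext$ --- that it is a countably based, dense domain representation in which $d_{n}\to d_{\infty}$ --- after which everything else reduces to composing continuous maps and using that they are sequentially continuous.
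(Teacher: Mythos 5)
Your proof is correct. The paper does not actually prove this lemma --- it is imported from the cited work of Hamrin --- but your argument is the standard one: for the forward direction you reflect a convergent sequence of $X$ into $D^{R}$ by applying admissibility of $(D,D^{R},\delta)$ to a dense, countably based domain representation of the one-point compactification of $\nat$ (your $(F,F^{R},\rho)$ is a legitimate cusl/ideal-completion construction, and $d_{n}\to d_{\infty}$ in the Scott topology as you verify), and for the converse you use that $D^{R}$ is second countable, hence sequential, to upgrade sequential continuity of $g\circ\delta$ to continuity before invoking admissibility of $(E,E^{R},\varepsilon)$. All steps check out, and each admissibility hypothesis is used exactly where it is needed.
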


\begin{lem}  \label{l_adm5}
Let $ (D, D^{R}, \delta ) $ and $ (E, E^{R}, \varepsilon ) $ be admissible representations of $ X $ and $Y $, respectively.
Then there exist
\begin{enumerate}[$\bullet$]
\item
 a representation map $ \varrho :  D^{R} \uplus  E^{R} \rightarrow  X  \uplus Y  $ such that
$ (D + E,  D^{R} \uplus  E^{R}, \varrho ) $ is admissible;
\item
 a representation map $ \varrho :  D^{R} \times  E^{R}  \rightarrow  X  \times Y  $ such that
$ (D \times E,  D^{R} \times  E^{R}, \varrho) $ is admissible; and
\item
 a representation map $ \varrho :  [D \to  E]^{R} \rightarrow [ X  \to_{ \omega }  Y ] $,
such that $ ( [ D \to E],  [D \to  E]^{R}, \varrho) $ is admissible,
on condition that $ D^{R} $ is dense in $D$. 
Here, $  [D \to  E]^{R} $ is set of  $(\delta , \varepsilon )$-total continuous maps.
\end{enumerate} \end{lem}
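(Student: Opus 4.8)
Throughout, $\varrho$ will be the evident map: $\varrho(0,x)=(0,\delta(x))$ and $\varrho(1,y)=(1,\varepsilon(y))$ for the sum; $\varrho(x,y)=(\delta(x),\varepsilon(y))$ for the product; and for the function space $\varrho(f)$ is the unique map $X\to Y$ represented by $f$, with $[D\to E]^R$ the set of $(\delta,\varepsilon)$-total continuous functions $D\to E$. I plan to handle the three cases separately, in increasing order of difficulty. Continuity and surjectivity of $\varrho$ are immediate for the sum and the product (for the sum one uses that $\{(0,x):x\in D\}=\upset(0,\bot_D)$ is Scott-open, so the two summands of $D^R\uplus E^R$ are clopen in it). Admissibility in the product case is a direct diagram chase: given continuous $\varphi\colon F^R\to X\times Y$ with $(F,F^R)$ a countably based dense domain with totality, apply admissibility of $\delta$ and of $\varepsilon$ to $\pi_X\circ\varphi$ and $\pi_Y\circ\varphi$ to obtain continuous $\hat\varphi_X\colon F\to D$ and $\hat\varphi_Y\colon F\to E$, and put $\hat\varphi=\langle\hat\varphi_X,\hat\varphi_Y\rangle$; then $\hat\varphi[F^R]\subseteq D^R\times E^R$ and $\varrho\circ\hat\varphi=\varphi$ on $F^R$.

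For the sum I would first record the auxiliary fact that a Scott-closed subset $C$ of a countably based domain $F$, with the inherited order, is again a countably based domain with $C_c=F_c\cap C$; this is a short verification from the facts that $C$ is downward closed and closed under directed suprema. Now let $\varphi\colon F^R\to X\uplus Y$ be continuous with $(F,F^R)$ a countably based dense domain with totality, and set $F_0^R=\varphi^{-1}[\{0\}\times X]$ and $F_1^R=\varphi^{-1}[\{1\}\times Y]$; these are clopen in $F^R$ and partition it. Let $F_i=\overline{F_i^R}$ (closure in $F$), a countably based domain in which $F_i^R$ is dense. Since closure commutes with finite unions and $F^R$ is dense in $F$, we get $F_0\cup F_1=F$, and since $F_i^R$ is closed in $F^R$, $F_i^R$ is disjoint from $F_{1-i}$. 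Applying admissibility of $\delta$ and of $\varepsilon$ to $\varphi|_{F_0^R}\colon F_0^R\to X$ and $\varphi|_{F_1^R}\colon F_1^R\to Y$ yields continuous $\psi_0\colon F_0\to D$ and $\psi_1\colon F_1\to E$. Using that $F_0,F_1$ are Scott-closed and cover $F$ — so that if $x\sqsubseteq y$ then $\{i:x\in F_i\}\supseteq\{i:y\in F_i\}$ — one checks that the map $\hat\varphi\colon F\to D+E$ defined by $x\mapsto(0,\psi_0(x))$ on $F_0\setminus F_1$, $x\mapsto(1,\psi_1(x))$ on $F_1\setminus F_0$ and $x\mapsto\bot$ on $F_0\cap F_1$ is well defined, monotone and continuous, that $\hat\varphi[F^R]\subseteq D^R\uplus E^R$, and that $\varrho\circ\hat\varphi=\varphi$ on $F^R$.

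The function space is the main obstacle, and density of $D^R$ is used throughout. Surjectivity of $\varrho$ follows from the converse part of Lemma~\ref{l_adm3}. For admissibility, given continuous $\varphi\colon F^R\to[X\rightarrow_{\omega}Y]$ with $(F,F^R)$ a countably based dense domain with totality, uncurry to the map $(f,x)\mapsto\varphi(f)(x)$, which is sequentially continuous by joint sequential continuity of evaluation on $[X\rightarrow_{\omega}Y]\times X$; precomposing the second argument with $\delta$ gives $\Psi\colon F^R\times D^R\to Y$, $\Psi(f,d)=\varphi(f)(\delta(d))$, which is continuous because $F^R\times D^R$, as a subspace of the countably based domain $F\times D$, is first countable and hence sequential. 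Since $D^R$ is dense in $D$, the pair $(F\times D,\ F^R\times D^R)$ is a countably based dense domain with totality, so admissibility of $\varepsilon$ yields continuous $\hat\Psi\colon F\times D\to E$ with $\hat\Psi[F^R\times D^R]\subseteq E^R$ and $\varepsilon\circ\hat\Psi=\Psi$; currying back in $\cdomain$ gives $\hat\varphi=\curry(\hat\Psi)\colon F\to[D\to E]$, and one verifies directly that $\hat\varphi[F^R]\subseteq[D\to E]^R$ and $\varrho\circ\hat\varphi=\varphi$ on $F^R$.

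It remains to prove $\varrho$ continuous, which I expect to be the technically delicate step. Since $[D\to E]^R$ is first countable (a subspace of a countably based domain), it suffices to show $\varrho$ sequentially continuous, and by the universal property of $[X\rightarrow_{\omega}Y]$ this reduces to joint sequential continuity of $(f,x)\mapsto\varrho(f)(x)$ on $[D\to E]^R\times X$. The key point is that every convergent sequence $x_n\to x_\infty$ in $X$ lifts to a sequence $d_n\to d_\infty$ that converges in $D$, with $d_n\in D^R$ and $\delta(d_n)=x_n$: one obtains this by applying admissibility of $\delta$ to the composite $N^R\xrightarrow{\nu}\natext\to X$, where $(N,N^R,\nu)$ is a standard representation of the convergent-sequence space $\natext$ and the second map is induced by the sequence, and then transporting the greatest representatives of the points of $\natext$ (which are domain-convergent in $N$) through the resulting factorization. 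Granting the lift, joint sequential continuity of $\varrho$ follows from Scott-continuity of evaluation $[D\to E]\times D\to E$ together with continuity of $\varepsilon$: from $f_n\to f_\infty$ in $[D\to E]^R$ and $d_n\to d_\infty$ in $D$ one gets $f_n(d_n)\to f_\infty(d_\infty)$ in $E^R$, whence $\varrho(f_n)(x_n)=\varepsilon(f_n(d_n))\to\varepsilon(f_\infty(d_\infty))=\varrho(f_\infty)(x_\infty)$ in $Y$, and well-definedness (independence of the choice of $d_n$) is exactly $(\delta,\varepsilon)$-totality.
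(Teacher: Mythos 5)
The paper does not prove this lemma at all: it is one of the results on admissible domain representations imported wholesale from \cite{Ham_05} (see the remark at the start of Section~\ref{s_backg}), so there is no internal proof to compare yours against line by line. Judged on its own, your argument is correct and follows the standard route for these closure properties: pairing of factorizations for the product; for the sum, splitting $F$ into the Scott closures $F_i=\overline{F_i^R}$ of the clopen preimages, applying admissibility of $\delta$ and $\varepsilon$ separately, and gluing with $\bot$ on $F_0\cap F_1$ (your observation that the $F_i$ are downward closed and cover $F$ is exactly what makes the glued map monotone and Scott-continuous, and the auxiliary fact that a Scott-closed subset is a subdomain with $C_c=F_c\cap C$ and the inherited topology is what legitimizes applying admissibility to $(F_i,F_i^R)$); and for the exponential, the uncurry--factor--curry argument for admissibility together with sequence lifting for continuity of $\varrho$, which is where density of $D^R$ and the specific subbasis $O(n_0;U)$ of $[X\to_\omega Y]$ are genuinely used. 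The one step you should pin down is the sequence-lifting lemma: the greatest representatives $I^n$ of the standard representation of $\natext$ converge to $I^\infty$ in the Scott topology only if every pseudobase element containing $\infty$ contains a tail of $\nat$, and this fails for an arbitrary pseudobase (one containing, say, $\{\infty\}\cup\{2n:n\in\nat\}$ would break it). You must therefore fix a concrete pseudobase of opens, e.g.\ $\{\{n\}:n\in\nat\}\cup\{\{m\in\natext:m\geq n\}:n\in\nat\}\cup\{\natext\}$, for which the convergence $I^n\to I^\infty$ is immediate; with that choice the lift exists and the rest of your continuity argument (joint continuity of evaluation on $[D\to E]\times D$, equivalence of sequential and topological continuity on second countable subspaces, and independence of the choice of $d_n$ via $(\delta,\varepsilon)$-totality) goes through.
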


\noindent The representation maps $ \varrho $ in the lemma are the expected ones. In particular, in the latter case, $ \varrho (f) : X \rightarrow Y $ is the sequentially continuous function represented by $ f$.

For sequential spaces,  the notions of continuity and sequential continuity coincide.
 This is the situation when we consider  quotient domain representations.

\begin{lem} \label{l_adm6}
Let  $ (D, D^{R}, \delta ) $  be an admissible domain representation of $ X $.
Then $ \delta $ is a quotient map if and only if $X$ is a sequential space. 
\end{lem}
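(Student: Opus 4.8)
The plan is to exploit the known characterisation that a countably based admissible domain representation exists for $T_0$ spaces with a countable pseudobase (Theorem~\ref{t_adm1}), together with the lifting property of admissibility, in order to transfer sequentiality back and forth across $\delta$. First I would recall that $D^R$ carries the subspace Scott topology and $\delta\colon D^R\to X$ is continuous and surjective; so $\delta$ is a quotient map exactly when a set $U\subseteq X$ is open whenever $\delta^{-1}[U]$ is open in $D^R$. One direction is almost immediate: a quotient of a topological space that is itself sequential is sequential, and $D^R$, being a subspace of a countably based domain, is countably based hence sequential; so if $\delta$ is a quotient map then $X$ is sequential.

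For the converse, suppose $X$ is sequential; I must show $\delta$ is a quotient map. Take $U\subseteq X$ with $\delta^{-1}[U]$ open in $D^R$; I want $U$ open in $X$. Since $X$ is sequential it suffices to show $U$ is sequentially open, so let $\{x_n\}_n$ be a sequence in $X$ converging to $x_\infty\in U$. The idea is to reflect this convergent sequence into $D^R$ along $\delta$. To do this I would build an auxiliary admissible domain representation $(E,E^R,\varepsilon)$ of (a subspace or companion space carrying) the convergent sequence $\{x_n : n\le\infty\}$ — concretely, take the standard representation of the one-point-compactification-style space $\natext$ of the convergent sequence's index set, or more directly use the fact established in the proof of Theorem~\ref{t_adm1} that the standard representation has a greatest representative $I^{x}$ for each point. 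Then the map $n\mapsto x_n$, $\infty\mapsto x_\infty$ is a sequentially continuous (indeed continuous, as $\natext$ is nice) map into $X$, which by admissibility of $(D,D^R,\delta)$ factors as $\delta\circ\hat\varphi$ with $\hat\varphi$ continuous and $\hat\varphi[E^R]\subseteq D^R$. Pulling back, $\hat\varphi$ sends the representatives of the $x_n$ to elements of $\delta^{-1}[x_n]\subseteq D^R$, and continuity of $\hat\varphi$ turns the convergence of representatives in $E^R$ into a convergent sequence $\{d_n\}_n$ in $D^R$ with $d_n\to d_\infty$ and $\delta(d_n)=x_n$, $\delta(d_\infty)=x_\infty$.

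Now $d_\infty\in\delta^{-1}[U]$, which is open in $D^R$, so since $D^R$ is sequential (or just directly, since $\delta^{-1}[U]$ is open and $d_n\to d_\infty$) there is $n_0$ with $d_n\in\delta^{-1}[U]$ for all $n\ge n_0$, whence $x_n=\delta(d_n)\in U$ for all $n\ge n_0$. This shows $U$ is sequentially open, hence open since $X$ is sequential, completing the proof that $\delta$ is a quotient map.

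The main obstacle I anticipate is the middle step: producing, for an arbitrary convergent sequence in $X$, a genuine convergent sequence of representatives in $D^R$ mapping onto it. Continuity and surjectivity of $\delta$ alone do not give this — a priori the fibres $\delta^{-1}[x_n]$ could be badly placed — and this is exactly where admissibility (the factorisation property against a suitable countably based dense domain with totality representing the convergent sequence) must be invoked. Care is needed to choose that witnessing representation so that its total part really does contain a convergent sequence of representatives (the standard representation with its greatest representatives $I^{x}$ is the clean choice, and its density is what makes admissibility applicable), and to check that $\hat\varphi$ being Scott-continuous on $E$ is enough to preserve the relevant convergence into $D^R$.
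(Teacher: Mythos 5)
Your proof is correct and is the standard argument: the paper itself states Lemma~\ref{l_adm6} without proof, deferring to \cite{Ham_05}, and your two directions (quotients of the sequential space $D^R$ are sequential; conversely, use admissibility against the standard dense representation of the convergent-sequence space $\natext$ to lift $x_n\to x_\infty$ to a convergent sequence of representatives $d_n\to d_\infty$ in $D^R$, whence $\delta^{-1}[U]$ open forces $U$ sequentially open) are exactly how this is proved there. The one point you flag as needing care — that the chosen representatives of the $x_n$ actually converge in $E^R$ — does work out for the greatest representatives $I^{x_n}$ in the standard representation, so the argument goes through.
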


\begin{cor} \label{c_adm7}
A topological space $ X $ has a countably based, admissible quotient domain representation if and only if it is a $ qcb_{0} $ space.
\end{cor}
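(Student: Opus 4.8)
The plan is to derive this corollary by combining three facts already in hand: Theorem~\ref{t_adm1} (a space has a countably based, admissible domain representation iff it is $T_0$ and has a countable pseudobase), Lemma~\ref{l_adm6} (for an admissible domain representation, the representation map is a quotient map iff the space is sequential), and the characterisation recalled in Section~\ref{s_backg} that a $T_0$ space is $qcb$ precisely when it is sequential and has a countable pseudobase. The argument is then essentially a bookkeeping exercise matching up these hypotheses.

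For the forward implication, I would start from a countably based, admissible quotient domain representation $(D, D^R, \delta)$ of $X$. Theorem~\ref{t_adm1} immediately gives that $X$ is $T_0$ and has a countable pseudobase. Since $\delta$ is assumed to be a quotient map, Lemma~\ref{l_adm6} shows that $X$ is sequential. Hence $X$ is a $T_0$, sequential space with a countable pseudobase, which is exactly a $qcb_0$ space.

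For the converse, I would take $X$ to be a $qcb_0$ space, so that $X$ is $T_0$, sequential, and has a countable pseudobase. Applying Theorem~\ref{t_adm1} (concretely, passing to the standard representation with respect to some countable pseudobase of $X$) produces a countably based, admissible domain representation $(D, D^R, \delta)$ of $X$. Because $X$ is sequential, Lemma~\ref{l_adm6} then tells us that this $\delta$ is a quotient map, so $(D, D^R, \delta)$ is the desired countably based, admissible quotient domain representation.

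Since each step is a direct appeal to a previously established result, I do not expect a genuine obstacle here. The only point that warrants a moment's attention is verifying that the notion of "quotient" lines up on the two sides of the equivalence, i.e. that the representation supplied by Theorem~\ref{t_adm1} is one to which Lemma~\ref{l_adm6} applies; but this is automatic, as Lemma~\ref{l_adm6} is stated for an arbitrary admissible domain representation.
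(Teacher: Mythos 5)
Your proof is correct and follows exactly the route the paper intends: the corollary is stated immediately after Lemma~\ref{l_adm6} precisely so that it follows by combining Theorem~\ref{t_adm1}, Lemma~\ref{l_adm6}, and the characterisation of $qcb$ spaces as the sequential $T_0$ spaces with a countable pseudobase. No issues.
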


\section{Domains with partial equivalence relations}
\label{s_dwp}

\noindent We review the theory of partial equivalence relations on domains, as presented in \cite{BBS_04}.
We introduce a new category of domains with  partial equivalence relations and 
show by a transfinite  induction that a strictly positive functor over this category has a least fixed point.

Furthermore, we make the connection between admissible quotient domain representations  and domains with partial equivalence 
relations, and show how the least fixed point obtained  can be replaced by a dense one. Finally, we use the intuition 
acquired from the fundamental example to show that this dense least fixed point induces an admissible domain representation.

Least fixed points in similar categories have been studied previously \cite{Hyl_88,RS_99}. For our purpose, however, 
the inductive construction of the least fixed point, as we know it from domain theory, is crucial when we later will relate 
our result to $qcb_0$ spaces through the notion of admissibility.

\subsection{Introduction}

A {\em partial equivalence relation (per)}  on a set $X$ is a binary relation  which is  symmetric and transitive.
A per $\approx $  induces an equivalence relation on its domain,
i.e. there is a subset $ (X, \approx)^{R}  := \{ x \in X : x \approx x  \} $ of $ X$ such that $ \approx $ restricted to $ (X, \approx)^{R}  $ is an equivalence relation.

A  {\em domain-per}, short for a domain with a per, is a pair $ \mathcal{D} = ( D,  \approx   )  $, where $ D   $ is a domain and
$ \approx   $ is a per on $D$.
We will denote domain-pers by calligraphic letters $ \mathcal{D} ,\mathcal{E} , \ldots $  and the respective 
underlying domains by  $ D,E, \ldots $, unless stated otherwise.
The per of  $ \mathcal{D} $ is usually denoted  by $ \approx $, but with a subscript $ \mathcal{D} $ 
if the domain-per is not clear from the context.

If $ \mathcal{D} $ is a domain-per, let $ \mathcal{D}^{R} $ be the topological space of $ \{ x 
\in D : x \approx  x \} $,   the set of {\em $\mathcal{D}$-total} elements of $D$, with the 
subspace topology inherited from  $ D $.
We say that $\mathcal{D} $ is {\em trivial} if $\mathcal{D}^{R} = \emptyset $ and {\em non-trivial} otherwise.
$\mathcal{D} $ is {\em dense} if  $ \mathcal{D}^{R} $ is  dense  in $ D $.

We let $ \mathcal{Q} \mathcal{D}  := \mathcal{D}^{R}/_{\approx_{ \mathcal{D} }} $, i.e. the topological space with the 
underlying set $ \{ [x]_{ \mathcal{D} } : x \in \mathcal{D}^{R}  \} $, 
and with quotient topology generated from the equivalence relation  $ \approx $ on $ \mathcal{D} ^{R} $,
where $  [x]_{ \mathcal{D} }  := \{  y \in D : x \approx  y \} $,
the partial equivalence class of $x$.
If $ p \in D_{c} $ and $ x \in  \mathcal{D}^{R} $, we say that 
$ p \prec_{  \mathcal{D} }  [x] $, if  there exists some $ y \in  [x]_{ \mathcal{D}}  $ with $ p \sqsubseteq y $.

If $ \mathcal{D}  $ and $ \mathcal{E}   $ are domain-pers, let $ [ \mathcal{D}  \rightarrow  \mathcal{E} ] $ be the 
domain-per with $  [ D \rightarrow E ] $ as the underlying domain and per  defined by: $ f \approx g $ if and only if
\[  \forall x, y \in D \; ( x \approx_{\mathcal{D}} y \Rightarrow f(x) \approx_{\mathcal{E}} g(y) ).\]
An {\em equivariant} mapping $ f:\mathcal{D}  \to \mathcal{E} $ is a continuous function  $ f :D  \to  E $  such that $ f \approx  f $.
Thus, $ [  \mathcal{D}  \rightarrow  \mathcal{E} ]^{R}  $ is the set of equivariant mappings.
An equivariant $ f :  \mathcal{D} \to  \mathcal{E} $ induces a unique continuous function
$ f^{\mathcal{Q}} : \mathcal{Q}  \mathcal{D}  \to  \mathcal{Q}  \mathcal{E}  $ defined by
$  f^{\mathcal{Q}}  (  [x]_{ \mathcal{D} } ) =  [ f(x) ]_{ \mathcal{E} } $,
with $ f^{\mathcal{Q}} = g^{\mathcal{Q}}  $ if and only if  $ f \approx g $.

The following technical result will prove itself useful. The proof is straight-forward and therefore omitted.
\begin{lem}  \label{l_equiv1}
If $ f : \mathcal{D} \rightarrow  \mathcal{E} $ is equivariant and $ g :D \rightarrow E $ is continuous,
then $  f \approx_{[ \mathcal{D} \to \mathcal{E}] } g $ if and only if
$ f(x) \approx_{ \mathcal{E} } g (x) $ for every $ x \in  \mathcal{D}^{R} $.
\end{lem}

An equivariant map $ f:\mathcal{D}  \to \mathcal{E} $ is {\em equi-injective} if 
 $ f(x) \approx_{ \mathcal{E} } f (y) \Rightarrow x \approx_{  \mathcal{D} } y $ for all $ x, y \in D $.

Let  $ \cdomainwp $ be the category with domain-pers as objects and equivalence classes of equivariant mappings as morphisms, as defined in \cite{BBS_04}. 
It is easily verified that this is a well-defined category; the identity function is equivariant and the composition of two equivariant functions is always equivariant.

Note that  $  \mathcal{D} $ and $  \mathcal{E}$ are  isomorphic in  $ \cdomainwp $ if and only if there exist equivariant maps
$ f :  \mathcal{D} \to  \mathcal{E}$ and $ g :  \mathcal{E} \to  \mathcal{D}$
such that $ g \circ f \approx \id_{D} $ and $  f \circ  g  \approx \id_{E} $.
It is  immediate that $  \mathcal{Q}  \mathcal{D}  \cong  \mathcal{Q}  \mathcal{E}  $ whenever such a pair exists.
In this case, we say that $  \mathcal{D} $ and $  \mathcal{E}$ are {\em weakly isomorphic}
(and that $  ( f , g ) $ is a weak isomorphism pair), 
weakly in the sense that the underlying domains are not, in general, isomorphic.

If $ (D, D^{R} , \delta ) $ is a quotient domain representation of  a topological space $X$, we may  define 
a domain-per $ \mathcal{D}^\delta = (D, \approx_{ \delta } ) $
by letting $  x  \approx_{ \delta } y $ if  $ x,y \in D^{R}  $ and $  \delta  (x) =  \delta ( y ) $.
We then have $ (\mathcal{D}^\delta)^{R} = D^{R} $ and  $ X  \cong   \mathcal{Q}   (\mathcal{D}^\delta)   $.
We will refer to $ \mathcal{D}^{ \delta } $ as the domain-per {\em associated to} $ (D, D^{R} , \delta ) $.
Conversely, an arbitrary domain-per induces a unique quotient domain representation
$ ( D,  \mathcal{D}^{R} , \delta_{  \mathcal{D} } ) $ of $ \mathcal{Q}   \mathcal{D}  $.
Note that for the function space  $ [ \mathcal{D}  \to \mathcal{E} ] $ defined above, $ [ \mathcal{D}  \to \mathcal{E} ]^{R} $ is exactly the set of 
$ ( \delta_{  \mathcal{D}},   \delta_{ \mathcal{E} } )$-total maps.



\begin{defi}
Let $ \mathcal{D} $ and $ \mathcal{E} $ be domain-pers.
\begin{enumerate}[$\bullet$]
\item{}
The {\em disjoint sum} of   $ \mathcal{D} $ and $ \mathcal{E} $,
 $ \mathcal{D} + \mathcal{E} $, is the domain-per with $ D + E $ as underlying domain and per $\approx $ defined by 
$ (i,x) \approx (j,y) $ if and only if either
$i=j= 0 $  and  $ x \approx_{  \mathcal{D} } y $ or $i=j= 1 $  and  $ x \approx_{  \mathcal{E} } y $.
\item{}
The {\em Cartesian product} of $ \mathcal{D} $ and $ \mathcal{E} $,
$ \mathcal{D} \times \mathcal{E} $, is the domain-per with $ D \times E $ as underlying domain and per $\approx $ defined by 
$ ( x,y ) \approx ( x' , y' ) $ if and only if
$  x \approx_{  \mathcal{D} } x' $ and $ y \approx_{  \mathcal{E} } y' $.
\item{}
The {\em exponentiation} of $ \mathcal{E} $ by $ \mathcal{D}$ is the domain-per $ [ \mathcal{D}  \to \mathcal{E} ] $ defined above.
\end{enumerate}
These constructed domain-pers have {\em strict counterparts}  $ \mathcal{D} \oplus \mathcal{E} $,  $ \mathcal{D} \otimes \mathcal{E} $ and  $ [ \mathcal{D} \to_\bot \mathcal{E} ] $. 
We define these using the respective strict counterparts from domain theory as underlying domains and the restrictions of the respective pers as pers.

If $\mathcal{D} $ is a domain-per, the {\em lifting} of $ \mathcal{D} $ is the lifting $D_\bot $ of $D$ with the extension of $\approx_{\mathcal{D}} $
as per.
\end{defi}

We now have natural operations of binary sum, binary product and function space on domain-pers.
We say that an operation on domain-pers is  {\em strictly positive} if the underlying operation on domains is strictly positive.

\begin{rem}
It is easy to see that the binary sums and products defined above  extend to finite sums and products.
In fact, $\cdomainwp $ is a Cartesian closed category, with categorical finite product and exponentiation corresponding to 
the finite product and function space defined above; see \cite{BBS_04} for details.
\end{rem}

\begin{lem} \label{l_equiv2}
Let $ f :  \mathcal{D} \rightarrow  \mathcal{D}' $ and $ g:  \mathcal{E} \rightarrow  \mathcal{E'} $ be equivariant.
Then the following maps are equivariant:
\[ (f  + g ) :  ( \mathcal{D} +   \mathcal{E} ) \rightarrow  ( \mathcal{D}' + \mathcal{E}' ) \]
\[ (f  \times g ) :  ( \mathcal{D} \times   \mathcal{E} ) \rightarrow  ( \mathcal{D}' \times \mathcal{E}' ) \]
\[ (f  \rightarrow g ) :  [ \mathcal{D}' \rightarrow   \mathcal{E} ] \rightarrow  [ \mathcal{D}  \rightarrow \mathcal{E}' ] \]
\end{lem}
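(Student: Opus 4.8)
The plan is straightforward: continuity of each of the three maps has already been recorded in the background section (under the natural definitions of $f+g$, $f\times g$ and $f\to g$ on the underlying domains), so the only thing left to check is the per-preservation condition that, by definition, makes a continuous map $h:\mathcal{A}\to\mathcal{B}$ equivariant --- namely that $a\approx_{\mathcal{A}} b$ implies $h(a)\approx_{\mathcal{B}} h(b)$. In every case I would simply unfold the definition of the per on the constructed domain-per and feed the resulting relations into the equivariance of $f$ and of $g$.

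For $f+g$ I would take $(i,x)\approx_{\mathcal{D}+\mathcal{E}}(j,y)$; by definition of the sum per this means $i=j$ and either $x\approx_{\mathcal{D}} y$ (when $i=j=0$) or $x\approx_{\mathcal{E}} y$ (when $i=j=1$), while $\bot$ is $\approx$-related to nothing and so contributes nothing. Applying equivariance of $f$ in the first case and of $g$ in the second gives $f(x)\approx_{\mathcal{D}'} f(y)$ (resp.\ $g(x)\approx_{\mathcal{E}'} g(y)$), which is exactly what is needed for $(f+g)(i,x)\approx_{\mathcal{D}'+\mathcal{E}'}(f+g)(j,y)$. For $f\times g$, from $(x,y)\approx_{\mathcal{D}\times\mathcal{E}}(x',y')$ I read off $x\approx_{\mathcal{D}} x'$ and $y\approx_{\mathcal{E}} y'$, whence $f(x)\approx_{\mathcal{D}'} f(x')$ and $g(y)\approx_{\mathcal{E}'} g(y')$, i.e.\ $(f(x),g(y))\approx_{\mathcal{D}'\times\mathcal{E}'}(f(x'),g(y'))$.

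The function-space case is the one that needs the most attention, because of the contravariance in the domain argument: here $(f\to g):[\mathcal{D}'\to\mathcal{E}]\to[\mathcal{D}\to\mathcal{E}']$ sends $u$ to $g\circ u\circ f$, so the equivariance of $f$ must be invoked ``on the input side'' of $u$ and $v$. Given $u\approx_{[\mathcal{D}'\to\mathcal{E}]}v$, I must show that $x\approx_{\mathcal{D}} y$ implies $g(u(f(x)))\approx_{\mathcal{E}'}g(v(f(y)))$, which I would obtain by chaining three steps in order: $f(x)\approx_{\mathcal{D}'} f(y)$ (equivariance of $f$), then $u(f(x))\approx_{\mathcal{E}} v(f(y))$ (the definition of $u\approx v$ applied to this pair), then $g(u(f(x)))\approx_{\mathcal{E}'} g(v(f(y)))$ (equivariance of $g$). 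I do not expect any genuine obstacle: the lemma is a routine unwinding of definitions, and the only two points worth flagging are that $\bot$ carries no per information in the disjoint sum and that the domain-side slot of the function space is treated contravariantly.
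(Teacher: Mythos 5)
Your proof is correct and follows essentially the same route as the paper, which likewise leaves the sum and product cases to the reader and disposes of the function-space case by the one-line observation that $x\approx_{[\mathcal{D}'\to\mathcal{E}]}y$ implies $g\circ x\circ f\approx_{[\mathcal{D}\to\mathcal{E}']}g\circ y\circ f$. Your explicit three-step chain for the contravariant slot is just the unwound version of that same argument.
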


\proof
The cases $ (f  + g )  $ and $ (f  \times g ) $ are straight-forward and left for the reader.

If $ x \approx_{[\mathcal{D}'\rightarrow\mathcal{E}]} y $, then $ g \circ x \circ f  \approx_{
[\mathcal{D}\rightarrow\mathcal{E}']} g\circ y\circ f$, and  this shows that 
that $  (f  \rightarrow g )  $ is equivariant.
\qed

\subsection{A category with inductive limits}
Embedding-projection pairs play a crucial role in the least fixed point
construction used to solve recursive domain equations. More precisely,
they are necessary for the construction of inductive limits of directed
systems.

We now introduce a category of domain-pers which has inductive limits
and for which strictly positive operations are functorial.

\begin{defi} \label{d_equiembedding}
An {\em equiembedding} is a map $ f:  \mathcal{D}   \rightarrow \mathcal{E} $ 
such that 
\begin{enumerate}[$\bullet$]
\item $ f: D \rightarrow E  $ is an embedding;
\item $ f:  \mathcal{D}   \rightarrow \mathcal{E} $ is equivariant; and
\item $  \forall x \in\mathcal{D}^{R} \: 
( f(x) \approx_{\mathcal{E}} y\Rightarrow x\approx_{\mathcal{D}}f^- (y)) $.
\end{enumerate}
\end{defi}

First, we show that this is a  valid choice of morphisms. The identity map on
the underlying domain of a domain-per is clearly  an equiembedding, so it remains to 
prove that equiembeddings are closed under composition.
\begin{lem} \label{l_equim1}
Let $ f: \mathcal{D}   \rightarrow \mathcal{E} $ and $ g : \mathcal{E}   \rightarrow \mathcal{F} $ be equiembeddings.
Then $ g \circ f $ is an equiembedding.
\end{lem}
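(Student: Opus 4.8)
The plan is to verify directly that the composite $g\circ f:\mathcal{D}\rightarrow\mathcal{F}$ satisfies the three clauses of Definition~\ref{d_equiembedding}, using the corresponding clauses for $f$ and $g$. The first two clauses are essentially free: embeddings compose to embeddings in $\cdomain$ (this is a standard fact, already used when $\cdom$ was introduced), and equivariant maps compose to equivariant maps (noted in the definition of $\cdomainwp$), so $g\circ f:D\rightarrow F$ is an embedding and $g\circ f:\mathcal{D}\rightarrow\mathcal{F}$ is equivariant. It is worth recording that the projection of $g\circ f$ is $(g\circ f)^-=f^-\circ g^-$, since this is what we will feed into the third clause.

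The only clause requiring work is the third. So I would fix $x\in\mathcal{D}^R$ and $y\in F$ with $(g\circ f)(x)\approx_{\mathcal{F}}y$, and aim to conclude $x\approx_{\mathcal{D}}(g\circ f)^-(y)=f^-(g^-(y))$. First apply the third clause for $g$ to the element $f(x)\in\mathcal{E}^R$ (note $f(x)\in\mathcal{E}^R$ because $f$ is equivariant, hence $f(x)\approx_{\mathcal{E}}f(x)$): from $g(f(x))\approx_{\mathcal{F}}y$ we obtain $f(x)\approx_{\mathcal{E}}g^-(y)$. Now apply the third clause for $f$ to $x\in\mathcal{D}^R$ and the element $g^-(y)\in E$: from $f(x)\approx_{\mathcal{E}}g^-(y)$ we get $x\approx_{\mathcal{D}}f^-(g^-(y))$, which is exactly what we wanted. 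Transitivity of the pers is implicitly used only in checking that $f(x)$ lies in the relevant domains of the pers; no extra closure properties are needed.

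There is really no serious obstacle here — the proof is a two-step chase through the definitions, and the main point to be careful about is the bookkeeping of which element plays the role of ``$y$'' in each invocation of clause three, and the identity $(g\circ f)^-=f^-\circ g^-$ for the projections. Because the argument is this routine, I would expect the author's proof to be only a line or two, essentially the chain $g(f(x))\approx_{\mathcal{F}}y\Rightarrow f(x)\approx_{\mathcal{E}}g^-(y)\Rightarrow x\approx_{\mathcal{D}}f^-(g^-(y))$.
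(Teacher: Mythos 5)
Your proof is correct and follows exactly the same route as the paper's: the first two clauses follow from closure of embeddings and equivariant maps under composition, and the third is verified by the two-step chase $g(f(x))\approx_{\mathcal{F}}y\Rightarrow f(x)\approx_{\mathcal{E}}g^-(y)\Rightarrow x\approx_{\mathcal{D}}f^-(g^-(y))=(g\circ f)^-(y)$, using that $f(x)\in\mathcal{E}^R$. No issues.
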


\proof
Embedding-projection pairs and equivariant maps are both closed under composition, so it remains to
verify the third requirement.
Assume $x\in \mathcal{D}^R$ and $g(f(x))\approx_{\mathcal{F}}y $.
Then  $ f(x )\in\mathcal{E}^R$, so $ f(x)\approx_\mathcal{E}g^- (y) $, because $ g $ is an equiembedding.
Moreover, \[ x\approx_\mathcal{D}f^-( g^-(y)) = ( g \circ f)^-(y) \] because $f$ is an equiembedding.\qed

\begin{prop} \label{p_equim2}
Strictly positive operations  are functorial w.r.t.\ equiembeddings.
\end{prop}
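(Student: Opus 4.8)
The plan is to show that each basic operation — identity, disjoint sum, Cartesian product, and exponentiation by a fixed domain-per — sends equiembeddings to equiembeddings, and then to conclude by induction on the structure of the strictly positive operation that any composite of these is functorial with respect to equiembeddings. Since $\cdomainwp$ (with equiembeddings as morphisms) already contains the identity and is closed under composition by Lemma~\ref{l_equim1}, establishing the one-step claim for each basic operation immediately yields functoriality of the full operation: the functor acts on objects as $\Gamma$ acts, and on an equiembedding $f$ by applying the corresponding composite of $f+g$, $f\times g$, $f\to g$ constructions; functoriality ($\id \mapsto \id$, composition preserved) follows from the corresponding identities for the underlying domain constructions, which are recalled in the Background section.

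First I would handle identity: trivially an equiembedding is sent to itself. For disjoint sum and Cartesian product, given equiembeddings $f:\mathcal{D}\to\mathcal{D}'$ and $g:\mathcal{E}\to\mathcal{E}'$, I would check the three clauses of Definition~\ref{d_equiembedding} for $f+g$ and $f\times g$. The first clause (being an embedding of domains) is already noted in the Background subsection on domain theory. The second clause (equivariance) is exactly Lemma~\ref{l_equiv2}. Only the third clause requires a short verification, and in both cases it reduces componentwise to the third clause for $f$ and $g$ individually, using that $(f+g)^- = f^- + g^-$ and $(f\times g)^- = f^-\times g^-$ and that the pers on the sum and product are defined componentwise.

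The main obstacle is the exponentiation case. Here we have a \emph{fixed} non-positive parameter $\mathcal{D}$, so given an equiembedding $g:\mathcal{E}\to\mathcal{E}'$ we must show that $(\id_D \to g): [\mathcal{D}\to\mathcal{E}]\to[\mathcal{D}\to\mathcal{E}']$, i.e. the map $x \mapsto g\circ x$, is an equiembedding. The first clause follows from the Background remark that $(f^-\to g)$ is an embedding when $g$ is (taking $f=\id$, so $f^-=\id$); its associated projection is $x\mapsto g^-\circ x$. The second clause is Lemma~\ref{l_equiv2} (third line, with $\mathcal{D}'=\mathcal{D}$ and $f=\id$). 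For the third clause I would argue as follows: suppose $x\in[\mathcal{D}\to\mathcal{E}]^R$ and $g\circ x \approx_{[\mathcal{D}\to\mathcal{E}']} h$ for some continuous $h:D\to E'$; I must show $x \approx_{[\mathcal{D}\to\mathcal{E}]} g^-\circ h$. Take $a\approx_\mathcal{D} b$; then $g(x(a)) \approx_{\mathcal{E}'} h(b)$. Since $x(a)\in\mathcal{E}^R$ (because $x$ is equivariant and $a$ is total), the third clause for $g$ gives $x(a)\approx_\mathcal{E} g^-(h(b))$, i.e. $x(a)\approx_\mathcal{E}(g^-\circ h)(b)$. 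As $a\approx_\mathcal{D}b$ was arbitrary, this says $x\approx_{[\mathcal{D}\to\mathcal{E}]}g^-\circ h$, as required. I would also remark that this argument is what forces the \emph{strict positivity} restriction: it genuinely uses that the exponent $\mathcal{D}$ is held fixed, so that $g^-\circ h$ is again a candidate element of $[\mathcal{D}\to\mathcal{E}]$ with the matching domain-per structure — a variable exponent would require a contravariant move and break the argument. Finally I would note that this all extends verbatim to the multivariate/multifunctor setting, as claimed in the text.
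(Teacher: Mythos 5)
Your proposal is correct and follows essentially the same route as the paper: structural induction over the basic operations, with the sum and product cases reducing componentwise and the exponentiation case verifying the third clause of Definition~\ref{d_equiembedding} by evaluating at total arguments and invoking the equiembedding property of the given map pointwise. The only cosmetic difference is that you verify the exponentiation per condition on arbitrary related pairs $a\approx b$ directly, whereas the paper checks only the diagonal and appeals to Lemma~\ref{l_equiv1}; both are valid.
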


\proof
The proof is by structural induction on strictly positive operations:

\begin{enumerate}[$\bullet$]
\item{Equiembeddings are closed under disjoint sums:\ }
Let $ f:\mathcal{D}\rightarrow\mathcal{D}' $ and $g:\mathcal{E}\rightarrow \mathcal{E}' $ 
be equiembeddings.
Then $f+g$ is an embedding of domains and equivariant by lemma~\ref{l_equiv2}. We need to verify the third
requirement of definition~\ref{d_equiembedding}.

Assume  that $  (i,x) \in   ( \mathcal{D}  +   \mathcal{E} )^{ R} $ and 
$ (f + g)  (i,x) \approx_{ \mathcal{D}'   +   \mathcal{E}'} (i,y)$.
If $i=0$, then $  f  (x ) \approx_{ \mathcal{D}' }  y $ and, since $f$ is an equiembedding, $x\approx_\mathcal{D} f^-(y)$.
If $i=1$, then $  g(x ) \approx_{ \mathcal{E}' }  y $ and, since $g$ is an equiembedding, $x\approx_\mathcal{E} g^-(y)$.
Either way, we obtain $(i,x)\approx_{\mathcal{D}  +   \mathcal{E}}(f+g)^-(i,y)$.

\item{Equiembeddings are closed under Cartesian products:\ }
Let $ f:\mathcal{D}\rightarrow\mathcal{D}' $ and $g:\mathcal{E}\rightarrow \mathcal{E}' $ 
be equiembeddings.
Then $f\times g$ is an embedding of domains and equivariant by lemma~\ref{l_equiv2}. Again, we need to verify the last
requirement of definition~\ref{d_equiembedding}.

Assume that  $ x\in (\mathcal{D}\times\mathcal{E})^R $ and $ (f\times g)(x)\approx_{ \mathcal{D}'  \times   \mathcal{E}' } y$. 
If we let $x=(x_1,x_2)$ and $y=(y_1,y_2)$, then $f(x_1)\approx_\mathcal{D'}y_1$ and $g(x_2)\approx_\mathcal{E'}y_2$.
Since $ f $ and $g$ both are equiembeddings,  we can conclude that $x_1\approx_\mathcal{D}f^-(y_1)$ and  $x_2\approx_\mathcal{E}g^-(y_2)$.
This means that $x\approx_{\mathcal{D}\times\mathcal{E}}f^-(y)$.

\item{Equiembeddings are closed under exponentiations by a fixed
  domain-per:\ }
Let $ f:\mathcal{D}\rightarrow\mathcal{D}' $ be an equiembedding and let $\mathcal{B} $ be a
fixed domain-per. Then the embedding $\id_{B} \rightarrow f$ is equivariant by lemma~\ref{l_equiv2}, and we need only 
verify the third condition in definition~\ref{d_equiembedding}.

Assume $ x \in  [  \mathcal{B} \to \mathcal{D}  ]^{R} $. We need to show that if $ ( \id_{B} \rightarrow f) (x)  
\approx_{[\mathcal{B}\to\mathcal{D}']} y $, then $ x  \approx_{[\mathcal{B}\to\mathcal{D}]}( \id_{B} \rightarrow f)^{-} (y) $.
Put differently, we have $f\circ x \approx_{[ \mathcal{B} \to  \mathcal{D}'] } y $, and need to show that this implies
 $ x  \approx_{[ \mathcal{B} \to \mathcal{D}]  }     f^{-}  \circ y $.
For an arbitrary $b \in \mathcal{B}^{R} $, we have $ f ( x (b) ) \approx_{ \mathcal{D}'  } y(b) $.
Since $ f $ is an equiembedding, this implies $ x (b)  \approx _{ \mathcal{D}  }  f^{-} (y (b) ) $, and we are through.
\qed
\end{enumerate}
For the inductive limits to be well-defined, we need to restrict ourselves to certain well-structured
domain-pers.

\begin{defi}
A domain-per $ \mathcal{D} $ is {\em weakly convex} if  
$ x \approx y \Rightarrow x \approx  x \sqcup p $ 
whenever  $ x \sqsubseteq y $ and $ p \in \textrm{approx} (y) $,
and {\em convex} if
$ x \approx y \Rightarrow x \approx x \sqcup z $ 
whenever $ x, z \sqsubseteq y $.

A domain-per $ \mathcal{D} $ is {\em local} if $ [ x ]_{ \mathcal{D}}  $ is consistent 
in $D$ for every $ x  \in  \mathcal{D}^{R} $, and {\em strongly local} if $ [ x ]_{ 
\mathcal{D}}  $ is directed for every $ x  \in  \mathcal{D}^{R} $.

A local domain-per $ \mathcal{D} $ is {\em complete} if
$ x \approx \bigsqcup [ x ]_{ \mathcal{D}}  $
for every $ x  \in  \mathcal{D}^{R} $.
\end{defi}

\noindent First, we look at a useful technical lemma: 
\begin{lem}  \label{l_lwc3}
Let $ \mathcal{B} $  be a dense domain-per and let 
$\mathcal{D} $ be a weakly convex and strongly local domain-per.
Let $ \bigsqcup_{ j \in J } [ p^{j} ; q^{j} ]  \in  [ B \rightarrow D ]_{c} $ and  
let $ f \in  [  \mathcal{B} \rightarrow  \mathcal{D} ]^{R} $.

If $ \bigsqcup_{ j \in J } \{ q^{j} : p^{j} \sqsubseteq x \} \prec_{ \mathcal{D}  } [ f(x)]  $ for every $ x \in \mathcal{B}^{R} $, then
$ \bigsqcup_{ j \in J } [ p^{j} ; q^{j} ]  \prec_{ [  \mathcal{B} \rightarrow  \mathcal{D} ] }  [f]  $.
\end{lem}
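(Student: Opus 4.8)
The goal is to produce a single continuous function $g \in [B \to D]_c$... no wait — to show that the compact step function $s := \bigsqcup_{j\in J}[p^j;q^j]$ lies below some element of the partial equivalence class $[f]_{[\mathcal B\to\mathcal D]}$. The plan is to construct such an element explicitly as a continuous function $h : B \to D$ which is equivalent to $f$ (so that $h$ is equivariant and $h\approx_{[\mathcal B\to\mathcal D]}f$) and which satisfies $s \sqsubseteq h$ pointwise. By Lemma~\ref{l_equiv1} it suffices to arrange $h(b)\approx_{\mathcal D} f(b)$ for every $b\in\mathcal B^R$, together with $s(x)\sqsubseteq h(x)$ for \emph{all} $x\in B$. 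The natural candidate is to take $h(x) := s(x)\sqcup f(x)$ where this is defined; the content of the lemma is precisely that this join exists and that the resulting function is continuous and total-equivalent to $f$.

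First I would fix $b\in\mathcal B^R$ and show that $s(b)$ and $f(b)$ are consistent in $D$. We have $s(b)=\bigsqcup_{j\in J}\{q^j : p^j\sqsubseteq b\}$, and by hypothesis this element satisfies $s(b)\prec_{\mathcal D}[f(b)]$, i.e. there is $y\in[f(b)]_{\mathcal D}$ with $s(b)\sqsubseteq y$. Since $\mathcal D$ is strongly local, $[f(b)]_{\mathcal D}$ is directed, and strong locality also makes it complete-behaved; I would use that $f(b)$ and $y$ are both in the directed set $[f(b)]_{\mathcal D}$, hence consistent, with $f(b)\sqcup y$ again lying in the class (for a \emph{complete} local per this is immediate; for a strongly local one the directedness gives an upper bound in the class, and weak convexity will let us adjust). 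So $s(b)\sqcup f(b)$ exists and is $\sqsubseteq y\sqcup f(b)$. The key point, using weak convexity of $\mathcal D$: from $f(b)\sqsubseteq f(b)\sqcup y$ and any compact approximation $p$ of $s(b)\sqcup f(b)$ (which is $\sqsubseteq$ the class element $f(b)\sqcup y$, modulo the directedness argument), weak convexity yields $f(b)\approx_{\mathcal D}f(b)\sqcup p$, and taking the join over all such $p$ together with completeness-type closure gives $f(b)\approx_{\mathcal D}s(b)\sqcup f(b)$. Thus pointwise on $\mathcal B^R$ the prescription $b\mapsto s(b)\sqcup f(b)$ is defined and $\approx_{\mathcal D}$-equivalent to $f(b)$.

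Next I would extend this to a genuine continuous function $h$ on all of $B$. The honest way is to define $h$ on compact elements $p\in B_c$ by $h(p) := s(p)\sqcup f(p)$, check this is well-defined (consistency) and monotone there, and then take the canonical continuous extension $h(x)=\bigsqcup h[\mathrm{approx}(x)]$ as in the Background section. Consistency of $s(p)$ and $f(p)$ for a \emph{compact} $p$ is where density of $\mathcal B$ enters: since $\mathcal B$ is dense, $\upset p\cap\mathcal B^R$ is non-empty, so pick $b\in\mathcal B^R$ with $p\sqsubseteq b$; then $s(p)\sqsubseteq s(b)$ and $f(p)\sqsubseteq f(b)$, and $s(b)\sqcup f(b)$ exists by the previous paragraph, so $s(p)\sqcup f(p)$ exists a fortiori. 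Monotonicity on $B_c$ is routine. One then checks $h$ restricted to $\mathcal B^R$ agrees up to $\approx_{\mathcal D}$ with $s(\cdot)\sqcup f(\cdot)$ — this needs continuity of $s$, $f$ and the join, plus another application of weak convexity/completeness to commute the equivalence past the directed join defining $h(b)$. Finally $s\sqsubseteq h$ holds by construction since $s(p)\sqsubseteq h(p)$ on compacts, so $s\prec_{[\mathcal B\to\mathcal D]}[f]$ via the witness $h\in[f]_{[\mathcal B\to\mathcal D]}$.

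The main obstacle is the interplay between density of $\mathcal B$ and the local/convex structure of $\mathcal D$ when passing from pointwise statements on $\mathcal B^R$ to a well-defined continuous function on all of $B$: one must be careful that the join $s(p)\sqcup f(p)$ exists for every compact $p\in B_c$ (not merely for total $b$), which is exactly what density buys, and that the canonical extension of $p\mapsto s(p)\sqcup f(p)$ still lands, after restriction to totals, in the right $\approx_{\mathcal D}$-class — this is where weak convexity together with strong locality must be invoked a second time to absorb the compact approximations coming from the extension process. The rest is bookkeeping with step functions and directed joins.
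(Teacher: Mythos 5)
Your proposal is correct and follows essentially the same route as the paper's proof: show that $f(b)$ and $s(b)$ are consistent for every total $b$ using strong locality and weak convexity, transfer consistency to all compacts of $B$ via the density of $\mathcal{B}$ so that $g := f \sqcup s$ exists in $[B\to D]$, and conclude $f \approx_{[\mathcal{B}\to\mathcal{D}]} g$ by lemma~\ref{l_equiv1}. One small caution: your detour through the compact approximations of $s(b)\sqcup f(b)$ followed by ``completeness-type closure'' is both unnecessary and unavailable ($\mathcal{D}$ is only weakly convex and strongly local here, not complete); since $s(b)=\bigsqcup_{j}\{q^j : p^j\sqsubseteq b\}$ is itself a compact approximation of the witness $y$ supplied by strong locality, a single application of weak convexity already yields $f(b)\approx_{\mathcal{D}} f(b)\sqcup s(b)$, which is exactly what the paper does.
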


\proof
Let $ x \in  \mathcal{B}^{R} $ and assume that  $ \bigsqcup_{ j \in J } \{ q^{j} : p^{j} \sqsubseteq x \} \prec_{ \mathcal{D}  } [ f(x)]  $.
By definition of $ \prec_{ \mathcal{D}  }$, there exists  some $ y \in [f(x)] $ 
such that $ \bigsqcup_{ j \in J } \{ q^{j} : p^{j} \sqsubseteq x \}$ is a compact approximation of $y $,
and we can assume  $ f(x) \sqsubseteq y $ since  $ \mathcal{D} $ is strongly local.
Then $ f(x) $ is consistent with $ \{ q^{j} : p^{j} \sqsubseteq x \}_{j\in J}  $, and since  
$  \mathcal{D} $ is weakly convex, we have
\[ f (x) \approx_{ \mathcal{D}}  f(x) \sqcup  \bigsqcup_{ j \in J } \{ q^{j} : p^{j} \sqsubseteq x \} .\]
Moreover, since $  \mathcal{B}^{R} $ is dense in $B$, 
$ f(p) $ and $  \bigsqcup_{ j \in J } \{ q^{j} : p^{j} \sqsubseteq p \}  $ are consistent for all $p \in B_c$,
so $ g :=   f    \sqcup \bigsqcup_{ j \in J } [ p^{j} ; q^{j} ]  $ exists.
By lemma~\ref{l_equiv1}, $ f \approx_{[ \mathcal{B} \rightarrow  \mathcal{D} ]} g $, and this shows that  $ \bigsqcup_{ j \in J } [ p^{j} ; q^{j} ]  \prec_{   
[ \mathcal{B} \rightarrow  \mathcal{D} ] }  [ f]  $.
\qed

Clearly, a local and complete domain-per is also strongly local, so in particular the lemma holds 
for convex, local and complete $\mathcal{D}$. The converse of this lemma holds trivially and 
without any restrictions on   $ \mathcal{B} $ or  $ \mathcal{D}  $.
Also note that for an arbitrary finite subset $ \{ (p^{j} , q^{j} ) \}_{ j \in J }  $ of $ B_{c} \times D_{c} $,
if $ \bigsqcup_{ j \in J } \{ q^{j} : p^{j} \sqsubseteq x \} \prec_{\mathcal{D}} [ f(x)]  $ for all $ x \in \mathcal{B}^{R} $, 
then $ \{ [p^{j} ; q^{j} ] \}_{ j \in J }  $ is $ [ B \rightarrow D ] $-consistent. This holds because  
$ \mathcal{B} $ is dense.

\begin{defi}
Let $ \cclcdomwp $ be the category with convex, local and complete domain-pers  as objects and equiembeddings 
as morphisms.
\end{defi}
Domain-pers $  \mathcal{D} $ and $  \mathcal{E}$ are isomorphic in $ \cclcdomwp$ if there exists 
an  isomorphism pair  $ ( f,  g ) : D \rightarrow E $ such that both $ f $ and $  g $ are equivariant.
As for domains, it is possible to construct a pair of non-isomorphic domain-pers with equiembeddings in both directions.
Note that this is a stronger kind of isomorphism than what we have for $\cdomainwp$. In fact,  the category $ \cclcdomwp $ is not  even 
closed under weak isomorphisms. The notion of weak isomorphism will still be useful for us at a later stage.

\begin{rem} \label{r_undfunctor}
If $\funcf: \cclcdomwp \to \cclcdomwp$ is a functor, there is a unique {\em underlying functor} 
$\hat{\funcf}: \cdom \to \cdom $ which maps the underlying domain of a domain-per $\mathcal{D}$ to
the underlying domain of $\funcf(\mathcal{D})$. 
\end{rem}

The following lemma shows that the restriction to convex, local and complete domain-pers works well
with strictly positive operations.

\begin{lem} \label{l_lwc4}
Let $ \Gamma $ be a strictly positive operation  on domain-pers. Assume that all non-positive 
parameters in $ \Gamma $ are dense and  that all positive parameters are convex, local and complete.

If  $  \mathcal{D}  $ is  a convex, local and complete domain-per, then $ \Gamma (  \mathcal{D}  ) $ is convex, 
local and complete.
\end{lem}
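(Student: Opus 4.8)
The plan is to proceed by structural induction on the strictly positive operation $\Gamma$, exactly mirroring the inductive definition: $\Gamma$ is built from fixed (parameter) domain-pers by the identity operation, disjoint sum, Cartesian product, and exponentiation by a fixed non-positive parameter. The base cases are immediate: if $\Gamma$ is a constant operation returning a positive parameter, then by hypothesis that parameter is convex, local and complete; if $\Gamma$ is the identity, then $\Gamma(\mathcal{D}) = \mathcal{D}$, which is convex, local and complete by assumption on $\mathcal{D}$. So the work is in the three inductive steps, where I may assume that $\Gamma_1(\mathcal{D})$ and $\Gamma_2(\mathcal{D})$ are already convex, local and complete, and must check that $\Gamma_1(\mathcal{D}) + \Gamma_2(\mathcal{D})$, $\Gamma_1(\mathcal{D}) \times \Gamma_2(\mathcal{D})$, and $[\mathcal{B} \to \Gamma_1(\mathcal{D})]$ (with $\mathcal{B}$ a dense non-positive parameter) inherit all three properties.

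For disjoint sum and Cartesian product the verifications are routine unfolding of the definitions of $\approx$ on $\mathcal{D}+\mathcal{E}$ and $\mathcal{D}\times\mathcal{E}$. For $\mathcal{D}+\mathcal{E}$, a total element is some $(i,x)$ with $x$ total in the relevant factor, its partial equivalence class is $\{i\}\times[x]$ (plus possibly $\bot$-related considerations, but $\bot$ is not total), so consistency, directedness and the least-upper-bound condition for $(i,x)$ reduce directly to the corresponding conditions for $x$ in $\mathcal{D}$ or $\mathcal{E}$; convexity and weak convexity follow because the order on $\mathcal{D}+\mathcal{E}$ restricted to a summand is just the order of that summand, and the only elements below $(i,x)$ other than those of the form $(i,z)$ is $\bot$, with $(i,x)\sqcup\bot = (i,x)$. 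For $\mathcal{D}\times\mathcal{E}$, $[(x_1,x_2)] = [x_1]\times[x_2]$, products of consistent (resp. directed) sets are consistent (resp. directed), $\bigsqcup$ commutes with products, and convexity in each coordinate gives convexity in the product since $z\sqsubseteq y$ in $\mathcal{D}\times\mathcal{E}$ means $z_i\sqsubseteq y_i$ coordinatewise.

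The exponentiation case $[\mathcal{B} \to \mathcal{D}]$, with $\mathcal{B}$ dense and $\mathcal{D}$ convex, local and complete, is where the real content lies, and this is where I expect the main obstacle. Here I would lean on Lemma~\ref{l_lwc3} (and the remarks following it) and on Lemma~\ref{l_equiv1}. For locality: given $f \in [\mathcal{B}\to\mathcal{D}]^R$, I must show $[f]$ is consistent in $[B\to D]$; a candidate upper bound is the function $x \mapsto \bigsqcup [f(x)]_{\mathcal{D}}$, which exists and is continuous because $\mathcal{D}$ is local and complete (so $[f(x)]$ is directed with lub $\approx f(x)$), and one checks any $g \approx f$ lies below it using Lemma~\ref{l_equiv1}. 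For completeness: I must show $f \approx \bigsqcup [f]$, i.e. that this pointwise supremum is equivariant and pointwise $\approx$-equivalent to $f$, again via Lemma~\ref{l_equiv1} together with completeness of $\mathcal{D}$. For convexity: given $f \approx h$ with $f, k \sqsubseteq h$ in $[B \to D]$, I must show $f \approx f \sqcup k$; the point is to reduce to convexity of $\mathcal{D}$ pointwise, checking $(f\sqcup k)(x) = f(x)\sqcup k(x)$ lies in $[f(x)]_{\mathcal{D}}$ for each total $x$ — here one uses that $f(x),k(x) \sqsubseteq h(x)$ and $f(x) \approx h(x)$. The delicate point throughout is that pointwise constructions on compact step functions must be shown to remain consistent and to assemble into genuine continuous functions in the per-class of $f$; this is precisely the role of density of $\mathcal{B}$, which guarantees (as noted after Lemma~\ref{l_lwc3}) that the relevant finite sets of step functions are $[B\to D]$-consistent, so the suprema in question actually exist in the function-space domain. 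Once these three points are established for the function space, the structural induction closes.
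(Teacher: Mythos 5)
Your overall strategy---structural induction, with the base cases and the sum/product steps routine, and the real work in the exponentiation step handled pointwise via Lemma~\ref{l_equiv1} and the density of $\mathcal{B}$---is exactly the paper's. The convexity argument for $[\mathcal{B}\to\mathcal{D}]$ is also the paper's. But your locality argument contains a step that fails as stated: the proposed upper bound $x\mapsto\bigsqcup [f(x)]_{\mathcal{D}}$ is in general not an element of $[B\to D]$. It is defined only on $\mathcal{B}^{R}$, and it need not even be monotone there: if $x\sqsubseteq y$ are both total but $x\not\approx_{\mathcal{B}}y$, then $f(x)\sqsubseteq f(y)$ gives no relation between the classes $[f(x)]_{\mathcal{D}}$ and $[f(y)]_{\mathcal{D}}$, and one can build a convex, local and complete $\mathcal{D}$ in which $\bigsqcup[f(x)]_{\mathcal{D}}$ is inconsistent with $\bigsqcup[f(y)]_{\mathcal{D}}$ (take $[f(x)]=\{a,b,a\sqcup b\}$ and $f(y)=a'$ with $a\sqsubseteq a'$, $[a']=\{a'\}$ and $a'$ inconsistent with $b$). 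The paper does not construct any explicit upper bound. It shows directly that $[f]_{[\mathcal{B}\to\mathcal{D}]}$ is a consistent subset of the domain $[B\to D]$: for $g,g'\approx f$, the values $g(x),g'(x)$ lie in the consistent set $[f(x)]_{\mathcal{D}}$ for each total $x$, density of $\mathcal{B}^{R}$ transfers consistency of values from $\mathcal{B}^{R}$ to all compacts $p\in B_{c}$, and consistency in $[B\to D]$ is a pointwise condition; consistent completeness of the domain $[B\to D]$ then supplies the least upper bound $h:=\bigsqcup[f]$ for free.

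A second, smaller point: for completeness you claim $f\approx\bigsqcup[f]$ follows from Lemma~\ref{l_equiv1} ``together with completeness of $\mathcal{D}$'', but completeness of $\mathcal{D}$ alone does not suffice. With $h=\bigsqcup[f]$ one has $h(x)=\bigsqcup\{g(x):g\approx f\}$, and this index set is in general a proper subset of $[f(x)]_{\mathcal{D}}$ (not every $d\approx_{\mathcal{D}}f(x)$ is realized as $g(x)$ for an equivariant $g\approx f$), so $h(x)$ may lie strictly between $f(x)$ and $\bigsqcup[f(x)]_{\mathcal{D}}$ without a priori belonging to $[f(x)]_{\mathcal{D}}$. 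The paper closes this via the sandwich $f(x)\sqsubseteq h(x)\sqsubseteq\bigsqcup[f(x)]_{\mathcal{D}}$ together with \emph{convexity} of $\mathcal{D}$, applied to the equivalence $f(x)\approx\bigsqcup[f(x)]_{\mathcal{D}}$ that completeness provides. With the locality step repaired as above and convexity invoked at this point, your induction closes.
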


\proof
The proof is by structural induction on $ \Gamma $.
The base cases are trivial, and it is easily verified that both the  product and
the sum of two convex, local and complete domain-pers are again convex, local 
and complete, so we only prove the step involving exponentiation:

If $  \mathcal{B} $ is an arbitrary domain-per and $  \mathcal{D} $ is convex,
then $ [ \mathcal{B} \rightarrow  \mathcal{D} ] $ is convex:
Let $f, g, h \in [B \to D] $ and assume that $ f \sqsubseteq g \sqsubseteq h $ and 
$ f \approx_{ [ \mathcal{B} \rightarrow  \mathcal{D} ]} h $. If $ x \in   \mathcal{B}^{R} $, we have
$ f(x) \sqsubseteq g(x) \sqsubseteq h(x) $ and $ f(x) \approx_{ \mathcal{D}} h(x) $, which implies
$ f(x) \approx_{ \mathcal{D}} g(x) $ since $  \mathcal{D} $ is convex.
This shows that $ f \approx_{ [ \mathcal{B} \rightarrow  \mathcal{D} ]} g $ by lemma~\ref{l_equiv1}.

If $  \mathcal{B} $ is dense and $  \mathcal{D} $ is convex, local and complete, then
 $  [ \mathcal{B} \rightarrow  \mathcal{D} ] $ is local and complete:
Assume that $  f \approx_{ [  \mathcal{B} \rightarrow  \mathcal{D} ]} g  $.
If $ x \in   \mathcal{B}^{R} $, then $  f(x) \approx_{ \mathcal{D}} g(x) $,   and since $  \mathcal{D} $ is local,
$ f(x) $ and $ g(x) $ are consistent.
This implies that $f(p) $ and $g(p) $ are consistent for every $p \in B_c$ since  $  \mathcal{B}^R $ is dense
in  $  \mathcal{B} $, so $ f $ and $ g$ are consistent in $ [ B \rightarrow D ] $.
Moreover, $ [ f ]_{ [ \mathcal{B} \rightarrow  \mathcal{D} ]  } $ is a consistent set and $
 h:= \bigsqcup [ f ]_{ [ \mathcal{B} \rightarrow  \mathcal{D} ]  } $ exists.
If  $ x \in  \mathcal{B}^{R} $, then $ h ( x) = \bigsqcup \{ g(x) : f \approx_{ [ \mathcal{B} \rightarrow  \mathcal{D} ]} g  \} $,
and in particular $ f( x) \sqsubseteq h(x) \sqsubseteq \bigsqcup [ f(x) ]_{  \mathcal{D}  } $.
This shows that  $ f( x) \approx_{ \mathcal{D}} h(x) $ since $  \mathcal{D} $ is convex, and 
$ f \approx_{ [ \mathcal{B} \rightarrow  \mathcal{D} ]  } h $  by lemma~\ref{l_equiv1}. \qed

\begin{prop} \label{p_equid2}
Directed systems in $ \cclcdomwp $ admit inductive limits.
\end{prop}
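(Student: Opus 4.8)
The plan is to take the inductive limit of the underlying domains provided by domain theory and equip it with the ``limiting'' per, then check that this construction lands inside $\cclcdomwp$ and has the required universal property. Concretely, let $(\{\mathcal{D}_i\}_{i\in I},\{f_{i,j}\}_{i\le j\in I})$ be a directed system in $\cclcdomwp$. By definition of equiembedding, the underlying maps $f_{i,j}:D_i\to D_j$ form a directed system in $\cdom$, so we have the domain $D$ with $D=\{x\in\prod_i D_i:\forall i\le j\,(f^-_{i,j}(x_j)=x_i)\}$ and embeddings $f_i:D_i\to D$ with $f^-_i(x)=x_i$, exactly as recalled before the statement. I would define a per $\approx$ on $D$ by declaring $x\approx y$ iff $x_i\approx_{\mathcal{D}_i}y_i$ for every $i\in I$ (equivalently, cofinally many $i$, which I would verify coincide using the equiembedding conditions: if $f^-_{i,j}(x_j)=x_i$, then $x_j\approx_{\mathcal{D}_j}x_j$ implies $x_i\approx_{\mathcal{D}_i}x_i$ by equivariance of $f^-$-\,no, rather: one uses the third equiembedding clause to propagate totality \emph{upward}, since $x_i=f^-_{i,j}(x_j)$ means $f_{i,j}(x_i)\sqsubseteq x_j$, and $x_j$ total forces $x_i$ total via equivariance of $f_{i,j}^-$; and conversely totality propagates downward along $f^-_{i,j}$). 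Call this domain-per $\mathcal{D}$. Then $\mathcal{D}^R=\{x\in D:\forall i\,(x_i\in\mathcal{D}_i^R)\}$.

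Next I would check that $\mathcal{D}\in\cclcdomwp$, i.e.\ that $\mathcal{D}$ is convex, local and complete, and that each $f_i:\mathcal{D}_i\to\mathcal{D}$ is an equiembedding. Convexity: if $x,z\sqsubseteq y$ in $D$ with $x\approx y$, then componentwise $x_i,z_i\sqsubseteq y_i$ and $x_i\approx_{\mathcal{D}_i}y_i$, so $x_i\approx_{\mathcal{D}_i}x_i\sqcup z_i$ by convexity of $\mathcal{D}_i$; since suprema in $D$ are computed componentwise, this gives $x\approx x\sqcup z$. Locality: for $x\in\mathcal{D}^R$ the class $[x]_{\mathcal{D}}$ has the property that its $i$-th projections lie in $[x_i]_{\mathcal{D}_i}$, which is consistent with supremum $\bigsqcup[x_i]_{\mathcal{D}_i}$; one then checks the family $(\bigsqcup[x_i]_{\mathcal{D}_i})_i$ is a compatible family defining an element of $D$ — this is where I expect the real work, since one must show $f^-_{i,j}$ sends $\bigsqcup[x_j]_{\mathcal{D}_j}$ to $\bigsqcup[x_i]_{\mathcal{D}_i}$, using that $f^-_{i,j}$ is equivariant, continuous, and (via the equiembedding clause) relates the two partial equivalence classes tightly enough. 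Completeness then follows because $x\approx\bigsqcup[x]_{\mathcal{D}}$ reduces componentwise to $x_i\approx_{\mathcal{D}_i}\bigsqcup[x_i]_{\mathcal{D}_i}$. That each $f_i$ is an equiembedding: it is an embedding of domains by construction, it is equivariant because $f_i(x)_j=f_{i,j}(x)$ for $i\le j$ and the $f_{i,j}$ are equivariant, and the third clause follows from the corresponding clauses for the $f_{i,j}$ together with $f^-_i(y)=y_i$.

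Finally I would verify the universal property in $\cclcdomwp$. Suppose $(\mathcal{E},\{g_i:\mathcal{D}_i\to\mathcal{E}\}_{i\in I})$ is another cocone of equiembeddings. The underlying domains and embeddings give, by the universal property of the inductive limit in $\cdom$, a unique embedding $g_I:D\to E$ with $g_I\circ f_i=g_i$; it remains to show $g_I$ is itself an equiembedding, i.e.\ equivariant and satisfying clause three. Equivariance: if $x\approx y$ in $\mathcal{D}$, I would show $g_I(x)\approx_{\mathcal{E}}g_I(y)$ by writing $g_I(x)=\bigsqcup_i g_i(x_i)$ (using $g_I$ continuous and $x=\bigsqcup_i f_i(x_i)$), noting each $g_i(x_i)\approx_{\mathcal{E}}g_i(y_i)$ by equivariance of $g_i$, and invoking completeness/convexity of $\mathcal{E}$ to pass the per through the directed supremum — this is the analogue, on the $\mathcal{E}$ side, of the locality argument above and is the second place I expect genuine care is needed. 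Clause three for $g_I$ follows similarly: if $g_I(x)\approx_{\mathcal{E}}z$ with $x\in\mathcal{D}^R$, one pulls the relation back along the $g_i$ and the projection $g_I^-$ componentwise, using that each $g_i$ satisfies clause three and that $g_I^-\circ g_i=f_i\circ(\text{something})$ appropriately. Uniqueness of $g_I$ is inherited from $\cdom$. The main obstacle throughout is the interaction between taking directed suprema of partial equivalence classes and the projection maps $f^-_{i,j}$, $g^-_I$; the convex/local/complete hypotheses are precisely what make these suprema behave, and Lemma~\ref{l_equiv1} together with the equiembedding clause are the tools I would lean on. I would likely relegate the most tedious verifications (that the limiting per is well defined on cofinal index sets, and the compatibility of $f^-_{i,j}$ with class-suprema) to the appendix of claims.
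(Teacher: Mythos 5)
There is a genuine gap, and it is in the very first step: the per you put on $D_I$ is not the right one. You declare $x\approx y$ iff $x_i\approx_{\mathcal{D}_i}y_i$ for \emph{every} $i\in I$, and claim this coincides with the cofinal version by ``propagating totality downward along $f^-_{i,j}$''. But the projections of an equiembedding are not equivariant, and the third clause of Definition~\ref{d_equiembedding} only has $x\in\mathcal{D}^R$ as a \emph{hypothesis}; it never forces a projection of a total element to be total. Concretely, take $I=\{0,1\}$, $\mathcal{D}_0=(\{\bot,a\},\{(a,a)\})$ and $\mathcal{D}_1=(\{\bot,a,c\},\{(a,a),(c,c)\})$ with $a,c$ incomparable and $f_{0,1}$ the inclusion; both objects are convex, local and complete and $f_{0,1}$ is an equiembedding. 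In the limit, $f_1(c)=(\bot,c)$, and since $\bot\not\approx_0\bot$ your per makes $f_1(c)$ non-total even though $c\in\mathcal{D}_1^R$. So $f_1$ is not equivariant, the cocone maps are not morphisms of $\cclcdomwp$, and your candidate object cannot be the inductive limit. The same problem resurfaces in your locality argument, where you try to assemble $(\bigsqcup[x_i]_{\mathcal{D}_i})_i$ into a compatible family, and in the equivariance of the mediating morphism.

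The paper's definition avoids this by making equivalence \emph{witnessed at a level}: $x\approx_I x'$ iff there is some $i$ with $x_i\approx_i x'_i$ and $f_{i,k}(x_i)\approx_k x_k$, $f_{i,k}(x'_i)\approx_k x'_k$ for all $k\geq i$. Totality and equivalence are thus imposed only from the witnessing level upward, which is exactly what makes each $f_i$ an equiembedding (via the third clause of the $f_{i,j}$) and what gives the rank function used later (e.g.\ in Lemma~\ref{l_lfp1}): every total element of the limit is, up to $\approx_I$, the image of a total element introduced at some stage. With this per, locality is proved by taking a common witness for a finite subset of $[x]_{\mathcal{D}_I}$ and projecting, and equivariance of the mediating map $g_I(x)=\bigsqcup_{k\geq i}g_k(x_k)\approx_{\mathcal{E}}g_i(x_i)$ uses convexity of $\mathcal{E}$ together with the witness. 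Your outline of the universal-property verification is otherwise in the right spirit, but it cannot be repaired without replacing the componentwise per by the witnessed one.
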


\proof
Let  $ I= ( I, \leq ) $ be a directed partial order and let $ ( \{ \mathcal{D}_{i}  \}_{ i \in I }, \{ f_{i,j} \}_{ i \leq j \in I })  $ be a directed system over $I$ in $ \cclcdomwp $. Let $ \approx_{i} $ be the per on $  \mathcal{D}_{i} $. 
 
$ ( \{ D_{i}  \}_{ i \in I }, \{ f_{i,j} \}_{ i \leq j \in I })  $
is a directed system in $ \cdom $, so let $  (D_{I} ,  \{ f_{i }  \}_{ i \in I } ) $ be its inductive limit.
We define a binary relation $ \approx_{I} $ on $ D_{I} $ as follows:
$ x \approx_I x'$  if and only if there exists $ i \in I $  such that
$x_i \approx_i x'_i$ and such that \[ \forall k\geq i \: (f_{i,k}(x_i)\approx_k x_k \wedge
f_{i,k}(x'_i)\approx_k x'_k).\]
We then say that $ x \approx_I x'$ is {\em witnessed by} $i$. This is clearly a symmetric relation, and it also
has a number of other nice properties:
\begin{enumerate}[$\bullet$]

\item{Let  $ x \approx_I x'$ be witnessed by $i$ and let $j\geq i
  $. Then  $ x \approx_I x'$ is witnessed by $j$:\ }
We have $x_j \approx_j f_{i,j}(x_i) \approx_j f_{i,j}(x'_i) \approx_j x'_j$ because $j\geq i $ and $f_{i,j} $ is equivariant.
Moreover, if $k \geq j$, then $f_{j,k} $ is equivariant, and  it follows that
\[f_{j,k}(x_j) \approx_k f_{j,k} (f_{i,j} (x_i)) = f_{i,k}(x_i) \approx_k x_k .\]
This shows that  $f_{j,k}(x_j) \approx_k x_k $ and by symmetry, $f_{j,k}(x'_j) \approx_k x'_k$.

\item{$\approx_I$ is transitive:\ }
Assume that $x \approx_I x'$ is witnessed by $i $ and that $x' \approx_I x'' $ is
witnessed by $j$. We can then choose a common witness $k \geq i,j$. The 
transitivity of $\approx_k$ gives $x_k \approx_k x''_k$. If $ l \geq k $,
then \[f_{k,l}(x_k) \approx_l f_{k,l} ( f_{i,k} (x_i)) = f_{i,l}(x_i) \approx_l x_l.\]  
By symmetry, using $j$, we obtain $f_{k,l}(x''_k) \approx_l x''_l$. This shows that
$ x \approx_I x''$, witnessed by $k$.

\item{If $ x \approx_I x$ is witnessed by $i$ and $ x \approx_I x'$,
  then $ x \approx_I x'$ is witnessed by $i$:\ }  
Choose $j \geq i $ such that $ x \approx_I x'$ is witnessed by $j$. We show that $x_k \approx_k x'_k$ for $k \geq i $:
Choose some $l \geq j,k$. Firstly, $f_{k,l} (x_k) \approx_l x_l$  since $l \geq i $ is a witness that $ x \approx_I x$.
Secondly, $x_l \approx_l x'_l$ since $l \geq j $ is a witness that  $ x \approx_I x'$. 
Transitivity of $\approx_l$ gives $f_{k,l} (x_k) \approx_l x'_l$ which implies $x_k \approx_k x'_k $ since $f_{k,l} $ is
an equiembedding.

In particular,  $x_i \approx_i x'_i$ and $ f_{i,k}(x'_i) \approx_k f_{i,k}(x_i) \approx_k x_k \approx_k x'_k $   for  
arbitrary $k\geq i$.
\end{enumerate}

\noindent We have now shown that $\approx_I$ is a partial equivalence relation. We denote the domain-per $(D_I, \approx_I )$ 
by $\mathcal{D}_I$. We have also seen that the equivalence classes formed by this per have uniform witnesses, so we
may choose representatives as we like.

We will now show that $\mathcal{D}_I$ is convex, local and complete: \begin{enumerate}[$\bullet$]
\item{$\mathcal{D}_I$ is convex:\ }
Assume that $x \approx_I y $ is witnessed by $i$ and that $x,z \sqsubseteq y $, and  let $ k \geq i $.
By projection  $x_k \approx_k y_k $ and $ x_k , z_k\sqsubseteq y_k $, and since $\mathcal{D}_k $ is 
convex this gives $x_k \approx_k x_k \sqcup z_k $. In particular,  $x_i \approx_i x_i \sqcup z_i $.
Moreover, $ x_k \approx_k f_{i,k}(x_i) $, and $ f_{i,k}(x_i)  \approx_k f_{i,k}(x_i \sqcup z_i) $ since 
$f_{i,k}$ is equivariant. By transitivity of $ \approx_k$, this gives  $  f_{i,k}(x_i \sqcup z_i
)\approx_k  x_k \sqcup z_k $. This shows that $ x \approx_I x \sqcup z $, witnessed by $i$.

\item{$\mathcal{D}_I$ is local:\ } 
Let  $ x \in \mathcal{D}_I^R$. We prove that $ [x]_{ \mathcal{D}_I} $ is consistent by showing that an arbitrary finite subset of  $ [x]_{ \mathcal{D}_I} $ is consistent:

Choose $x' , x'' , \dots , x^{(n)} \in  [x]_{ \mathcal{D}_I}$. Then there exists some uniform witness $i \in I $ such that $x_i  \in  \mathcal{D}_i^R$
and  $x'_i , x''_i , \dots , x^{(n)}_i \in  [x_i]_{ \mathcal{D}_i}$.

For an arbitrary $j \in I $, choose some $ k \geq i,j$. 
We have $x'_k , x''_k , \dots , x^{(n)}_k \in  [x_k]_{ \mathcal{D}_k}$.
Since $\mathcal{D}_k$ is local, this means that $\{ x'_k , x''_k , \dots , x^{(n)}_k \} $ is consistent in $D_k$.
The set of projections, $\{ x'_j , x''_j , \dots , x^{(n)}_j \} $, is then consistent in $D_j$.
Since $j $ was arbitrarily chosen, this shows that $ \{ x' , x'' , \dots , x^{(n)} \} $ is consistent in $D_I$.

\item{$\mathcal{D}_I$ is complete:\ }
Assume that $x \approx_I x$ is witnessed by $i$. For each $k \geq i $, we have
\[ \{ x'_k : x \approx_I x' \} = \{ u \in D_k : x_k \approx_k u \}. \]
Let $\overline{x}:=\bigsqcup\{x': x\approx_Ix'\}$. We show that  
$x \approx_I \overline{x}$, witnessed by $i$:
\begin{enumerate}[$-$]
\item $\overline{x}_i=\bigsqcup\{x'_i:x\approx_Ix'\}=\bigsqcup\{u:x_i\approx_i
u\}\approx_ix_i$ since $\mathcal{D}_k$ is complete; and
\item for each $k\geq i $, we have $\{ f_{i,k}(u):x_{i}\approx_iu\}\subseteq
\{v:x_k\approx_kv\}$, and \[f_{i,k}(x_i) \sqsubseteq f_{i,k}(\overline{x}_i)
=\bigsqcup \{ f_{i,k}(u):x_{i}\approx_iu\}\sqsubseteq\bigsqcup\{v:x_k
\approx_k v\}=\overline{x}_k.\]
Since $\mathcal{D}_k$ is convex, this shows that $f_{i,k}(\overline{x}_i) 
\approx_k \overline{x}_k$.
\end{enumerate}
\end{enumerate}

It remains to show that $(\mathcal{D}_I,\{ f_i\}_{i\in I})$ is an inductive limit:
\begin{enumerate}[$\bullet$]

\item{Each $f_i:\mathcal{D}_{i}\to \mathcal{D}_{I}$ is an
  equiembedding:\ }
Trivially, $f_i$ is an equivariant embedding. 

Assume that $u \in \mathcal{D}_i^R$ and that $f_i(u)\approx_I x$ is witnessed by some $j\geq i$.
Then $f_{i,j}(u)\approx_j x_j$, and since $f_{i,j} $ is an equiembedding, this implies that
$u\approx_i x_i$.

\item{$(\mathcal{D}_I,\{ f_i\}_{i\in I})$ is universal in
  $\cclcdomwp$:\ }
Let $ \mathcal{E}$ be a convex, local and complete domain-per and let $\{ g_i:
\mathcal{D}_i \to \mathcal{E} \}_{i \in I} $ be a family of equiembeddings such
that $ g_i=g_j\circ f_{i,j} $ whenever $i\leq j$.
At domain level, there exists a unique embedding $g_I :D_I \to E $ such that
$ g_i=g_I \circ f_i $ for all $i \in I $, and it is defined by $ g_I(x) =
\bigsqcup_{i\in I} g_i(x_i)$.
We show that $g_I: \mathcal{D}_I \to \mathcal{E} $ is an equiembedding:
\begin{enumerate}[$-$]
\item
If $x \approx_I x'$ and this is witnessed by $i$, then $x_k \approx_k x'_k$ for all $k
\geq i$ and $g_I (x) = \bigsqcup_{k\geq i} g_k(x_k) \approx_{\mathcal{E}}g_i(x_i)$,
since $\mathcal{E}$ is convex. By symmetry, $g_I (x')  \approx_{
\mathcal{E}}g_i(x'_i)$. Moreover, $g_i(x_i) \approx_{\mathcal{E}}
g_i(x'_i)$ since $g_i$ is equivariant. This shows that $g_I(x)  \approx_{\mathcal{E}}
g_I(x')$, and that $g_I$ is equivariant.

\item
Assume that $x \in \mathcal{D}_I^R$, witnessed by $i$, and that $g_I (x) \approx_{
\mathcal{E}} y$.
Then, for each $k\geq i$, we have $f_{ik}(x_i)\approx_kx_k$  and $g_i(x_i)
\approx_{\mathcal{E}} g_k (x_k) $, since $g_k$ is equivariant. Thus,
$g_I(x)=\bigsqcup_{i\in I} g_i(x_i)\approx_{\mathcal{E}} g_i(x_i)$. It follows that
$g_k(x_k) \approx_{\mathcal{E}}y$ and $x_k \approx_k g_I^-(y)_k$. This holds for
arbitrary $k\geq i $, so it shows that $x \approx_I g_I^-(y)$. \qed
\end{enumerate}
\end{enumerate}

\begin{rem}
In the case of $ (I, \leq ) $ being a well-order, we define 
\[ \rank_{I} (x) : =  \min \{ i \in I :  x \in f_{i} (  \mathcal{D}_{i}^{R} ) \} ,\]
for an arbitrary $ x \in \mathcal{D}_{I}^{R} $.
If $ x \approx_{I} y $, then $ \rank_{I} (x) = \rank_{I} (y) $, since the fact that each $ f_{i,j} $ is an equiembedding ensures that equivalent elements are introduced at the same level.
\end{rem}

\begin{defi} \label{d_uniformmapping}
Let $ ( \mathcal{D} , f ) :=  ( \{ \mathcal{D}_{i}  \}_{ i \in I }, \{ f_{i,j} \}_{ i \leq j \in I })  $
and $  ( \mathcal{E} , g ) := ( \{ \mathcal{E}_{i}  \}_{ i \in I }, \{ g_{i,j} \}_{ i \leq j \in I })  $ 
be directed systems over the same directed partial order $ (I, \leq )$.

A {\em uniform mapping} from $  ( \mathcal{D} , f ) $ into $ ( \mathcal{E} , g ) $ is a 
a family $ \varphi = \{ \varphi_{i} :  \mathcal{D}_{i} \rightarrow  \mathcal{E}_{i}  \}_{i \in I }  $ of equivariant maps such that
$ g_{i,j} \circ \varphi_{i} = \varphi_{j} \circ f_{i,j}  $ whenever $ i \leq j \in I $.
\end{defi}

\begin{lem} \label{l_equid3}
Let $ \varphi  : ( \mathcal{D} , f ) \rightarrow  ( \mathcal{E} , g )$ be a uniform mapping.

Then there exists a unique equivariant $ \varphi_{I} : \mathcal{D}_{I}  \rightarrow \mathcal{E}_{I}  $ such that 
$  \varphi_{I} \circ f_{i} = g_{i} \circ \varphi_{i} $ for every $ i \in I $.
Moreover, if $ \chi = \{ \chi_{i} \}_{ i \in I } : ( \mathcal{E} , g ) \rightarrow ( \mathcal{D} , f ) $ is a uniform mapping
such that $ ( \varphi_{i} , \chi_{i} ) $ is a weak  isomorphism pair for every $ i \in I $,
then  $ ( \varphi_{I} , \chi_{I} ) $ is  a weak  isomorphism pair.
\end{lem}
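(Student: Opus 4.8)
\proof
The plan is to construct $\varphi_I$ first on the underlying domains, mirroring the construction of the mediating embedding $g_I$ in the proof of Proposition~\ref{p_equid2}, and then to verify equivariance and the weak-isomorphism clause by exploiting the uniform-witness structure of $\approx_I$ together with the convexity, locality and completeness of $\mathcal{D}_I$ and $\mathcal{E}_I$ established there.

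First I would set $h_i := g_i \circ \varphi_i : D_i \to E_I$, with $g_i : \mathcal{E}_i \to \mathcal{E}_I$ the canonical equiembedding into the inductive limit of $(\mathcal{E},g)$. From the cocone identity $g_j \circ g_{i,j} = g_i$ and the uniform-mapping identity of Definition~\ref{d_uniformmapping} one gets $h_j \circ f_{i,j} = g_j \circ g_{i,j} \circ \varphi_i = h_i$ for $i \leq j$, so $\{ h_i \}_{i \in I}$ is a cocone over $(\{D_i\},\{f_{i,j}\})$ in $\cdom$. As for $g_I$ in Proposition~\ref{p_equid2}, the family $\{ h_i(x_i) \}_{i\in I}$ is $\sqsubseteq$-directed for every $x \in D_I$, so $\varphi_I(x) := \bigsqcup_{i\in I} h_i(x_i)$ defines a continuous map $\varphi_I : D_I \to E_I$ with $\varphi_I \circ f_i = g_i \circ \varphi_i$ for all $i$. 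Since $x = \bigsqcup_{i \in I} f_i(x_i)$ for every $x \in D_I$, any continuous $\psi$ with $\psi \circ f_i = g_i \circ \varphi_i$ for all $i$ must coincide with $\varphi_I$; hence once $\varphi_I$ is shown to be equivariant, it is also the unique \emph{equivariant} such map.

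The step I expect to be the main obstacle is the equivariance of $\varphi_I$. Suppose $x \approx_I x'$ is witnessed by $i$. For every $k \geq i$ one has $x_k \in \mathcal{D}_k^R$ and, as in Proposition~\ref{p_equid2}, $f_{k,l}(x_k) \approx_l x_l$ whenever $i \leq k \leq l$; applying $\varphi_l$ and then $g_l$ and using the uniform-mapping identity, the elements $e_k := g_k(\varphi_k(x_k)) \in \mathcal{E}_I^R$ ($k \geq i$) form an $\sqsubseteq$-increasing family that is pairwise $\approx_{\mathcal{E}_I}$-equivalent, with $\varphi_I(x) = \bigsqcup_{k\geq i} e_k$. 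Since $\mathcal{E}_I$ is local and complete, $\bigsqcup [e_i]_{\mathcal{E}_I}$ exists and $e_i \approx_{\mathcal{E}_I} \bigsqcup [e_i]_{\mathcal{E}_I}$; as $e_i \sqsubseteq \varphi_I(x) \sqsubseteq \bigsqcup [e_i]_{\mathcal{E}_I}$, convexity of $\mathcal{E}_I$ yields the key auxiliary fact
\[ \varphi_I(x) \approx_{\mathcal{E}_I} e_i = g_i(\varphi_i(x_i)) \qquad \mbox{for } x \in \mathcal{D}_I^R \mbox{ witnessed by } i . \]
Symmetrically $\varphi_I(x') \approx_{\mathcal{E}_I} g_i(\varphi_i(x'_i))$, and since $x_i \approx_i x'_i$ while $\varphi_i$ and $g_i$ are equivariant, $g_i(\varphi_i(x_i)) \approx_{\mathcal{E}_I} g_i(\varphi_i(x'_i))$; transitivity gives $\varphi_I(x) \approx_{\mathcal{E}_I} \varphi_I(x')$, so $\varphi_I$ is equivariant.

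For the weak-isomorphism clause, the first part applied to $\chi$ produces an equivariant $\chi_I : \mathcal{E}_I \to \mathcal{D}_I$ with $\chi_I \circ g_i = f_i \circ \chi_i$. By symmetry it suffices to show $\chi_I \circ \varphi_I \approx_{[\mathcal{D}_I \to \mathcal{D}_I]} \id_{D_I}$, which by Lemma~\ref{l_equiv1} amounts to $(\chi_I \circ \varphi_I)(x) \approx_{\mathcal{D}_I} x$ for every $x \in \mathcal{D}_I^R$. Fixing such an $x$ witnessed by $i$ and using, in turn, the auxiliary fact above with equivariance of $\chi_I$, then $\chi_I \circ g_i = f_i \circ \chi_i$, then $\chi_i \circ \varphi_i \approx \id_{D_i}$ with Lemma~\ref{l_equiv1} and equivariance of $f_i$, and finally the easy observation that $f_i(x_i) \approx_I x$ whenever $x \in \mathcal{D}_I^R$ is witnessed by $i$, we obtain
\[ (\chi_I \circ \varphi_I)(x) \approx_{\mathcal{D}_I} \chi_I(g_i(\varphi_i(x_i))) = f_i(\chi_i(\varphi_i(x_i))) \approx_{\mathcal{D}_I} f_i(x_i) \approx_{\mathcal{D}_I} x , \]
and transitivity finishes the argument. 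The same reasoning with the two systems interchanged gives $\varphi_I \circ \chi_I \approx_{[\mathcal{E}_I \to \mathcal{E}_I]} \id_{E_I}$, so $(\varphi_I,\chi_I)$ is a weak isomorphism pair.
\qed
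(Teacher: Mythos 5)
Your proposal is correct and follows the same route as the paper: you build $\varphi_I$ as the least upper bound of the directed family $g_i \circ \varphi_i \circ f_i^-$, verify $\varphi_I \circ f_i = g_i \circ \varphi_i$, and prove the weak-isomorphism clause by chasing a total element through $\chi_I \circ g_i = f_i \circ \chi_i$ exactly as the paper does. The one place you genuinely diverge is the equivariance step, and there your argument is the more careful one. The paper's proof simply asserts that $x \approx_I y$ gives an $i$ with $x = f_i(x_i)$ and $y = f_i(y_i)$, which is literally false for the limit construction (one only has $x \approx_I f_i(x_i)$, and using that would presuppose the equivariance being proved); you instead establish the auxiliary fact $\varphi_I(x) \approx_{\mathcal{E}_I} g_i(\varphi_i(x_i))$ for $x \in \mathcal{D}_I^R$ witnessed by $i$, via the chain $e_i \sqsubseteq \varphi_I(x) \sqsubseteq \bigsqcup [e_i]_{\mathcal{E}_I}$ and the convexity, locality and completeness of $\mathcal{E}_I$ guaranteed by Proposition~\ref{p_equid2}. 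This is the same sandwich argument the paper itself uses in Lemma~\ref{l_lwc4}, and it closes the gap cleanly; it also feeds directly into your weak-isomorphism computation, which the paper again shortcuts by writing $x = f_i(x_i)$. You additionally spell out the uniqueness of $\varphi_I$ (via $x = \bigsqcup_i f_i(x_i)$), which the paper's proof leaves implicit. In short: same construction, but your treatment of the two representative-choosing steps is a genuine improvement in rigour at the modest cost of invoking the structure of the limit domain-per.
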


\proof
If $ i \leq j $, then 
$ g_{i} \circ \varphi_{i} \circ f_{i}^{-} 
= g_{j} \circ g_{i, j} \circ \varphi_{i} \circ f_{i}^{-} 
= g_{j} \circ \varphi_{j} \circ f_{i,j} \circ f_{i}^{-} 
\sqsubseteq g_{j} \circ \varphi_{j} \circ f^{-}_{j} $, 
so $ \{ g_{i} \circ \varphi_{i} \circ f_{i}^{-}  : i \in I \} $ is a directed set in $ [ D_{I} \rightarrow E_{I} ] $.
Let $ \varphi_{I}  $ be its least upper bound.
Furthermore,
$ g_{j} \circ \varphi_{j} \circ f^{-}_{j} \circ f_{i} 
= g_{j} \circ \varphi_{j} \circ f_{i, j}
= g_{j} \circ  g_{i,j} \circ \varphi_{i} 
= g_{i} \circ \varphi_{i} $, whenever $ i \leq j$, thus
$  \varphi_{I} \circ f_{i} =   \bigsqcup_{j \geq i}  ( g_{j} \circ \varphi_{j} \circ f^{-}_{j} ) \circ f_{i} =  g_{i} \circ \varphi_{i} $ for every $ i \in I $. 

$  \varphi_{I} $ is equivariant:
Let $ x, y \in D_{I} $ and assume that $ x \approx_{I} y $. Then there exists some $ i \in I $ such that
$ x = f_{i} (x_{i} ) $, $ y = f_{i} (y_{i} )  $ and $ x_{i} \approx_{i}  y_{i} $.
Both $ g_{i} $ and  $ \varphi_{i} $ are equivariant, so this gives 
\[ g_{I} (x) = (g_{i} \circ \varphi_{i}) ( x_{i} )  \approx_{I} ( g_{i} \circ \varphi_{i}) (  y_{i} ) =  g_{I} (y). \]

Let $ \chi = \{ \chi_{i} \}_{ i \in I } :  ( \mathcal{E} , g ) \to  ( \mathcal{D} , f ) $ be a uniform mapping such that 
$ ( \varphi_{i} , \chi_{i} ) $ is a weak  isomorphism pair for every $ i \in I $.
Let $ x \in \mathcal{D}_{I}^{R} $ and choose $ i \in I $ such that $ x= f_{i} (x_{i} ) $ and $ x_{i} \in \mathcal{D}_{i}^{R} $.
Then
\[ (\chi_{I} \circ \varphi_{I})  (x) =  (\chi_{I} \circ  g_{i} \circ \varphi_{i} )( x_{i} ) 
= ( f_{i} \circ \chi_{i} \circ \varphi_{i} ) (x_{i} ) 
\approx f_{i} (x_{i} ) = x .\]
This shows that $\chi_I \circ \varphi_I $, and  $ \varphi_{I} \circ \chi_{I}  \approx \id_{E} $ by a symmetric argument.
\qed

Isomorphism of domains is preserved under inductive limits of directed systems.
This means that if the weak isomorphism pairs are actual isomorphisms in $\cclcdomwp $, i.e.
isomorphism pairs of the underlying domains, then the inductive limits are isomorphic as well.
In particular, the inductive limit of a directed system in $\cclcdomwp $ is unique up to isomorphism.

\subsection{A least fixed point}

A functor $ \funcf : \cclcdomwp \to \cclcdomwp $ is  {\em strictly positive} if it is the functorial
extension of a strictly positive operation on domain-pers.
We will now show that such a functor has a least 
fixed point.
In domain theory, least fixed points are constructed by the means of $\omega$-chains.
For domain-pers, we will need uncountable chains over $\cclcdomwp$.

\begin{defi}
Let $ \funcf : \cclcdomwp \to \cclcdomwp $ be a  strictly positive functor 
and let $\beta $ be a limit ordinal.
We construct a {\em $\beta$-chain  $  ( \{  \mathcal{D}_{\alpha } \}_{ \alpha \in \beta } , \{  
f_{ \alpha, \alpha' } \}_{ \alpha \leq \alpha' \in \beta } ) $  from  $ \funcf $} as follows:

\begin{enumerate}[$\bullet$]
\item Let $ \mathcal{D}_{0} $ be the  initial object in $ \cclcdomwp $, i.e.
the trivial domain $ \{ \bot \} $ with the empty per.

\item If $ \alpha \in \beta $, let $  
\mathcal{D}_{\alpha + 1} :=  \funcf (  \mathcal{D}_{\alpha } ) $ and let
 $ f_{0, \alpha +1  }  $ be the unique equiembedding from $  \mathcal{D}_{ 0 } $ into  $\mathcal{D}_{  \alpha +1 } $.

\item If $ \alpha \leq \alpha' \in \beta  $, let $  f_{ \alpha+1 , \alpha'+1} := \funcf (  f_{ \alpha , \alpha'} ) $.

\item If $ \gamma \in \beta $ is a limit ordinal, let  $  (  \mathcal{D}_{\gamma },  \{  f_{ \alpha, \gamma } \}_{ 
\alpha  \in \gamma } ) $ be the inductive limit of  the $ \gamma $-chain $  ( \{  \mathcal{D}_{\alpha } \}_{ \alpha 
\in \gamma } , \{  f_{ \alpha, \alpha' } \}_{ \alpha \leq \alpha' \in \gamma } ) $.
\end{enumerate}
\end{defi}

\noindent The underlying functor $  \hat{ \funcf} : \cdom \to \cdom  $ of a strictly positive $\funcf$ 
is obtained by replacing each parameter in $\funcf$ by its underlying domain and each basic 
operation by the corresponding basic operation on domains, so  clearly it is strictly positive as well.
The $\omega$-chain $ (  \{   D_{n} \}_{n \in \omega } , \{  f_{m,n} \}_{  
m \leq n \in \omega } ) $ coincides with the $\omega$-chain used in the least
fixed point construction for $\hat{\funcf}$, and $ D_{ \omega } $ is a least fixed point of $ \hat{ \funcf } $. 
However,  $  \mathcal{D}_{\omega} $ is not in general a fixed point of $ \funcf $, as
the example below shows.

We  have $ D_{ \omega} \cong  D_{ \alpha } $ for all $ \alpha\geq\omega$, since isomorphisms are preserved under inductive 
limits in $ \cdom $, so   $ f_{ \alpha , \alpha + 1 } $ is an isomorphism in $ \cclcdomwp $ 
if and only if $ f_{ \alpha , \alpha +1  }^{-} $ is equivariant.

\begin{exa}
Let  $  \mathcal{A} $ be some non-trivial domain-per and let $  \mathcal{N} = ( \mathbb{N}_{ \bot} , =\vert_{ \mathbb{N }} )$, 
with  $  \mathbb{N}_{ \bot} $  the  flat domain of natural numbers.
Let $ \funcf : \cclcdomwp \to \cclcdomwp $ be defined by $ \funcf ( X) = \mathcal{A} +  [ \mathcal{N} \to X ] $.

We show that $f^-_{\omega,\omega+1} $ is not equivariant:  
Choose some $  a \in \mathcal{A}^{R} $ and let $ x_{0} := (0,a) $.
If $ x_{n} \in \mathcal{D}_{\omega}^{R} $, let $ \varphi_{n}:\nat_\bot \to D_\omega $ be the function 
constantly  equal to $ x_{n} $  and let $ x_{n+1} := (1, \varphi_{n} ) $.
Let $ \varphi :\nat_\bot \to D_\omega$ be the strict function defined by $ \varphi(n) = x_n$.
Then $\varphi\in\funcf(  \mathcal{D}_{\omega})^{R} $.
However, $ f^-_{\omega,\omega+1}(\varphi) = (1, \varphi )  \notin \bigcup_{ n \in \omega } \mathcal{D}_{n}^{R} = \mathcal{D}_{\omega }^{R}$,
since  $ \rank_{ \omega } (x_{n} ) = n $ for every $n\in\nat$.
\end{exa}

\begin{lem} \label{l_lfp1}
Let $ \funcf : \cclcdomwp \rightarrow \cclcdomwp $ be strictly positive.

Then there exists a limit ordinal $ \gamma_0 $ such that if 
$  ( \{  \mathcal{D}_{\alpha } \}_{ \alpha \in \gamma_0 } , \{  f_{ \alpha, \beta } \}_{ \alpha \leq \beta \in \gamma_0 } ) $
is the $ \gamma_0 $-chain constructed from $ \funcf $, then
 $ f_{ \alpha , \alpha + 1} $ is an isomorphism in $ \cclcdomwp $ for some $\alpha\in\gamma_0$.
\end{lem}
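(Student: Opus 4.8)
The strategy is to transport all the pers $\approx_{\alpha}$ onto the single domain $D_{\omega}$ and observe that the resulting ascending chain of relations must stabilise. Recall, as noted in the discussion preceding the lemma, that the underlying $\omega$-chain of $(\{\mathcal{D}_{\alpha}\},\{f_{\alpha,\beta}\})$ is the classical least-fixed-point chain for $\hat{\funcf}$, so $D_{\omega}$ is a fixed point of $\hat{\funcf}$ and every underlying map $f_{\alpha,\beta}\colon D_{\alpha}\to D_{\beta}$ with $\omega\le\alpha\le\beta$ is an isomorphism of domains (using that isomorphisms are preserved through the remaining limit stages, cf.\ Proposition~\ref{p_equid2}). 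For every ordinal $\alpha\ge\omega$ occurring in the chain, define a binary relation $\approx_{\alpha}^{\ast}$ on $D_{\omega}$ by
\[ x\approx_{\alpha}^{\ast}y \quad\Iff\quad f_{\omega,\alpha}(x)\approx_{\alpha}f_{\omega,\alpha}(y). \]
Since $f_{\omega,\alpha}$ is a domain isomorphism, each $\approx_{\alpha}^{\ast}$ is a per on $D_{\omega}$. Moreover, since each $f_{\alpha,\beta}$ is equivariant and $f_{\alpha,\beta}\circ f_{\omega,\alpha}=f_{\omega,\beta}$, the family $(\approx_{\alpha}^{\ast})_{\alpha\ge\omega}$ is non-decreasing: $\alpha\le\beta$ implies ${\approx_{\alpha}^{\ast}}\subseteq{\approx_{\beta}^{\ast}}$.

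\textbf{Key claim.} For $\alpha\ge\omega$, the projection $f_{\alpha,\alpha+1}^{-}$ is equivariant if and only if ${\approx_{\alpha}^{\ast}}={\approx_{\alpha+1}^{\ast}}$. Here $f_{\alpha,\alpha+1}^{-}$ is the inverse of the domain isomorphism $f_{\alpha,\alpha+1}$, and $f_{\alpha,\alpha+1}^{-}\circ f_{\omega,\alpha+1}=f_{\omega,\alpha}$. Since $f_{\omega,\alpha+1}$ is a bijection, every pair $(u,v)$ related by $\approx_{\alpha+1}$ can be written as $\bigl(f_{\omega,\alpha+1}(x),f_{\omega,\alpha+1}(y)\bigr)$ with $x\approx_{\alpha+1}^{\ast}y$, and applying $f_{\alpha,\alpha+1}^{-}$ to it gives $\bigl(f_{\omega,\alpha}(x),f_{\omega,\alpha}(y)\bigr)$; hence equivariance of $f_{\alpha,\alpha+1}^{-}$ is precisely the inclusion ${\approx_{\alpha+1}^{\ast}}\subseteq{\approx_{\alpha}^{\ast}}$, which together with the monotonicity above is equality. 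When this holds, $f_{\alpha,\alpha+1}^{-}$ is an equivariant embedding (it automatically satisfies the third clause of Definition~\ref{d_equiembedding}, by equivariance of $f_{\alpha,\alpha+1}$), hence an equiembedding, and it is two-sided inverse to $f_{\alpha,\alpha+1}$; so $f_{\alpha,\alpha+1}$ is an isomorphism in $\cclcdomwp$, in accordance with the remark preceding the lemma.

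\textbf{Conclusion.} It remains to produce $\alpha_{0}\ge\omega$ with ${\approx_{\alpha_{0}}^{\ast}}={\approx_{\alpha_{0}+1}^{\ast}}$, and here the transfinite length of the construction is what makes the argument work: a non-decreasing family of subsets of the fixed set $D_{\omega}\times D_{\omega}$, indexed by the ordinals $\ge\omega$, is eventually constant. Indeed, putting $A:=\bigcup_{\alpha\ge\omega}{\approx_{\alpha}^{\ast}}\subseteq D_{\omega}\times D_{\omega}$ and $\rho(s):=\min\{\alpha\ge\omega: s\in{\approx_{\alpha}^{\ast}}\}$ for $s\in A$, the set $\{\rho(s):s\in A\}$ of ordinals is bounded by some $\alpha_{0}$, whence ${\approx_{\alpha}^{\ast}}=A$ for every $\alpha\ge\alpha_{0}$; in particular ${\approx_{\alpha_{0}}^{\ast}}={\approx_{\alpha_{0}+1}^{\ast}}$. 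By the key claim, $f_{\alpha_{0},\alpha_{0}+1}$ is an isomorphism in $\cclcdomwp$. Taking $\gamma_{0}:=\alpha_{0}+\omega$ — a limit ordinal with $\alpha_{0}\in\gamma_{0}$, for which the $\gamma_{0}$-chain constructed from $\funcf$ agrees with the above construction through stage $\alpha_{0}+1$ — then gives the statement. The only genuinely non-formal ingredient is this stabilisation step, which amounts to the observation that $D_{\omega}\times D_{\omega}$ is a set while the ordinals form a proper class; everything else is bookkeeping with the equiembeddings $f_{\alpha,\beta}$ and their projections.
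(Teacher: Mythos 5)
Your argument is correct, and it reaches the conclusion by a noticeably different route from the paper. The paper's proof fixes a limit ordinal $\gamma_0$ with $\vert\gamma_0\vert>\vert D_\omega\vert$ in advance and argues by contradiction: if no $f_{\alpha,\alpha+1}^-$ is equivariant, then at every successor stage a genuinely new total element appears, so $\rank_{\gamma_0}$ maps $\mathcal{D}_{\gamma_0}^R$ onto the successor ordinals below $\gamma_0$, contradicting $\vert\mathcal{D}_{\gamma_0}^R\vert\le\vert D_{\gamma_0}\vert=\vert D_\omega\vert<\vert\gamma_0\vert$. You instead transport all the pers onto the fixed underlying domain $D_\omega$, observe that the transported family is non-decreasing, and extract the stabilisation point directly; your key claim (equivariance of $f_{\alpha,\alpha+1}^-$ is exactly the inclusion ${\approx_{\alpha+1}^{\ast}}\subseteq{\approx_{\alpha}^{\ast}}$) checks out, as does the monotonicity and the boundedness of $\{\rho(s):s\in A\}$ via Replacement. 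The two proofs rest on the same cardinality bound $\vert D_\omega\vert$, but they count different things: the paper counts newly appearing \emph{total elements} (and needs the rank machinery plus the observation that non-equivariance of the projection forces a fresh total element of rank $\alpha+1$), whereas you count newly appearing \emph{related pairs} in $D_\omega\times D_\omega$, which makes the monotone-stabilisation structure explicit and avoids the rank function entirely. One presentational difference worth noting: your version quantifies over a proper class of ordinals before choosing $\gamma_0=\alpha_0+\omega$, so it leans on Replacement and on the uniformity of the chain construction across different limit ordinals; the paper sidesteps this by fixing $\gamma_0$ first. Both are legitimate, and your citation of Proposition~\ref{p_equid2} for preservation of isomorphisms at limit stages should really point to the paper's remark that isomorphisms of domains are preserved under inductive limits in $\cdom$ (stated just after Lemma~\ref{l_equid3} and before the example), but this is cosmetic.
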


\proof
Choose $ \vert \gamma_0 \vert >  \vert D_{ \omega } \vert $ and assume for contradiction that $  f_{ \alpha , \alpha +1}^{-} $ 
is not equivariant for any $ \alpha \in \gamma_0 $. 
For every $ \alpha \in \gamma_0 $, there exists some $ x \in \mathcal{D}_{\alpha+1 }^{R} $ such that 
$  f_{ \alpha , \alpha +1}^{-} (x) \notin \mathcal{D}_{\alpha }^{R} $ and $ \rank_{\gamma_0 } (x) = \alpha + 1 $.
This shows that $ \rank_{\gamma_0 } $ is  a surjective function from $  \mathcal{D}_{\gamma_0 }^{R} $ onto the set of successor ordinals below $ \gamma_0 $,
so $ \vert \gamma_0 \vert  \leq \vert   \mathcal{D}_{\gamma_0 }^{R} \vert $.
On the other hand, 
$  \vert \mathcal{D}_{\gamma_0 }^{R}  \vert \leq \vert D_{ \gamma_0 } \vert  = \vert  D_{ \omega } \vert < \vert \gamma_0 \vert $,
which is a contradiction.
\qed

\begin{prop} \label{p_lfp2}
If $ \funcf : \cclcdomwp \to \cclcdomwp $ is   strictly positive,
there is an initial $ \funcf$-algebra.
\end{prop}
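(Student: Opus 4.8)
The plan is to show that the $\gamma_0$-chain constructed from $\funcf$ stabilises at a least fixed point. By Lemma~\ref{l_lfp1}, there is a limit ordinal $\gamma_0$ and an ordinal $\alpha \in \gamma_0$ such that $f_{\alpha,\alpha+1} : \mathcal{D}_\alpha \to \mathcal{D}_{\alpha+1} = \funcf(\mathcal{D}_\alpha)$ is an isomorphism in $\cclcdomwp$. I would let $\mathcal{D} := \mathcal{D}_\alpha$ and let $f : \funcf(\mathcal{D}) \to \mathcal{D}$ be the inverse isomorphism $f_{\alpha,\alpha+1}^{-1}$, and claim that $(\mathcal{D}, f)$ is an initial $\funcf$-algebra.

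First I would record that $(\mathcal{D}, f)$ is genuinely an $\funcf$-algebra, which is immediate since $f$ is an equiembedding (hence a morphism) from $\funcf(\mathcal{D})$ to $\mathcal{D}$. The real work is to construct, for an arbitrary $\funcf$-algebra $(\mathcal{E}, g)$ with $g : \funcf(\mathcal{E}) \to \mathcal{E}$, a unique $\funcf$-morphism $h : \mathcal{D} \to \mathcal{E}$, i.e. a morphism with $h \circ f = g \circ \funcf(h)$. I would do this by the standard transfinite-recursion argument: define morphisms $h_\beta : \mathcal{D}_\beta \to \mathcal{E}$ for $\beta \leq \alpha$ by recursion on $\beta$. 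Set $h_0$ to be the unique morphism out of the initial object $\mathcal{D}_0$; set $h_{\beta+1} := g \circ \funcf(h_\beta) : \mathcal{D}_{\beta+1} = \funcf(\mathcal{D}_\beta) \to \mathcal{E}$; and at limit stages $\gamma \leq \alpha$, use the universal property of the inductive limit (Proposition~\ref{p_equid2}) to obtain the mediating morphism $h_\gamma : \mathcal{D}_\gamma \to \mathcal{E}$ from the cocone $\{h_\beta\}_{\beta < \gamma}$. Here I must check the coherence condition $h_\beta = h_{\beta'} \circ f_{\beta,\beta'}$ for $\beta \leq \beta'$, which I would verify by a parallel transfinite induction: the successor step uses functoriality of $\funcf$ together with $f_{\beta+1,\beta'+1} = \funcf(f_{\beta,\beta'})$, and the limit step is exactly the mediating-morphism equation. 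Then $h := h_\alpha \circ f$ is the desired $\funcf$-morphism; the algebra equation $h \circ f = g \circ \funcf(h)$ follows from $h_{\alpha+1} = g \circ \funcf(h_\alpha)$ after unwinding the identification $\mathcal{D}_{\alpha+1} \cong \mathcal{D}_\alpha$ via the isomorphism $f$, together with the coherence $h_\alpha = h_{\alpha+1} \circ f_{\alpha,\alpha+1}$.

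For uniqueness, suppose $h' : \mathcal{D} \to \mathcal{E}$ is another $\funcf$-morphism. I would show by transfinite induction on $\beta \leq \alpha$ that $h' \circ f_{\beta,\alpha}$ agrees (as a morphism, i.e. up to the per on $[\,D_\beta \to E\,]$) with $h_\beta$; the base case is the uniqueness of the morphism out of the initial object, the successor case uses the morphism equation $h' \circ f = g \circ \funcf(h')$ pushed down one level via $\funcf$ applied to the induction hypothesis, and the limit case uses the uniqueness clause of the universal property of the inductive limit. Taking $\beta = \alpha$ gives $h' = h_\alpha \circ f = h$, so the $\funcf$-morphism is unique and $(\mathcal{D}, f)$ is initial.

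The main obstacle I anticipate is bookkeeping at the successor step: one must be careful that $\funcf$ is being applied as a functor on $\cclcdomwp$ (so that $\funcf(h_\beta)$ is again an equiembedding-or-morphism and the coherence relations are preserved), and that the identification $\mathcal{D}_{\alpha+1} = \funcf(\mathcal{D}_\alpha)$ is used consistently when passing from $h_{\alpha+1}$ back to $h$ via the isomorphism $f$. The limit steps, by contrast, are handled cleanly once Proposition~\ref{p_equid2} is in hand. No new idea beyond standard initial-algebra theory is needed; everything rests on Lemma~\ref{l_lfp1} (the chain stabilises) and Proposition~\ref{p_equid2} (inductive limits exist), both already established.
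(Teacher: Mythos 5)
Your proposal follows essentially the same route as the paper: invoke Lemma~\ref{l_lfp1} to find the stage where the chain stabilises, build the cocone $\{h_\beta\}$ by transfinite recursion with $h_{\beta+1}=g\circ\funcf(h_\beta)$ and mediating morphisms at limits, then prove uniqueness by a second transfinite induction. The one genuine difference is in the uniqueness step: you run the induction entirely inside $\cclcdomwp$, anchored at $\beta=0$ by initiality of $\mathcal{D}_0$, whereas the paper descends to the underlying functor $\hat{\funcf}$ over $\cdom$ and anchors at $\beta=\omega$ using the classical fact that $D_\omega$ is an initial $\hat{\funcf}$-algebra; your version is more self-contained, the paper's buys the stronger statement that $h_{\gamma_0}$ is unique even among $\hat{\funcf}$-morphisms of underlying domains. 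Two small corrections. First, your parenthetical ``up to the per on $[D_\beta\to E]$'' is the wrong notion of equality: morphisms of $\cclcdomwp$ are equiembeddings, not per-classes (that is $\cdomainwp$), so the induction must establish literal equality of functions --- which it does, since the unique equiembedding out of $\mathcal{D}_0=\{\bot\}$ is literally unique and the mediating morphism of Proposition~\ref{p_equid2} is literally determined at domain level; only literal equality lets you apply $\funcf$ in the successor step and lets you conclude uniqueness of the morphism. Second, $h:=h_\alpha\circ f$ has domain $\funcf(\mathcal{D}_\alpha)$, not $\mathcal{D}_\alpha$; the desired morphism is simply $h:=h_\alpha$, and the algebra equation $h_\alpha\circ f=g\circ\funcf(h_\alpha)$ is exactly the recursion relation $h_\alpha=g\circ\funcf(h_\alpha)\circ f_{\alpha,\alpha+1}$ rearranged using that $f=f_{\alpha,\alpha+1}^{-}$ is the inverse isomorphism.
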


\proof
By lemma~\ref{l_lfp1}, we can choose $ \gamma_0 \geq \omega $ such that 
$   ( \mathcal{D}_{\gamma_0 }, f_{ \gamma_0 , \gamma_0 +1 }^{-} )  $ is an $ \funcf $-algebra, but it 
remains to prove that it   is initial. 

Let $ ( \mathcal{E} , g ) $ be an arbitrary 
$ \funcf $-algebra. We will then show that there exists  a unique equiembedding 
$h:\mathcal{D}_{\gamma_0} \to\mathcal{E} $ satisfying
$  h \circ  f_{ \gamma_0 , \gamma_0 +1 }^{-} = g \circ \funcf (h )$.

\begin{claim}
There exists a family  $  \{ h_{ \beta } :  \mathcal{D}_{ \beta } \rightarrow  \mathcal{E} \}_{ \beta \leq \gamma_0 } $ of equiembeddings such that,
for each $ \beta \in \gamma_0 $,
$ h_{ \beta } = g \circ \funcf (h_{ \beta } ) \circ f_{ \beta , \beta + 1}  = h_{ \beta + 1} \circ f_{ \beta , \beta + 1 } $. 
\end{claim}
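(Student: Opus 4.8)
The plan is to construct $\{h_\beta\}_{\beta\le\gamma_0}$ by transfinite recursion on $\beta$, carrying along as induction hypothesis the slightly stronger property that $\{h_\alpha\}_{\alpha\le\beta}$ is a family of equiembeddings with $h_\alpha=h_{\alpha'}\circ f_{\alpha,\alpha'}$ for \emph{all} $\alpha\le\alpha'\le\beta$ and with $h_{\alpha+1}=g\circ\funcf(h_\alpha)$ whenever $\alpha+1\le\beta$. Granting this, the claim is immediate: for $\beta\in\gamma_0$, the instance $\alpha=\beta$, $\alpha'=\beta+1$ of the coherence condition combined with the defining equation for $h_{\beta+1}$ gives $h_\beta=h_{\beta+1}\circ f_{\beta,\beta+1}=g\circ\funcf(h_\beta)\circ f_{\beta,\beta+1}$. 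Throughout we use that equiembeddings compose (Lemma~\ref{l_equim1}), that $\funcf$ takes equiembeddings to equiembeddings because it is a functor on $\cclcdomwp$ (Proposition~\ref{p_equim2}), that $g$ is an equiembedding because it is a morphism of $\cclcdomwp$, and that the inductive limits built in the $\gamma_0$-chain are co-limiting cocones (Proposition~\ref{p_equid2}). We also use the defining identities of the $\gamma_0$-chain, namely $f_{\alpha,\alpha'}=f_{\alpha'',\alpha'}\circ f_{\alpha,\alpha''}$ for $\alpha\le\alpha''\le\alpha'$ and $f_{\alpha+1,\alpha'+1}=\funcf(f_{\alpha,\alpha'})$.

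At $\beta=0$ we take $h_0$ to be the unique equiembedding out of the initial object $\mathcal D_0$; coherence is vacuous. At a limit ordinal $\beta=\gamma$, the hypothesis makes $\{h_\alpha\}_{\alpha<\gamma}$ a cocone over the $\gamma$-chain, so we let $h_\gamma:\mathcal D_\gamma\to\mathcal E$ be the mediating equiembedding supplied by the universal property of the inductive limit $(\mathcal D_\gamma,\{f_{\alpha,\gamma}\}_{\alpha<\gamma})$; then $h_\alpha=h_\gamma\circ f_{\alpha,\gamma}$ for $\alpha<\gamma$, and coherence for the other pairs follows by composing with the already established identities. At a successor ordinal $\beta=\delta+1$ we \emph{define} $h_{\delta+1}:=g\circ\funcf(h_\delta):\mathcal D_{\delta+1}\to\mathcal E$, which is an equiembedding by the remarks above; using $f_{\alpha,\delta+1}=f_{\delta,\delta+1}\circ f_{\alpha,\delta}$ and the hypothesis $h_\alpha=h_\delta\circ f_{\alpha,\delta}$, coherence at this stage reduces to the single identity $h_{\delta+1}\circ f_{\delta,\delta+1}=h_\delta$.

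Proving $h_{\delta+1}\circ f_{\delta,\delta+1}=h_\delta$ is the crux, and the limit case of $\delta$ is where we expect the real difficulty. If $\delta=0$, both sides are equiembeddings out of $\mathcal D_0$, hence equal by initiality. If $\delta=\epsilon+1$, then $f_{\delta,\delta+1}=\funcf(f_{\epsilon,\delta})$, so functoriality gives $h_{\delta+1}\circ f_{\delta,\delta+1}=g\circ\funcf(h_\delta\circ f_{\epsilon,\delta})=g\circ\funcf(h_\epsilon)=h_{\epsilon+1}=h_\delta$, using the hypothesis $h_\epsilon=h_\delta\circ f_{\epsilon,\delta}$ and $h_{\epsilon+1}=g\circ\funcf(h_\epsilon)$. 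If $\delta$ is a limit ordinal, $h_\delta$ is determined only by its mediating property, so we show that $g\circ\funcf(h_\delta)\circ f_{\delta,\delta+1}$ and $h_\delta$, both equiembeddings $\mathcal D_\delta\to\mathcal E$, agree after precomposition with $f_{\epsilon,\delta}$ for every $\epsilon<\delta$ and conclude equality by uniqueness of mediating morphisms. Since $\delta$ is a limit we have $\epsilon+1<\delta$ and $f_{\epsilon,\delta}=f_{\epsilon+1,\delta}\circ f_{\epsilon,\epsilon+1}$, so it suffices to treat $\epsilon$ of successor form $\alpha+1<\delta$; for such $\epsilon$ the identities $f_{\delta,\delta+1}\circ f_{\alpha+1,\delta}=f_{\alpha+1,\delta+1}=\funcf(f_{\alpha,\delta})$ give
\[ g\circ\funcf(h_\delta)\circ f_{\delta,\delta+1}\circ f_{\alpha+1,\delta}=g\circ\funcf(h_\delta\circ f_{\alpha,\delta})=g\circ\funcf(h_\alpha)=h_{\alpha+1}=h_\delta\circ f_{\alpha+1,\delta}, \]
the last step by the hypothesis. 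This closes the recursion, and the family obtained at $\beta=\gamma_0$ is the one the claim asserts.
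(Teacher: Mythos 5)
Your proof is correct and follows essentially the same route as the paper's: define $h_{\delta+1}:=g\circ\funcf(h_\delta)$, use functoriality of $\funcf$ at successor stages, and use uniqueness of the mediating morphism out of the inductive limit at limit stages. The only difference is bookkeeping — the paper carries the fixed-point identity $h_\beta=g\circ\funcf(h_\beta)\circ f_{\beta,\beta+1}$ as the induction hypothesis and derives coherence, whereas you carry coherence and derive that identity — but the computations coincide.
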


The claim is proved by  transfinite induction on $ \beta $, see appendix~\ref{app.proofs} for details.
In particular, this gives us an equiembedding $h_{\gamma_0} : \mathcal{D}_{\gamma_0}\to\mathcal{E}$. 

From domain theory, we  see that $ (E,g) $ is an  $ \hat{  \funcf  }  $-algebra and that $ h_{ \gamma_0 } $ is an $\hat{  \funcf  }  $-morphism
from $  ( D_{ \gamma_0 } , f_{ \gamma_0, \gamma_0 +1 }^{-} ) $ into  $ (E,g) $.
It remains to prove that $  h_{ \gamma_0 } \circ  f_{ \gamma_0 , \gamma_0 +1 }^{-} = g \circ \funcf (h_{ \gamma_0  } )$:

\begin{claim}
Let $ \omega \leq \beta \leq \gamma_0 $ and assume that $ f_{ \beta , \beta + 1} $ is an isomorphism. 
Then  $  h_{ \beta } : D_{ \beta } \to E $ is the unique 
$\hat{  \funcf  }  $-morphism from 
$  ( D_{ \beta } , f_{ \beta, \beta +1 }^{-} ) $
into  $ (E,g) $.
\end{claim}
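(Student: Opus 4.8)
The plan is to identify $(D_\beta,f^-_{\beta,\beta+1})$ with the classical initial $\hat{\funcf}$-algebra and then to recognise $h_\beta$ as the (necessarily unique) $\hat{\funcf}$-morphism from it into $(E,g)$. Recall from the classical least fixed point construction (see \cite{SLG_94}) that $f_{\omega,\omega+1}:D_\omega\to\hat{\funcf}(D_\omega)=D_{\omega+1}$ is a domain isomorphism, that $(D_\omega,f^-_{\omega,\omega+1})$ is an initial $\hat{\funcf}$-algebra, and that the unique $\hat{\funcf}$-morphism $k$ from it into $(E,g)$ is $k=\bigsqcup_{n\in\omega}k_n\circ f_n^-$, where $k_0:D_0\to E$ is the unique embedding and $k_{n+1}:=g\circ\hat{\funcf}(k_n)$. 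Since each $f_{\alpha,\alpha+1}$ with $\omega\leq\alpha<\gamma_0$ is a domain isomorphism (as observed above), $f_{\alpha,\alpha'}$ is a domain isomorphism for all $\omega\leq\alpha\leq\alpha'\leq\gamma_0$ (compose these and pass to the inductive limit at limit stages); in particular $f^-_{\beta,\beta+1}$ is an embedding, so $(D_\beta,f^-_{\beta,\beta+1})$ is a bona fide $\hat{\funcf}$-algebra, and $(E,g)$ is one too since $g$ is an equiembedding.

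The crux is that, for every $\beta$ with $\omega\leq\beta\leq\gamma_0$, the embedding $f_{\omega,\beta}$ is an \emph{isomorphism of $\hat{\funcf}$-algebras} from $(D_\omega,f^-_{\omega,\omega+1})$ to $(D_\beta,f^-_{\beta,\beta+1})$. As $f_{\omega,\beta}$ is already a domain isomorphism, it suffices to check that it is an $\hat{\funcf}$-morphism, i.e.\ $f_{\omega,\beta}\circ f^-_{\omega,\omega+1}=f^-_{\beta,\beta+1}\circ\hat{\funcf}(f_{\omega,\beta})$, where $\hat{\funcf}(f_{\omega,\beta})=f_{\omega+1,\beta+1}$ by construction of the chain. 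Precomposing both sides with the (right-cancellable, being an isomorphism) embedding $f_{\omega,\omega+1}$ and using $f^-\circ f=\id$ for the embedding-projection pairs together with the directed-system identities $f_{\omega+1,\beta+1}\circ f_{\omega,\omega+1}=f_{\omega,\beta+1}=f_{\beta,\beta+1}\circ f_{\omega,\beta}$ collapses the equation to $f_{\omega,\beta}=f_{\omega,\beta}$. It follows that $(D_\beta,f^-_{\beta,\beta+1})$ is itself an initial $\hat{\funcf}$-algebra, and its unique $\hat{\funcf}$-morphism into $(E,g)$ is $k\circ f^-_{\omega,\beta}$.

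It remains to prove $h_\beta=k\circ f^-_{\omega,\beta}$. Since the first claim yields $h_\delta=h_{\delta'}\circ f_{\delta,\delta'}$ for all $\delta\leq\delta'\leq\gamma_0$ (iterate its successor step through the limit stages), we have $h_\omega=h_\beta\circ f_{\omega,\beta}$; as $f_{\omega,\beta}$ is an isomorphism, $h_\beta=k\circ f^-_{\omega,\beta}$ is equivalent to $h_\omega=k$. To obtain $h_\omega=k$ I would show $h_n=k_n$ for all $n\in\omega$ by induction: $h_0=k_0$ is the unique embedding $D_0\to E$; and, combining the first claim's identities $h_n=h_{n+1}\circ f_{n,n+1}$ and $h_{n+1}=g\circ\funcf(h_{n+1})\circ f_{n+1,n+2}$ (both valid since $n,n+1<\gamma_0$), passing to underlying domains so that $\funcf$ becomes $\hat{\funcf}$, and using $\hat{\funcf}(f_{n,n+1})=f_{n+1,n+2}$, one gets $g\circ\hat{\funcf}(h_n)=g\circ\hat{\funcf}(h_{n+1})\circ f_{n+1,n+2}=h_{n+1}$, whence $h_{n+1}=g\circ\hat{\funcf}(h_n)=g\circ\hat{\funcf}(k_n)=k_{n+1}$. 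Finally $h_\omega$ is the mediating embedding with $h_\omega\circ f_n=h_n$, so $h_\omega=\bigsqcup_{n}h_n\circ f_n^-=\bigsqcup_{n}k_n\circ f_n^-=k$. The one genuinely delicate point is the crux identity of the second paragraph --- keeping the embeddings, projections and the direction of the classical structure map straight; once it is in place the statement reduces to the initiality of $(D_\omega,f^-_{\omega,\omega+1})$ and the routine induction $h_n=k_n$.
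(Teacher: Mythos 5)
Your proof is correct, but it is organised differently from the paper's. The paper argues by transfinite induction on $\beta$: the base case $\beta=\omega$ is the classical initiality of $(D_\omega,f^-_{\omega,\omega+1})$; the successor step uses the general fact that $(\hat{\funcf}(X),\hat{\funcf}(f))$ is initial whenever $(X,f)$ is, so that $(D_{\beta+1},f^-_{\beta+1,\beta+2})$ is again initial with $h_{\beta+1}=h_\beta\circ f^-_{\beta,\beta+1}$ as its unique morphism into $(E,g)$; and at limit stages uniqueness is obtained by restricting a competing morphism $h'$ along each $f_{\alpha,\beta}$ and invoking the mediating-morphism property of $h_\beta$. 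You instead transport initiality in one step along the algebra isomorphism $f_{\omega,\beta}\colon(D_\omega,f^-_{\omega,\omega+1})\to(D_\beta,f^-_{\beta,\beta+1})$ --- your precomposition computation verifying the algebra-morphism identity is correct, as is the reduction of "$f_{\alpha,\alpha'}$ is a domain isomorphism for $\omega\leq\alpha\leq\alpha'$" to the colimit of isomorphisms --- and then identify $h_\beta$ with the transported unique morphism by proving $h_\omega=k$ through the finite induction $h_n=k_n$. This buys a single algebraic computation in place of the successor/limit case analysis and avoids the categorical lemma about $\funcf$ preserving initial algebras, at the cost of the explicit comparison $h_\omega=k$, which the paper gets for free from initiality at stage $\omega$. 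One expository point: the identity $h_\omega=h_\beta\circ f_{\omega,\beta}$ for limit $\beta$ does not follow from the \emph{statement} of the first claim, which only records the successor relations; you need the fact, visible in its proof, that at limit stages $h_\beta$ is defined as the mediating morphism of the cocone $\{h_\alpha\}_{\alpha\in\beta}$, so that the family $\{h_\beta\}$ is compatible with the whole chain. With that reference made explicit, the argument is complete.
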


The proof is by transfinite induction on  $ \beta \leq \gamma_0$, see appendix~\ref{app.proofs}.
As a consequence of uniqueness, we have  $  h_{ \beta } \circ  f_{ \beta , \beta +1 }^{-} = g \circ \funcf (h_{ \beta  } )$ for all infinite $\beta\leq \gamma_0$.
In particular, $ h_{ \gamma_0 } : D_{ \gamma_0 } \to E $ is the unique embedding satisfying
\[  h_{ \gamma_0 } \circ  f_{ \gamma_0 , \gamma_0 +1 }^{-} = g \circ \funcf (h_{ \gamma_0  } ). \]
Then $h_{\gamma_0} : \mathcal{D}_{\gamma_0}\to\mathcal{E}$ is the unique equiembedding for which this
equality holds, and this shows that the $ \funcf $-algebra $   ( \mathcal{D}_{\gamma_0 }, f_{ \gamma_0 , \gamma_0 +1 }^{-} )  $ 
is initial.
\qed

When we consider domain representations of $qcb_0$ spaces, countably based domain-pers are of particular interest.
It is therefore important to note that if we start with countably based  parameters, then the least fixed point is countably based, 
even though we might have to use an uncountable transfinite induction to construct it:

\begin{obs} \label{o_lfp2}
Let $ \funcf : \cclcdomwp \rightarrow \cclcdomwp $ be strictly positive and assume that all  parameters are countably based.

Then the least fixed point of $ \funcf $ is countably based.
\end{obs}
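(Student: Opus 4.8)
The plan is to reduce the claim to the classical domain-theoretic fact, recalled in section~\ref{s_backg}, that an $\omega$-continuous endofunctor of $\ccbdom$ has a countably based least fixed point. Since least fixed points in $\cclcdomwp$ are unique up to isomorphism, and domain-pers that are isomorphic in $\cclcdomwp$ have isomorphic underlying domains, while being countably based is invariant under isomorphism of domains (isomorphic domains have order-isomorphic sets of compact elements), it suffices to show that the underlying domain $D_{\gamma_0}$ of the initial $\funcf$-algebra $\mathcal{D}_{\gamma_0}$ constructed in the proof of proposition~\ref{p_lfp2} is countably based.

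First I would pass to the underlying functor $\hat{\funcf} : \cdom \to \cdom$ from remark~\ref{r_undfunctor}, which is again strictly positive, with the underlying domains of the parameters of $\funcf$ as its parameters. By hypothesis these parameters are countably based, and since disjoint sum, Cartesian product and exponentiation by a fixed countably based domain all send countably based domains to countably based domains, $\hat{\funcf}$ restricts to an endofunctor of $\ccbdom$. The $\omega$-chain $( \{ D_{n} \}_{n \in \omega}, \{ f_{m,n} \}_{m \leq n \in \omega} )$ at the bottom of the $\beta$-chain from $\funcf$ is, as already noted, precisely the chain used in the classical least fixed point construction for $\hat{\funcf}$: its terms satisfy $D_0 = \{ \bot \}$ and $D_{n+1} = \hat{\funcf}(D_n)$, so a trivial induction on $n$ shows that every $D_n$ is countably based, and hence its inductive limit $D_\omega$ is a countably based least fixed point of $\hat{\funcf}$.

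Finally, I would use the fact, noted after lemma~\ref{l_equid3}, that isomorphism of domains is preserved under inductive limits in $\cdom$. Since $D_\omega$ is a fixed point of $\hat{\funcf}$, an easy induction on $\alpha$ then gives $D_\alpha \cong D_\omega$ in $\cdom$ for every $\alpha \geq \omega$, and in particular $D_{\gamma_0} \cong D_\omega$, independently of how large $\gamma_0$ is. Hence $D_{\gamma_0}$, and therefore the least fixed point $\mathcal{D}_{\gamma_0}$ of $\funcf$, is countably based. The only point that needs any care is this last one: the construction of $\mathcal{D}_{\gamma_0}$ may run through uncountably many stages, so countability is not automatic; what rescues the argument is that the \emph{underlying} domain does not grow past stage $\omega$ up to isomorphism, which is exactly what makes the classical $\omega$-limit argument applicable. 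Everything else is routine structural induction together with the invariance of countable basedness under domain isomorphism.
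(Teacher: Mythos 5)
Your argument is correct and is essentially the paper's own proof: pass to the underlying functor $\hat{\funcf}$ over $\ccbdom$, observe that its least fixed point $D_\omega$ is countably based, and conclude via $D_{\gamma_0}\cong D_\omega$ (which the paper justifies exactly as you do, by preservation of isomorphisms under inductive limits in $\cdom$). Your extra remarks on invariance of countable basedness under isomorphism and on why the transfinite stages do not enlarge the underlying domain are just a more explicit rendering of the same two-line argument.
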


\proof
The parameters in the  underlying functor $ \hat{ \funcf } $ are countably based, so  the least fixed point $ D_{ \omega } $
of $\hat{\funcf}$ is countably based.
If $\mathcal{D}_{\gamma_0} $ is the least fixed point of $\funcf$, then $ D_{ \gamma_0 } \cong D_{ \omega } $ and $D_{\gamma_0}$ is
countably based.
\qed

In many examples of interest, e.g. for representation of a countably based regular space, we can choose a domain-per 
$ \mathcal{D} $ which is upwards-closed, i.e. a domain-per which satisfies
\[ \forall x, y \in D \: 
( x \in \mathcal{D}^{R} \wedge x \sqsubseteq y  \Rightarrow x \approx_{ \mathcal{D} } y ) .\]
An upwards-closed domain-per is convex, local and complete.
We verify that the property of being upwards-closed is preserved under the least fixed point construction in $\cclcdomwp$:

\begin{obs}  
Let $ \funcf : \cclcdomwp \rightarrow \cclcdomwp $ be strictly positive and assume that all  positive parameters are upwards-closed.

Then the least fixed point of $ \funcf $ is upwards-closed.
\end{obs}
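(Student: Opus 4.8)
The plan is to prove, by transfinite induction along the chain $(\{\mathcal{D}_\alpha\}_{\alpha\in\gamma_0},\{f_{\alpha,\beta}\})$ of Proposition~\ref{p_lfp2}, that every $\mathcal{D}_\alpha$ is upwards-closed; the least fixed point $\mathcal{D}_{\gamma_0}$ is then upwards-closed as a special case. Since an upwards-closed domain-per is automatically convex, local and complete, all the domain-pers below are genuine objects of $\cclcdomwp$ and the chain really is the one from the least fixed point construction. The base case $\mathcal{D}_0$, the trivial domain-per with empty per, is upwards-closed vacuously because $\mathcal{D}_0^R=\emptyset$.

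For the successor step I would isolate the auxiliary fact that a strictly positive operation $\Gamma$ on domain-pers all of whose positive parameters are upwards-closed sends upwards-closed domain-pers to upwards-closed domain-pers; applying this to the operation underlying $\funcf$ yields $\mathcal{D}_{\alpha+1}=\funcf(\mathcal{D}_\alpha)$ upwards-closed whenever $\mathcal{D}_\alpha$ is. The auxiliary fact goes by structural induction on $\Gamma$: the identity operation and the fixed positive parameters are covered by hypothesis; for $\Gamma_1+\Gamma_2$ and $\Gamma_1\times\Gamma_2$ one uses that the order on a disjoint sum is separated and on a Cartesian product is componentwise, so an inequality $x\sqsubseteq y$ with $x$ total forces agreement in the relevant coordinate and reduces to the inductive hypotheses; and for $\Gamma=[\mathcal{B}\to\Gamma']$, with $\mathcal{B}$ an arbitrary non-positive parameter, one takes an equivariant $f\in[\mathcal{B}\to\Gamma'(\mathcal{D})]^R$ with $f\sqsubseteq g$ pointwise, observes that for every $b\in\mathcal{B}^R$ one has $f(b)\sqsubseteq g(b)$ with $f(b)$ total in $\Gamma'(\mathcal{D})$, hence $f(b)\approx g(b)$ by the inductive hypothesis, and concludes $f\approx_{[\mathcal{B}\to\Gamma'(\mathcal{D})]}g$ via Lemma~\ref{l_equiv1}. (Note that, unlike in Lemma~\ref{l_lwc4}, density of the non-positive parameters is not needed for this argument.)

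For the limit step, let $\gamma\in\gamma_0$ be a limit ordinal with every $\mathcal{D}_\alpha$, $\alpha<\gamma$, upwards-closed, and let $\mathcal{D}_\gamma=(D_\gamma,\approx_\gamma)$ be the inductive limit of Proposition~\ref{p_equid2}. Take $x\in\mathcal{D}_\gamma^R$ with witness $i$ and suppose $x\sqsubseteq y$ in $D_\gamma$, i.e. $x_k\sqsubseteq_{D_k}y_k$ for all $k$. The witness condition gives $f_{i,k}(x_i)\approx_k x_k$ for all $k\geq i$, whence $x_k\in\mathcal{D}_k^R$, so upwards-closedness of $\mathcal{D}_k$ yields $x_k\approx_k y_k$; in particular $x_i\approx_i y_i$. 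Since each $f_{i,k}$ is equivariant, $f_{i,k}(y_i)\approx_k f_{i,k}(x_i)\approx_k x_k\approx_k y_k$, and together with $f_{i,k}(x_i)\approx_k x_k$ this shows that $i$ witnesses $x\approx_\gamma y$. Hence $\mathcal{D}_\gamma$ is upwards-closed, completing the transfinite induction.

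The only slightly delicate point is the limit step, where one must keep track of the ``witnessed by $i$'' bookkeeping from Proposition~\ref{p_equid2}; but since upwards-closedness needs only a single witness and the witness given for $x$ is reused for $x\approx_\gamma y$, nothing can go wrong and the verification is essentially mechanical.
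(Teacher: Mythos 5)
Your proof is correct and follows essentially the same route as the paper: the paper's proof consists precisely of your limit step (showing that an inductive limit of upwards-closed domain-pers in $\cclcdomwp$ is upwards-closed, using the witness $i$ for $x$ to witness $x\approx_I y$), declaring this "sufficient". The only difference is that you additionally write out the successor step --- the routine structural induction showing strictly positive operations with upwards-closed positive parameters preserve upwards-closedness, including the correct observation that density of the non-positive parameters is not needed --- which the paper leaves implicit.
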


\proof
It is sufficient to prove that if
$ ( \{   \mathcal{D}_{i} \}_{i \in I } , \{ f_{i,j} \}_{ i \leq j \in I } ) $ is a directed system in $ \cclcdomwp $
and every $  \mathcal{D}_{i} $ is upwards-closed, 
then the inductive limit $  \mathcal{D}_{I} $ is upwards-closed.

Let $ x,y \in D_{I} $ and assume that $ x \in \mathcal{D}_{I}^{R} $ and $ x \sqsubseteq y $.
Then $ x_{i} \sqsubseteq y_{i} $ for every $ i \in I $, so if  $ x \in \mathcal{D}_{I}^{R} $ is witnessed by $i$,
then $ x_{i} \approx_{i} y_{i} $ by the upwards-closedness of $ \mathcal{D}_{i} $.
If $ k \geq i $, then $ f_{i,k} ( x_{i} ) = x_{k} \approx_{k} y_{k} $, and because $ f_{i,k} $ is an equiembedding, this implies
that $ y_{k} = f_{ i,k} ( y_{i} ) $.
For an arbitrary $ j \in I $, choose $ k \geq i,j $.
Then $ y_{j} = f_{j,k}^{-} (y_{k} ) = f_{j,k}^{-} (  f_{ i,k} ( y_{i} ) ) = f_{i} (y_{i} )_{j} $.
This shows that $ y =  f_{i} (y_{i} ) $ and that $ x \approx_{I} y $.
\qed

\subsection{Admissible  domain-pers}

We say that a domain-per $ \mathcal{D} $ is {\em admissible} if the associated 
domain representation  $ (  \mathcal{D} ,  \mathcal{D}^{R} , \delta_{  \mathcal{D} } ) $ 
of $ \mathcal{Q}   \mathcal{D}  $ is admissible. Our definition of admissibility applies only to
countably based domain representations, so it will be implicit that an admissible domain-per is
countably based in what follows, even though this is of no significance for the results obtained.

We  show that admissibility is preserved both under strictly positive operations and
under weak isomorphisms.

\begin{lem} \label{l_admdp1}
Let  $ \Gamma $ be a strictly positive operation on domain-pers with admissible positive parameters and dense, admissible non-positive parameters.

If  $ \mathcal{D} $ is an admissible domain-per, then  $ \Gamma ( \mathcal{D} ) $ is admissible.
\end{lem}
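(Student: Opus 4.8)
The plan is a structural induction on the strictly positive operation $\Gamma$, reducing each case to the matching clause of Lemma~\ref{l_adm5}. In the base cases $\Gamma$ is either the identity, so $\Gamma(\mathcal{D})=\mathcal{D}$ is admissible by hypothesis, or a positive parameter, so $\Gamma(\mathcal{D})$ is that parameter and is admissible by hypothesis; a non-positive parameter never constitutes the whole of $\Gamma$, only an exponent, so there is nothing further to check. For the three inductive steps --- disjoint sum, Cartesian product, and exponentiation by a fixed non-positive parameter $\mathcal{B}$ --- I would use that $\mathcal{B}$ is dense and admissible, and that the immediate subterms of $\Gamma$, applied to $\mathcal{D}$, yield admissible domain-pers by the induction hypothesis.

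Concretely, write $\mathcal{E}$ and $\mathcal{F}$ for the domain-pers obtained by applying the immediate subterms of $\Gamma$ to $\mathcal{D}$. In each case the underlying domain and the set of total elements of the constructed domain-per are precisely those appearing in the relevant clause of Lemma~\ref{l_adm5}: $(\mathcal{E}+\mathcal{F})^{R}=\mathcal{E}^{R}\uplus\mathcal{F}^{R}$, with the two copies clopen so that the subspace topology is the disjoint-union topology; $(\mathcal{E}\times\mathcal{F})^{R}=\mathcal{E}^{R}\times\mathcal{F}^{R}$ with the product topology, since the Scott topology on a product of domains is the product of the Scott topologies; and $[\mathcal{B}\rightarrow\mathcal{E}]^{R}$ is the set of $(\delta_{\mathcal{B}},\delta_{\mathcal{E}})$-total maps, as noted earlier in Section~\ref{s_dwp}. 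Under the evident identification of partial equivalence classes, the canonical representation map $\delta_{\Gamma(\mathcal{D})}$ becomes the ``expected'' representation map $\varrho$ of Lemma~\ref{l_adm5}, and the density of $\mathcal{B}$ is exactly the side condition ``$D^{R}$ dense in $D$'' required by the function-space clause. Countable basedness of $\Gamma(\mathcal{D})$ is automatic, since $\mathcal{D}$ and all parameters are countably based and each basic operation preserves countable basedness.

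The one point that requires genuine care --- and the main obstacle --- is reconciling the topology of $\mathcal{Q}\Gamma(\mathcal{D})$ with the target space of Lemma~\ref{l_adm5}. By construction $\mathcal{Q}\Gamma(\mathcal{D})$ carries the quotient topology, which is sequential, whereas $\mathcal{Q}\mathcal{E}\times\mathcal{Q}\mathcal{F}$ and $[\mathcal{Q}\mathcal{B}\rightarrow_{\omega}\mathcal{Q}\mathcal{E}]$ need not be sequential, so the spaces do not literally coincide on the nose. I would dispose of this by observing that being an admissible domain representation of a space $Y$ depends only on the sequentialisation $\mathcal{S}Y$: in the definition of admissibility the test space $E^{R}$ sits inside a countably based, hence second-countable, domain, so it is first-countable and therefore sequential; consequently continuity of a map out of $E^{R}$ into $Y$ coincides with continuity into $\mathcal{S}Y$, and likewise for the surjection $\delta$, while the factoring condition is otherwise purely set-theoretic. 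Thus Lemma~\ref{l_adm5} already produces an admissible representation of $\mathcal{S}(\mathcal{Q}\mathcal{E}\times\mathcal{Q}\mathcal{F})$ and of $\mathcal{S}[\mathcal{Q}\mathcal{B}\rightarrow_{\omega}\mathcal{Q}\mathcal{E}]$, and it remains only to check that these coincide with $\mathcal{Q}(\mathcal{E}\times\mathcal{F})$ and $\mathcal{Q}[\mathcal{B}\rightarrow\mathcal{E}]$ --- the representation maps being already the right ones, this reduces to verifying that the two quotient topologies have the same convergent sequences, using that $\delta$ is a quotient map out of a first-countable space and, in the function-space case, Lemma~\ref{l_adm3} to see that every sequentially continuous function is represented. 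With this in hand, admissibility of $\Gamma(\mathcal{D})$ follows from Lemma~\ref{l_adm5}, completing the induction.
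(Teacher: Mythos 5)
Your proposal is correct and follows essentially the same route as the paper, whose proof of this lemma is literally the one-line appeal to Lemma~\ref{l_adm5} and structural induction on $\Gamma$ that you spell out. The extra care you take over the mismatch between the quotient topology on $\mathcal{Q}\Gamma(\mathcal{D})$ and the (possibly non-sequential) target spaces of Lemma~\ref{l_adm5} is a genuine subtlety that the paper suppresses here and only addresses later in Lemma~\ref{l_qcb1}, and your resolution via the sequentialisation-invariance of admissibility is sound.
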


\proof
By  lemma~\ref{l_adm5} and structural induction on  $\Gamma$.
\qed

\begin{lem}  \label{l_admdp2}
Let  $ \mathcal{D} $ and  $ \mathcal{E} $ be weakly isomorphic domain-pers.

Then $ \mathcal{Q}  \mathcal{D}  \cong \mathcal{Q}   \mathcal{E}  $, and if
 $ \mathcal{D} $ is admissible, then  $ \mathcal{E} $ is admissible.
\end{lem}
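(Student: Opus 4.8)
The plan is to move admissibility across the weak isomorphism using the two witnessing maps. Fix a weak isomorphism pair $(f,g)$, so $f:\mathcal{D}\to\mathcal{E}$ and $g:\mathcal{E}\to\mathcal{D}$ are equivariant with $g\circ f\approx\id_D$ and $f\circ g\approx\id_E$. First I would record the elementary facts we need. Since $f$ is equivariant we get $f[\mathcal{D}^R]\subseteq\mathcal{E}^R$ and, for $x\in\mathcal{D}^R$, $\delta_{\mathcal{E}}(f(x))=[f(x)]_{\mathcal{E}}=f^{\mathcal{Q}}([x]_{\mathcal{D}})=f^{\mathcal{Q}}(\delta_{\mathcal{D}}(x))$; likewise for $g$. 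The induced maps $f^{\mathcal{Q}}:\mathcal{Q}\mathcal{D}\to\mathcal{Q}\mathcal{E}$ and $g^{\mathcal{Q}}:\mathcal{Q}\mathcal{E}\to\mathcal{Q}\mathcal{D}$ are continuous, and $g^{\mathcal{Q}}\circ f^{\mathcal{Q}}=(g\circ f)^{\mathcal{Q}}=\id_{\mathcal{Q}\mathcal{D}}$ and $f^{\mathcal{Q}}\circ g^{\mathcal{Q}}=(f\circ g)^{\mathcal{Q}}=\id_{\mathcal{Q}\mathcal{E}}$, because $g\circ f\approx\id_D$ and $f\circ g\approx\id_E$ and $h^{\mathcal{Q}}=k^{\mathcal{Q}}$ exactly when $h\approx k$. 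This gives the homeomorphism $\mathcal{Q}\mathcal{D}\cong\mathcal{Q}\mathcal{E}$, which is in any case immediate as noted earlier.

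For the transfer of admissibility I would check the defining property of Definition of admissibility directly for $(\mathcal{E},\mathcal{E}^R,\delta_{\mathcal{E}})$. Let $(F,F^R)$ be a countably based, dense domain with totality and let $\varphi:F^R\to\mathcal{Q}\mathcal{E}$ be continuous; the goal is a continuous $\hat\varphi:F\to E$ with $\hat\varphi[F^R]\subseteq\mathcal{E}^R$ and $\delta_{\mathcal{E}}\circ\hat\varphi=\varphi$ on $F^R$. Now $g^{\mathcal{Q}}\circ\varphi:F^R\to\mathcal{Q}\mathcal{D}$ is continuous, and since $\mathcal{D}$ is admissible it factors through $\delta_{\mathcal{D}}$: there is a continuous $\psi:F\to D$ with $\psi[F^R]\subseteq\mathcal{D}^R$ and $\delta_{\mathcal{D}}\circ\psi=g^{\mathcal{Q}}\circ\varphi$ on $F^R$. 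Put $\hat\varphi:=f\circ\psi$. Then $\hat\varphi$ is continuous, $\hat\varphi[F^R]\subseteq f[\mathcal{D}^R]\subseteq\mathcal{E}^R$, and for every $e\in F^R$,
\[\delta_{\mathcal{E}}(\hat\varphi(e))=\delta_{\mathcal{E}}(f(\psi(e)))=f^{\mathcal{Q}}(\delta_{\mathcal{D}}(\psi(e)))=f^{\mathcal{Q}}(g^{\mathcal{Q}}(\varphi(e)))=\varphi(e),\]
using $f^{\mathcal{Q}}\circ g^{\mathcal{Q}}=\id_{\mathcal{Q}\mathcal{E}}$. Hence $\varphi$ factors through $\delta_{\mathcal{E}}$, so $\mathcal{E}$ is admissible.

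There is no genuine obstacle in this argument; the only thing requiring a little care is the bookkeeping in the first paragraph — that an equivariant map descends to a continuous map of quotients intertwining the $\delta$'s, and that $\approx$ of maps collapses to equality of induced quotient maps — but both are already established in the discussion of domain-pers. One should also keep in mind the standing convention that ``admissible'' presupposes a countably based representation, so the statement is read with $\mathcal{E}$ countably based, which is automatic in every setting where the lemma is used.
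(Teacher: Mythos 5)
Your proof is correct and follows essentially the same route as the paper: descend the weak isomorphism pair to mutually inverse continuous maps on the quotients, then transfer a test map $\varphi:F^R\to\mathcal{Q}\mathcal{E}$ back via $g^{\mathcal{Q}}$, factor through $\delta_{\mathcal{D}}$ by admissibility, and push forward with $f$. The bookkeeping identities you flag ($\delta_{\mathcal{E}}\circ f = f^{\mathcal{Q}}\circ\delta_{\mathcal{D}}$ on $\mathcal{D}^R$, and $h\approx k \Leftrightarrow h^{\mathcal{Q}}=k^{\mathcal{Q}}$) are exactly what the paper relies on as well.
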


\proof 
Let $ (f,g):  \mathcal{D}   \rightarrow  \mathcal{E}  $ be a weak isomorphism pair, i.e. $ g \circ f \approx \id_D $ and 
$f \circ g \approx \id_E $.

Then  $ f^{\mathcal{Q}} :  \mathcal{Q}  \mathcal{D}    \rightarrow  \mathcal{Q}  \mathcal{E}   $
and  $ g^{\mathcal{Q}} :  \mathcal{Q}  \mathcal{E}    \rightarrow  \mathcal{Q}  \mathcal{D}   $
are continuous maps with $ g^{\mathcal{Q}}\circ f^{\mathcal{Q}}= \id_{\mathcal{Q}\mathcal{D}} $ and
$ f^{\mathcal{Q}}\circ g^{\mathcal{Q}}= \id_{\mathcal{Q}\mathcal{E}} $. This shows that $
\mathcal{Q}\mathcal{D} $ and $ \mathcal{Q}\mathcal{E}$ are homeomorphic topological spaces.

Now, assume that $\mathcal{D}$ is admissible. Let $ F $ be a domain, let $ F^{R}  $ be a dense 
subset  and let $ \varphi : F^{R} \rightarrow  \mathcal{Q}  \mathcal{E} $ be a continuous function.
Then  $ g^{\mathcal{Q}}\circ\varphi: F^{R} \rightarrow  \mathcal{Q}  \mathcal{D} $ is a continuous
function which factors through $\delta_{\mathcal{D}} $ via $\chi:F\to D$, by the admissibility of 
$\mathcal{D}  $.
Then $f\circ\chi:F\to E$ is a continuous function which satisfies \begin{enumerate}[(1)]
\item $(f\circ\chi)[F^R]\subseteq f[\mathcal{D}^R]\subseteq\mathcal{E}^{R} $ ; and 
\item if $x\in F^R$, then $ \delta_{   \mathcal{D}  }(\chi(x))= g^{\mathcal{Q}}(\varphi(x))$ and
$  \delta_{ \mathcal{E} }(f(\chi(x)))= f^{\mathcal{Q}}(g^{\mathcal{Q}}(\varphi(x)))=\varphi(x)$.
\end{enumerate} This shows that $ \varphi$  factors through  $  \delta_{  \mathcal{E} }  $ via $ f \circ \chi$
and that $  \mathcal{E} $ is  admissible.\qed

The converse of this lemma is not true in general, but it does hold if we consider dense domain-pers:

\begin{lem} \label{l_admdp4}
If $ \mathcal{D} $ and  $ \mathcal{E} $ are dense, admissible domain-pers
and $ \mathcal{Q}  \mathcal{D}  \cong \mathcal{Q}   \mathcal{E}  $,
then $ \mathcal{D} $ and  $ \mathcal{E} $ are weakly isomorphic.
\end{lem}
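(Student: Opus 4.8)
The plan is to build the weak isomorphism pair directly from the homeomorphism $\mathcal{Q}\mathcal{D} \cong \mathcal{Q}\mathcal{E}$ by exploiting the admissibility of both domain-pers. Write $\theta : \mathcal{Q}\mathcal{D} \to \mathcal{Q}\mathcal{E}$ for the given homeomorphism, and let $(D, \mathcal{D}^R, \delta_\mathcal{D})$ and $(E, \mathcal{E}^R, \delta_\mathcal{E})$ be the associated countably based domain representations. The map $\theta \circ \delta_\mathcal{D} : \mathcal{D}^R \to \mathcal{Q}\mathcal{E}$ is continuous, and since $\mathcal{D}$ is dense, $(D, \mathcal{D}^R)$ is a countably based, dense domain with totality. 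By the admissibility of $\mathcal{E}$, there is a continuous $f : D \to E$ with $f[\mathcal{D}^R] \subseteq \mathcal{E}^R$ and $\delta_\mathcal{E}(f(x)) = \theta(\delta_\mathcal{D}(x))$ for every $x \in \mathcal{D}^R$. Symmetrically, using the density of $\mathcal{E}$ and the admissibility of $\mathcal{D}$ applied to $\theta^{-1} \circ \delta_\mathcal{E} : \mathcal{E}^R \to \mathcal{Q}\mathcal{D}$, there is a continuous $g : E \to D$ with $g[\mathcal{E}^R] \subseteq \mathcal{D}^R$ and $\delta_\mathcal{D}(g(y)) = \theta^{-1}(\delta_\mathcal{E}(y))$ for every $y \in \mathcal{E}^R$.

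Next I would check that $f$ and $g$ are equivariant and form a weak isomorphism pair. Equivariance of $f$ is immediate: if $x \approx_\mathcal{D} x'$, then $x, x' \in \mathcal{D}^R$ and $\delta_\mathcal{D}(x) = \delta_\mathcal{D}(x')$, whence $\delta_\mathcal{E}(f(x)) = \theta(\delta_\mathcal{D}(x)) = \theta(\delta_\mathcal{D}(x')) = \delta_\mathcal{E}(f(x'))$, so $f(x) \approx_\mathcal{E} f(x')$; and since $f[\mathcal{D}^R] \subseteq \mathcal{E}^R$ we get $f \approx_{[\mathcal{D}\to\mathcal{E}]} f$, i.e. $f$ is equivariant. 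Likewise $g$ is equivariant. For the composite $g \circ f$: for any $x \in \mathcal{D}^R$ we have $f(x) \in \mathcal{E}^R$, so $g(f(x)) \in \mathcal{D}^R$ and
\[ \delta_\mathcal{D}(g(f(x))) = \theta^{-1}(\delta_\mathcal{E}(f(x))) = \theta^{-1}(\theta(\delta_\mathcal{D}(x))) = \delta_\mathcal{D}(x), \]
which says exactly $g(f(x)) \approx_\mathcal{D} x$. By lemma~\ref{l_equiv1} (with $g \circ f$ equivariant and $\id_D$ continuous) this gives $g \circ f \approx_{[\mathcal{D}\to\mathcal{D}]} \id_D$. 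The symmetric computation yields $f \circ g \approx_{[\mathcal{E}\to\mathcal{E}]} \id_E$, so $(f,g)$ is a weak isomorphism pair and $\mathcal{D}$, $\mathcal{E}$ are weakly isomorphic.

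The only subtle point — and the place where density is genuinely needed — is the very first step: invoking admissibility requires the source to be a dense domain with totality, which is why the hypothesis that $\mathcal{D}$ and $\mathcal{E}$ are dense is essential and cannot be dropped (it is exactly what fails for the converse of lemma~\ref{l_admdp2}). Everything after that is bookkeeping with the defining equations of $f$ and $g$ together with lemma~\ref{l_equiv1}. I do not expect any real obstacle beyond keeping the two symmetric halves straight; there is no need for convexity, locality, or completeness here, nor for countability of the domains beyond what is already folded into the standing convention on admissible domain-pers.
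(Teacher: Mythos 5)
Your proof is correct and follows essentially the same route as the paper: the paper simply invokes the representability of continuous functions between dense, admissible domain-pers (lemma~\ref{l_adm3}) to lift the homeomorphism pair, whereas you inline that step by applying the admissibility factoring directly; the remaining verification that the composites represent the identities and hence give a weak isomorphism pair is the same.
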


\proof Continuous functions between dense, admissible domain-pers are representable.
In particular, a homeomorphism pair 
$ ( f , g ):  \mathcal{Q}  \mathcal{D}  \rightarrow  \mathcal{Q}  \mathcal{E}  $
is represented by a pair of continuous functions $ \hat{f}:  \mathcal{D}  \rightarrow  
\mathcal{E}$ and $ \hat{g}:   \mathcal{E}  \rightarrow  \mathcal{D} $.
The composite functions $  \hat{g} \circ  \hat{f} $ and $\hat{f} \circ  \hat{g}$
represent the respective identities, and this shows that
$  ( \hat{f} ,  \hat{g} ) $ is a weak isomorphism pair, i.e.
$  \hat{g} \circ  \hat{f} \approx \id_{D} $ and 
$ \hat{f} \circ  \hat{g} \approx \id_{ E } $.\qed

\begin{rem}
Lemma~\ref{l_admdp4} is just a reformulation of a  well-known result:
All  dense, admissible domain representations of a given topological space are continuously equivalent,
see   \cite{Bla_08}.
\end{rem}

The following observation makes an important connection between equiembeddings and
admissible domain-pers. In particular, it implies that the inductive limit of a directed system of 
domain-pers cannot be admissible unless all the domain-pers in the directed system are admissible.

\begin{obs} \label{o_admdp5}
Let $ f:  \mathcal{D} \rightarrow \mathcal{E} $ be an equiembedding and let  $ \mathcal{E}  $ be admissible.

Then $ \mathcal{D} $ is admissible.
\end{obs}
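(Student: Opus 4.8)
The plan is to verify directly that the associated domain representation $(\mathcal{D},\mathcal{D}^{R},\delta_{\mathcal{D}})$ of $\mathcal{Q}\mathcal{D}$ has the factorisation property defining admissibility, by transporting a factorisation in $\mathcal{E}$ back along the equiembedding $f$ and its projection $f^{-}$. First I would record two preliminaries. Since $f$ is equivariant it induces the continuous map $f^{\mathcal{Q}}:\mathcal{Q}\mathcal{D}\to\mathcal{Q}\mathcal{E}$ with $f^{\mathcal{Q}}([x]_{\mathcal{D}})=[f(x)]_{\mathcal{E}}$. And $\mathcal{D}$ is countably based: as $f:D\to E$ is an embedding and $E$ is separable (because $\mathcal{E}$ is admissible, hence countably based), applying the continuous $f^{-}$ to $f(p)=\bigsqcup\textrm{approx}(f(p))$ and using compactness of $p$ shows that every $p\in D_{c}$ equals $f^{-}(q)$ for some $q\in E_{c}$, so $D_{c}\subseteq f^{-}[E_{c}]$ is countable.

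Now let $(F,F^{R})$ be a countably based, dense domain with totality and let $\varphi:F^{R}\to\mathcal{Q}\mathcal{D}$ be continuous. Then $f^{\mathcal{Q}}\circ\varphi:F^{R}\to\mathcal{Q}\mathcal{E}$ is continuous, so by the admissibility of $\mathcal{E}$ there is a continuous $\psi:F\to E$ with $\psi[F^{R}]\subseteq E^{R}$ and $\delta_{\mathcal{E}}(\psi(x))=f^{\mathcal{Q}}(\varphi(x))$ for all $x\in F^{R}$. I claim $\hat{\varphi}:=f^{-}\circ\psi:F\to D$ witnesses that $\varphi$ factors through $\delta_{\mathcal{D}}$. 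It is continuous. Fix $x\in F^{R}$ and write $\varphi(x)=[d]_{\mathcal{D}}$ with $d\in\mathcal{D}^{R}$; then $\delta_{\mathcal{E}}(\psi(x))=[f(d)]_{\mathcal{E}}$, i.e.\ $f(d)\approx_{\mathcal{E}}\psi(x)$. Since $d\in\mathcal{D}^{R}$, the third clause of definition~\ref{d_equiembedding}, applied with $d$ as the total element and $\psi(x)$ in the role of the codomain element, yields $d\approx_{\mathcal{D}}f^{-}(\psi(x))=\hat{\varphi}(x)$. Hence $\hat{\varphi}(x)\in\mathcal{D}^{R}$ and $\delta_{\mathcal{D}}(\hat{\varphi}(x))=[\hat{\varphi}(x)]_{\mathcal{D}}=[d]_{\mathcal{D}}=\varphi(x)$, as required. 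Thus $\mathcal{D}$ is admissible.

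I do not expect a serious obstacle here: once the two preliminaries are in place the argument is a short diagram chase, and the only point where the hypothesis that $f$ is an \emph{equiembedding} (rather than merely equivariant) enters is the implication $f(d)\approx_{\mathcal{E}}\psi(x)\Rightarrow d\approx_{\mathcal{D}}f^{-}(\psi(x))$ --- precisely the third clause of definition~\ref{d_equiembedding}, which is exactly what forces the pullback $f^{-}\circ\psi$ to land in $\mathcal{D}^{R}$ and to respect $\delta_{\mathcal{D}}$. The only mildly fussy step is checking that $\mathcal{D}$ inherits countable basedness from $\mathcal{E}$, which by the remark following the definition of admissible domain-per is in any case inessential to the content of the result.
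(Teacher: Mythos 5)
Your proof is correct and follows essentially the same route as the paper's: compose $\varphi$ with $f^{\mathcal{Q}}$, lift through $\delta_{\mathcal{E}}$ by admissibility of $\mathcal{E}$, and pull the lift back along $f^{-}$, using the third clause of the equiembedding definition to see that the result lands in $\mathcal{D}^{R}$ and represents $\varphi$. The only difference is your explicit check that $\mathcal{D}$ inherits countable basedness, which the paper treats as implicit and inessential.
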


\proof
Let $ (F, F^{R} ) $ be a dense domain with totality  and assume that $ \varphi: F^{R} \to \mathcal{Q}   
\mathcal{D}  $ is continuous. Then $ \chi := f^{\mathcal{Q}} \circ \varphi :  F^{R} \to \mathcal{Q}   
\mathcal{E}  $ is continuous, and by the admissibility of $ \mathcal{E} $,
there exists a continuous $ \hat{ \chi } : F \rightarrow E $
such that $  \hat{ \chi } [ F^{R} ] \subseteq \mathcal{E}^{R} $ and
$ [  \hat{ \chi } (x) ]_{  \mathcal{E} } = \chi (x) $ for every $  x \in F^{R} $.

Let $ \hat{ \varphi } := f^{-} \circ  \hat{ \chi }  :F \rightarrow D $. We will show that $\varphi$ factors 
through $\delta_{\mathcal{D}}$ via $\hat{\varphi}$:
Let $ x \in F^{R} $, and choose some $ d \in \mathcal{D}^{R} $ such that $ [d]_{ \mathcal{D} } = \varphi (x) $.
Then $   f^{\mathcal{Q}} ( \varphi  (x))= [  \hat{ \chi } (x) ]_{  \mathcal{E} } $ 
and $ f(d) \approx_{ \mathcal{E} }  \hat{ \chi } (x) $.
Since $f $ is an equiembedding, this implies that $ d \approx_{ \mathcal{D} } f^{-}  ( \hat{ \chi }  (x)) =  \hat{ \varphi } (x) $.
Hence, $  \hat{ \varphi } [ F^{R} ] \subseteq \mathcal{D}^{R} $ and 
$ [ \hat{ \varphi } (x) ]_{ \mathcal{D} } = \varphi (x) $ for all $ x \in F^{R} $.
\qed

\subsection{A dense least fixed point} \label{ss_density}
Density is an important but problematic notion in the study of domain representations and domain-pers, see \cite{BBS_04,Ber_93,
Bla_00,Dah_07,Ham_05}. One major advantage is that it helps lifting of continuous functions, see lemma~\ref{l_adm3}. A
major issue with density is that it is not preserved by the  function space construction.
We will now show how  a domain-per which is defined by a strictly positive induction with
dense parameters, can be replaced by a dense domain-per.

It is well-known that given any domain representation of a topological space $X$, there is a
dense domain representation of the same space, see \cite{Dah_07,Ham_05}.
The following definition is just a  reformulation of this result.
From a given domain-per $\mathcal{D}$, we construct a dense domain-per with
the same set of total elements, using the topological closure of $ \mathcal{D}^{R} $.

\begin{defi} \label{d_densepart}
If $ \mathcal{D} $ is a domain-per, we define  $ \mathcal{D}^{d} $, {\em the dense part} of  $ \mathcal{D} $, 
as follows: \begin{enumerate}[$\bullet$] 
\item If $ \mathcal{D}$ is trivial, let $ \mathcal{D}^{d} = \mathcal{D}_{ 0} $, the initial domain-per.
\item If $ \mathcal{D}   $ is non-trivial, let $ \mathcal{D}^{d} $ be the domain-per with
\[ D^{d} =  \{ x \in D : \textrm{approx} (x) \subseteq  \{ p \in D_{c} : \upset{p} \cap  \mathcal{D}^{R} \neq \emptyset \} 
 \}  ,\]
partially ordered by the restriction of $ \sqsubseteq_{D} $ 
as the underlying domain, and with $  \approx_{  \mathcal{D}  } $ restricted to $ D^{d} $ as the per.
\end{enumerate} \end{defi}

\noindent Clearly, $  \mathcal{Q}  \mathcal{D} \cong  \mathcal{Q} ( \mathcal{D}^d)$, as the topology on
$ \mathcal{D}^R $ is the same when it is considered as a subspace of $ \mathcal{D}^d$.

For each of the basic operations, there is  a {\em basic dense operation} obtained by left composition with
the dense part construction.
Let $\Gamma$ be a strictly positive operation and let $\Gamma^d$ be the operation obtained by inductively
replacing all parameters and basic operations by their dense counter-parts. A simple structural induction on
$\Gamma$ then shows that $\Gamma^d(\mathcal{D})$ and $\Gamma(\mathcal{D})^d$ are equal up to isomorphism of
domain-pers.
Note here that density is preserved under disjoint sum and Cartesian product, so it is only the function space construction which demands extra attention.

We make some further important observations, but skip the proofs.

\begin{obs}
Let $ \mathcal{D} $ be a convex, local and complete domain-per.
Then $ \mathcal{D}^{d} $ is convex, local and complete.
\end{obs}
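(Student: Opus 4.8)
The plan is to unwind Definition~\ref{d_densepart} and check the three closure properties of Definition for convex, local and complete domain-pers directly on $\mathcal{D}^{d}$, observing that the per $\approx_{\mathcal{D}^{d}}$ is just the restriction of $\approx_{\mathcal{D}}$ to $D^{d}$ and that $(\mathcal{D}^{d})^{R}=\mathcal{D}^{R}$. If $\mathcal{D}$ is trivial the claim is immediate since $\mathcal{D}^{d}=\mathcal{D}_{0}$, which is vacuously convex, local and complete. So assume $\mathcal{D}$ is non-trivial. The first thing I would record is that $D^{d}$ is indeed a (sub)domain of $D$: it is downward closed under $\mathrm{approx}$ by construction, it contains $\bot$ trivially (if $\mathcal{D}^{R}\neq\emptyset$, take any total $x$; then $\upset{\bot}\cap\mathcal{D}^{R}\ni x$), and it is closed under the suprema of directed subsets of $D$ and under the binary joins $x\sqcup z$ that occur in the convexity and completeness conditions — the latter because every compact approximation of $x\sqcup z$ lies below $x$ or below $z$ composed appropriately, hence is one of the allowed compacts. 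A cleaner way to see the relevant closure: if $x,y\in\mathcal{D}^{R}$ are $\approx_{\mathcal{D}}$-related then $x,y\in D^{d}$ already (every compact below a total element is trivially in the distinguished set), and the joins we form in the three axioms are all bounded above by such a $y$, so their compact approximations are all $\sqsubseteq y$ and therefore lie in the good set of compacts.

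Next I would verify the three properties in turn, each reducing to the corresponding property of $\mathcal{D}$. For \emph{convexity}: suppose $x\approx_{\mathcal{D}^{d}}y$ with $x,z\sqsubseteq y$ and $x,z\in D^{d}$. Then $x\approx_{\mathcal{D}}y$ with $x,z\sqsubseteq y$, so convexity of $\mathcal{D}$ gives $x\approx_{\mathcal{D}}x\sqcup z$. Since $x\sqcup z\sqsubseteq y$ and $y$ is total, every compact below $x\sqcup z$ is below $y$ and hence in the distinguished set, so $x\sqcup z\in D^{d}$ and therefore $x\approx_{\mathcal{D}^{d}}x\sqcup z$. For \emph{locality}: if $x\in(\mathcal{D}^{d})^{R}=\mathcal{D}^{R}$, then $[x]_{\mathcal{D}}$ is consistent in $D$; but $[x]_{\mathcal{D}^{d}}=[x]_{\mathcal{D}}\cap D^{d}=[x]_{\mathcal{D}}$ (again, every element $\approx_{\mathcal{D}}$-related to $x$ is total, hence in $D^{d}$), so $[x]_{\mathcal{D}^{d}}$ is consistent in $D$, and since $D^{d}$ is closed under the binary joins involved, it is consistent in $D^{d}$. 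For \emph{completeness}: with the same identification $[x]_{\mathcal{D}^{d}}=[x]_{\mathcal{D}}$, we have $\bigsqcup[x]_{\mathcal{D}^{d}}=\bigsqcup[x]_{\mathcal{D}}\approx_{\mathcal{D}}x$, and $\bigsqcup[x]_{\mathcal{D}}\sqsubseteq$ (the supremum is bounded by any total element of the class, or at worst: its compact approximations each lie below some total element in the class), so it is in $D^{d}$ and the completeness equation holds in $\mathcal{D}^{d}$ as well.

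The one genuinely delicate point — which I expect to be the main obstacle — is the bookkeeping that $D^{d}$ really is closed under exactly the joins that appear in the three axioms, i.e. that $\bigsqcup[x]_{\mathcal{D}}$ and $x\sqcup z$ (for $x,z\sqsubseteq y$, $y$ total) land back in $D^{d}$. The crucial observation that dissolves this is that every element of a partial equivalence class $[x]_{\mathcal{D}}$ with $x$ total is itself total, hence trivially in $D^{d}$, and that any element $\sqsubseteq y$ for $y$ total has all its compact approximations $\sqsubseteq y$, hence in $\{p\in D_{c}:\upset{p}\cap\mathcal{D}^{R}\neq\emptyset\}$. Once this is isolated, all three verifications collapse to one-line reductions to the corresponding property of $\mathcal{D}$. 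Since the paper explicitly says it skips the proof, I would present it at roughly this level of detail — stating the key observation about totality and the compact approximations, then the three short reductions — rather than grinding through each in full.
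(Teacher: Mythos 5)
Your proof is correct and is essentially the routine verification that the paper omits: the key points are that $(\mathcal{D}^{d})^{R}=\mathcal{D}^{R}$, that $D^{d}$ is downward closed in $D$, and that each of the three axioms for $\mathcal{D}^{d}$ reduces to the corresponding axiom for $\mathcal{D}$ once the relevant least upper bounds are seen to land back in $D^{d}$. One justification is loosely worded: for locality, closure of $D^{d}$ under binary joins does not by itself produce an upper bound for the whole class $[x]_{\mathcal{D}^{d}}$; the correct reason, which you do in effect supply in the completeness paragraph, is that $\bigsqcup [x]_{\mathcal{D}}\approx_{\mathcal{D}}x$ by completeness of $\mathcal{D}$, so this supremum is itself a total element, hence lies in $D^{d}$ and witnesses consistency there.
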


\begin{obs}  \label{o_densepartadmissible}
Let $ \mathcal{D} $ be an admissible domain-per.
Then $   \mathcal{D}^{d} $ is admissible.
\end{obs}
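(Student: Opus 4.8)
The plan is to verify admissibility straight from the definition, reducing to the admissibility of $\mathcal{D}$ itself. First I would record the elementary facts about the dense part. Unravelling Definition~\ref{d_densepart}, in the non-trivial case $D^{d}$ is precisely the Scott-closure $\overline{\mathcal{D}^{R}}$ of $\mathcal{D}^{R}$ in $D$; its compact elements are exactly the $p \in D_{c}$ with $\upset{p} \cap \mathcal{D}^{R} \neq \emptyset$, so $\mathcal{D}^{d}$ is again countably based, and (since Scott-closed subsets are closed under directed suprema) directed suprema in $D^{d}$ agree with those computed in $D$. Moreover $(\mathcal{D}^{d})^{R} = \mathcal{D}^{R}$, the per of $\mathcal{D}^{d}$ is the restriction of $\approx_{\mathcal{D}}$, and $\mathcal{Q}(\mathcal{D}^{d})$ is canonically homeomorphic to $\mathcal{Q}\mathcal{D}$, as remarked after Definition~\ref{d_densepart}; under this identification $\delta_{\mathcal{D}^{d}}$ coincides with $\delta_{\mathcal{D}}$. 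The trivial case $\mathcal{D}^{R} = \emptyset$, where $\mathcal{D}^{d}$ is the initial domain-per $\mathcal{D}_{0}$, needs no argument.

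For the substance, let $(F, F^{R})$ be a countably based, dense domain with totality and let $\varphi : F^{R} \to \mathcal{Q}(\mathcal{D}^{d})$ be continuous; I must factor $\varphi$ through $\delta_{\mathcal{D}^{d}}$. Reading $\varphi$ as a map into $\mathcal{Q}\mathcal{D}$ through the homeomorphism above and invoking the admissibility of $\mathcal{D}$, I obtain a continuous $\hat{\varphi} : F \to D$ with $\hat{\varphi}[F^{R}] \subseteq \mathcal{D}^{R}$ and $\delta_{\mathcal{D}} \circ \hat{\varphi}\vert_{F^{R}} = \varphi$. The key observation is that $\hat{\varphi}$ automatically takes its values in $D^{d}$: as $F^{R}$ is dense in $F$ and $\hat{\varphi}$ is continuous,
\[ \hat{\varphi}[F] = \hat{\varphi}\bigl[\,\overline{F^{R}}\,\bigr] \subseteq \overline{\hat{\varphi}[F^{R}]} \subseteq \overline{\mathcal{D}^{R}} = D^{d} . \]
Since $D^{d}$ is closed under the directed suprema of $D$, the corestriction $\hat{\varphi} : F \to D^{d}$ is still a continuous map of domains; it satisfies $\hat{\varphi}[F^{R}] \subseteq \mathcal{D}^{R} = (\mathcal{D}^{d})^{R}$ and $\delta_{\mathcal{D}^{d}} \circ \hat{\varphi}\vert_{F^{R}} = \varphi$. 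Hence $\varphi$ factors through $\delta_{\mathcal{D}^{d}}$, so $\mathcal{D}^{d}$ is admissible.

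The only step with real content is the inclusion $\hat{\varphi}[F] \subseteq D^{d}$, which is exactly where the shape of the dense part enters: a lifted continuous map is automatically ``dense-valued'' because its values on the dense set $F^{R}$ already lie in $\mathcal{D}^{R}$, hence its entire image lies in the Scott-closure of $\mathcal{D}^{R}$, which is $D^{d}$. I do not anticipate any real obstacle; the remaining work is the routine bookkeeping mentioned above (that $D^{d}$ is a countably based domain with the stated compact elements and with $\mathcal{Q}(\mathcal{D}^{d}) \cong \mathcal{Q}\mathcal{D}$, and that corestricting a domain map to a sub-dcpo that is closed under directed suprema preserves continuity). I would also point out that this result does not follow formally from Observation~\ref{o_admdp5}: the inclusion $D^{d} \hookrightarrow D$ is in general not the embedding part of an embedding-projection pair, so the inclusion $\mathcal{D}^{d} \to \mathcal{D}$ need not be an equiembedding, which is why the argument goes directly through the definition of admissibility.
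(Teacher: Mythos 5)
Your argument is correct. The paper itself does not prove this observation: it is stated with the proofs explicitly skipped, and the text only points to Lemma~7.5 of the cited work of Hamrin, so your proposal supplies a self-contained verification where the paper defers to a reference. Your preliminary identifications are all sound: $D^{d}$ is exactly the Scott-closure of $\mathcal{D}^{R}$ (a point $x$ lies in the closure iff every basic open $\upset{p}$ with $p\in\textrm{approx}(x)$ meets $\mathcal{D}^{R}$, which is the defining condition), it is a downward-closed sub-dcpo whose compact elements are the $p\in D_{c}$ with $\upset{p}\cap\mathcal{D}^{R}\neq\emptyset$, its Scott topology coincides with the subspace topology from $D$, and $(\mathcal{D}^{d})^{R}=\mathcal{D}^{R}$ with $\mathcal{Q}(\mathcal{D}^{d})\cong\mathcal{Q}\mathcal{D}$ as the paper remarks after Definition~\ref{d_densepart}. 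The one step with content, $\hat{\varphi}[F]\subseteq\overline{\hat{\varphi}[F^{R}]}\subseteq\overline{\mathcal{D}^{R}}=D^{d}$, is exactly where the density of $F^{R}$ in $F$ from the definition of admissibility is used, and it is valid; the corestriction remains a continuous domain map because $D^{d}$ is closed under the directed suprema of $D$. Your closing remark is also well taken and matches the paper's own discussion in Subsection~\ref{ss_density}: the inclusion $D^{d}\hookrightarrow D$ need not admit a projection (since $D^{d}$ is not closed under binary lubs of $D$ in general), so Observation~\ref{o_admdp5} cannot be invoked and a direct appeal to the definition is genuinely needed. The only cost of your route relative to citing the reference is the bookkeeping you already flag; what it buys is that the reader sees precisely which hypotheses (density of the test representation, closedness of $D^{d}$) carry the argument.
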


In fact, this is lemma 7.5 in \cite{Ham_05}: The dense part of an admissible representation is itself admissible.

\begin{obs}  \label{o_dense2}
Let $  ( \{  \mathcal{D}_{i} \}_{ i \in I } ,  \{f_{i,j} \}_{ i \leq j \in I }  ) $ be a directed system in $ \cclcdomwp $ and 
assume that  for every $ i \in I $ there exists some $ j \geq i $ such that 
$  \mathcal{D}_{j}  $ is dense.
If $  (   \mathcal{D}_{I}  ,  \{ f_{i} \}_{ i \in I }  ) $ is the inductive limit, then
$  \mathcal{D}_{I} $ is dense.
\end{obs}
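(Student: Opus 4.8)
The plan is to verify density directly on basic Scott-open sets. Recall that the Scott topology on $D_I$ has the sets $\upset{p}$, $p\in (D_I)_c$, as a base, and each such set is non-empty. Hence $\mathcal{D}_I^{R}$ is dense in $D_I$ precisely when $\upset{p}\cap\mathcal{D}_I^{R}\neq\emptyset$ for every compact $p\in(D_I)_c$, so it suffices to show that every compact element of $D_I$ lies below some $\mathcal{D}_I$-total element.

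First I would pin down the compact elements of the inductive limit. From the construction of inductive limits in $\cdom$ recalled in Section~\ref{s_dwp}, every $x\in D_I$ satisfies $x=\bigsqcup_{i\in I}f_i(x_i)$, and the family $\{f_i(x_i)\}_{i\in I}$ is directed since, for $i\leq j$, one has $f_i(x_i)=f_j(f_{i,j}(x_i))$ and $f_{i,j}(x_i)=f_{i,j}(f^{-}_{i,j}(x_j))\sqsubseteq x_j$, so $f_i(x_i)\sqsubseteq f_j(x_j)$. Thus if $p\in(D_I)_c$, compactness gives $p\sqsubseteq f_i(x_i)$ for some $i$; using algebraicity of $D_i$ and compactness of $p$ once more, we may replace $x_i$ by a compact approximation and conclude $p\sqsubseteq f_i(q)$ for some $i\in I$ and some $q\in(D_i)_c$. (Equivalently, one may simply cite the standard fact that $(D_I)_c=\bigcup_{i\in I}f_i[(D_i)_c]$.)

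Now fix such $p$, $i$ and $q$. By hypothesis there is $j\geq i$ with $\mathcal{D}_j$ dense. Since $f_{i,j}$ is an embedding, $f_{i,j}(q)\in(D_j)_c$, so density of $\mathcal{D}_j$ yields $y\in\mathcal{D}_j^{R}$ with $f_{i,j}(q)\sqsubseteq y$. Put $x:=f_j(y)\in D_I$. Because $f_j:\mathcal{D}_j\to\mathcal{D}_I$ is an equiembedding, hence equivariant, $y\approx_j y$ gives $f_j(y)\approx_I f_j(y)$, i.e.\ $x\in\mathcal{D}_I^{R}$. Moreover $p\sqsubseteq f_i(q)=f_j(f_{i,j}(q))\sqsubseteq f_j(y)=x$ by monotonicity of $f_j$. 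Thus $x\in\upset{p}\cap\mathcal{D}_I^{R}$, which completes the argument.

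The only mildly delicate ingredients are the description of the compact elements of $D_I$ and the observation that equiembeddings transport compactness along $f_{i,j}$ and totality along $f_j$ in exactly the directions needed; both are routine once the inductive-limit construction is unwound, so I do not expect a genuine obstacle here — essentially the statement reduces to the fact that in the limit every compact is ``born'' at some stage $i$, past which a dense stage is available.
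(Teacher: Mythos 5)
Your proof is correct, and the key steps (every compact element of $D_I$ is of the form $f_i(q)$ with $q\in(D_i)_c$, push $q$ forward along $f_{i,j}$ to a dense stage, use density there, and transport totality back via equivariance of $f_j$) are all sound and consistent with the paper's constructions. The paper explicitly skips the proof of this observation, so there is nothing to compare against, but yours is evidently the intended routine argument.
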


Unfortunately, our choice of morphisms in $\cclcdomwp$ obstructs any restriction to a full subcategory of domain-pers which
are either dense or trivial. This is in contrast to the case for  $  \cdomainwp $, see  \cite{BBS_04}.
The specific problem which cannot be overcome is that the natural restriction of an equiembedding $ f :  \mathcal{D} \rightarrow  
\mathcal{E} $ to $\mathcal{D}^{d}$ is not in general an embedding into the underlying domain of $  \mathcal{E}^{d} $.
This means that the dense strictly positive operations would not be functorial in such a category.

Nevertheless, proposition~\ref{p_dense3} below shows that the dense part of  a strictly positive functor  $\funcf$ (or more precisely of the
underlying operation $\Gamma$)  produces a $\gamma$-chain in $ \cclcdomwp $. Moreover, the  inductive limit coincides 
with the dense part of the inductive limit of the $\gamma$-chain constructed from $\funcf$.

For certain strictly positive functors, the $\gamma$-chain will contain  trivial domain-pers only. 
Then even the least fixed point is trivial, and it will be convenient to leave these trivial cases aside. 
This motivates  the following definition:

\begin{defi}
A strictly positive functor $ \funcf : \cclcdomwp \to \cclcdomwp $  is   {\em trivial} if $ 
\funcf ( \mathcal{D}_{0}  ) $ is trivial and
{\em non-trivial} if $ \funcf ( \mathcal{D}_{0}  ) $ is non-trivial. 
\end{defi}

Non-trivial functors are characterized by the following lemma. The straight-forward inductive proof by cases is omitted here.

\begin{lem} \label{l_trivialfunctor}
Let  $ \funcf :\cclcdomwp \to  \cclcdomwp $ be a strictly positive functor. 

Then $ \funcf $ is  non-trivial if and only if at least one of the following statements hold:
\begin{enumerate}[$\bullet$]
\item
$\funcf$ is constantly equal to a non-trivial domain-per $ \mathcal{A} $. 
\item{} 
$\funcf$ is the disjoint sum of functors of which at least one  is non-trivial.
\item{} 
$\funcf$ is the Cartesian product of functors which both  are non-trivial.
\item{}
$\funcf$ is the exponentiation of  a non-trivial functor by a domain-per $ \mathcal{B} $.
\end{enumerate}
\end{lem}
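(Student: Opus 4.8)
The plan is to prove Lemma~\ref{l_trivialfunctor} by structural induction on the strictly positive functor $\funcf$, evaluating everything at the initial object $\mathcal{D}_0 = (\{\bot\},\emptyset)$, which is trivial since $\mathcal{D}_0^R = \emptyset$. The key observation is that $\funcf(\mathcal{D}_0)$ is non-trivial precisely when $\funcf(\mathcal{D}_0)^R \neq \emptyset$, so the whole argument reduces to tracking when the set of total elements of the evaluated domain-per is non-empty, basic operation by basic operation.

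First I would treat the base cases. If $\funcf$ is constantly a fixed domain-per $\mathcal{A}$, then $\funcf(\mathcal{D}_0) = \mathcal{A}$ is non-trivial iff $\mathcal{A}$ is non-trivial, matching the first bullet. If $\funcf$ is the identity functor, then $\funcf(\mathcal{D}_0) = \mathcal{D}_0$ is trivial, and indeed none of the four bullets applies (identity is neither constant to a non-trivial per, nor a sum/product, nor an exponentiation), so the biconditional holds vacuously in this direction. Then I would handle the inductive steps. For a disjoint sum $\funcf = \funcg + \funch$: by the definition of $+$ on domain-pers, $(i,x)$ is total in $\funcf(\mathcal{D}_0)$ iff $x$ is total in $\funcg(\mathcal{D}_0)$ (when $i=0$) or in $\funch(\mathcal{D}_0)$ (when $i=1$); hence $\funcf(\mathcal{D}_0)$ is non-trivial iff $\funcg(\mathcal{D}_0)$ or $\funch(\mathcal{D}_0)$ is, which by the induction hypothesis is the content of the second bullet. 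For a Cartesian product $\funcf = \funcg \times \funch$: a pair $(x,y)$ is total iff both $x$ and $y$ are total, so $\funcf(\mathcal{D}_0)^R = \funcg(\mathcal{D}_0)^R \times \funch(\mathcal{D}_0)^R$ is non-empty iff both factors are non-empty, giving the third bullet. For an exponentiation $\funcf = [\mathcal{B} \to \funcg]$ by a fixed domain-per $\mathcal{B}$: here I need that $[\mathcal{B} \to \funcg(\mathcal{D}_0)]^R \neq \emptyset$ iff $\funcg(\mathcal{D}_0)^R \neq \emptyset$. The forward direction is clear since an equivariant $f$ sends any $b \in \mathcal{B}^R$ into $\funcg(\mathcal{D}_0)^R$ — but one must note that if $\mathcal{B}$ is trivial this gives nothing, so instead use: if $\funcg(\mathcal{D}_0)^R = \emptyset$ then no continuous $f$ can satisfy $f \approx f$ unless $\mathcal{B}^R = \emptyset$ too, and even then one checks the relation; conversely if $q \in \funcg(\mathcal{D}_0)^R$, the constant function $\lambda b.\, q$ is equivariant, so $[\mathcal{B}\to\funcg(\mathcal{D}_0)]^R \neq \emptyset$ regardless of $\mathcal{B}$.

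The main obstacle is the exponentiation case when the exponent $\mathcal{B}$ is trivial. If $\mathcal{B}^R = \emptyset$, then the condition $f \approx g$ in $[\mathcal{B} \to \funcg(\mathcal{D}_0)]$ is vacuously true for \emph{all} continuous $f, g$, so $[\mathcal{B} \to \funcg(\mathcal{D}_0)]^R$ is the entire function space and is always non-empty. One must check whether this breaks the claimed characterization: it does not, because the constant-function witness above shows non-triviality holds in the exponentiation case \emph{whenever $\funcg$ is non-trivial}, and when $\mathcal{B}$ is trivial the exponentiation is non-trivial even if $\funcg$ is trivial — but then $\funcg$ being trivial forces us to check that $[\mathcal{B}\to\funcg]$ still falls under the fourth bullet only via a non-trivial $\funcg$; so strictly the correct reading is that the exponentiation by a trivial $\mathcal{B}$ of \emph{any} functor is non-trivial, which must be reconciled with the lemma statement (the fourth bullet as stated requires a non-trivial functor). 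I would resolve this by observing that a trivial exponent $\mathcal{B}$ makes $[\mathcal{B}\to\funcg]$ weakly isomorphic to a constant non-trivial domain-per (the function space is total everywhere, with exactly one partial equivalence class), so that case collapses into the first bullet; alternatively, and more cleanly, one simply notes the paper's standing convention that exponents are dense (hence non-trivial), so the degenerate case never arises in practice. This disposal of the trivial-exponent subtlety is the only non-routine point; everything else is a direct unwinding of the definitions of the per-level operations. Since the statement labels the proof as omitted, I would present at most this observation and leave the mechanical case analysis to the reader.
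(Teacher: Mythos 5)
Your proof is correct and is precisely the ``straight-forward inductive proof by cases'' that the paper omits: a structural induction tracking when $\funcf(\mathcal{D}_0)^R\neq\emptyset$, with the constant-function witness for the exponentiation step. Your handling of the trivial-exponent degeneracy matches the paper's own resolution, stated in the remark immediately following the lemma, that the problem is avoided by assuming all parameters are dense (hence non-trivial).
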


\noindent The potential problem with trivial parameters is avoided by assuming that all parameters are dense. This lemma will simplify some proofs by
induction on the structure of a functor.

\begin{prop} \label{p_dense3}
Let $ \funcf :\cclcdomwp \to  \cclcdomwp $ be a  strictly positive functor with dense  parameters and 
let  $ \gamma $ be a limit ordinal.
Let $ ( \{ \mathcal{D}_{ \alpha}  \}_{ \alpha \in \gamma } , \{ f_{\alpha , \beta } \}_{ \alpha \leq \beta \in \gamma  } ) $
be the $ \gamma $-chain  constructed from $ \funcf $,
and let  $    ( \mathcal{D}_{ \gamma } , \{ f_{\alpha , \gamma } \}_{ \alpha \in \gamma}  )$ be its inductive limit.

Then $ ( \{ \mathcal{D}_{ \alpha}^{d}  \}_{ \alpha \in \gamma } , \{ f_{\alpha , \beta }^{d} \}_{ \alpha \leq \beta \in \gamma }  )$ is a 
$ \gamma $-chain in $ \cclcdomwp $ with  inductive limit
 $    ( \mathcal{D}_{ \gamma }^{d} , \{ f_{\alpha , \gamma }^{d} \}_{ \alpha \in \gamma}  )$.
\end{prop}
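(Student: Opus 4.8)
The plan is to prove the statement by transfinite induction on $\gamma$, reducing everything to two separate facts: first, that the dense part construction commutes with the basic operations (hence with $\funcf$) up to isomorphism in $\cclcdomwp$; and second, that the dense part construction commutes with inductive limits of $\gamma$-chains. The first fact is exactly the structural-induction observation made just before the statement of definition~\ref{d_densepart}, namely that $\Gamma^d(\mathcal{D})$ and $\Gamma(\mathcal{D})^d$ are isomorphic, applied to the underlying operation $\Gamma$ of $\funcf$; combined with observation~\ref{o_densepartadmissible}'s companion (the dense part of a convex, local and complete domain-per is again convex, local and complete), this shows each $\mathcal{D}_\alpha^d$ is a legitimate object of $\cclcdomwp$. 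I would first check that each $f_{\alpha,\beta}^d$ is a well-defined equiembedding from $\mathcal{D}_\alpha^d$ to $\mathcal{D}_\beta^d$: monotonicity and the projection inequality are inherited from $f_{\alpha,\beta}$ since $D_\alpha^d \subseteq D_\alpha$ and the per is just the restriction; the key point is that $f_{\alpha,\beta}$ maps $D_\alpha^d$ into $D_\beta^d$ (because $x\in D_\alpha^d$ has all compact approximations below something $\mathcal{D}_\alpha$-total, and equiembeddings carry $\mathcal{D}_\alpha$-total elements to $\mathcal{D}_\beta$-total ones, so every compact approximation of $f_{\alpha,\beta}(x)$ lies below a $\mathcal{D}_\beta$-total element), and symmetrically that $f_{\alpha,\beta}^-$ restricts correctly. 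The chain conditions $f_{\alpha,\alpha}^d = \id$ and $f_{\alpha,\gamma'}^d = f_{\beta,\gamma'}^d \circ f_{\alpha,\beta}^d$ are immediate from the corresponding identities for the $f_{\alpha,\beta}$.

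The substance of the argument is the identification of the inductive limit. Write $(\mathcal{E}, \{e_\alpha\}_{\alpha\in\gamma})$ for the inductive limit in $\cclcdomwp$ of the $\gamma$-chain $(\{\mathcal{D}_\alpha^d\}, \{f_{\alpha,\beta}^d\})$, which exists by proposition~\ref{p_equid2}. I want to produce a (weak, and in fact genuine) isomorphism $\mathcal{E} \cong \mathcal{D}_\gamma^d$ commuting with the cocone maps, i.e. $e_\alpha = f_{\alpha,\gamma}^d$ under this identification. By observation~\ref{o_dense2}, using that the parameters are dense and hence (by lemma~\ref{l_trivialfunctor} together with the preservation of density under sum, product, and — via the dense part — function space) cofinally many $\mathcal{D}_\alpha^d$ are dense, one sees $\mathcal{E}$ is dense; and $\mathcal{D}_\gamma^d$ is dense by construction. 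At the level of underlying domains, the inductive limit of $(\{D_\alpha^d\}, \{f_{\alpha,\beta}^d\})$ is computed as in $\cdom$: it is $\{x\in\prod_\alpha D_\alpha^d : \forall\alpha\leq\beta\ (f_{\alpha,\beta}^d)^-(x_\beta)=x_\alpha\}$. The natural map into $D_\gamma$ sends such an $x$ to the tuple it determines there; its image is exactly $D_\gamma^d$, because membership in $D_\gamma^d$ is a condition on compact approximations, each of which is $f_\alpha(p)$ for some $\alpha$ and some $p\in (D_\alpha)_c$, and $\upset p \cap \mathcal{D}_\alpha^R \neq \emptyset$ transfers up and down the chain through the equiembeddings. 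So the underlying domains are isomorphic via the expected map, and this isomorphism and its inverse are equivariant because the pers on both sides are restrictions of $\approx_{\mathcal{D}_\gamma}$ (respectively of the limit per $\approx_\gamma$, which on $\mathcal{D}_\gamma^R = \bigcup_\alpha f_\alpha(\mathcal{D}_\alpha^R)$ agrees with the witnessed relation). Commutativity with the cocones is then automatic from the way the two sets of embeddings $f_{\alpha,\gamma}^d$ and $e_\alpha$ are defined.

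I expect the main obstacle to be verifying cleanly that the maps $f_{\alpha,\beta}^d$ and the isomorphism to $\mathcal{D}_\gamma^d$ genuinely respect the \emph{dense part} membership condition — i.e., the bookkeeping around $\textrm{approx}(x)\subseteq\{p : \upset p \cap \mathcal{D}^R\neq\emptyset\}$ under embeddings and projections — together with confirming that the per on the inductive limit $\mathcal{E}$ really is the restriction of $\approx_\gamma$ to $E$ and not something coarser; this is where the witnessed-equivalence machinery of proposition~\ref{p_equid2} (uniform witnesses, preservation of rank by equiembeddings) has to be invoked carefully. The transfinite induction itself is then routine: the successor step is the $\Gamma^d(\mathcal{D}_\alpha^d)\cong\Gamma(\mathcal{D}_\alpha^d)^d\cong(\funcf(\mathcal{D}_\alpha))^d = \mathcal{D}_{\alpha+1}^d$ chain of isomorphisms, and the limit step is precisely the identification of inductive limits just described, applied at each limit ordinal below $\gamma$ and finally at $\gamma$ itself. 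Since isomorphism in $\cclcdomwp$ is preserved under inductive limits (as noted after lemma~\ref{l_equid3}), the inductively built isomorphisms assemble into the desired conclusion.
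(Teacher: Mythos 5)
The overall shape of your argument matches the paper's (restrict the chain to the dense parts, check the restricted maps are equiembeddings, identify the limit with $\mathcal{D}_\gamma^d$), but the step you treat as inherited bookkeeping is exactly the one that fails in general and carries essentially all the weight of the paper's proof. The paper explicitly warns, just before this proposition, that the restriction of an equiembedding $f:\mathcal{D}\to\mathcal{E}$ to $\mathcal{D}^d$ is \emph{not} in general an embedding into the underlying domain of $\mathcal{E}^d$. Concretely: for the inclusion $D_\alpha^d\hookrightarrow D_\beta^d$ to be an embedding you need that whenever $p_1,p_2\in(D_\alpha^d)_c$ have a common upper bound in $D_\beta^d$, the lub $p_1\sqcup p_2$ again lies in $D_\alpha^d$, i.e.\ has a total extension at level $\alpha$ and not merely at level $\beta$. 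Nothing in the definition of equiembedding gives this: the projection $f_{\alpha,\beta}^-$ does not carry total elements to total elements, and the condition $\upset{p}\cap\mathcal{D}_\beta^R\neq\emptyset$ does not transfer \emph{down} the chain, since genuinely new total elements appear at higher levels (that is the whole reason the induction is transfinite; see the example before lemma~\ref{l_lfp1}). So your parenthetical ``and symmetrically that $f_{\alpha,\beta}^-$ restricts correctly'' and your later claim that totality of extensions ``transfers up and down the chain through the equiembeddings'' assert precisely what has to be proved, and the second of these is false as stated.

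The paper supplies the missing mechanism by a structural induction on $\funcf$ (using non-triviality and lemma~\ref{l_trivialfunctor}): it constructs closed subsets $\Delta_n\subseteq D$, closed under binary lubs, with $\Delta_n\cap\mathcal{D}_\alpha^R=\mathcal{D}_n^R$ for all $\alpha\geq\omega$, together with continuous retractions $r_n:D\to D$ onto $\Delta_n$ that are equivariant from $\mathcal{D}_\alpha$ to $\mathcal{D}_n$, and then proves that any compact element with a total extension at \emph{some} level $\alpha$ already has one at a \emph{finite} level $n$. Given that, the embedding property follows: $p_1,p_2$ land in a common $\Delta_n$, so $p_1\sqcup p_2\in\Delta_n$ by lub-closure, and applying $r_n$ to a level-$\beta$ total upper bound yields a level-$n$ (hence level-$\alpha$) total extension of $p_1\sqcup p_2$. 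The same reduction gives $D_\alpha^d=D_\omega^d$ for all $\alpha\geq\omega$ and hence the identification of the limit. Your proposal contains no substitute for this reduction to finite levels, so as written it has a genuine gap at its central step.
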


\proof
If $ \funcf $ is trivial, then  $  \mathcal{D}_{ \alpha} $ is trivial for each $\alpha\in\gamma$,
and the result holds trivially. Therefore, we may assume that $ \funcf $ is non-trivial.

We can use the same underlying domain $D$ of $ \mathcal{D}_{ \alpha} $ for all $\alpha\in\gamma$: 
If $ n \in \omega $, then $ D_{n} $ is isomorphic to a subdomain of $ D $,
so we may assume that $ \mathcal{D}_{ n} $ has $ D_\omega$ as the underlying domain.
If  $ \alpha \geq \omega $, then $ D_{ \alpha } $ is isomorphic to  $ D_{ \omega} $.
We let $ D = D_\omega$. 
Moreover, we assume that $f_{\alpha , \beta} =\id_D$ for all $ \alpha,\beta\in\gamma$.
This is a valid assumption, since we simply can redefine the directed system inductively if it does not hold.

We denote the per on  $ \mathcal{D}_{ \alpha}$ by $ \approx_{ \alpha }  $ for all $ \alpha \leq \gamma $.
If $ \alpha \leq \beta $,  
$  x \in  \mathcal{D}_{ \alpha}^{R} $ and $ x' \in D $, then $ x \approx_{ \alpha } x' \Leftrightarrow 
x \approx_{ \beta } x'$. If $ \beta $ is a limit ordinal, then 
$ x \approx_{\beta } x' \Leftrightarrow \exists \alpha \in \beta \; ( x \approx_{\alpha } x' ) $ for all $x,x'\in D$.

We denote the underlying domain of $\mathcal{D}^d_\alpha$ by $ D_{\alpha }^{d} $. Then 
 $ f_{\alpha , \beta }^{d} :  D_{\alpha }^{d}  \rightarrow  D_{\beta }^{d} $ is the inclusion map.
It remains to show that it is an embedding of domains and that 
$ D_{ \gamma }^{d} $ is the inductive limit of $ \{ D_{ \alpha}^{d}  \}_{ \alpha \in \gamma } $ in $ \cdom $.

\begin{claim}
There exists a family $ \{ \Delta_{n}\}_{n\in\omega} $ of closed subsets  of $ D$, closed under binary lubs, such that 
\begin{enumerate}[(1)]
\item $ \Delta_{n } \subseteq \Delta_{n+1} $; and 
\item $ \Delta_{n} \cap  \mathcal{D}_{ \alpha}^{R} = \mathcal{D}_{n}^{R} $ for every  $ \alpha \geq 
\omega $. \end{enumerate}
\end{claim}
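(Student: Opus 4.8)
The claim asks for a family $\{\Delta_n\}_{n\in\omega}$ of closed subsets of the common underlying domain $D=D_\omega$, closed under binary lubs, increasing, and satisfying $\Delta_n\cap\mathcal{D}_\alpha^R=\mathcal{D}_n^R$ for every $\alpha\geq\omega$. The natural candidate is $\Delta_n:=\overline{\mathcal{D}_n^R}$, the topological closure in $D$ of the set of $\mathcal{D}_n$-total elements. I would first argue that this gives the right objects, then verify each of the listed properties. The point of the family is that the underlying domain of $\mathcal{D}^d_\alpha$ can then be described as $\{x\in D:\operatorname{approx}(x)\subseteq\bigcup_n(\Delta_n)_c\}$ for $\alpha\geq\omega$, which is uniform in $\alpha$ and makes the subsequent embedding and inductive-limit claims manageable.

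**The plan in order.**

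First I would set $\Delta_n:=\overline{\mathcal{D}_n^R}$ and record that each $\Delta_n$ is closed by construction. Property~(1), $\Delta_n\subseteq\Delta_{n+1}$, reduces to $\mathcal{D}_n^R\subseteq\mathcal{D}_{n+1}^R$: since $f_{n,n+1}=\id_D$ under our normalisation, the fact that $f_{n,n+1}$ is an equiembedding $\mathcal{D}_n\to\mathcal{D}_{n+1}$ gives exactly $\mathcal{D}_n^R\subseteq\mathcal{D}_{n+1}^R$ (total elements are preserved by an embedding), and closure is monotone. For closure under binary lubs, I would use that each $\mathcal{D}_n$ is local, so $\mathcal{D}_n^R$ is closed under the binary lubs that exist, together with the fact that in a domain the closure of a set closed under existing binary lubs is again closed under binary lubs (the lub operation on a consistent pair is continuous where defined, and $\Delta_n$ being Scott-closed is downward closed and closed under directed sups; a standard approximation argument then lifts closure under binary lubs from $\mathcal{D}_n^R$ to $\Delta_n$). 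Property~(2) is the heart of the matter: I must show $\overline{\mathcal{D}_n^R}\cap\mathcal{D}_\alpha^R=\mathcal{D}_n^R$ for every $\alpha\geq\omega$. The inclusion $\supseteq$ is immediate from $\mathcal{D}_n^R\subseteq\mathcal{D}_\alpha^R$ (again because the connecting maps are equiembeddings) and $\mathcal{D}_n^R\subseteq\overline{\mathcal{D}_n^R}$. For $\subseteq$, I would use the rank function: if $x\in\mathcal{D}_\alpha^R$ then $x\in f_m(\mathcal{D}_m^R)=\mathcal{D}_m^R$ for $m=\operatorname{rank}_\alpha(x)\in\omega$ (finiteness of the rank uses $\alpha\geq\omega$ and that $D_\alpha\cong D_\omega$, so new total elements only appear at finite stages — this is the content already exploited in the example after Lemma~\ref{l_lfp1}); then I must show that if in addition $x\in\overline{\mathcal{D}_n^R}$, then already $m\leq n$, equivalently $x\in\mathcal{D}_n^R$.

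**The main obstacle.**

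The delicate point is exactly this last implication: a total element of rank $m>n$ cannot lie in the closure of $\mathcal{D}_n^R$. I expect to prove it by structural induction on the functor $\funcf$ (using Lemma~\ref{l_trivialfunctor} to organise the cases into constant, disjoint sum, Cartesian product, and exponentiation by a fixed dense $\mathcal{B}$), tracking how total elements and their approximations behave at each stage. The disjoint-sum and product cases should be routine since closure, total elements, and rank all decompose componentwise. The exponentiation case $\mathcal{D}_{n+1}=[\mathcal{B}\to\mathcal{D}_n]$ is where the work concentrates: one must show that if an equivariant $f$ of rank $m>n$ were a limit of equivariant maps of rank $\leq n$, then evaluating at a dense set of points in $\mathcal{B}^R$ (using that $\mathcal{B}$ is dense, as in Lemma~\ref{l_lwc3}) and invoking the inductive hypothesis on $\mathcal{D}_n$ would force $f$ itself to have rank $\leq n$, a contradiction. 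I would phrase the inductive statement so that it simultaneously asserts $\overline{\mathcal{D}_n^R}\cap\mathcal{D}^R_{n+1}=\mathcal{D}_n^R$ at each finite stage, since that is what the closure argument needs to be fed back in; the passage to arbitrary $\alpha\geq\omega$ is then immediate because $\mathcal{D}_\alpha^R=\bigcup_{k\in\omega}\mathcal{D}_k^R$ for limit $\alpha$ and the successor steps beyond $\omega$ add no new total elements.
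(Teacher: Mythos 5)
Your choice $\Delta_n:=\overline{\mathcal{D}_n^R}$ is not what the paper does, and it runs into two concrete problems. First, closure under binary lubs: locality of $\mathcal{D}_n$ says that each equivalence class $[x]$ is consistent; it does \emph{not} say that two consistent but inequivalent total elements have a total least upper bound, and in general they do not. Nor does the ``standard approximation argument'' rescue this: if $p_1\sqsubseteq x$ and $p_2\sqsubseteq y$ are compact with total extensions $s_1,s_2\in\mathcal{D}_n^R$, there is no reason for $s_1$ and $s_2$ to be consistent with each other, so you cannot conclude that $p_1\sqcup p_2$ lies below anything in $\mathcal{D}_n^R$. This is not a technicality --- it is exactly the failure of density under the function-space construction that subsection~\ref{ss_density} is working around, and it is why the paper does \emph{not} take $\Delta_{n+1}$ to be a closure. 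Instead it sets $\Delta_{n+1}:=\Delta_n^{\funcf}$ by structural induction on the functor, with the exponentiation case $\Delta_n^{[\mathcal{B}\to\funcf_1]}:=\{x: x[B]\subseteq\Delta_n^{\funcf_1}\}$; this set is closed and closed under binary lubs \emph{by construction} (pointwise), is in general strictly larger than the closure of $\funcf(\mathcal{D}_n)^R$, and its intersection with the $\alpha$-total elements is computed using the density of $\mathcal{B}$ together with the induction hypothesis.

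Second, your argument for $\Delta_n\cap\mathcal{D}_\alpha^R\subseteq\mathcal{D}_n^R$ rests on the assertion that every $x\in\mathcal{D}_\alpha^R$ has finite rank because ``new total elements only appear at finite stages.'' This is false, and the example preceding Lemma~\ref{l_lfp1} shows precisely the opposite: for $\funcf(X)=\mathcal{A}+[\mathcal{N}\to X]$ there is a total element of $\mathcal{D}_{\omega+1}$ that does not lie in $\mathcal{D}_\omega^R=\bigcup_{n}\mathcal{D}_n^R$ --- that is the whole reason the fixed-point construction must be continued transfinitely. So you cannot reduce to finite ranks; the intersection property has to be proved uniformly in $\alpha$ (the paper carries the statement ``for every $\alpha\geq\omega$'' through the structural induction, and handles the passage to transfinite levels in the separate third claim of Proposition~\ref{p_dense3} by transfinite induction on $\alpha$). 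The structural induction you sketch for the exponentiation step is in the right spirit --- evaluating at total points of the dense $\mathcal{B}$ and feeding the result back into the induction hypothesis --- but without the corrected definition of $\Delta_n$ and without dropping the finite-rank shortcut, the proof does not go through.
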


We define the subsets $  \Delta_{n} \subseteq D$ by induction on $ n \in \omega $, see  appendix~\ref{app.proofs}
for the entire proof.
Note that $ D^{d}_{n} \subseteq  \Delta_{n} $ since $ \Delta_{n} $ is a closed superset of $  \mathcal{D}_{n}^{R} $ and
$ D^{d}_{n} $ is its closure.

\begin{claim}
Let $ n \geq 1 $. Then there exists a continuous map $ r_{n} : D \rightarrow D $ 
such that \begin{enumerate}[(1)]
\item $ \Delta_{n} = \{ x \in D:  r_{n } (x) = x \}$; and 
\item if  $ \alpha \geq n $, then $ r_{n } : \mathcal{D}_{ \alpha } \rightarrow  
\mathcal{D}_{n}  $ is equivariant. \end{enumerate} 
\end{claim}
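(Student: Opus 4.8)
The plan is to construct $r_n$ by a recursion on the structure of $\funcf$ that runs in parallel with the inductive construction of the sets $\Delta_n$ (the one carried out in the appendix). Write $\hat{\funcf}$ for the covariant action of the underlying operation of $\funcf$ on continuous self-maps of $D$, defined by structural recursion: a constant parameter $\mathcal{A}$ contributes $\id_A$, the identity contributes its argument, and $\cdot + \cdot$, $\cdot \times \cdot$, $[\mathcal{B}\to\cdot]$ act on maps as $\cdot + \cdot$, $\cdot \times \cdot$, $\id_B \to \cdot$; since $D = D_\omega$ is a fixed point of $\hat{\funcf}$, this is a continuous operation on $[D\to D]$. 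For $n\geq 2$ I would put $r_n := \hat{\funcf}(r_{n-1})$. The base map $r_1$ is the delicate one: I would define it directly as the least fixed point in $[D\to D]$ of the continuous functional that traverses the $\funcf$-structure of its argument, passes through the fixed input domains of the function-space constructors by evaluation at a fixed $\mathcal{B}$-total argument, and at the recursive position restarts the recursion on the content found there, so that on a $\funcf(\mathcal{D}_\beta)$-total element $r_1$ collapses the whole $\funcf$-structure down to the trivially filled structure of $\mathcal{D}_1 = \funcf(\mathcal{D}_0)$. Making this precise when the recursive position sits under several constructors needs an auxiliary family of maps indexed by the subexpressions of the operation underlying $\funcf$; continuity comes from taking the least fixed point of the induced continuous functional.

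Property~(1), that $\Delta_n = \{x : r_n(x) = x\}$, I would prove by induction on $n$ in lockstep with the construction of the $\Delta_n$: $\Delta_1$ is the topological closure of $\mathcal{D}_1^R$, which is exactly the fixed-point set of the functional defining $r_1$; and $\Delta_n$, being produced from $\Delta_{n-1}$ by the same $\funcf$-structural recipe, equals the fixed-point set of $\hat{\funcf}(r_{n-1}) = r_n$, because $\cdot + \cdot$, $\cdot \times \cdot$ and $\id_B \to \cdot$ preserve idempotency of maps and commute with passage to fixed-point sets. Property~(2), that $r_n : \mathcal{D}_\alpha \to \mathcal{D}_n$ is equivariant for $\alpha\geq n$, I would prove by an outer induction on $n$ with an inner induction on $\alpha$. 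The case $\alpha = n$ is immediate, since $r_n$ is the identity on $\mathcal{D}_n^R \subseteq \Delta_n$. A limit $\alpha$ follows from smaller $\beta \geq n$, using $\approx_\alpha = \bigcup_{\beta < \alpha}\approx_\beta$ and that the pers increase with $\alpha$. For $\alpha = \beta+1 > n$ with $n\geq 2$ we have $\beta\geq n$, so by the outer induction hypothesis $r_{n-1} : \mathcal{D}_\beta \to \mathcal{D}_{n-1}$ is equivariant, whence $r_n = \hat{\funcf}(r_{n-1}) : \funcf(\mathcal{D}_\beta)\to\funcf(\mathcal{D}_{n-1})$, that is $\mathcal{D}_\alpha\to\mathcal{D}_n$, is equivariant by Lemma~\ref{l_equiv2}. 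The base case $n=1$ with $\alpha=\beta+1$ is checked directly by following the $\funcf$-structure of a $\funcf(\mathcal{D}_\beta)$-total element: at each recursive position the content is $\mathcal{D}_\beta$-total and a further call to $r_1$ carries it equivariantly to a total element of the corresponding trivial position of $\mathcal{D}_1$.

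I expect the real difficulty to sit in the base map $r_1$ --- both its definition and the verification that, simultaneously, its fixed-point set is the prescribed $\Delta_1$ and it sends $\funcf$-total elements to $\mathcal{D}_1$-total elements continuously and equivariantly. As everywhere in this section, the crux is the exponentiation-by-$\mathcal{B}$ step, which goes through precisely because $\mathcal{B}$ is dense: one uses density of the non-positive parameters to ensure that the functions obtained by componentwise evaluation are still equivariant and that the suprema involved exist, exactly in the manner of Lemma~\ref{l_lwc3}.
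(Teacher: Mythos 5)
Your outer scaffolding is the paper's: set $r_{n+1}:=\funcf(r_{n})$ using the functorial action on maps, prove property~(1) by a structural induction that mirrors the construction of $\Delta_{n}$, and prove property~(2) by induction on $\alpha$, using functoriality of $\funcf$ over $\cdomainwp$ at successors and $\approx_{\alpha}=\bigcup_{\beta\in\alpha}\approx_{\beta}$ at limits. The gap is exactly where you locate the difficulty: the base map $r_{1}$. Your recursive-descent/least-fixed-point construction is not what is needed and does not type-check at the recursive positions. A total element of $\mathcal{D}_{1}=\funcf(\mathcal{D}_{0})$ cannot have a ``trivially filled'' recursive position that is reachable by total inputs, because $\mathcal{D}_{0}^{R}=\emptyset$; it must live entirely inside subfunctors that avoid the recursive position (e.g.\ in $\mathcal{A}+[\mathcal{B}\to X]$ only the $\mathcal{A}$-summand survives at stage $1$). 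So evaluating the content at a recursive position and ``restarting the recursion'' produces an element of $D$ that has no slot to go into; with several recursive occurrences (say under a product) you would get several such elements and no way to recombine them into a $\mathcal{D}_{1}$-total element. The paper's construction involves no recursion at all: assuming $\funcf$ non-trivial and using lemma~\ref{l_trivialfunctor}, one builds $r_{0}^{\funcf'}$ by structural induction over the \emph{non-trivial} subfunctors only ($\id_{A}$ for constants, $+$, $\times$, $\id_{B}\to\cdot$ for exponentiation), and whenever a \emph{trivial} summand occurs --- this is precisely how the identity functor is absorbed, since $\id(\mathcal{D}_{0})$ is trivial --- its entire contribution is collapsed onto one fixed total element $x_{0}$ of a non-trivial sibling summand. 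Continuity is then automatic from the basic operations, with no fixed point of a functional required.

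Two further points. First, your verification of property~(1) for $n=1$ rests on identifying $\Delta_{1}$ with the topological closure of $\mathcal{D}_{1}^{R}$; but $\Delta_{1}=\Delta_{0}^{\funcf}$ is defined by its own structural recursion and is in general a strictly larger closed set (the exponentiation clause $\{x: x[B]\subseteq\Delta_{0}^{\funcf_{1}}\}$ properly contains the closure of the total functions), so the equality $\Delta_{1}=\{x:r_{1}(x)=x\}$ has to be checked against that recursion clause by clause, as the paper does. Second, nowhere do you invoke non-triviality of $\funcf$, yet any construction of $r_{1}$ needs a total element to aim at, and the trivial case is disposed of separately at the start of the proof of proposition~\ref{p_dense3}.
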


We prove the claim by induction on $n$, see  appendix~\ref{app.proofs}.\medskip

\noindent These two claims together show that we, in a continuous way, can project the $\alpha$-total elements
onto the $n$-total elements whenever $ 1\leq n \leq \alpha \in \gamma$.

\begin{claim}
Let $ p \in D_{c} $ and  assume that $ \upset{p} \cap \mathcal{D}_{ \alpha }^{R} \neq \emptyset $ for some $\alpha \in \gamma$. 
Then $ p \in \bigcup_{n \in \omega } \Delta_{n} $.
\end{claim}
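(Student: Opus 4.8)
The plan is to prove, by transfinite induction on $\alpha$, the statement $P(\alpha)$: \emph{for every $p \in D_{c}$, if $\upset p \cap \mathcal{D}_{\alpha}^{R} \neq \emptyset$ then $p \in \bigcup_{n \in \omega} \Delta_{n}$.} So fix $\alpha$ and some $x \in \upset p \cap \mathcal{D}_{\alpha}^{R}$.

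The trivial and limit cases are immediate. If $\alpha = 0$ then $\mathcal{D}_{0}^{R} = \emptyset$. If $\alpha$ is a limit ordinal, then since the connecting maps of the chain have been taken to equal $\id_{D}$, we have $\mathcal{D}_{\alpha}^{R} = \bigcup_{\beta < \alpha} \mathcal{D}_{\beta}^{R}$; hence $x \in \mathcal{D}_{\beta}^{R}$ for some $\beta < \alpha$ and $P(\beta)$ applies. Moreover, whenever the chosen $x$ is already total at a finite level, say $x \in \mathcal{D}_{n}^{R}$, we are done without induction: every $q \in \textrm{approx}(p)$ satisfies $\upset q \supseteq \upset p \ni x$, so $q$ belongs to $\{ q \in D_{c} : \upset q \cap \mathcal{D}_{n}^{R} \neq \emptyset \}$, whence $p \in D_{n}^{d} \subseteq \Delta_{n}$. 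In particular $P(\alpha)$ holds for all finite $\alpha$ and for $\alpha = \omega$, and the only remaining case is $\alpha = \beta + 1$ with $\beta \geq \omega$.

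This successor step is the main obstacle. Pick $x \in \upset p \cap \funcf(\mathcal{D}_{\beta})^{R}$. The underlying idea is that although $x$ may be total only at level $\beta + 1$, the compact element $p \sqsubseteq x$ involves only finitely much of the recursively generated structure, so it is already recognised at a finite level. I would make this precise using the retraction $r_{m+1} : D \to D$ from the preceding claim, together with the fact (read off from its construction) that $r_{m+1}$ acts on an element of $\hat{\funcf}(D) = D_{\beta+1}$ by applying $r_{m}$ to the $D_{\beta}$-components occurring inside it. Concretely: writing $p$ through the strictly positive structure of $\hat{\funcf}$, it is a finite join of step-function pieces built from finitely many compact sub-pieces $q_{1}, \dots, q_{k} \in D_{c}$ of $D_{\beta}$. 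Since every non-positive parameter is dense, each argument at which an equivariant component function inside $x$ is evaluated is dominated by a total argument, and evaluating there yields an element of $\mathcal{D}_{\beta}^{R}$ above the corresponding $q_{i}$; so by the induction hypothesis $P(\beta)$ each $q_{i}$ lies in some $\Delta_{m_{i}}$. Putting $m = \max_{i} m_{i}$ and using that the $\Delta_{n}$ increase and are closed under binary lubs, $r_{m}$ fixes every $q_{i}$, hence $r_{m+1}$ fixes $p$, i.e. $p \in \Delta_{m+1}$.

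The delicate point is precisely this last bookkeeping: extracting the finitely many inner compacts of $p$, producing total elements of $\mathcal{D}_{\beta}$ above them via density, and checking that $r_{m+1}$ genuinely fixes $p$ once $r_{m}$ fixes all of them. This is where one must appeal to the explicit inductive construction of the $\Delta_{n}$ and $r_{n}$ and to strict positivity of $\hat{\funcf}$ (so that compact elements decompose into finitely many compact sub-pieces); the natural way to carry it out is a subsidiary structural induction on the strictly positive operation underlying $\funcf$, run in step with that construction. Equivalently, one can organise the whole argument around the identity $\bigsqcup_{n} r_{n}(x) = x$ for every $x$ total at some level, after which compactness of $p$ gives $p \sqsubseteq r_{m}(x)$ for some $m$ and hence $p \in \Delta_{m}$.
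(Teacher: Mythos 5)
Your main argument is correct and is essentially the paper's proof: a transfinite induction on $\alpha$ whose successor step is carried out by a structural induction on the strictly positive functor, with the two key points being exactly the ones you identify — density of the non-positive parameters supplies a total argument above each exponent $p^j$, so each inner compact $q^j$ acquires a total extension at level $\beta$ and the induction hypothesis applies, and finiteness of the index set lets one pass to a single $n$ with $\{q^j\}_j \subseteq \Delta_n$. Whether one then records membership via $\Delta_{n}^{\funcf'}$ (as the paper does) or via the fixed-point characterisation $\Delta_{n+1} = \{x : r_{n+1}(x) = x\}$ (as you do) is only a presentational difference, since $r_{n+1} = \funcf(r_n)$ and the $\Delta_n^{\funcf'}$ are closed under the relevant lubs.

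One warning about your closing aside: the identity $\bigsqcup_n r_n(x) = x$ for $x$ total at some level is false in general, so that ``equivalent'' organisation of the proof does not work. The maps $r_n$ are retractions onto $\Delta_n$ but not projections: in the fundamental example $\funcf(X) = \mathcal{A} + [\mathcal{N} \to X]$ one has $r_1(1,\varphi) = (0,a_0)$ for a fixed $a_0 \in \mathcal{A}^R$, while $r_2(1,\varphi) = (1, r_1 \circ \varphi)$, so the values $r_n(x)$ need not be increasing or even pairwise consistent, and their supremum need not exist. Stick with the first argument.
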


The proof is by transfinite induction on $ \alpha $, see  appendix~\ref{app.proofs}.\medskip

\noindent This shows that if a $p\in D_c$ has a total extension at some arbitrary  level $\alpha$, then it has a total extension at some finite level $n$.

This implies that $f_{ \alpha , \beta }^{d} $ is an embedding whenever $ \alpha \leq \beta$:
Let $ p_{1} , p_{2} $ be compact elements of $  D^{d}_{ \alpha } $,
and assume that they are consistent in    $  D^{d}_{ \beta } $.
Then there exists $ n \in \omega $ such that $ p_{1}, p_{2} \in \Delta_{n} $.
Moreover, $\Delta_n $ is  by construction closed under binary lubs in $D$, so  $ p_{1} \sqcup_{D} p_{2} \in \Delta_{n} $.
On the other hand, there exists (by the assumption that  $ p_{1} , p_{2}  \in D^{d}_{ \alpha } $)
 some $ x \in \mathcal{D}_{  \beta }^{R} $ with 
$ p_{1}, p_{2} \sqsubseteq x $.
Then $  p_{1} \sqcup_{D} p_{2} = r_{n} (  p_{1} \sqcup_{D} p_{2} ) \sqsubseteq r_{n} (x)  \in  \mathcal{D}_{ n}^{R} $.
This shows that $  p_{1} \sqcup_{D} p_{2} \in D^{d}_{n} \subseteq D^{d}_{ \alpha } $.

Moreover, $ D_{ \gamma }^{d}  $ is the inductive limit of $ \{ D_{ \alpha}^{d}  \}_{ \alpha \in \gamma }  $:
For an arbitrary $ \alpha \geq \omega $ and $ p \in ( D^{d}_{ \alpha })_{c} $, 
there exists $n \in \omega $ such that
$ \upset{p} \cap \mathcal{D}_{  n}^{R} \neq \emptyset $.
This implies that $ ( D^{d}_{ \omega })_{c}  = \bigcup_{ n \in \omega } (  D^{d}_{n} )_{c} $ and that
$ D^{d}_{ \omega } $ is the inductive limit of $ \{  D^{d}_{n} \}_{ n \in \omega } $.
Moreover, for $ \alpha \geq \omega $, it follows that
$ \upset{p} \cap \mathcal{D}_{ \alpha }^{R} \neq \emptyset \Leftrightarrow  \upset{p} \cap \mathcal{D}_{ \omega }^{R} \neq \emptyset $,
which means that  $ D^{d}_{\alpha } = D^{d}_{\omega } $.

Finally, $ f_{\alpha , \beta }^{d} $ is an equiembedding if $ \alpha \leq \beta $, since if
$  x \in  \mathcal{D}_{ \alpha}^{R} $ and $ x' \in D $, then $ x \approx_{ \alpha } x' \Leftrightarrow 
x \approx_{ \beta } x'$.
Moreover, $ \mathcal{D}_{ \gamma }^{d} $ is isomorphic to the inductive limit of $ \{ \mathcal{D}_{ 
\alpha}^{d}  \}_{ \alpha \in \gamma } $, since   
$ x \approx_{\gamma } x' \Leftrightarrow \exists \alpha \in \gamma \; ( x \approx_{\alpha } x' ) $
for $x,x'\in D$. \qed

Thus,  for every strictly positive operation $ \Gamma $ on domain-pers with dense parameters, there exists 
a dense domain-per $  \mathcal{D} $ such that $ \Gamma (  \mathcal{D} )^{d} \cong  \mathcal{D} $. Moreover, 
if $ \funcf :  \cclcdomwp \rightarrow \cclcdomwp $ is the functorial extension of $ \Gamma $, then
$  \mathcal{D} $ is the dense part of the least fixed point of  $ \funcf $.
By abuse of notation, we refer to $\mathcal{D} $ as the {\em dense least fixed point} of $\funcf$.

\begin{rem}
A setback with the dense part construction is that it does not preserve effectivity in general \cite{Dah_07}.
As a consequence, the dense least fixed point of $\funcf $ is not effective just because the parameters of $\funcf $ are.

In fact, this is a returning problem with density. For a simple example of the difficulty of obtaining an effective, dense subset 
in the set of continuous functionals, see Example~4.1 in \cite{Normann09}. 
\end{rem}

\subsection{An admissible least fixed point}
We will now show that the dense least fixed point of a strictly positive functor over $\cclcdomwp$
is admissible if all the parameters involved are dense, admissible.
This will ensure that the resulting $qcb$ space is $T_0$ and that all continuous functions are
representable.

\begin{exa} \label{e_eta}
Let  $ \funcf ( X ) =  \mathcal{A}  +  [ \mathcal{B} \to  X   ] $, with $  \mathcal{A}  $ and 
$  \mathcal{B}  $  some dense, admissible parameters, and let $  \mathcal{D} $ be the dense least fixed point of $ \funcf $.

An $ x \in \mathcal{D}^{R} $ can be represented as a well-founded tree with branching in $ \mathcal{B}^R $ 
and leaf nodes in $ \mathcal{A}^{R} $. A branch $ \{ x_{n} \}_{n \leq N} $ is obtained by iterated 
evaluation of $x $ over a sequence $ \{ b_{n} \}_{n \in \omega } $ over $ \mathcal{B}^{R} $,
i.e. by starting with $ x_{0} = x(b_0)  $ and extending the branch with $ x_{n+1} = x_{n} ( b_{n+1} ) $ while
$ x_{n} \in  [ \mathcal{B} \to  \mathcal{D}   ]^{R}$. 
Ultimately, this process yields an $x_N \in \mathcal{A}^R
$ for some finite $N$.
From this we  can construct an equivariant and equi-injective map \[ \eta :  
\mathcal{D} \rightarrow [ \mathcal{B}^\omega \rightarrow   \mathcal{A}_\bot   \otimes \mathcal{N}   ]. \] \end{exa}

The example gives a rough idea of the method we will use more generally  for a strictly positive $\funcf$ with dense, admissible parameters.
The domain-per  of input sequences, $  \mathcal{B} $ in the example, will be constructed from the non-positive parameters of $\funcf$.
The domain-per of evaluation results, $ \mathcal{A}_\bot $ in the example, will be the disjoint union of all   the
positive parameters of $\funcf$.

We will then show that the dense least fixed point is weakly isomorphic to its dense image under $ \eta $,
and that the dense image is admissible if the function space is admissible. Before we start, we must explain what we
mean by the image of an equivariant map:

\begin{defi} \label{d_image}
If $ \varphi :  \mathcal{D} \rightarrow \mathcal{E} $ is equivariant,  the {\em image} of $ \mathcal{D} $ under 
$ \varphi $ is the domain $  E $ with the partial equivalence relation $  
\approx_{ \varphi  } $ defined by \[ x \approx_{ \varphi  }   y \Leftrightarrow \exists u \in \mathcal{D}^{R}  
\;  ( x \approx_{  \mathcal{E} }  \varphi (u)  \approx_{  \mathcal{E} }  y ) .\]
We denote $(E,\approx_{ \varphi  })$ by  $   \varphi [ \mathcal{D} ]$.
\end{defi}


\begin{lem} \label{l_image}
Let $ \varphi :  \mathcal{D} \rightarrow \mathcal{E} $ be equivariant.
Then  $ \id_E: \varphi [ \mathcal{D} ]  \to \mathcal{E} $ is an equiembedding.
\end{lem}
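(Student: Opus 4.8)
The plan is to verify the three conditions in Definition~\ref{d_equiembedding} for the map $\id_E : \varphi[\mathcal{D}] \to \mathcal{E}$, keeping in mind that both domain-pers share the underlying domain $E$, so the only thing that changes between source and target is the per. The first condition is immediate: $\id_E : E \to E$ is an embedding of domains, with $\id_E^- = \id_E$ as associated projection.

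For the second condition I need $\id_E \approx_{[\varphi[\mathcal{D}]\to\mathcal{E}]} \id_E$, that is, $x \approx_\varphi y$ should imply $x \approx_{\mathcal{E}} y$ for all $x,y \in E$. This follows directly by unwinding Definition~\ref{d_image}: if $x \approx_\varphi y$, there is some $u \in \mathcal{D}^R$ with $x \approx_{\mathcal{E}} \varphi(u) \approx_{\mathcal{E}} y$, and transitivity of $\approx_{\mathcal{E}}$ gives $x \approx_{\mathcal{E}} y$.

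For the third condition I take $x \in \varphi[\mathcal{D}]^R$ and $y \in E$ with $\id_E(x) \approx_{\mathcal{E}} y$, i.e. $x \approx_{\mathcal{E}} y$, and must show $x \approx_\varphi \id_E^-(y) = y$. Since $x$ is total for $\approx_\varphi$, there is a witness $u \in \mathcal{D}^R$ with $x \approx_{\mathcal{E}} \varphi(u)$; combining $\varphi(u) \approx_{\mathcal{E}} x$ (by symmetry) with $x \approx_{\mathcal{E}} y$ yields $\varphi(u) \approx_{\mathcal{E}} y$ by transitivity, so $x \approx_{\mathcal{E}} \varphi(u) \approx_{\mathcal{E}} y$ witnesses $x \approx_\varphi y$, as desired.

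There is no real obstacle here; the proof is purely a matter of unwinding the definition of the image per and applying symmetry and transitivity of $\approx_{\mathcal{E}}$ a couple of times. The only point that requires a little care is not conflating $\approx_\varphi$ with $\approx_{\mathcal{E}}$, since both live on $E$; in particular one should note that $\varphi[\mathcal{D}]^R = \{\, x \in E : \exists u \in \mathcal{D}^R\ x \approx_{\mathcal{E}} \varphi(u)\,\}$, which is exactly what supplies the witness $u$ needed in the last step.
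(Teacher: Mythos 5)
Your proof is correct and follows essentially the same route as the paper's: equivariance comes from transitivity of $\approx_{\mathcal{E}}$ applied to the witness in the definition of $\approx_\varphi$, and the third equiembedding condition comes from combining that witness with $x \approx_{\mathcal{E}} y$. The only difference is that you spell out the (trivial) first condition explicitly, which the paper leaves implicit.
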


\proof
 If $x\approx_\varphi y$, then $x\approx_{\mathcal{E}} y $ by definition. 
If $ x\in \varphi [ \mathcal{D} ]^R$ and $x\approx_{ \mathcal{E} } y $, then there is some $u \in  \mathcal{D}^R $
such that $ x\approx_{\mathcal{E}} \varphi(u)$. Hence, $ y\approx_{\mathcal{E}} \varphi(u)$ and $x\approx_\varphi y$.
\qed

In order to show that the dense least fixed point is  weakly isomorphic to its dense image under $  \bar{ \eta} $,
we will define an equivariant lower adjoint $\bar{ \vartheta } $ of $ \funcf $.
The idea is to represent iterated evaluation of elements of the dense least fixed point by the non-positive parameters.
Lemma~\ref{l_eta1} and lemma~\ref{l_eta2} will describe the situation for one-step evaluations. For these results, we
 look at an arbitrary domain-per and not the dense least fixed point.
In lemma~\ref{l_eta4} and lemma~\ref{l_eta5}, we define the maps $ \bar{\eta} $ and $ \bar{ \vartheta} $ using the results
for one-step evaluations.

If $ \funcf : \cclcdomwp \to \cclcdomwp $ is a strictly  positive functor,  we let $ \{ \funcf^{k} \}_{ k \in K_{\funcf} } $ be the set of  atomic subfunctors  of $ \funcf $ with
repetition allowed. Then each $  \funcf^{k} $ represents an occurrence either of  the identity functor or of some 
constant functor.

\begin{lem} \label{l_eta1}
Let $ \funcf : \cclcdomwp \to \cclcdomwp $ be a strictly positive functor with dense non-positive parameters. 
Then there exists  a  dense domain-per $ \mathcal{T}_\funcf $, and
for every convex, local and complete domain-per $\mathcal{D}$, an equivariant and equi-injective map
\[ \eta_\funcf : \funcf (  \mathcal{D} )^{d}  \rightarrow  [   \mathcal{T}_\funcf  \rightarrow  \biguplus_{ 
k \in K_\funcf }  \funcf^{k} ( \mathcal{D}  ) ] .\] 
If the non-positive parameters in $ \funcf $ are admissible, then  $ \mathcal{T}_\funcf  $ is admissible.
\end{lem}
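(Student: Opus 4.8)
The construction of $\mathcal{T}_\funcf$ and $\eta_\funcf$ proceeds by structural induction on the strictly positive functor $\funcf$, in parallel with the analysis of atomic subfunctors $\{\funcf^k\}_{k\in K_\funcf}$. The guiding intuition is the one from Example~\ref{e_eta}: an element of $\funcf(\mathcal{D})$ is a finite tree whose shape is dictated by the $\Rightarrow$- and $+$-structure of $\funcf$, with branching indexed by the non-positive parameters and leaves landing in one of the $\funcf^k(\mathcal{D})$; the map $\eta_\funcf$ records one level of iterated evaluation, turning such a tree into a function from a domain-per $\mathcal{T}_\funcf$ of ``paths'' into the disjoint union of the atomic subfunctors applied to $\mathcal{D}$. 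First I would fix, for each basic operation, how $\mathcal{T}$ and $K$ combine: for a constant functor $\funcf=\mathcal{A}$ (so $K_\funcf$ is a singleton and $\funcf^k=\mathcal{A}$) take $\mathcal{T}_\funcf=\mathcal{N}$ the flat naturals (or any fixed dense, admissible one-point-carrying domain-per) and let $\eta_\funcf$ send $a$ to the constant function $a$; for the identity functor similarly. For a disjoint sum $\funcf=\funcg+\funch$, set $K_\funcf=K_\funcg\sqcup K_\funch$ and $\mathcal{T}_\funcf=\mathcal{T}_\funcg+\mathcal{T}_\funch$ (or their product — whichever keeps equi-injectivity), using the already-constructed $\eta_\funcg,\eta_\funch$ on the two summands. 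For a Cartesian product $\funcf=\funcg\times\funch$, take $K_\funcf=K_\funcg\sqcup K_\funch$ again and $\mathcal{T}_\funcf$ a disjoint sum so that a pair is sent to the copairing of $\eta_\funcg$ and $\eta_\funch$. For the exponentiation $\funcf=[\mathcal{B}\to\funcg]$ by a fixed (dense) non-positive parameter $\mathcal{B}$, set $K_\funcf=K_\funcg$ with $\funcf^k=\funcg^k$, and take $\mathcal{T}_\funcf=\mathcal{B}\otimes\mathcal{T}_\funcg$ (strict product, so that the new leading coordinate feeds the $\mathcal{B}$-argument), letting $\eta_\funcf(x)(b,t):=\eta_\funcg(x(b))(t)$, after passing to the dense part — here is exactly where we need $\mathcal{B}$ dense and where we invoke the relation between $\funcf(\mathcal{D})^d$ and $\funcf^d(\mathcal{D})$ noted before Observation~\ref{o_densepartadmissible}.

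**Verification steps.** After the construction, I would verify in each inductive case that $\eta_\funcf$ is well-defined as a continuous function $\funcf(\mathcal{D})^d\to[\mathcal{T}_\funcf\to\biguplus_k\funcf^k(\mathcal{D})]$ (routine: each clause is built from continuous operations and currying/uncurrying), that it is equivariant (each clause respects the pers, using Lemma~\ref{l_equiv1} in the function-space case and the definitions of $\approx$ for $+$, $\times$, $[\cdot\to\cdot]$ on domain-pers), and — the point that actually requires care — that it is equi-injective. Equi-injectivity must be proved by tracking how the disjoint-sum tag in $\biguplus_k\funcf^k(\mathcal{D})$ together with the path in $\mathcal{T}_\funcf$ recovers both the ``position in the tree'' and the value at that position; in the $+$ and $\times$ cases this is where I must be slightly careful to choose $\mathcal{T}$ (sum vs.\ product) so that no information is lost, and in the $[\mathcal{B}\to\funcg]$ case equi-injectivity of $\eta_\funcf$ follows from equi-injectivity of $\eta_\funcg$ pointwise in $b\in\mathcal{B}^R$ together with density of $\mathcal{B}$ (two equivariant maps agreeing on all $\mathcal{B}^R$-arguments are $\approx$ by Lemma~\ref{l_equiv1}). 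I would also check at each step that $\mathcal{T}_\funcf$ is dense: constants give $\mathcal{N}$ which is dense; sums, Cartesian products and strict products of dense domain-pers are dense (density is preserved by $+$, $\times$, $\otimes$ — only the function space is problematic, and no function space over a variable occurs in building $\mathcal{T}_\funcf$).

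**The admissibility clause.** Finally, assuming all non-positive parameters of $\funcf$ are admissible (hence, by Observation~\ref{o_densepartadmissible}, we may take them dense and admissible), I would show $\mathcal{T}_\funcf$ is admissible by the same structural induction: $\mathcal{N}$ is admissible (it represents $\nat$ with its discrete topology, which is $T_0$ with a countable pseudobase, so Theorem~\ref{t_adm1} applies, or one checks directly); admissibility is preserved under disjoint sum and Cartesian product by Lemma~\ref{l_admdp5}\,/\,Lemma~\ref{l_adm5}; and the strict product $\mathcal{B}\otimes\mathcal{T}_\funcg$ appearing in the exponentiation case differs from $\mathcal{B}\times\mathcal{T}_\funcg$ only on non-total elements (recall the remarks on strict operations in the Background section), so admissibility of $\mathcal{B}$ and of $\mathcal{T}_\funcg$ (inductive hypothesis, since $\mathcal{B}$ is admissible by assumption) gives admissibility of $\mathcal{T}_\funcf$ via Lemma~\ref{l_adm5}.

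**Main obstacle.** The chief difficulty is not any single verification but the bookkeeping in the definition of $\mathcal{T}_\funcf$ and $K_\funcf$ so that $\eta_\funcf$ comes out both equivariant \emph{and} equi-injective simultaneously across all four inductive cases — in particular, getting the disjoint-sum-versus-product choices for $\mathcal{T}$ right in the $+$ and $\times$ steps, and handling the dense-part passage cleanly in the $[\mathcal{B}\to\funcg]$ step (this is where density of the non-positive parameters is essential, since otherwise $\eta_\funcf$ restricted to the dense part need not land inside the dense part of the codomain's function space). Everything else is a routine, if lengthy, structural induction.
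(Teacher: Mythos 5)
Your overall route is the paper's: a structural induction that builds $\mathcal{T}_\funcf$ and $\eta_\funcf$ simultaneously, with $K_\funcf$ combining as a disjoint union in the $+$ and $\times$ cases and staying fixed under exponentiation, and with admissibility of $\mathcal{T}_\funcf$ following by the same induction from lemma~\ref{l_adm5}. But the one point you explicitly leave open --- ``sum vs.\ product for $\mathcal{T}$, whichever keeps equi-injectivity'' --- is the actual content of the construction, and your leading guess for the disjoint-sum case is the wrong one. If $\funcf=\funcf_{0}+\funcf_{1}$ and you take $\mathcal{T}_\funcf=\mathcal{T}_{\funcf_{0}}+\mathcal{T}_{\funcf_{1}}$, then for a total element $(0,x)$ the map $\eta_\funcf(0,x)$ has nothing sensible to do on inputs $(1,t)$ with $t\in\mathcal{T}_{\funcf_{1}}^{R}$: any value it takes there would have to be total in $\biguplus_{k}\funcf^{k}(\mathcal{D})$ for $\eta_\funcf(0,x)$ to be equivariant, and there is no canonical such value. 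So the failure is already at equivariance/totality, before equi-injectivity is even in play. The paper's choice is the opposite pairing: $\mathcal{T}_{\funcf_{0}+\funcf_{1}}=\mathcal{T}_{\funcf_{0}}\times\mathcal{T}_{\funcf_{1}}$ with $\eta_\funcf(i,x)=\lambda t.\,\Eval(\eta_{\funcf_{i}}(x),t_{i})$, and $\mathcal{T}_{\funcf_{0}\times\funcf_{1}}=\mathcal{T}_{\funcf_{0}}+\mathcal{T}_{\funcf_{1}}$ (which you did get right). Note that with the product choice in the sum case, equi-injectivity is not automatic either: to propagate $\neg(x\approx y)$ from $\eta_{\funcf_{i}}$ up to $\eta_{\funcf}$ one must extend a distinguishing pair $s\approx_{\mathcal{T}_{\funcf_{i}}}t$ to a pair $s'\approx_{\mathcal{T}_{\funcf}}t'$ with $s'_{i}=s$, $t'_{i}=t$, and this uses density of $\mathcal{T}_{\funcf}$ --- which is exactly why density of the non-positive parameters is an essential hypothesis.

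Two smaller corrections. First, you place the role of density of $\mathcal{B}$ in the ``dense-part passage'' of the exponentiation step, but the codomain in the lemma is the full function space $[\mathcal{T}_\funcf\to\biguplus_{k}\funcf^{k}(\mathcal{D})]$, not a dense part, and the paper in fact defines $\eta_\funcf$ on all of $\funcf(\mathcal{D})$; density of $\mathcal{B}$ is needed only to make $\mathcal{T}_\funcf=\mathcal{B}\times\mathcal{T}_{\funcf_{1}}$ dense, which then feeds the sum-case argument just described. Second, your use of the strict product $\mathcal{B}\otimes\mathcal{T}_{\funcf_{1}}$ and of $\mathcal{N}$ in the atomic case are harmless deviations (the paper uses the Cartesian product and a one-point $\mathcal{T}$), since they differ from the paper's choices only on non-total elements.
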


\proof
We fix a convex, local and complete domain-per $ \mathcal{D} $. Independently of $ \mathcal{D} $, we  
define $ \mathcal{T}_{ \funcf }  $ by structural induction on  $ \funcf $. 
Simultaneously, we define a map 
\[ \eta_{ \funcf } :
 \funcf (  \mathcal{D} ) \rightarrow  [   \mathcal{T}_{ \funcf }  \rightarrow  \biguplus_{ k \in K_{ \funcf }}  \funcf^{k} ( \mathcal{D}  ) ] ,\]
which is equivariant and equi-injective, i.e. such that for all $x,y\in D$, \[ x\approx_{\mathcal{D}}y\Leftrightarrow
 \eta_{ \funcf }(x)\approx_{ [   \mathcal{T}_{ \funcf }  \rightarrow  \biguplus_{ k \in K_{ \funcf }}  \funcf^{k} ( \mathcal{D}  ) ]} \eta_{ \funcf }(y) .\]

\begin{enumerate}[$\bullet$]
\item{}
If $ \funcf $ is atomic, let $ \mathcal{T}_{ \funcf } $ be the domain-per with $ T_{ \funcf } = \{ t \} $ as underlying domain
and with $ t \approx_{\mathcal{T}_\funcf} t $. $ \mathcal{T}_{ \funcf } $ is trivially dense.
Observe that $ \biguplus_{ k \in K_{ \funcf }}  \funcf^{k} ( \mathcal{D}  )  = \funcf ( \mathcal{D}  )_{ \bot }  $.
If  $ x \in  \hat{ \funcf } ( D ) $, let
$ \eta_{ \funcf } (x) $ be the map which sends $ t $ to $ (0,x ) \in \hat{ \funcf } ( D )_{\bot }  $. Then
\[ x \approx_{ \funcf (  \mathcal{D} )}   y \Leftrightarrow  (0,x) \approx_{ \funcf ( \mathcal{D})_{ \bot } }   (0, y) 
\Leftrightarrow \eta_{ \funcf } (x)  \approx_{[\mathcal{T}_\funcf \to  \funcf ( \mathcal{D}  )_{ \bot }]}  \eta_{ \funcf } (y) .\] 


\item{}
If $ \funcf =  \funcf_{0}  +  \funcf_{1}  $, let
$ \mathcal{T}_{ \funcf } =    \mathcal{T}_{ \funcf_{0}} \times    \mathcal{T}_{ \funcf_{1}}  $, which is dense by induction.
Note that
\[ \biguplus_{ k \in K_{ \funcf }} \funcf^{k} ( \mathcal{D} ) \cong  
(  \biguplus_{ k \in K_{ \funcf_{0} }} \funcf^{k} ( \mathcal{D} ) ) \oplus
(  \biguplus_{ k \in K_{ \funcf_{1} }} \funcf^{k} ( \mathcal{D} ) ) .\]
We define 
$ \eta_{ \funcf }  $ as the strict map with
$ \eta_{ \funcf } ( i,x) =  \lambda t. \Eval ( \eta_{ \funcf_{i} } ( x ) , t_{i} ) $
for all $ (i ,x) \in  \hat{ \funcf } ( D)  \setminus \{ \bot \} $.
By the induction hypothesis, we have 
\[ (i,x ) \approx_{  \funcf (  \mathcal{D} ) } (j,y) \Leftrightarrow i=j \wedge 
 \eta_{ \funcf _{i}} (x)  \approx_{[\mathcal{T}_{\funcf_i}\to\biguplus_{ k \in K_{ \funcf_{i} }} \funcf^{k} ( \mathcal{D} )]} \eta_{ \funcf_{j}} (y). \]
The right hand side implies directly that $ \eta_{ \funcf } (i, x)  \approx \eta_{ \funcf } (j ,y) $.
Moreover, $ \eta_{ \funcf } (i,x)  \approx \eta_{ \funcf } (j,y) $ is possible only if $ i=j$, since
$ \eta_{ \funcf } (i,x)  $ and $ \eta_{ \funcf } (j,y) $ then map
every $ t \in  \mathcal{T}_{ \funcf }^{R} $ into the same $  \funcf^{k} ( \mathcal{D} ) $ for some $k\in K_\funcf$, the 
disjoint union of $K_{\funcf_0}$ and $K_{\funcf_1}$. 
If $ \neg ( (i,x) \approx_{\funcf(\mathcal{D})} (j,y) )$, we must show that $ \neg ( \eta_{ \funcf } (i,x)  \approx \eta_{ \funcf } (j,y) ) $,
and there are   two cases to consider:
\begin{enumerate}[$-$]
\item If $ i \neq j $, it follows from the observations above.
\item If  $i = j $, then
$  \neg (\eta_{ \funcf _{i}} (x)  \approx \eta_{ \funcf _{i}} (y) ) $ and there
exist $s,t\in T_{\funcf_i}$ with $ s \approx_{ \mathcal{T}_{ \funcf_{i} } } t $ such that
$  \neg ( \Eval ( \eta_{ \funcf_{i} }  (x), s)  \approx \Eval ( \eta_{ \funcf }  (y) , t)  ) $.
By the density of $  \mathcal{T}_{ \funcf } $, there exist $s',t'\in T_\funcf$ such that $ s'_{i} = s $, $ t'_{i} = t $ 
and $  s' \approx_{ \mathcal{T}_{ \funcf } } t' $, and  
\[  \neg ( \Eval ( \eta_{ \funcf }  (i,x), s')  \approx_{( \biguplus_{ k \in K_{ \funcf }} \funcf^{k} ( \mathcal{D} ))} 
\Eval ( \eta_{ \funcf }  (i,y), t')  ) .\]
\end{enumerate}


 \item{}
If $ \funcf =  \funcf_{0} \times  \funcf_{1}  $, let
$ \mathcal{T}_{ \funcf } =    \mathcal{T}_{ \funcf_{0}}  +   \mathcal{T}_{ \funcf_{1}}  $, which is dense by induction. Once again, we have
\[ \biguplus_{ k \in K_{ \funcf }} \funcf^{k} ( \mathcal{D} ) \cong  
(  \biguplus_{ k \in K_{ \funcf_{0} }} \funcf^{k} ( \mathcal{D} ) ) \oplus
(  \biguplus_{ k \in K_{ \funcf_{1} }} \funcf^{k} ( \mathcal{D} ) ) .\]
We define $  \eta_{ \funcf } (x ) $ as the strict map such that
$ \Eval ( \eta_{ \funcf } (x ) , ( i, t ) ) = \Eval ( \eta_{ \funcf_{i} } ( x_{i} ) , t ) $. 
Again by the induction hypothesis,
$ x \approx_{\funcf( \mathcal{D}} y $ if and only if 
$ \eta_{ \funcf_{0} } ( x_{0} ) \approx  \eta_{ \funcf_{0} } ( y_{0} ) $ and 
$ \eta_{ \funcf_{1} } ( x_{1} ) \approx  \eta_{ \funcf_{1} } ( y_{1} ) $,
and this is  equivalent to 
$  \eta_{ \funcf } (x ) \approx  \eta_{ \funcf } (y ) $.


\item{}
If $ \funcf = [ \mathcal{B} \rightarrow  \funcf_{1} ] $, 
where $ \mathcal{B} $ is a non-positive parameter,
let $ \mathcal{T}_{ \funcf } =    \mathcal{B}  \times    \mathcal{T}_{ \funcf_{1}}  $, which is dense by induction since $\mathcal{B}$
is assumed to be dense. 
Then $ \funcf $ has the same atomic subfunctors as $ \funcf_{1} $, so
$  K_{ \funcf } =  K_{ \funcf_{1} } $.

For every  $ x \in [ B \rightarrow  \hat{ \funcf_{1} } (D) ]$,
we define $ \eta_{ \funcf }  ( x ) $  by
$ \eta_{ \funcf }  ( x) 
= \lambda t . \Eval (  \eta_{ \funcf_{1} } ( x (t_{0}) ), t_{1} )$.
By the induction hypothesis,
$ x \approx_{  \funcf ( \mathcal{D} ) } y $ if and only if
$ \eta_{ \funcf_{1} } ( x (b)  ) \approx  \eta_{ \funcf_{1} } ( y (c) ) $
whenever $ b \approx_{  \mathcal{B} } c $,
or equivalently if
\[ b \approx_{  \mathcal{B} } c \Rightarrow 
( s \approx_{  \mathcal{T}_{ \funcf_1 }}  t \Rightarrow 
 \Eval (  \eta_{ \funcf_{1} } ( x ( b )) , s ) \approx  \Eval (  \eta_{ \funcf_{1} } ( y ( c )), t )  ) .\]
A simple paraphrasing is
\[ (b ,s ) \approx_{  \mathcal{T_{ \funcf }} }  (c,t ) \Rightarrow 
 \Eval (  \eta_{ \funcf_{1} } ( x ( b )) , s ) \approx  \Eval (  \eta_{ \funcf_{1} } ( y ( c )), t )  ) ,\]
which holds if and only if $  \eta_{ \funcf } (x ) \approx  \eta_{ \funcf } (y ) $.

\end{enumerate}
It is a trivial inductive verification that $\mathcal{T}_\funcf$ is  admissible if every  
 non-positive parameter in $\funcf$ is admissible, in the last induction step because $\mathcal{B}$ is admissible.
\qed

Note that the assumption that the domain-pers were convex, local and complete was of no importance in this proof.
However, this restriction must be made in what follows, so we included it above for the sake of consistency in the 
presentation.
As our next step, we will now define the lower adjoint of $\eta_\funcf$.

\begin{lem} \label{l_eta2}
Let  $ \funcf : \cclcdomwp \to \cclcdomwp $ be a strictly positive functor with dense non-positive parameters, and let $ \mathcal{D}  $   
be a convex, local and complete domain-per.
Let $ \eta_\funcf : \funcf (  \mathcal{D} )^{d}  \rightarrow  [   \mathcal{T}_\funcf  \rightarrow  \biguplus_{ 
k \in K_\funcf }  \funcf^{k} ( \mathcal{D}  ) ]   $ be defined as in lemma~\ref{l_eta1}.
Let $ D^{ \funcf } $ be the underlying domain of $  \funcf  ( \mathcal{D} )^{d} $ and let $ E^{ \funcf } $ be the underlying domain of $  \eta_\funcf [  \funcf  ( \mathcal{D} ) ]^{d} $.

Then there exists a continuous map $ \vartheta_\funcf :   E^{ \funcf } \rightarrow D^{ \funcf }  $ which is the lower adjoint of
$ \eta_\funcf $ and such that  for every $ x \in \funcf ( \mathcal{D} ) ^{R} $ and  $ q \in E^{ \funcf }_{c} $, 
\[  q \prec_{ \eta_\funcf [  \funcf  ( \mathcal{D} ) ]^{d}} [ \eta_\funcf (x) ]  \Rightarrow \vartheta_\funcf (q) \prec_{ \funcf  ( \mathcal{D} )} [x] .\]
\end{lem}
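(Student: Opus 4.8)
The plan is to construct $\vartheta_{\funcf}$ by structural induction on $\funcf$, in lockstep with the construction of $\mathcal{T}_{\funcf}$ and $\eta_{\funcf}$ in lemma~\ref{l_eta1}, and in each step to ``invert'' the defining equation of $\eta_{\funcf}$. For atomic $\funcf$, where $\eta_{\funcf}(x)=\lambda t.(0,x)$ into $\biguplus_{k\in K_{\funcf}}\funcf^{k}(\mathcal{D})=\funcf(\mathcal{D})_{\bot}$, let $\vartheta_{\funcf}(\phi)$ be the unlifting of $\phi(t)$, i.e.\ $z$ if $\phi(t)=(0,z)$ and $\bot$ otherwise. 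For $\funcf=\funcf_{0}+\funcf_{1}$ one uses that a compact element of $E^{\funcf}$ has a total extension and that every $\funcf$-total element of $\eta_{\funcf}[\funcf(\mathcal{D})]$ takes all its non-$\bot$ values in a single summand: from $\psi\in E^{\funcf}$ one reads off that summand $i$, re-curries $\psi$, takes the supremum over the auxiliary $\mathcal{T}_{\funcf_{1-i}}$-coordinate to kill the spurious dependence introduced by application, obtaining $\psi'_{i}\in E^{\funcf_{i}}$, and sets $\vartheta_{\funcf}(\psi)=(i,\vartheta_{\funcf_{i}}(\psi'_{i}))$. For $\funcf=\funcf_{0}\times\funcf_{1}$ no auxiliary supremum is needed: restrict $\psi$ to each summand of $\mathcal{T}_{\funcf}=\mathcal{T}_{\funcf_{0}}+\mathcal{T}_{\funcf_{1}}$ and apply $\vartheta_{\funcf_{0}},\vartheta_{\funcf_{1}}$ componentwise. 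For $\funcf=[\mathcal{B}\to\funcf_{1}]$, curry $\psi$ over the $\mathcal{B}$-coordinate and apply $\vartheta_{\funcf_{1}}$ fibrewise: $\vartheta_{\funcf}(\psi)=\lambda b.\,\vartheta_{\funcf_{1}}(\lambda t.\,\psi(b,t))$. Continuity in each case reduces to checking that the suprema involved are over consistent families, which I would verify using that $E^{\funcf}$-elements are dominated by total ones, that $\biguplus_{k}\funcf^{k}(\mathcal{D})$ is local (lemma~\ref{l_lwc4}), and that $\mathcal{T}_{\funcf}$ is dense.

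Next I would show, by the same induction, that $(\vartheta_{\funcf},\eta_{\funcf})$ is an adjunction pair from $E^{\funcf}$ to $D^{\funcf}$, that is $\vartheta_{\funcf}(\eta_{\funcf}(x))\sqsubseteq x$ for $x\in D^{\funcf}$ and $q\sqsubseteq\eta_{\funcf}(\vartheta_{\funcf}(q))$ for $q\in E^{\funcf}$. The first inequality is in fact an equality: on an argument of the form $\eta_{\funcf}(x)$ every auxiliary supremum collapses, since the expression under it is independent of the auxiliary coordinate, so one is reduced to the induction hypothesis together with monotonicity of the tagging and summand-inclusion maps. For the second inequality, evaluating $\eta_{\funcf}(\vartheta_{\funcf}(q))$ at a point reproduces, in each relevant component, a value dominating $\eta_{\funcf_{i}}(\vartheta_{\funcf_{i}}(\psi'_{i}))$ evaluated there, which by the induction hypothesis dominates $\psi'_{i}$, which --- being a supremum over all auxiliary coordinates --- dominates the value of $q$ there. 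One must also check here that $\vartheta_{\funcf}$ lands in $D^{\funcf}$ and not merely in the underlying domain of $\funcf(\mathcal{D})$; this follows from the $\prec$-property below, since any compact approximation of $\vartheta_{\funcf}(\psi)$ lies below $\vartheta_{\funcf}(q)$ for some compact $q\sqsubseteq\psi$, and $q$, being compact in the dense part $E^{\funcf}$, has a total extension, so $\vartheta_{\funcf}(q)\prec_{\funcf(\mathcal{D})}[\,\cdot\,]$ exhibits the required total extension.

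The main obstacle is the $\prec$-transfer property. Given $x\in\funcf(\mathcal{D})^{R}$, $q\in E^{\funcf}_{c}$ and a witness $\phi$ with $q\sqsubseteq\phi\approx_{\eta_{\funcf}}\eta_{\funcf}(x)$, I would first apply equi-injectivity of $\eta_{\funcf}$ (lemma~\ref{l_eta1}) to obtain a $\funcf(\mathcal{D})$-total $u\approx_{\funcf(\mathcal{D})}x$ with $\phi\approx_{[\mathcal{T}_{\funcf}\to\biguplus_{k}\funcf^{k}(\mathcal{D})]}\eta_{\funcf}(u)$; then, since $\vartheta_{\funcf}$ is monotone, it suffices to produce some $z\in[u]_{\funcf(\mathcal{D})}$ with $\vartheta_{\funcf}(\phi)\sqsubseteq z$. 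Peeling off one structural layer and using lemma~\ref{l_equiv1}, each slice of $\phi$ obtained by fixing a total auxiliary coordinate is $\approx_{\eta_{\funcf_{i}}}\eta_{\funcf_{i}}(u_{i})$ (respectively $\approx_{\eta_{\funcf_{1}}}\eta_{\funcf_{1}}(u(b))$), so the corresponding slice of $q$ is $\prec$-below the appropriate partial equivalence class and the induction hypothesis places its $\vartheta_{\funcf_{i}}$-image below an element of $[u_{i}]$ (respectively $[u(b)]$). In the sum and product cases the slice of $\vartheta_{\funcf}(q)$ is again compact and this directly gives $\vartheta_{\funcf}(q)\prec_{\funcf(\mathcal{D})}[u]$, using locality and completeness of $\funcf(\mathcal{D})$ (lemma~\ref{l_lwc4}) together with density of $\mathcal{T}_{\funcf}$ to see that the relevant step-function data is consistent. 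The function-space case $\funcf=[\mathcal{B}\to\funcf_{1}]$ is the hard one: here $\vartheta_{\funcf}(q)$ need not be compact, so I would pass to an arbitrary compact $g\sqsubseteq\vartheta_{\funcf}(q)$, show by the induction hypothesis together with locality and completeness of $\funcf_{1}(\mathcal{D})$ that the fibre value of $g$ at each total $b\in\mathcal{B}^{R}$ is $\prec$-below $[x(b)]$, and then invoke lemma~\ref{l_lwc3} --- whose hypotheses ask precisely that $\mathcal{B}$ be dense and $\funcf_{1}(\mathcal{D})$ be weakly convex and strongly local, all of which hold by lemma~\ref{l_lwc4} --- to conclude $g\prec_{\funcf(\mathcal{D})}[x]$; since this holds for every compact $g\sqsubseteq\vartheta_{\funcf}(q)$ and $[x]$ is consistent, $\vartheta_{\funcf}(q)\sqsubseteq\bigsqcup[x]\in[x]$. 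I expect this last step to demand the most care, since controlling the $\mathcal{B}$-dependence calls on all the standing hypotheses at once: density of the non-positive parameters, and convexity, locality and completeness of $\mathcal{D}$.
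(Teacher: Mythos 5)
Your plan follows essentially the same route as the paper's proof: a structural induction on $\funcf$ mirroring the construction of $\eta_\funcf$, inverting the atomic, sum, product and exponential cases exactly as the paper does (your supremum over the auxiliary $\mathcal{T}_{\funcf_{1-i}}$-coordinate is the paper's supremum over consistent subfamilies of step-function indices, and the two agree because lower adjoints preserve existing suprema), with the adjunction checked on compact elements and the $\prec$-transfer in the function-space case obtained from lemma~\ref{l_lwc3} together with convexity, locality and completeness of $\funcf_{1}(\mathcal{D})$. The structure and the key lemmas invoked coincide with the paper's argument, so this is the same proof up to presentation.
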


\proof
First, observe that $ \biguplus_{ k \in K_{ \funcf }}  \funcf^{k} ( \mathcal{D}  ) $ is  convex, local and complete 
since  $ \funcf ( \mathcal{D} ) $  and all the positive parameters are, and recall that $ \mathcal{T}_\funcf $ is dense.
The domain-per $  \eta_\funcf [  \funcf  ( \mathcal{D} ) ]^{d} $ is defined as the dense part of the  image of $ \eta_\funcf$, 
which means that a  
$ \bigsqcup_{ j \in J } [ p^{j} ; q^{j} ] \in   [  T_\funcf  \rightarrow \biguplus_{ k \in K_{ \funcf }} \hat{ \funcf^{k} }( D ) ]_{c} $
is an element of $   E^{ \funcf }_{c} $ if and only if  there exists some $ x \in  \funcf ( \mathcal{D} )^{R} $ such that 
$ \bigsqcup_{ j \in J } [ p^{j} ; q^{j} ]  \prec_{ [   \mathcal{T}_\funcf  \rightarrow  \biguplus_{ 
k \in K_\funcf }  \funcf^{k} ( \mathcal{D}  ) ] } [ \eta_\funcf (x) ] $. Moreover,
 by lemma~\ref{l_lwc3},  this is  equivalent to
\[ \forall t \in   \mathcal{T}_{\funcf}^{R} \:
( \bigsqcup_{ j \in J} \{ q^{j} : p^{j} \sqsubseteq t \} \prec_{ (\biguplus_{ 
k \in K }  \funcf^{k} ( \mathcal{D}  )) } [  \Eval ( \eta_\funcf (x) ,t )  ] )  \]
for the same choice of $ x \in  \funcf ( \mathcal{D})^{R} $. In this case, we say that
 $ \bigsqcup_{ j \in J } [ p^{j} ; q^{j} ] \in  E^{ \funcf }_{c} $ is {\em witnessed} by $x$.

Following the inductive definition of $\eta_\funcf $ given in the proof of lemma~\ref{l_eta1},
we define by structural induction on  $ \funcf $ a monotone map
$ \vartheta_{ \funcf} :  E_{c}^{ \funcf } \to  \hat{ \funcf } ( D )_{c} $ which satisfies
\begin{enumerate}[(1)]
\item
if $  \bigsqcup_{ j \in J } [ p^{j} ; q^{j} ]  \in   E_{c}^{ \funcf }  $ is
witnessed by  $ x \in  \funcf ( \mathcal{D})^{R} $,
then  $ \vartheta_{ \funcf} (   \bigsqcup_{ j \in J } [ p^{j} ; q^{j} ] ) \prec_{ \funcf  ( \mathcal{D} )} [x] $ which implies
 $ \vartheta_{ \funcf} (   \bigsqcup_{ j \in J } [ p^{j} ; q^{j} ] ) \in D_{c}^{ \funcf } $; and
\item
if $  \bigsqcup_{ j \in J } [ p^{j} ; q^{j} ]  \in   E_{c}^{ \funcf }  $
and $ r \in   \funcf  ( D )_{c} $, then 
$  \vartheta_{ \funcf} (  \bigsqcup_{ j \in J } [ p^{j} ; q^{j} ] ) \sqsubseteq r \Leftrightarrow   \bigsqcup_{ j \in J } [ p^{j} ; q^{j} ] \sqsubseteq \eta_{ \funcf} (r)$. 
\end{enumerate}
This yields a monotone map $ \vartheta_\funcf :   E^{ \funcf }_{c} \rightarrow D^{ \funcf }   $ satisfying 
$ \vartheta_\funcf (q ) \sqsubseteq r \Leftrightarrow q \sqsubseteq \eta_\funcf (r) )$ for all $q \in E^{ \funcf }_{c}$  and  
$r \in D^{ \funcf }_{c} $. The continuous extension to $  E^{ \funcf }   $   is then the  lower adjoint of $ \eta_\funcf $
with $ q \prec_{ \eta_\funcf [  \funcf  ( \mathcal{D} ) ]^{d}} [ \eta_\funcf (x) ]  \Rightarrow \vartheta_\funcf (q) \prec_{ \funcf  ( \mathcal{D} )} [x] $
 for all $ x \in \funcf ( \mathcal{D} ) ^{R} $ and  $ q \in E^{ \funcf }_{c} $.

Since $ \vartheta_\funcf $ by necessity is strict, we consider only those 
$ \bigsqcup_{ j \in J } [ p^{j} ; q^{j} ]  \in  E_{c}^{ \funcf } $
for which $ J $ is non-empty and $  q^{j} \neq \bot $ for every $ j \in J $.

\begin{enumerate}[$\bullet$]
\item{$ \funcf  $  atomic:\ } 
If  $ \bigsqcup_{ j \in J } [ t  ; (0 , q^{j} ) ]  \in  E_{c}^{ \funcf } $, let
$ \vartheta_{ \funcf} ( \bigsqcup_{j  \in J } [ t  ; (0 , q^{j} ) ]  ) := \bigsqcup_{j \in J } q^{j} $.

If  $ x \in  \funcf ( \mathcal{D}  )^{R} $ is a witness, 
then $  \Eval ( \eta_{ \funcf } (x) , t ) = (0,x ) $ and $ (0, q^{j} ) \prec_{(\funcf(\mathcal{D})_\bot)} [ (0, x) ] $.
Since $ \funcf ( \mathcal{D}  ) $ is local, it follows that $ \{ q^{j} \}_{j \in J} $ is consistent
with $ \bigsqcup_{j \in J } q^{j} \prec_{\funcf(\mathcal{D})} [x] $. Then
 $ \vartheta_{ \funcf} $ is  clearly well-defined and monotone, 
and if $ r \in \hat{\funcf } (D)_{c} $, then
\begin{eqnarray*}
 \vartheta_{ \funcf } (  \bigsqcup_{j  \in J }  [ t  ; (0 , q^{j} ) ]  ) \sqsubseteq  r &
\Leftrightarrow &
\bigsqcup_{j \in J } q^{j} \sqsubseteq r \\
& \Leftrightarrow & 
\forall j \in J \: ( q^{j}  \sqsubseteq r) \\
& \Leftrightarrow & 
\forall j \in J \: ( ( 0 ,q^{j} ) \sqsubseteq  \Eval ( \eta_{ \funcf } (r)  , t ) ) \\
& \Leftrightarrow &   
\bigsqcup_{j  \in J }  [ t  ; (0 , q^{j} ) ]  \sqsubseteq  \eta_{ \funcf} ( r).  
\end{eqnarray*}

\item{$ \funcf = \funcf_{0} +  \funcf_{1} $:\ } 
If $  \bigsqcup_{j  \in J } [ p^{j}; q^{j} ]  \in  E_{c}^{ \funcf } $ and this is witnessed by 
$ (i , x) \in \funcf ( \mathcal{D}  )^{R} $, let
\[ \vartheta_{ \funcf } (  \bigsqcup_{j  \in J } [ p^{j}; q^{j} ]  ) 
:= (i, 
\bigsqcup_{ J' \subseteq J}
 \{ \vartheta_{ \funcf_{i} }  ( \bigsqcup_{j  \in J'} [ p^{j}_{i};  q^{j}  ] ) : \mbox{$\{ p^{j} \}_{ j \in J'} $ consistent}  \} ) . \]
If  $ (i', x' ) $ is another  witness that $  \bigsqcup_{j  \in J } [ p^{j}; q^{j} ]  \in  E_{c}^{ \funcf } $
and $ i \neq i' $, then 
\[ q^{j} \in   ( \biguplus_{ k \in K_{ \funcf_{0} }} \hat{ \funcf^{k} }( D )) \bigcap ( \biguplus_{ k \in K_{ \funcf_{1} }} \hat{ \funcf^{k} }( D )) 
= \{ \bot \} ,\]
for any $j\in J$. This contradicts the assumption that $ q^{j} \neq \bot $. 
Thus the index $ i$ is uniquely determined by $  \bigsqcup_{j  \in J } [ p^{j}; q^{j} ]  $.

Choose $ j \in J $ and  $  t \in \upset{ p^{j}_{i}} \cap \mathcal{T}_{ \funcf_{i} }^{R} $.
Both $ \mathcal{T}_{\funcf_{0}}  $ and $ \mathcal{T}_{\funcf_{1}}  $ are dense, so we may choose
 $ t' \in  \upset{ p^{j}} \cap \mathcal{T}_{ \funcf }^{R}  $ such that
$  \Eval ( \eta_{ \funcf_{i} } (x) , t) = \Eval ( \eta_{ \funcf } (i,x) , t' ) $.
Then $ q^j \sqsubseteq  \Eval ( \eta_{ \funcf } (i,x) , t' ) $, and this shows that
 $ q^{j}   \prec_{\funcf(\mathcal{D})} [ \Eval ( \eta_{ \funcf_{i} } (x) , t) ]  $.
Hence, $  \bigsqcup_{j  \in J } [ p^{j}_{i}; q^{j} ]  \in  E_{c}^{ \funcf_{i} } $,
and this is witnessed by $x \in  \funcf_i ( \mathcal{D}  )^{R}$.

If $  J'  \subseteq J $, then
$  \vartheta_{ \funcf_{i} } ( \bigsqcup_{j  \in J' } [ p^{j}_{i};  q^{j}  ]  ) $
is bounded by 
$ \vartheta_{ \funcf_{i} }  ( \bigsqcup_{j  \in J }   [ p^{j}_{i};  q^{j}  ] )  $, since
$ \vartheta_{ \funcf_{i} } $ is monotone by the induction hypothesis.
This means that
\[
 \{ \vartheta_{ \funcf_{i} }  ( \bigsqcup_{j  \in J' } [ p^{j}_{i};  q^{j}  ] ) : \mbox{$\{ p^{j} \}_{ j \in J'} $ consistent and $J' \subseteq J $}   \} \] 
is a consistent set and has a least upper bound.

Let $  \bigsqcup_{j  \in J } [ p^{j}; q^{j} ] ,  \bigsqcup_{k  \in K} [ p^{k}; q^{k} ]  \in  E_{c}^{ \funcf } $
(where we assume that $J$ and $K$ are disjoint finite index sets).
If $  \bigsqcup_{j  \in J } [ p^{j}; q^{j} ]  \sqsubseteq  \bigsqcup_{k  \in K} [ p^{k}; q^{k} ]   $,
we must show that 
\[   \vartheta_{ \funcf }( \bigsqcup_{j  \in J } [ p^{j}; q^{j} ])  \sqsubseteq  
 \vartheta_{ \funcf }( \bigsqcup_{k  \in K} [ p^{k}; q^{k} ] ) . \]
For every $ J' \subseteq J $ with $\{ p^{j} \}_{ j \in J'} $ consistent, we can construct a
$ K' \subseteq K $ with $\{ p^{k} \}_{ k \in K'} $ consistent such that
$  \vartheta_{ \funcf_{i} }  ( \bigsqcup_{j  \in J'} [ p^{j}_{i};  q^{j}  ]  ) \sqsubseteq  
\vartheta_{ \funcf_{i} }  ( \bigsqcup_{k  \in K'} [ p^{k}_{i}; q^{k} ]   ) $:
Let $ K' := \bigcup_{j \in J'} \{ k \in K :  p^{k} \sqsubseteq p^{j}  \} $. Then for every $ j \in J' $, we have
\[q^{j} \sqsubseteq 
\bigsqcup_{ k \in K } \{ q^{k} : p^{k}  \sqsubseteq p^{j}  \} =
\bigsqcup_{ k \in K' } \{ q^{k} : p^{k}  \sqsubseteq p^{j}  \} \sqsubseteq
\bigsqcup_{ k \in K' } \{ q^{k} : p^{k}_{i}  \sqsubseteq p^{j}_{i}
\}\ . 
\]
This shows that 
$ \bigsqcup_{j  \in J'} [ p^{j}_{i};  q^{j}  ]   \sqsubseteq  \bigsqcup_{k  \in K'} [ p^{k}_{i}; q^{k} ] $,
and $  \vartheta_{ \funcf_{i} }  $ is monotone by the induction hypothesis.
Hence,
\[   \vartheta_{ \funcf }( \bigsqcup_{j  \in J } [ p^{j}; q^{j} ])  \sqsubseteq  
\bigsqcup_{ J' \subseteq J}
 \{ \vartheta_{ \funcf_{i} }  ( \bigsqcup_{k  \in K'} [ p^{k}_{i};  q^{k}  ] ) : \mbox{$\{ p^{j} \}_{ j \in J'} $ consistent}  \} ) 
\sqsubseteq
 \vartheta_{ \funcf }( \bigsqcup_{k  \in K} [ p^{k}; q^{k} ] ) . \]

\begin{claim}
Let $ p \in E_{c}^{ \funcf }  $ and $ r \in  \hat{ \funcf } (D)_{c} $.
Then $   \vartheta_{ \funcf } (p) \sqsubseteq r \Leftrightarrow p \sqsubseteq  \eta_{ \funcf } (r) $.
\end{claim}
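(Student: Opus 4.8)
The plan is to argue by cases on the shape of $ r $ in the disjoint sum $ \hat{\funcf}(D)=\hat{\funcf_{0}}(D)+\hat{\funcf_{1}}(D) $, in each case reducing to the induction hypothesis, i.e.\ to property~(2) as already established for $ \funcf_{0} $ and $ \funcf_{1} $. Write $ p=\bigsqcup_{j\in J}[p^{j};q^{j}] $ in the reduced form ($ J $ non-empty, every $ q^{j}\neq\bot $) and let $ (i,x) $ be a witness for $ p\in E^{\funcf}_{c} $; recall that $ i $ is uniquely determined by $ p $ and that $ \vartheta_{\funcf}(p)=(i,s) $ with $ s=\bigsqcup\{\,\vartheta_{\funcf_{i}}(\bigsqcup_{j\in J'}[p^{j}_{i};q^{j}]):J'\subseteq J,\ \{p^{j}\}_{j\in J'}\text{ consistent}\,\} $, a non-bottom element of the disjoint sum.

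I would first dispatch the degenerate cases. If $ r=\bot $ then $ \eta_{\funcf}(r)=\bot $ by strictness of $ \eta_{\funcf} $, so neither $ p\sqsubseteq\eta_{\funcf}(r) $ (as $ p\neq\bot $) nor $ \vartheta_{\funcf}(p)=(i,s)\sqsubseteq\bot $ holds, and both sides are false. If $ r=(i',r') $ with $ i'\neq i $, then again $ \vartheta_{\funcf}(p)\not\sqsubseteq r $; and $ p\not\sqsubseteq\eta_{\funcf}(i',r') $, for if $ j\in J $ then $ \Eval(\eta_{\funcf}(i',r'),p^{j})=\Eval(\eta_{\funcf_{i'}}(r'),p^{j}_{i'}) $ lies in the $ \funcf_{i'} $-part of $ \biguplus_{k\in K_{\funcf}}\hat{\funcf^{k}}(D) $, whereas $ q^{j} $ --- being non-bottom and dominated by a value of $ \eta_{\funcf_{i}}(x) $ --- lies in the $ \funcf_{i} $-part, and since $ K_{\funcf_{i}} $ and $ K_{\funcf_{i'}} $ are disjoint, $ q^{j}\sqsubseteq\Eval(\eta_{\funcf}(i',r'),p^{j}) $ would force $ q^{j}=\bot $.

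The substance is the case $ r=(i,r') $ with $ r'\in\hat{\funcf_{i}}(D)_{c} $, where it remains to show that $ s\sqsubseteq r' $ is equivalent to $ p\sqsubseteq\eta_{\funcf}(i,r') $, i.e.\ to $ \bigsqcup_{j\in J}\{q^{j}:p^{j}\sqsubseteq t\}\sqsubseteq\Eval(\eta_{\funcf_{i}}(r'),t_{i}) $ for all $ t\in T_{\funcf}=T_{\funcf_{0}}\times T_{\funcf_{1}} $. Each $ \bigsqcup_{j\in J'}[p^{j}_{i};q^{j}] $ lies in $ E^{\funcf_{i}}_{c} $, being below $ \bigsqcup_{j\in J}[p^{j}_{i};q^{j}] $, which is already known to be an element of $ E^{\funcf_{i}}_{c} $ witnessed by $ x $; hence the induction hypothesis rewrites $ s\sqsubseteq r' $ as: for every $ J'\subseteq J $ with $ \{p^{j}\}_{j\in J'} $ consistent, $ \bigsqcup_{j\in J'}[p^{j}_{i};q^{j}]\sqsubseteq\eta_{\funcf_{i}}(r') $. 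For the forward implication, given $ t $ I take $ J':=\{\,j\in J:p^{j}\sqsubseteq t\,\} $ (consistent, since all its $ p^{j} $ are below $ t $) and evaluate the resulting inequality at $ t_{i} $, using that $ p^{j}\sqsubseteq t $ implies $ p^{j}_{i}\sqsubseteq t_{i} $. For the converse, given such a $ J' $ and an $ s_{0}\in T_{\funcf_{i}} $, I set $ J'':=\{\,j\in J':p^{j}_{i}\sqsubseteq s_{0}\,\} $ and choose $ t\in T_{\funcf} $ with $ t_{i}=s_{0} $ and $ t_{1-i}=\bigsqcup_{j\in J''}p^{j}_{1-i} $ (this lub exists because $ J''\subseteq J' $ is consistent); then $ p^{j}\sqsubseteq t $ for every $ j\in J'' $, so $ \bigsqcup_{j\in J''}q^{j}\sqsubseteq\bigsqcup_{j\in J}\{q^{j}:p^{j}\sqsubseteq t\}\sqsubseteq\Eval(\eta_{\funcf_{i}}(r'),s_{0}) $, and this is precisely what makes $ \bigsqcup_{j\in J'}[p^{j}_{i};q^{j}]\sqsubseteq\eta_{\funcf_{i}}(r') $, hence $ s\sqsubseteq r' $ by the induction hypothesis.

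The main obstacle is the mismatch, inside the definition of $ \vartheta_{\funcf} $, between ``$ \{p^{j}\}_{j\in J'} $ consistent in $ T_{\funcf} $'' and the weaker ``$ \{p^{j}_{i}\}_{j\in J'} $ consistent in $ T_{\funcf_{i}} $'': in the converse implication one must, for each relevant finite family of antecedents at level $ \funcf_{i} $, exhibit a single input $ t\in T_{\funcf} $ whose $ (1-i) $-component is taken just large enough to keep the corresponding $ p^{j} $'s below it, so that the hypothesis $ p\sqsubseteq\eta_{\funcf}(i,r') $ can be applied componentwise. Once one is careful at each stage about which index set to sum over, everything else is routine bookkeeping with step functions.
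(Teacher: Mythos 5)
Your proof is correct and follows essentially the same route as the paper's: both directions reduce to the induction hypothesis that $(\vartheta_{\funcf_i},\eta_{\funcf_i})$ is an adjunction pair, ranging over the consistent subsets $J'\subseteq J$ and using that the index $i$ is determined by $p$. The only difference is presentational: where the paper argues pointwise at the $p^j$ via the step-function characterisation of $\sqsubseteq$, you construct an explicit $t\in T_{\funcf}$ with a tailored $(1-i)$-component, and you spell out the degenerate cases ($r=\bot$, mismatched index) that the paper dispatches implicitly.
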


This shows that $(\vartheta_\funcf, \eta_\funcf)$ is an adjunction pair. In the proof, we use the induction hypothesis 
that $(\vartheta_{\funcf_i}, \eta_{\funcf_i})$ is an adjunction pair for $i=0,1$.
See appendix~\ref{app.proofs} for the full details.

Let $ \bigsqcup_{j  \in J } [ p^{j}; q^{j} ] \in E_{c}^{ \funcf } $ be witnessed by $(i,x)$. If
$ J' \subseteq J $ with $\{ p^{j} \}_{ j \in J'} $ consistent, then $ \bigsqcup_{j  \in J' } [ p^{j}_{i}; q^{j} ] \in E_{c}^{ \funcf_i } $
is witnessed by $x$. In this case, 
$ \vartheta_{ \funcf_{i} }  ( \bigsqcup_{ j \in J' } [ p^{j}_{i}; q^{j} ] )   \prec_{\funcf_i ( \mathcal{D} )} [ x] $
by the induction hypothesis. This implies that
$ \vartheta_{ \funcf } ( \bigsqcup_{j  \in J } [ p^{j}; q^{j} ] ) \prec_{\funcf ( \mathcal{D} )} [ (i,x) ] $,
since $ \funcf_{i} ( \mathcal{D} ) $ is local.

\item{ $ \funcf = \funcf_{0} \times \funcf_{1} $:\ }
Let $  \bigsqcup_{j  \in J } [ ( i_j, p^{j} ); q^{j} ] \in E_{c}^{ \funcf }  $. 
Then there are  complementary subsets $J_0$ and $J_1$  of $J $ such that
\[  \bigsqcup_{j  \in J } [ ( i_j, p^{j} ); q^{j} ]  =
 \bigsqcup_{j  \in J_{0} } [ ( 0, p^{j} ); q^{j} ]  \sqcup  \bigsqcup_{j  \in J_{1} } [ ( 1, p^{j} ); q^{j} ]  .\]
This decomposition is unique, because $ \eta_{ \funcf } (x) $ is a strict map for every $ x \in   \funcf ( \mathcal{D} ) $.

We define $  \vartheta_{ \funcf } $ on $ E_{c}^{ \funcf }  $ by
\[ \vartheta_{ \funcf } (   \bigsqcup_{j  \in J } [ ( i_j , p^{j} ); q^{j} ]  )
:= ( \vartheta_{ \funcf_{0} } (  \bigsqcup_{j  \in J_{0} } [ p^{j} ; q^{j} ] ) ,
     \vartheta_{ \funcf_{1} } (  \bigsqcup_{j  \in J_{1} } [ p^{j} ; q^{j} ] )    ) .\]
By induction, $  \vartheta_{ \funcf } $  is clearly well-defined and monotone.

If $ r \in   \funcf ( \mathcal{D} )_{c} $, then it is easily verified that
$  \bigsqcup_{j  \in J } [ ( i_j, p^{j} ); q^{j} ] \sqsubseteq r  $ if and only if 
$ \bigsqcup_{ j \in J_{i} } \{ q^{j} : p^{j} \sqsubseteq t \} \sqsubseteq \Eval ( \eta_{ \funcf_{i} } (r_{i}) , t ) $
for all $ t \in T_{ \funcf_{i} } $ for  $ i=0 $ and  for $ i=1$.
It follows, from the induction hypothesis, that 
$ ( \vartheta_{ \funcf } ,  \eta_{ \funcf } ) $ is an adjunction pair.

Finally,  assume that $  \bigsqcup_{j  \in J } [ ( i_j, p^{j} ); q^{j} ] \in E_{c}^{ \funcf }  $ is witnessed by 
$ x \in   \funcf ( \mathcal{D} )^{R} $ .
Then, if $i=0,1$, $ j \in J_{i} $ and  $ t \in \upset{ p^{j}} \cap  \mathcal{T}_{ \funcf_{i} }^{R} $, we have
\[ q^{j} \prec_{ \funcf_{i} ( \mathcal{D} )} [ \Eval ( \eta_{ \funcf } (x) , (i,t)  ) ]
=   [ \Eval ( \eta_{ \funcf_{i} } (x_{i} ),  t  ) ], \] which shows that
$ \bigsqcup_{j  \in J_{i} } [ p^{j} ; q^{j} ] \in E_{c}^{ \funcf_{i} }  $ is witnessed by
$ x_{i} \in  \funcf_{i} ( \mathcal{D} )^{R} $. By the induction hypothesis,  there exists
$x'_i \in \hat{\funcf_i}(D) $ with $ x'_{i} \approx_{\funcf_i (\mathcal{D}) } x_{i} $ such that
$ \vartheta_{ \funcf_{i} } (  \bigsqcup_{j  \in J_{i} } [  p^{j} ; q^{j} ] )   \sqsubseteq x'_{i} $.
It follows that $ (x'_{0} , x'_{1} ) \approx_{\funcf (\mathcal{D})} x $ and that
\[  \vartheta_{ \funcf } (  \bigsqcup_{j  \in J_{0} } [ ( 0, p^{j} ); q^{j} ]  \sqcup  \bigsqcup_{j  \in J_{1} } [ ( 1, p^{j} ); q^{j} ]  )
\prec_{  \funcf ( \mathcal{D} )  } [ x]  .\]

\item{ $ \funcf =  [\mathcal{B} \rightarrow \funcf_{1} ]$:\ }
If $ \bigsqcup_{j  \in J } [  p^{j} ;  q^{j}  ]   \in  E_{c}^{  \funcf } $, we let
\[ \vartheta_{ \funcf } ( \bigsqcup_{j  \in J } [ p^{j} ;q^{j}]  ) 
:=  \bigsqcup_{j \in J } 
[ p_{0}^{j};  
\vartheta_{ \funcf_{1} } ( \bigsqcup_{ k \in J }  \{  [ p_{1}^{k};   q^{k} ] :  p_{0}^{k} \sqsubseteq p_{0}^{j}  \} )  ] .\]
Note that in this case $ \bigsqcup_{j  \in J } [  p^{j} ;  q^{j}  ]   \in  [ B \times T_{ \funcf_{1} } 
\rightarrow \biguplus_{ k \in K_{ \funcf } } \hat{ \funcf^{k} } ( D ) ] $, and 
$  \{ [ p_{1}^{j} ; q^{j} ] : p_{0}^{j} \sqsubseteq p  \}_{ j \in J } $
is $ [ T_{ \funcf_{1}} \rightarrow \biguplus_{ k \in K_{ \funcf } } \hat{ \funcf^{k} } ( D ) ]$-consistent
for every $ p \in B_{c}  $.

\begin{claim}
Let $ b \in \mathcal{B}^{R} $ and $p  \in \textrm{approx} (b) $.
Let $ \bigsqcup_{j  \in J } [  p^{j} ;  q^{j}  ]   \in  E_{c}^{  \funcf } $ be witnessed by
$  x \in [ \mathcal{B} \rightarrow \funcf_{1} (  \mathcal{D} ) ] ^{R} $.

Then $ \bigsqcup_{ j \in J } \{ [ p_{1}^{j} ; q^{j} ] : p_{0}^{j} \sqsubseteq p  \}  \in E_{c}^{ \funcf_{1} } $ is
witnessed by $ x(b) $.
\end{claim}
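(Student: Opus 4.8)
The plan is to reduce both occurrences of ``witnessed by'' to the pointwise $\prec$-conditions they abbreviate, and then to observe that passing from $b$ to the smaller $p$ merely deletes some of the step functions contributing to the relevant least upper bound, which can only decrease it; since $\prec$ is downward closed in its first argument, the $\prec$-relation is inherited. No genuine obstacle is expected; the argument is short bookkeeping once the definitions are unwound.

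Concretely, I would first recall that, because $\mathcal{T}_{\funcf} = \mathcal{B} \times \mathcal{T}_{\funcf_{1}}$ and $\Eval(\eta_{\funcf}(x),(t_{0},t_{1})) = \Eval(\eta_{\funcf_{1}}(x(t_{0})),t_{1})$, the hypothesis that $\bigsqcup_{j \in J}[p^{j};q^{j}] \in E_{c}^{\funcf}$ is witnessed by $x$ says precisely that for every $t_{0} \in \mathcal{B}^{R}$ and $t_{1} \in \mathcal{T}_{\funcf_{1}}^{R}$,
\[ \bigsqcup_{j \in J} \{ q^{j} : p_{0}^{j} \sqsubseteq t_{0},\ p_{1}^{j} \sqsubseteq t_{1} \} \prec_{\biguplus_{k \in K_{\funcf}} \funcf^{k}(\mathcal{D})} [\Eval(\eta_{\funcf_{1}}(x(t_{0})),t_{1})] . \]
Now fix $s \in \mathcal{T}_{\funcf_{1}}^{R}$ and apply this with $t_{0} := b$ and $t_{1} := s$, using $b \in \mathcal{B}^{R}$. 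Since $p \in \textrm{approx}(b)$ gives $p \sqsubseteq b$, we have $\{ j \in J : p_{0}^{j} \sqsubseteq p,\ p_{1}^{j} \sqsubseteq s \} \subseteq \{ j \in J : p_{0}^{j} \sqsubseteq b,\ p_{1}^{j} \sqsubseteq s \}$, so the left-hand lub over the former set is below that over the latter, and by downward closure of $\prec$ in its first argument it follows that $\bigsqcup \{ q^{j} : j \in J,\ p_{0}^{j} \sqsubseteq p,\ p_{1}^{j} \sqsubseteq s \} \prec [\Eval(\eta_{\funcf_{1}}(x(b)),s)]$. As $s$ was arbitrary, and since $\mathcal{T}_{\funcf_{1}}$ is dense while $\biguplus_{k \in K_{\funcf}} \funcf^{k}(\mathcal{D})$ is convex, local and complete (hence weakly convex and strongly local), lemma~\ref{l_lwc3} converts this back into $\bigsqcup_{j \in J'} [p_{1}^{j};q^{j}] \prec_{[\mathcal{T}_{\funcf_{1}} \rightarrow \biguplus_{k \in K_{\funcf}} \funcf^{k}(\mathcal{D})]} [\eta_{\funcf_{1}}(x(b))]$, where $J' := \{ j \in J : p_{0}^{j} \sqsubseteq p \}$ --- which is exactly the assertion that $\bigsqcup_{j \in J'}[p_{1}^{j};q^{j}] \in E_{c}^{\funcf_{1}}$ is witnessed by $x(b)$.

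Finally I would dispatch the routine points: $x(b) \in \funcf_{1}(\mathcal{D})^{R}$ because $x$ is equivariant and $b \in \mathcal{B}^{R}$; the family $\{[p_{1}^{j};q^{j}]\}_{j \in J'}$ is $[\mathcal{T}_{\funcf_{1}} \rightarrow \biguplus_{k \in K_{\funcf}}\funcf^{k}(\mathcal{D})]$-consistent, as already noted after the definition of $\vartheta_{\funcf}$ in this case (or by the remark following lemma~\ref{l_lwc3}); and one uses the identification $K_{\funcf} = K_{\funcf_{1}}$ with $\funcf^{k} = \funcf_{1}^{k}$ from lemma~\ref{l_eta1}, so that the disjoint-union codomains appearing in the two ``witnessed by'' statements literally coincide. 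The only things to watch are the direction in which lemma~\ref{l_lwc3} is invoked (its hypothesis-free converse to unfold the assumption, and its stated form --- needing density and weak convexity/strong locality --- to re-fold the conclusion) and the downward closure of $\prec$, which is what makes replacing $b$ by its approximation $p$ cost nothing.
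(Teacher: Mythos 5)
Your proposal is correct and follows essentially the same route as the paper: both unfold ``witnessed by'' into the pointwise $\prec$-condition via lemma~\ref{l_lwc3}, specialise the first coordinate to $(b,t)$ using $\Eval(\eta_{\funcf}(x),(b,t))=\Eval(\eta_{\funcf_1}(x(b)),t)$, and exploit $p\sqsubseteq b$ to carry the condition over to the restricted set of step functions. The paper argues on the individual $q^{j}$ (leaving the passage to the lub to locality) where you use downward closure of $\prec$ on the lub directly, but this is only a cosmetic difference.
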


In particular, this means that $ \bigsqcup_{ k \in J } \{ [ p_{1}^{k} ; q^{k} ] : p_{0}^{k} \sqsubseteq p_0^j  \}  \in E_{c}^{ \funcf_{1} } $ for
every $j\in J$.

\begin{claim}
Let $ \bigsqcup_{j  \in J } [  p^{j} ;  q^{j}  ]   \in  E_{c}^{  \funcf } $.

If $ J'$ is a subset of $ J $  such that $ \{   p_{0}^{j} \}_{ j \in J' } $ is consistent in $B$,
then $ \{  \vartheta_{ \funcf_{1} }  ( \bigsqcup_{ k \in J } \{ [ p_{1}^{k} ; q^{k} ] : p_{0}^{k} \sqsubseteq p_{0}^{j}  \}  ) \}_{ j \in J'} $
is consistent in $ \hat{ \funcf_{1} } (D)  $.
\end{claim}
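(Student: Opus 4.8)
The plan is to derive the claim from locality of $\funcf_{1}(\mathcal{D})$ together with the two preceding claims and the density of $\mathcal{B}$. First I would unwind the hypothesis $\bigsqcup_{j\in J}[p^{j};q^{j}]\in E_{c}^{\funcf}$: by the characterisation of $E_{c}^{\funcf}$ given at the start of the proof of lemma~\ref{l_eta2}, this compact element is \emph{witnessed} by some $x\in[\mathcal{B}\to\funcf_{1}(\mathcal{D})]^{R}=\funcf(\mathcal{D})^{R}$. Keeping this witness $x$ is essential: mere membership of a compact element in $E_{c}^{\funcf_{1}}$ does not let us apply the induction hypothesis, whereas property~(1) of $\vartheta_{\funcf_{1}}$ needs a witness lying in $\funcf_{1}(\mathcal{D})^{R}$.

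Second, fix a subset $J'\subseteq J$ with $\{p_{0}^{j}\}_{j\in J'}$ consistent in $B$. Its least upper bound $\bigsqcup_{j\in J'}p_{0}^{j}$ is a compact element of $B$, so by density of $\mathcal{B}$ the set $\upset{\bigsqcup_{j\in J'}p_{0}^{j}}$ meets $\mathcal{B}^{R}$; I would pick a single $b\in\mathcal{B}^{R}$ with $p_{0}^{j}\sqsubseteq b$ for every $j\in J'$. Then, for each $j\in J'$, $p_{0}^{j}\in\textrm{approx}(b)$, so the preceding claim (applied with $p=p_{0}^{j}$ and with the witness $x$ fixed above) gives that $\bigsqcup_{k\in J}\{[p_{1}^{k};q^{k}]:p_{0}^{k}\sqsubseteq p_{0}^{j}\}\in E_{c}^{\funcf_{1}}$ is witnessed by $x(b)\in\funcf_{1}(\mathcal{D})^{R}$. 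The key point is that this witness $x(b)$ is \emph{the same} element of $\funcf_{1}(\mathcal{D})^{R}$ for all $j\in J'$, since it depends only on $b$, not on $j$.

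Third, I would invoke the induction hypothesis: property~(1) of $\vartheta_{\funcf_{1}}$ yields $\vartheta_{\funcf_{1}}(\bigsqcup_{k\in J}\{[p_{1}^{k};q^{k}]:p_{0}^{k}\sqsubseteq p_{0}^{j}\})\prec_{\funcf_{1}(\mathcal{D})}[x(b)]$ for every $j\in J'$. Since $\funcf_{1}(\mathcal{D})$ is convex, local and complete by lemma~\ref{l_lwc4} (as already noted for $\funcf(\mathcal{D})$ at the outset of the proof of lemma~\ref{l_eta2}), the partial equivalence class $[x(b)]_{\funcf_{1}(\mathcal{D})}$ is consistent in $\hat{\funcf_{1}}(D)$ and hence has a least upper bound $z$. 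Each element $\vartheta_{\funcf_{1}}(\bigsqcup_{k\in J}\{[p_{1}^{k};q^{k}]:p_{0}^{k}\sqsubseteq p_{0}^{j}\})$, $j\in J'$, lies below some member of $[x(b)]_{\funcf_{1}(\mathcal{D})}$, hence below $z$; so the whole family is bounded by $z$ and therefore consistent in $\hat{\funcf_{1}}(D)$, which is the assertion of the claim.

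There is no genuine obstacle, but the one step I would be careful to spell out is the second one: the witness $x$ of the original element of $E_{c}^{\funcf}$ must be transported, via the preceding claim, to a \emph{common} witness $x(b)\in\funcf_{1}(\mathcal{D})^{R}$ for all the relevant $\funcf_{1}$-level compacts at once, and it is precisely this uniformity — made available by the density of $\mathcal{B}$ — that lets a single application of locality of $\funcf_{1}(\mathcal{D})$ produce the required bound $z$.
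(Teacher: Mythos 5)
Your proposal is correct and follows essentially the same route as the paper: fix the witness $x$ of $\bigsqcup_{j\in J}[p^j;q^j]\in E_c^{\funcf}$, use density of $\mathcal{B}$ to pick a single $b\in\upset{\bigsqcup_{j\in J'}p_0^j}\cap\mathcal{B}^R$, transport the witness to the common witness $x(b)$ via the preceding claim, apply property~(1) of $\vartheta_{\funcf_1}$ to get $\vartheta_{\funcf_1}(\cdot)\prec_{\funcf_1(\mathcal{D})}[x(b)]$ for every $j\in J'$, and conclude by locality of $\funcf_1(\mathcal{D})$. Your spelled-out final step (taking the bound $z$ of the consistent class $[x(b)]$) is just an explicit version of the paper's closing remark that locality suffices.
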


This shows that $\{  [ p_{0}^{j} ;  \vartheta_{ \funcf_{1} }  ( \bigsqcup_{ k \in J } \{ [ p_{1}^{k} ; q^{k} ] : p_{0}^{k} \sqsubseteq p_{0}^{j}\} )  ]  \}_{j \in J} $ is a consistent set of step functions in $ [ B \rightarrow \hat{ \funcf_{1} } (D) ] $.
In  particular, this means that
\[ \bigsqcup_{ j \in J} [ p_{0}^{j} ;  \vartheta_{ \funcf_{1} }  ( \bigsqcup_{ k \in J } \{ [ p_{1}^{k} ; q^{k} ] : p_{0}^{k} \sqsubseteq p_{0}^{j}\} )  ] 
\in [ B \rightarrow \hat{ \funcf_{1} } (D) ] .\]

\begin{claim}
Assume $  \bigsqcup_{j  \in J } [  p^{j} ;  q^{j}  ] \sqsubseteq \bigsqcup_{ k \in K} [ p^{k} ; q^{k} ]  \in E_{c}^{ \funcf }$.
Then
\[ \bigsqcup_{ j \in J} [ p_{0}^{j} ;  \vartheta_{ \funcf_{1} }  ( \bigsqcup_{ l \in J } \{ [ p_{1}^{l} ; q^{l} ] : p_{0}^{l} \sqsubseteq p_{0}^{j} \} ) ]  \sqsubseteq
\bigsqcup_{ k \in K} [ p_{0}^{k} ;  \vartheta_{ \funcf_{1} }  ( \bigsqcup_{ m \in K } \{ [ p_{1}^{m} ; q^{m} ] : p_{0}^{m} \sqsubseteq p_{0}^{k} \} ) ]  . \]
\end{claim}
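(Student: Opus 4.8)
\proof
The plan is to prove that $\vartheta_\funcf$ is monotone in the case $\funcf=[\mathcal{B}\to\funcf_1]$, by reducing the asserted inequality between two elements of $[B\to\hat{\funcf_1}(D)]$ to finitely many comparisons inside $\hat{\funcf_1}(D)$ and then invoking the induction hypothesis for $\funcf_1$. Note that this hypothesis (lemma~\ref{l_eta2} for $\funcf_1$) supplies not only that $\vartheta_{\funcf_1}$ is monotone but also that $(\vartheta_{\funcf_1},\eta_{\funcf_1})$ is an adjunction pair. First I would use the standard fact that $\bigsqcup_{j}[a^{j};b^{j}]\sqsubseteq g$ in $[B\to\hat{\funcf_1}(D)]$ holds iff $[a^{j};b^{j}]\sqsubseteq g$ for every $j$, i.e.\ iff $b^{j}\sqsubseteq g(a^{j})$ for every $j$, bearing in mind that the right-hand side of the claim is a genuine element of $[B\to\hat{\funcf_1}(D)]$ by the preceding consistency claim. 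Setting $A^{J}_{j}:=\bigsqcup\{[p_{1}^{l};q^{l}]:l\in J,\ p_{0}^{l}\sqsubseteq p_{0}^{j}\}$ and $A^{K}_{k}:=\bigsqcup\{[p_{1}^{m};q^{m}]:m\in K,\ p_{0}^{m}\sqsubseteq p_{0}^{k}\}$, this reduces the claim to: for each $j\in J$, $\vartheta_{\funcf_1}(A^{J}_{j})\sqsubseteq\bigsqcup\{\vartheta_{\funcf_1}(A^{K}_{k}):k\in K,\ p_{0}^{k}\sqsubseteq p_{0}^{j}\}$.

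The next step is combinatorial bookkeeping. Put $B^{K}_{p}:=\bigsqcup\{[p_{1}^{m};q^{m}]:m\in K,\ p_{0}^{m}\sqsubseteq p\}$ for $p\in B_{c}$. Allowing $k=m$, the families $B^{K}_{p_{0}^{j}}$ and $\bigsqcup\{A^{K}_{k}:k\in K,\ p_{0}^{k}\sqsubseteq p_{0}^{j}\}$ consist of exactly the same step functions, so they coincide; hence $B^{K}_{p_{0}^{j}}$ is the least upper bound of the arguments appearing on the right. Reading the hypothesis $\bigsqcup_{j'\in J}[p^{j'};q^{j'}]\sqsubseteq\bigsqcup_{k\in K}[p^{k};q^{k}]$ inside $[B\times T_{\funcf_1}\to\biguplus_{k\in K_{\funcf}}\hat{\funcf^{k}}(D)]$ gives $q^{l}\sqsubseteq\bigsqcup\{q^{m}:m\in K,\ p_{0}^{m}\sqsubseteq p_{0}^{l},\ p_{1}^{m}\sqsubseteq p_{1}^{l}\}$ for every $l\in J$; when $p_{0}^{l}\sqsubseteq p_{0}^{j}$ this forces $p_{0}^{m}\sqsubseteq p_{0}^{j}$, so $q^{l}\sqsubseteq\bigsqcup\{q^{m}:m\in K,\ p_{0}^{m}\sqsubseteq p_{0}^{j},\ p_{1}^{m}\sqsubseteq p_{1}^{l}\}$, whence $A^{J}_{j}\sqsubseteq B^{K}_{p_{0}^{j}}$. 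All of $A^{J}_{j}$, $A^{K}_{k}$ and $B^{K}_{p_{0}^{j}}$ lie in $E^{\funcf_1}_{c}$ by the first claim of this case (the density of $\mathcal{B}$ supplies the required total extensions of the relevant compacts of $B$), so $\vartheta_{\funcf_1}$ applies to each of them.

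Finally I would combine the pieces: monotonicity of $\vartheta_{\funcf_1}$ gives $\vartheta_{\funcf_1}(A^{J}_{j})\sqsubseteq\vartheta_{\funcf_1}(B^{K}_{p_{0}^{j}})$, and since $\vartheta_{\funcf_1}$ is a lower adjoint it preserves the finite join $B^{K}_{p_{0}^{j}}=\bigsqcup\{A^{K}_{k}:k\in K,\ p_{0}^{k}\sqsubseteq p_{0}^{j}\}$, so $\vartheta_{\funcf_1}(B^{K}_{p_{0}^{j}})=\bigsqcup\{\vartheta_{\funcf_1}(A^{K}_{k}):k\in K,\ p_{0}^{k}\sqsubseteq p_{0}^{j}\}$, which is the bound we wanted. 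I expect the main obstacle to be exactly this last join-preservation step: plain monotonicity of $\vartheta_{\funcf_1}$ only yields $\bigsqcup_{k}\vartheta_{\funcf_1}(A^{K}_{k})\sqsubseteq\vartheta_{\funcf_1}(\bigsqcup_{k}A^{K}_{k})$, the wrong direction, and the family $\{A^{K}_{k}:k\in K,\ p_{0}^{k}\sqsubseteq p_{0}^{j}\}$ need not be directed, so one genuinely has to exploit the adjunction hypothesis for $\funcf_1$ — equivalently, extract from it that $\vartheta_{\funcf_1}$ preserves binary consistent joins — through the chain $\vartheta_{\funcf_1}(\bigsqcup_{k}A^{K}_{k})\sqsubseteq r\Leftrightarrow\bigsqcup_{k}A^{K}_{k}\sqsubseteq\eta_{\funcf_1}(r)\Leftrightarrow\forall k\,(A^{K}_{k}\sqsubseteq\eta_{\funcf_1}(r))\Leftrightarrow\forall k\,(\vartheta_{\funcf_1}(A^{K}_{k})\sqsubseteq r)$, instantiated at $r=\bigsqcup_{k}\vartheta_{\funcf_1}(A^{K}_{k})$. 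The surrounding index-set combinatorics are routine, and I would defer them to appendix~\ref{app.proofs}.
\qed
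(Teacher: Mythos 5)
Your proof is correct and follows essentially the same route as the paper's: fix $j\in J$, deduce from the hypothesis that $\bigsqcup_{l\in J}\{[p_1^l;q^l]:p_0^l\sqsubseteq p_0^j\}\sqsubseteq\bigsqcup_{m\in K}\{[p_1^m;q^m]:p_0^m\sqsubseteq p_0^j\}$ by the same index-chasing computation, and then push $\vartheta_{\funcf_1}$ through the join on the right. The only real difference is at the final step, which the paper dispatches with the phrase ``since $\vartheta_{\funcf_1}$ is continuous'': you rightly observe that the finite family being joined need not be directed, so plain continuity/monotonicity gives only the wrong inequality, and you supply the correct justification by extracting join-preservation from the adjunction $(\vartheta_{\funcf_1},\eta_{\funcf_1})$ — a more careful account of the same argument.
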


This shows that $\vartheta_\funcf $ is a well-defined and monotone map.

\begin{claim}
Let  $ \bigsqcup_{j  \in J } [  p^{j} ;q^{j}  ] \in    E_{c}^{  \funcf } $
and let $ r \in D_c^\funcf $. Then
$ \bigsqcup_{j  \in J } [  p^{j} ;q^{j}  ]  \sqsubseteq  \eta_{ \funcf } (r)$ 
if and only if,
for every $j\in J$,
\[ \bigsqcup_{k \in J } \{ [  p_{1}^{k} ;q^{k}  ]  : p_{0}^{k} \sqsubseteq p_{0}^{j} \}
\sqsubseteq \eta_{ \funcf_{1} } (r ( p_{0}^{j}  ) ) . \]
\end{claim}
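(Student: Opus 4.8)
\proof
The plan is to unfold the definition of $\eta_{\funcf}$ and reduce both sides of the stated equivalence to a pointwise condition on the constants $q^{j}$. Writing $p^{j}=(p_{0}^{j},p_{1}^{j})\in B\times T_{\funcf_{1}}$ and recalling from the construction of $\eta_{\funcf}$ in lemma~\ref{l_eta1} that $\eta_{\funcf}(r)=\lambda t.\Eval(\eta_{\funcf_{1}}(r(t_{0})),t_{1})$, I would first observe that $\eta_{\funcf}(r)(p^{j})=\Eval(\eta_{\funcf_{1}}(r(p_{0}^{j})),p_{1}^{j})$. Using the elementary facts that $[p;q]\sqsubseteq g$ iff $q\sqsubseteq g(p)$, and that a consistent join of step functions lies below $g$ iff each summand does, this turns the left-hand inequality into
\[ \forall j\in J\;(\,q^{j}\sqsubseteq\Eval(\eta_{\funcf_{1}}(r(p_{0}^{j})),p_{1}^{j})\,), \]
and the right-hand condition into
\[ \forall j,k\in J\;(\,p_{0}^{k}\sqsubseteq p_{0}^{j}\ \Rightarrow\ q^{k}\sqsubseteq\Eval(\eta_{\funcf_{1}}(r(p_{0}^{j})),p_{1}^{k})\,), \]
the well-formedness of the inner joins $\bigsqcup_{k\in J}\{[p_{1}^{k};q^{k}]:p_{0}^{k}\sqsubseteq p_{0}^{j}\}$ as elements of $E_{c}^{\funcf_{1}}$ being exactly what the preceding claim supplies.

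From this reformulation the argument is short. The implication from the second displayed line to the first is obtained by instantiating $k:=j$. For the converse I would assume the first line, fix $j,k\in J$ with $p_{0}^{k}\sqsubseteq p_{0}^{j}$, and chain three inequalities: the first line at $k$ gives $q^{k}\sqsubseteq\Eval(\eta_{\funcf_{1}}(r(p_{0}^{k})),p_{1}^{k})$; monotonicity of $r$ gives $r(p_{0}^{k})\sqsubseteq r(p_{0}^{j})$; and monotonicity of $\eta_{\funcf_{1}}$ (which is continuous on the underlying domains), followed by evaluation at $p_{1}^{k}$, gives $\Eval(\eta_{\funcf_{1}}(r(p_{0}^{k})),p_{1}^{k})\sqsubseteq\Eval(\eta_{\funcf_{1}}(r(p_{0}^{j})),p_{1}^{k})$. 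Transitivity of $\sqsubseteq$ then yields $q^{k}\sqsubseteq\Eval(\eta_{\funcf_{1}}(r(p_{0}^{j})),p_{1}^{k})$, which is the second line.

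I do not expect a genuine obstacle here: the claim is essentially an unwinding of the definition of $\eta_{\funcf}$ together with monotonicity of $r$ and $\eta_{\funcf_{1}}$. The only thing that needs care is bookkeeping — keeping straight which of $p_{0}$ and $p_{1}$ each occurrence of $\sqsubseteq$ refers to, and invoking the earlier claim so that the step-function reformulations above are legitimate.
\qed
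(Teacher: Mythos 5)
Your proof is correct and follows essentially the same route as the paper's: both unfold $\Eval(\eta_{\funcf}(r),p^{j})=\Eval(\eta_{\funcf_{1}}(r(p_{0}^{j})),p_{1}^{j})$, reduce each side to a pointwise condition on the $q^{j}$ via the step-function characterisation $[p;q]\sqsubseteq g\Leftrightarrow q\sqsubseteq g(p)$, and close the gap with monotonicity of $r$ and $\eta_{\funcf_{1}}$ (the paper chains the three inequalities directly rather than stating the two reformulations first, but the content is identical).
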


The  induction hypothesis is that $ ( \vartheta_{ \funcf_1 } ,  \eta_{ \funcf_1 } ) $ is an adjunction pair:
\[ \bigsqcup_{k \in J } \{ [  p_{1}^{k} ;q^{k}  ]  : p_{0}^{k} \sqsubseteq p_{0}^{j} \}  \sqsubseteq \eta_{ \funcf_{1} } (r ( p_{0}^{j}  ) )  
\Leftrightarrow
\vartheta_{ \funcf_{1} } ( \bigsqcup_{k \in J } \{ [  p_{1}^{k} ;q^{k}  ]  : p_{0}^{k} \sqsubseteq p_{0}^{j} \} )
\sqsubseteq  r ( p_{0}^{j}  )   \]
for every $ j \in J $,
which again is equivalent to $ \vartheta_{ \funcf }  ( \bigsqcup_{j  \in J } [  p^{j} ;q^{j}  ]  ) \sqsubseteq  r $
by the definition of  $ \vartheta_{ \funcf }   $.
This shows that $ ( \vartheta_{ \funcf } ,  \eta_{ \funcf } ) $ is an adjunction pair:

Assume now that $ x \in [  \mathcal{B} \rightarrow  \funcf_{1} (  \mathcal{D} ) ]^{R} $ is a witness that $  \bigsqcup_{j  \in J } [  p^{j} ;q^{j}  ]   \in E_{c}^{ \funcf } $.
Let $ b \in  \mathcal{B}^{R} $.
If $ j \in J $ and $ p^{j}_{0} \sqsubseteq  b $, then 
$ x(b) $ is a witness that
$ \bigsqcup_{ k \in J } \{ [ p_{1}^{k} ; q^{k} ] : p_{0}^{k} \sqsubseteq p_{0}^{j}  \}  \in E_{c}^{ \funcf_{1} } $.
By the induction hypothesis, this implies that 
\[ \vartheta_{ \funcf_{1} } ( \bigsqcup_{k \in J } \{ [  p_{1}^{k} ;q^{k}  ]  : p_{0}^{k} \sqsubseteq p_{0}^{j} \} ) \prec [ x(b) ] .\]
Then, since $ \mathcal{B} $ is dense and 
$  \funcf_{1} (  \mathcal{D} ) ) $ is convex, local and complete, 
we have (by lemma~\ref{l_lwc3})
\[ \bigsqcup_{ j \in J }
\{ [ p_{0}^{j};  \vartheta_{ \funcf_{1} } ( \bigsqcup_{ k \in J }  \{  [ p_{1}^{k};   q^{k} ] :  p_{0}^{k} \sqsubseteq p_{0}^{j}  \} )  ] \}
\prec [x] .\eqno{\qEd}\]
\end{enumerate}

\noindent In  example~\ref{e_eta},  the representation of a total element $x$ of the dense least fixed point of $ \funcf $ as 
a well-founded tree used  iterated evaluation of $x$ over some input parameter.
We will now use the adjunction pair $ (\vartheta_\funcf , \eta_\funcf ) $, which represents one-step evaluations over  
the dense least fixed point $  \mathcal{D}   $, to show that the situation of the example extends to the more general case
of a strictly positive functor with dense, admissible parameters.
We do this by means of an adjunction pair $(\bar{\vartheta}, \bar{\eta})$.

\begin{lem} \label{l_eta4}
Let $ \funcf : \cclcdomwp \to \cclcdomwp $ be a strictly positive functor with dense, admissible parameters.  
Let $  \mathcal{D} $ be the dense  least fixed point of $ \funcf $.

Then there exist dense, admissible domain-pers $\mathcal{U}$ and $\mathcal{E} $, and an equivariant and 
equi-injective map $ \bar{\eta } :  \mathcal{D} \rightarrow    [ \mathcal{U} \rightarrow \mathcal{E}  ]  $.
\end{lem}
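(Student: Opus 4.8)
\noindent The plan is to iterate the one-step evaluation map $\eta_\funcf$ of Lemma~\ref{l_eta1}, thereby reading off the whole (well-founded) evaluation tree of a total element of $\mathcal{D}$, exactly in the spirit of Example~\ref{e_eta}. If $\funcf$ is trivial then $\mathcal{D}$ is the trivial domain-per and the claim is vacuous (take $\mathcal{U}=\mathcal{E}=\mathcal{N}$), so assume $\funcf$ non-trivial; then by Lemma~\ref{l_trivialfunctor} at least one atomic subfunctor of $\funcf$ is a (non-trivial) constant. Fix the dense least fixed point $\mathcal{D}$ and, by the isomorphism $\funcf(\mathcal{D})^d\cong\mathcal{D}$ from Proposition~\ref{p_dense3}, an iso $\iota:\mathcal{D}\to\funcf(\mathcal{D})^d$ in $\cclcdomwp$. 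Composing, $\psi:=\eta_\funcf\circ\iota:\mathcal{D}\to[\mathcal{T}_\funcf\to\biguplus_{k\in K_\funcf}\funcf^k(\mathcal{D})]$ is equivariant and equi-injective, where by Lemma~\ref{l_eta1} $\mathcal{T}_\funcf$ is dense and, the non-positive parameters being admissible, admissible; and $\biguplus_{k\in K_\funcf}\funcf^k(\mathcal{D})$ is a finite lifted disjoint union with one summand $\mathcal{D}$ for each occurrence of the identity in $\funcf$ and one summand equal to a positive parameter $\mathcal{A}_k$ for each constant occurrence.

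I would then set $\mathcal{U}:=[\mathcal{N}\to\mathcal{T}_\funcf]^d$, the dense part of the function space from the flat naturals into $\mathcal{T}_\funcf$, whose total elements are the $\omega$-sequences over $\mathcal{T}_\funcf^R$; since $\mathcal{N}$ is dense and admissible and $\mathcal{T}_\funcf$ is admissible, $[\mathcal{N}\to\mathcal{T}_\funcf]$ is admissible by Lemma~\ref{l_adm5}, hence so is $\mathcal{U}$ by Observation~\ref{o_densepartadmissible}, and $\mathcal{U}$ is dense by construction. For the output side I would set $\mathcal{E}:=\bigl(\biguplus_k\mathcal{A}_k\bigr)\otimes\mathcal{S}_\funcf$, the lifted disjoint union of the positive parameters of $\funcf$ tensored with a dense, admissible domain-per $\mathcal{S}_\funcf$ whose quotient is the (countably based) space of finite sequences over $K_\funcf$; such $\mathcal{S}_\funcf$ exists by Theorem~\ref{t_adm1}, and it will record which branch of the evaluation tree reaches the leaf (it collapses to the $\mathcal{N}$ of Example~\ref{e_eta} when $\funcf$ has at most one identity occurrence). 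Since $\mathcal{E}$ is built from dense, admissible domain-pers by disjoint sum and strict product (the latter differing from the Cartesian product only off the total elements), $\mathcal{E}$ is dense and admissible by Lemma~\ref{l_adm5}.

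Next I would define $\bar\eta:D\to[U\to E]$ by iterated evaluation. Given $x\in D$ and $u=(t_n)_{n\in\nat}\in U$, set $x^{(0)}:=x$, and having $x^{(n)}$ form $\Eval(\psi(x^{(n)}),t_n)\in\biguplus_k\hat{\funcf^{k}}(D)$: if it is $\bot$, stall; if it lies in a constant summand $\mathcal{A}_k$ with value $a$, halt and return the pair of $a$ (tagged by $k$) and the trace $(k_0,\dots,k_{n-1},k)$; if it lies in an identity summand $k=:k_n$ with value $x^{(n+1)}$, record $k_n$ and continue. Formally $\bar\eta(x)(u):=\bigsqcup_n\bar\eta_n(x)(u)$, where $\bar\eta_n$ performs at most $n$ steps and outputs $\bot$ until a leaf is reached; each $\bar\eta_n$ is a composite of continuous maps, the family is directed in $n$, and the separated order on the $\biguplus$ forces the summand visited at step $n$ to be monotone in $x$, so $\bar\eta$ is a well-defined continuous map into $[U\to E]$. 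Moreover the $\rank$ of $x^{(n)}$ in the transfinite least-fixed-point construction (Proposition~\ref{p_dense3}, and the rank function of the inductive limits it uses) strictly decreases, so for total $x$ the iteration halts after finitely many steps.

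For equivariance, if $x\approx_\mathcal{D}x'$ and $u\approx_\mathcal{U}u'$ then equivariance of $\psi$ and Lemma~\ref{l_equiv1} make the two iterations run in lockstep --- the same summand at each step, $\approx$-equivalent $\mathcal{D}$-values and leaf values --- and by well-foundedness they halt at the same depth with identical traces, so $\bar\eta(x)(u)\approx_\mathcal{E}\bar\eta(x')(u')$; in particular $\bar\eta$ maps $\mathcal{D}^R$ into $[\mathcal{U}\to\mathcal{E}]^R$. For equi-injectivity, suppose $x,y\in\mathcal{D}^R$ with $\bar\eta(x)\approx\bar\eta(y)$; fixing $t\in\mathcal{T}_\funcf^R$ and feeding inputs with first coordinate $t$, equality of the recorded traces forces $\Eval(\psi(x),t)$ and $\Eval(\psi(y),t)$ into the same summand, and then either the leaf values are $\approx$ in the relevant $\mathcal{A}_k$ or their $\mathcal{D}$-parts $x_t,y_t$ satisfy $\bar\eta(x_t)\approx\bar\eta(y_t)$ and hence $x_t\approx_\mathcal{D}y_t$ by induction on rank; either way $\Eval(\psi(x),t)\approx\Eval(\psi(y),t)$ for every $t$, so $\psi(x)\approx\psi(y)$ by Lemma~\ref{l_equiv1}, whence $x\approx_\mathcal{D}y$ since $\psi$ is equi-injective. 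The same iteration applied to the adjunction $(\vartheta_\funcf,\eta_\funcf)$ of Lemma~\ref{l_eta2} will yield the companion map $\bar\vartheta$ needed in Lemma~\ref{l_eta5}. The real obstacle is the equi-injectivity step and the design of $\mathcal{E}$ behind it: a single evaluation step only partially exposes $x$ --- at a disjoint-sum node of $\funcf$ the branch taken is chosen by $x$, not by the input, and is lost once one recurses into an identity slot --- so $\mathcal{E}$ must carry the full evaluation trace, and one must check carefully that this trace together with the leaf values pins a total element down up to $\approx$; the continuity and the density/admissibility bookkeeping are routine given Lemmas~\ref{l_eta1} and~\ref{l_adm5}.
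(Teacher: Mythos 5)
Your proposal follows essentially the same route as the paper: the same $\mathcal{U}$ (sequences over $\mathcal{T}_\funcf$, i.e.\ $[\mathcal{N}\to_\bot\mathcal{T}_\funcf]$ up to the dense part) and the same $\mathcal{E}$ (the paper codes the evaluation trace into $\mathcal{N}$ via $\langle\cdot\rangle$ rather than a separate $\mathcal{S}_\funcf$, which is equivalent), the same iterated evaluation of $\eta_\funcf$ with termination by rank decrease, the same lockstep argument for equivariance, and the same trace-plus-leaf backwards recursion for equi-injectivity. The construction and all the key steps match; no gaps.
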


\proof
Let $ \mathcal{T} $ be the dense, admissible domain-per $\mathcal{T}_\funcf$ as defined in the proof of 
lemma~\ref{l_eta1}. 
The domain-per $ \mathcal{U} $ is defined as follows: 
Let $U$ be the domain   of sequences $ x= \{ x_{m} \}_{ m \in \nat} $ over $ T $, partially ordered by
$ x \sqsubseteq_{ U } y \Leftrightarrow \forall m \in \omega \:  (x_{m} \sqsubseteq_{  T } y_{m}) $.
Let $\approx_\mathcal{U}$ be the partial equivalence relation defined by
$  x \approx_{  \mathcal{U} } y \leftrightarrow \forall m \in \omega ( x_{m} \approx_{  \mathcal{T} } y_{m} ) $.
\begin{claim}
$\mathcal{U} $ is dense and admissible.
\end{claim}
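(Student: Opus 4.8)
The plan is to observe that $\mathcal{U}$ is nothing but the countably infinite power $\prod_{m\in\nat}\mathcal{T}$ of the domain-per $\mathcal{T}=\mathcal{T}_\funcf$, which is dense and admissible by lemma~\ref{l_eta1}, and to prove that both properties pass to such a countable product. First I would record the structural facts needed: $U$ is the countable product of the countably based domain $T$, hence itself a countably based domain, its compact elements being exactly the sequences that are compact in each coordinate and equal to $\bot$ in all but finitely many coordinates, of which there are only countably many; moreover $\mathcal{U}^R=(\mathcal{T}^R)^\nat$, and $\mathcal{Q}\mathcal{U}$ is, as a set, the product $\prod_m\mathcal{Q}\mathcal{T}$, each coordinate projection $p_m\colon U\to T$ being equivariant and therefore inducing a continuous map $p_m^{\mathcal Q}\colon\mathcal{Q}\mathcal{U}\to\mathcal{Q}\mathcal{T}$. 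In particular two classes in $\mathcal{Q}\mathcal{U}$ coincide as soon as all their projections do.

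For density I would note that a basic Scott-open subset of $U$ has the form $\upset p$ with $p=(p_m)_m\in U_c$, so $p_m=\bot$ for all $m$ outside some finite set. Since $\mathcal{T}$ is dense we may pick, for each $m$ with $p_m\neq\bot$, a total $t_m\in\mathcal{T}^R$ with $p_m\sqsubseteq t_m$; for the remaining coordinates we pick an arbitrary $t_m\in\mathcal{T}^R$, which is non-empty because $\mathcal{T}$ is dense. Then $(t_m)_m\in\mathcal{U}^R\cap\upset p$, so $\mathcal{U}^R$ meets every basic open set and $\mathcal{U}$ is dense.

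For admissibility the argument mimics the product case of lemma~\ref{l_adm5}. Let $(F,F^R)$ be a countably based, dense domain with totality and $\varphi\colon F^R\to\mathcal{Q}\mathcal{U}$ continuous. For each $m$ the composite $p_m^{\mathcal Q}\circ\varphi\colon F^R\to\mathcal{Q}\mathcal{T}$ is continuous, so by the admissibility of $\mathcal{T}$ it factors through $\delta_{\mathcal{T}}$ via some continuous $\hat\varphi_m\colon F\to T$ with $\hat\varphi_m[F^R]\subseteq\mathcal{T}^R$ and $\delta_{\mathcal{T}}(\hat\varphi_m(y))=p_m^{\mathcal Q}(\varphi(y))$ for every $y\in F^R$. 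Bundling the $\hat\varphi_m$ into $\hat\varphi\colon F\to U$, $\hat\varphi(y):=(\hat\varphi_m(y))_m$, gives a continuous map with $\hat\varphi[F^R]\subseteq(\mathcal{T}^R)^\nat=\mathcal{U}^R$, and for $y\in F^R$ the class $\delta_{\mathcal U}(\hat\varphi(y))$ has $m$-th projection $[\hat\varphi_m(y)]=p_m^{\mathcal Q}(\varphi(y))$ for all $m$, hence equals $\varphi(y)$. Thus $\varphi$ factors through $\delta_{\mathcal U}$ via $\hat\varphi$, and $\mathcal{U}$ is admissible.

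The only delicate point is the bookkeeping around $\mathcal{Q}\mathcal{U}$: I must make sure that testing continuity of a map into $\mathcal{Q}\mathcal{U}$ and recovering its value can both be carried out coordinatewise through the $p_m^{\mathcal Q}$ — which is precisely what the equivariance of the coordinate projections together with the definition of $\approx_{\mathcal U}$ deliver — and that $U$ is genuinely countably based, so that the notion of admissibility applies at all. Neither of these is a real obstacle; everything else is the routine product computation.
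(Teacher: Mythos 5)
Your proof is correct, but it takes a different route from the paper's. The paper characterises $\mathcal{U}$ as the strict function space $[\mathcal{N}\rightarrow_{\bot}\mathcal{T}]$ (identifying sequences over $T$ with strict maps $\mathbb{N}_\bot\to T$) and then invokes lemma~\ref{l_adm5}, using that $\mathcal{N}$ is dense and admissible and $\mathcal{T}$ is admissible; density of $\mathcal{U}$ is dismissed as obvious from density of $\mathcal{T}$. You instead treat $\mathcal{U}$ directly as the countable power $\prod_{m\in\nat}\mathcal{T}$ and run the product argument coordinatewise: density via finitely supported compact elements, admissibility by factoring each $p_m^{\mathcal{Q}}\circ\varphi$ through $\delta_{\mathcal{T}}$ and bundling. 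What your version buys is self-containedness on two points the paper glosses over: lemma~\ref{l_adm5} as stated covers only the binary product and the non-strict function space, so the paper is implicitly extending it to the strict function space (justified earlier by the remark that strict and non-strict operations differ only on non-total elements, but not spelled out here), whereas you explicitly carry out the extension of the product clause to countable products. What the paper's version buys is brevity and reuse of an already-stated lemma. Your bookkeeping around $\mathcal{Q}\mathcal{U}$ — that classes are determined by their coordinate projections because $\approx_{\mathcal{U}}$ is defined coordinatewise, and that $U$ is countably based so the definition of admissibility applies — is exactly the right list of things to check, and all of them hold.
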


Let $ \mathcal{E} $ be the domain-per $ (  \biguplus_{ n \leq N}  \mathcal{A}_{n}   ) \otimes \mathcal{N} $, 
where  $  \mathcal{A}_{0}, \ldots ,  \mathcal{A}_{N} $ are the positive parameters of $ \funcf $.
Clearly, $ \mathcal{E}  $ is admissible since $  \mathcal{A}_{0} , \dots ,  \mathcal{A}_{N}  $ are admissible.
 It is  simply a matter of convenience
 that we use the strict product in the definition of $ \mathcal{E} $.

In what follows, we will consider $  \biguplus_{ n \leq N}  A_{n} $ as a subdomain of the underlying domain of 
$  \biguplus_{ k \in K_{ \funcf} } \funcf^{k} ( \mathcal{D}  )  $ in the obvious way.
We observe that if $ z,z' \in  \biguplus_{ k \in K_{ \funcf} } \hat{\funcf^{k}} ( D  ) $ and $ \bot \neq z \sqsubseteq z'$, then $z,z'$ are either both in $  \biguplus_{ n \leq N}  A_{n} $ or both in its complement. 

Let $  \eta : \mathcal{D}  \rightarrow^{} [  \mathcal{T}  \rightarrow \biguplus_{ k \in K } \funcf^{k} ( \mathcal{D} ) ]$
be the equivariant and equi-injective map as defined in the proof of lemma~\ref{l_eta1}. This map exists because 
 $ \mathcal{D} \cong \funcf (   \mathcal{D} )^{d} $.

In order to define the map $\bar{\eta}$, we must first describe the evaluation tree which $\eta$
produces from an $x\in D$.
For a fixed $ (x, u) \in D \times U$, we define a (finite or infinite) sequence over $\biguplus_{ k \in K_{ \funcf } } \hat{ \funcf ^{k} } (D)  $
as follows: \begin{enumerate}[$\bullet$]
\item Let $ z^{0}_{(x,u)} :=  \Eval ( \eta ( x )  , u_{0} )  \in \biguplus_{ k \in K_{ \funcf } } \hat{ \funcf ^{k} } (D)  $.
\item If $m \in \omega $ and $ z^{m}_{(x,u)} = ( k^{m}_{ (x,u)} , d^{m}_{ (x,u) } ) \notin   \biguplus_{ n \leq N}  A_{n}$,
let \[ z^{m+1}_{ (x,u)}  :=  \Eval ( \eta ( d^{m}_{(x,u)}  )  , u_{m+1} ) \in \biguplus_{ k \in K_{ \funcf } } \hat{ \funcf ^{k} } (D)  .\]
\item Let $ M_{ (x,u) } :=  \{ m \in \omega :  z^{m}_{ (x,u) } \notin   \biguplus_{ n \leq N}  A_{n}  \} $.
\end{enumerate} 
Note that the definition comes to a halt once $ z^{m}_{(x,u)} \in   \biguplus_{ n \leq N}  A_{n}$.
The sequence $ \{ d^{m}_{ (x,u) }  \}_{ m <  M_{ (x,u) } } $ over $ D$ is the {\em evaluation sequence} of $ (x,u) $.
The sequence $ \{ k^{m}_{ (x,u) }  \}_{ m < M_{ (x,u) }} $ over $ K_{ \funcf } $ is the {\em evaluation path} of $ (x,u) $. If $ M_{ (x,u) }  < \omega $, 
and $ \sigma $ is the finite evaluation path of $ (x,u) $,  we let $ n_{(x,u)} := \langle \sigma \rangle $, with $ \langle \cdot \rangle $
a fixed injective function  from the set of finite sequences over $ K_{ \funcf } $ into $ \mathbb{N} $. 
We say that  $ n_{(x,u)} $ is the {\em code} for the evaluation path.
In the case of a finite $ M_{ (x,u) }  $, we obtain an  {\em evaluation result} $z^{M_{ (x,u) }}_{ (x,u) } \in  \biguplus_{ n \leq N}  A_{n} $.
If $ M_{ (x,u) } = \emptyset$, then the evaluation sequence and evaluation path of $(x,u) $ are both empty.

We define a map $ \zeta : D \times U  \rightarrow E $ as follows:
Let $ \zeta (x,u) = ( z^{M_{ (x,u) } }_{ (x,u) }  , n_{(x,u)} ) \in E $
if  $ M_{ (x,u) } < \omega $ and let $ \zeta(x,u) = \bot $ if $ M_{ (x,u) } = \omega $.

Let $ \bar{ \eta }: = \textrm{curry} ( \zeta ) $.
We will show that $\zeta $ is continuous, equivariant and equi-injective, and as a consequence 
$\bar{\eta}$  will be  well-defined, continuous, equivariant and equi-injective.

If the evaluation sequence of $(x,u)$ is infinite, we get $ \zeta (x,u)=\bot$. Note that, since 
we used the strict product of $\biguplus_{n\leq N} A_n$ and $\mathcal{N}$, we get  $ \zeta (x,u)=\bot$
even when the evaluation sequence is finite with $\bot$ as evaluation result. This is because 
the evaluation path is of no interest if the evaluation result is $ \bot $.

\begin{claim}
Let $(x,u),(x',u')\in D \times U $ and assume  $ (x,u) \sqsubseteq (x',u') $.
Then $ M_{(x,u)} \leq M_{(x',u')} $, and 
$z^{m}_{ (x,u) } \sqsubseteq z^{m}_{ (x',u') }$
for every $ m \leq M_{(x,u) } $ finite.
Moreover, if $ M_{ (x,u) } <  M_{ (x',u') } $, then  
$z^{M_{(x,u)}}_{ (x,u) } = \bot $.
\end{claim}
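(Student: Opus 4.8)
The plan is to establish all three assertions at once by a single induction on $m \in \omega$, using only the monotonicity of the data entering the construction of the evaluation sequence. Recall that $\eta$ (from lemma~\ref{l_eta1}) is equivariant and hence continuous, that $\Eval$ is continuous in both arguments, and that $u \sqsubseteq_U u'$ means precisely $u_m \sqsubseteq_T u'_m$ for every $m$. The only non-order-theoretic input is the separated structure of the underlying domain $\biguplus_{k \in K_\funcf} \hat{\funcf^{k}}(D)$: if $\bot \ne z \sqsubseteq z'$ there, then $z = (k,d)$ and $z' = (k,d')$ for one and the same index $k \in K_\funcf$ with $d \sqsubseteq d'$; in particular $z$ is terminal (lies in $\biguplus_{n \le N} A_n$) if and only if $z'$ is, which is exactly the observation made above. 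We also use that a non-terminal node $z^{m}_{(x,u)} = (k^{m}_{(x,u)}, d^{m}_{(x,u)})$ has $k^{m}_{(x,u)}$ an index of an identity subfunctor, so $d^{m}_{(x,u)} \in D$ and the next step $z^{m+1}_{(x,u)} = \Eval(\eta(d^{m}_{(x,u)}), u_{m+1})$ is legitimate; throughout we treat $M_{(x,u)}$ as an element of $\omega + 1$.

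The induction statement I would use is: for every $m \le M_{(x,u)}$, we have $m \le M_{(x',u')}$ and $z^{m}_{(x,u)} \sqsubseteq z^{m}_{(x',u')}$. The base case $m = 0$ is immediate, since $z^{0}_{(x,u)} = \Eval(\eta(x), u_0) \sqsubseteq \Eval(\eta(x'), u'_0) = z^{0}_{(x',u')}$ by monotonicity of $\eta$ and $\Eval$, using $x \sqsubseteq x'$ and $u_0 \sqsubseteq u'_0$. For the step, assume the statement at $m$ and that $m < M_{(x,u)}$, so $m+1 \le M_{(x,u)}$. Then $z^{m}_{(x,u)} = (k^{m}_{(x,u)}, d^{m}_{(x,u)})$ is non-bottom and non-terminal; since $z^{m}_{(x,u)} \sqsubseteq z^{m}_{(x',u')}$, the separated-sum fact forces $z^{m}_{(x',u')} = (k^{m}_{(x,u)}, d')$ with $d' \sqsupseteq d^{m}_{(x,u)}$, so $z^{m}_{(x',u')}$ is likewise non-terminal (hence $m < M_{(x',u')}$), the evaluation paths agree at position $m$, and $d^{m}_{(x',u')} = d' \sqsupseteq d^{m}_{(x,u)}$. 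Monotonicity of $\eta$ and $\Eval$ then gives $z^{m+1}_{(x,u)} = \Eval(\eta(d^{m}_{(x,u)}), u_{m+1}) \sqsubseteq \Eval(\eta(d^{m}_{(x',u')}), u'_{m+1}) = z^{m+1}_{(x',u')}$, completing the induction.

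Reading off the conclusion: if $M_{(x,u)} = \omega$, the step shows every $z^{m}_{(x',u')}$ is non-terminal, so $M_{(x',u')} = \omega = M_{(x,u)}$; assertions (1) and (2) hold and (3) is vacuous. If $M_{(x,u)} =: M$ is finite, the induction (run from $0$ to $M$ inclusive) yields $M \le M_{(x',u')}$ and $z^{m}_{(x,u)} \sqsubseteq z^{m}_{(x',u')}$ for all $m \le M$, which is (1) and (2). For (3), suppose $M < M_{(x',u')}$; then $z^{M}_{(x',u')}$ is non-terminal, hence distinct from $\bot$ and not in $\biguplus_{n \le N} A_n$, whereas $z^{M}_{(x,u)}$, being the evaluation result, lies in $\biguplus_{n \le N} A_n$. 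Since $z^{M}_{(x,u)} \sqsubseteq z^{M}_{(x',u')}$, the observation would place them in the same part were $z^{M}_{(x,u)} \ne \bot$ --- a contradiction --- so $z^{M}_{(x,u)} = \bot$.

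There is no real obstacle: the argument is a routine monotonicity induction. The only points requiring attention are the consistent treatment of $M_{(x,u)}$ as an ordinal $\le \omega$, so that the inequality in (1) and the strict inequality in (3) make sense uniformly in the finite and infinite cases, and the repeated appeal to the separated structure of the disjoint sum to propagate both the evaluation path and the terminal/non-terminal status from $(x,u)$ to $(x',u')$ along $\sqsubseteq$.
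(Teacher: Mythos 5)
Your proof is correct and follows essentially the same route as the paper's: a monotonicity induction on $m$ giving $z^{m}_{(x,u)} \sqsubseteq z^{m}_{(x',u')}$, combined with the observation that comparable non-bottom elements of the separated sum lie in $\biguplus_{n\le N}A_n$ together or not at all, which yields both $M_{(x,u)} \leq M_{(x',u')}$ and the final assertion. Your version merely packages the two steps into a single induction statement and spells out the propagation of the index $k^{m}$ more explicitly.
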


This follows from the fact mentioned above that comparable non-terminating  elements of  $ \biguplus_{ k \in K_{ \funcf} } \hat{\funcf^{k}} ( D  )  $  are either both in $  \biguplus_{ n \leq N}  A_{n} $ or both in its complement. 
For the full proof, see appendix~\ref{app.proofs}.

This is used below to show that $\zeta$ is a monotone map.
Another consequence is that the evaluation path over $ (x,u) $ is an initial segment of the evaluation path 
over $(x',u') $ when  $ (x,u) \sqsubseteq (x',u') $.

\begin{claim}
Let $ \Delta $ be a non-empty directed subset of $ D \times U $.
For each $ m \in \omega $, let $ \Delta^{m} $ be the subset 
$ \{ (x,u) \in \Delta :  M_{(x,u) }\geq m \} $. 
If $ m \leq  M_{   \sqcup \Delta  }$ is finite, then
\begin{enumerate}[(1)]
\item $ \Delta^{m}  $ is  directed  with $ \bigsqcup  \Delta^{m} =  \bigsqcup \Delta $; and
\item $ \{ z^{m}_{ (x,u)}  : (x,u) \in \Delta^{m}  \} $ is directed with least upper bound $ z^{m}_{ \sqcup \Delta } $.
\end{enumerate}
\end{claim}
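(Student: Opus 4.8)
The plan is to prove the claim by induction on $m$. The base case $m=0$ is immediate: $\Delta^{0}=\Delta$ because $M_{(x,u)}\geq 0$ for every $(x,u)$, so $\Delta^{0}$ is directed with $\bigsqcup\Delta^{0}=\bigsqcup\Delta$; and since $z^{0}_{(x,u)}=\Eval(\eta(x),u_{0})$ is a continuous function of $(x,u)$ (a composition of $\eta$, the coordinate projection $u\mapsto u_{0}$ and $\Eval$, all continuous), the set $\{z^{0}_{(x,u)}:(x,u)\in\Delta\}$ is the monotone image of a directed set, hence directed with least upper bound $z^{0}_{\sqcup\Delta}$.

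For the inductive step, assume $m+1\leq M_{\sqcup\Delta}$ and that (1) and (2) hold for $m$. Then $z^{m}_{\sqcup\Delta}\notin\biguplus_{n\leq N}A_{n}$, hence $z^{m}_{\sqcup\Delta}\neq\bot$; write $z^{m}_{\sqcup\Delta}=(k,d^{m}_{\sqcup\Delta})$ for the summand index $k\in K_{\funcf}$ in which it lies. The first step is to locate $\Delta^{m+1}$ inside $\Delta^{m}$. For $(x,u)\in\Delta^{m}$ we have $(x,u)\sqsubseteq\bigsqcup\Delta$ and $m\leq M_{(x,u)}$, so the previous claim gives $z^{m}_{(x,u)}\sqsubseteq z^{m}_{\sqcup\Delta}$; since comparable non-terminating elements lie in the same summand, $z^{m}_{(x,u)}\notin\biguplus_{n\leq N}A_{n}$ iff $z^{m}_{(x,u)}\neq\bot$, and in that case $z^{m}_{(x,u)}$ lies in the same summand $k$, which is therefore an identity summand $\funcf^{k}(\mathcal D)=\mathcal D$, so $z^{m}_{(x,u)}=(k,d^{m}_{(x,u)})$ with $d^{m}_{(x,u)}\sqsubseteq d^{m}_{\sqcup\Delta}$. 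Hence $\Delta^{m+1}=\{(x,u)\in\Delta^{m}:z^{m}_{(x,u)}\neq\bot\}$. By part (2) of the hypothesis, $(x,u)\mapsto z^{m}_{(x,u)}$ is monotone on the directed set $\Delta^{m}$, so $\Delta^{m+1}$ is directed; and because $\bigsqcup\{z^{m}_{(x,u)}:(x,u)\in\Delta^{m}\}=z^{m}_{\sqcup\Delta}\neq\bot$, every $(x,u)\in\Delta^{m}$ lies below some $(x',u')\in\Delta^{m}$ with $z^{m}_{(x',u')}\neq\bot$, i.e.\ with $(x',u')\in\Delta^{m+1}$. Thus $\Delta^{m+1}$ is cofinal in $\Delta^{m}$, so it is directed with $\bigsqcup\Delta^{m+1}=\bigsqcup\Delta^{m}=\bigsqcup\Delta$; and since $\Delta^{0}=\Delta$, transitivity of cofinality also yields that $\Delta^{m+1}$ is cofinal in $\Delta$. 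This proves (1).

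For (2), the previous paragraph shows that for every $(x,u)\in\Delta^{m+1}$ the value $z^{m}_{(x,u)}=(k,d^{m}_{(x,u)})$ lies in the same identity summand, and that $(x,u)\mapsto d^{m}_{(x,u)}$ is monotone on $\Delta^{m+1}$. Cofinality of $\Delta^{m+1}$ in $\Delta^{m}$ gives $\bigsqcup\{d^{m}_{(x,u)}:(x,u)\in\Delta^{m+1}\}=d^{m}_{\sqcup\Delta}$, and cofinality of $\Delta^{m+1}$ in $\Delta$ together with continuity of the coordinate projections gives $\bigsqcup\{u_{m+1}:(x,u)\in\Delta^{m+1}\}=(\bigsqcup\Delta)_{m+1}$. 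Now $z^{m+1}_{(x,u)}=\Eval(\eta(d^{m}_{(x,u)}),u_{m+1})$ depends continuously on the pair $(d^{m}_{(x,u)},u_{m+1})$, so $\{z^{m+1}_{(x,u)}:(x,u)\in\Delta^{m+1}\}$ is directed and $\bigsqcup\{z^{m+1}_{(x,u)}:(x,u)\in\Delta^{m+1}\}=\Eval(\eta(d^{m}_{\sqcup\Delta}),(\bigsqcup\Delta)_{m+1})=z^{m+1}_{\sqcup\Delta}$, as required.

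The step I expect to be the main obstacle is the passage that identifies $\Delta^{m+1}$ and pins down the summand: one must be sure that, along $\Delta^{m+1}$, all the values $z^{m}_{(x,u)}$ sit in one and the same identity summand of $\biguplus_{k\in K_{\funcf}}\funcf^{k}(\mathcal D)$, so that $d^{m}_{(x,u)}$ is well-defined and continuous in $(x,u)$; this is exactly where the previous claim ($z^{m}_{(x,u)}\sqsubseteq z^{m}_{\sqcup\Delta}$) and the summand property of comparable non-terminating elements are combined. Once cofinality of $\Delta^{m+1}$ in $\Delta$ is in place, the remaining verifications are routine applications of continuity of $\eta$, $\Eval$ and the coordinate projections.
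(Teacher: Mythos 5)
Your proof is correct and follows essentially the same route as the paper: induction on $m$, with $\Delta^{m+1}$ shown to be upward-closed/cofinal in $\Delta^m$ via the monotonicity of $M_{(\cdot)}$, non-emptiness extracted from the fact that $z^{m}_{\sqcup\Delta}\notin\biguplus_{n\leq N}A_n$ (the paper phrases this via closedness of $\biguplus_{n\leq N}A_n$, you via non-bottomness plus the summand observation, which amounts to the same thing), and the identity $z^{m+1}_{\sqcup\Delta}=\bigsqcup\{z^{m+1}_{(x,u)}\}$ obtained from continuity of $\eta$, $\Eval$ and the projections. The only cosmetic difference is that the paper establishes part (1) once and for all from upward-closedness and runs the induction only for non-emptiness and part (2).
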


\noindent Both parts of the claim is proved by a simultaneous induction on $ m $, see  appendix~\ref{app.proofs}.

If $ M_{   \sqcup \Delta  } < \omega $, then 
$ \Delta^{M_{ \sqcup \Delta } } \neq \emptyset $ and
$ z^{M_{ \sqcup \Delta }  }_{ \sqcup \Delta }  
= \bigsqcup \{  z^{M_{ \sqcup \Delta }  }_{ (x,u) } : 
(x,u)  \in  \Delta^{M_{ \sqcup \Delta } }  \}  $.
Combining this with the previous claim, we observe
that the evaluation path of $ \bigsqcup \Delta $ is identical
to the evaluation path of $  (x,u) $ for all $  (x,u)  \in  
\Delta^{M_{ \sqcup \Delta } } $, i.e. $ n_{ \bigsqcup \Delta } = n_{ (x,u) }  $.

We can now show that $\zeta: D \times U \to E$ is continuous:
\begin{enumerate}[$\bullet$]
\item 
Assume that $ (x,u) \sqsubseteq (x',u') $. Then
either $ M_{ (x,u) } = M_{ (x',u' )} $
or $ z^{M}_{(x,u)}  = \bot $.
If either $ M_{(x,u)} = \omega $ or $ z^{M}_{(x,u)}  = \bot $,
then $ \zeta (x,u) = \bot $.
Otherwise, $ z^{M}_{(x,u)} \sqsubseteq  z^{M}_{(x',u')}  $
and $n_{(x,u)} = n_{ (x',u')} $,
which means that  $ \zeta (x,u)  \sqsubseteq \zeta (x',u') $.
\item
Let $\Delta \subseteq D \times U $ be non-empty, directed.
If $ M_{ \sqcup \Delta } = \omega $, then $ \zeta ( \sqcup \Delta)
= \zeta(x,u) = \bot $ for all $ (x,u) \in \Delta $.
If $ M_{ \sqcup \Delta } < \omega $, then there exists
$ (x,u ) \in \Delta $ such that $ M_{ (x,u) } = M_{ \sqcup \Delta } $
and $ z^{M_{ \sqcup \Delta }  }_{ \sqcup \Delta }  
= \bigsqcup \{  z^{M_{ \sqcup \Delta }  }_{ (x,u) } : 
(x,u)  \in  \Delta^{M_{ \sqcup \Delta } }  \}  $.
\end{enumerate}
This also means that
$ \bar{ \eta } : D \rightarrow [ U \rightarrow E ] $ is well-defined and continuous.

Recall that the dense least fixed point $\mathcal{D}$ is constructed as the inductive limit of a $\gamma$-chain of dense domain-pers
with $D$ as the underlying domain, for some ordinal $\gamma$. If $  x \in  \mathcal{D}^{R} $, we let $\rank(x):=\rank_\gamma (x)$, 
the level of the induction at which $x$ is introduced as a total element.
\begin{claim}
Let $ x \in  \mathcal{D}^{R} $ with $ \rank (x) = \alpha + 1 $. Then
$  \eta (x) \in [  \mathcal{T}  \rightarrow \biguplus_{ k \in K  } \funcf^{k} ( \mathcal{D}_{ \alpha } ) ]^{R} $.
\end{claim}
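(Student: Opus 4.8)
The plan is to reduce the claim to lemma~\ref{l_eta1} applied one stage down in the $\gamma$-chain, so that the equivariance asserted there does all the work. Recall from proposition~\ref{p_dense3} that $\mathcal{D}$ arises as the inductive limit of the $\gamma$-chain $(\{\mathcal{D}_{\beta}\}_{\beta\in\gamma},\dots)$ built from $\funcf$, that we may take all $\mathcal{D}_{\beta}$ (and their dense parts) to share the common underlying domain $D$ with the transition maps being the identity on $D$, that $\mathcal{D}_{\alpha+1}=\funcf(\mathcal{D}_{\alpha})$, and that the dense part construction never changes the set of total elements. Since $\rank(x)$ is computed with respect to this chain and the transition embeddings are inclusions, the hypothesis $\rank(x)=\alpha+1$ says exactly that $x\in\mathcal{D}_{\alpha+1}^{R}=\funcf(\mathcal{D}_{\alpha})^{R}$; being total, $x$ also lies in the underlying domain of $\funcf(\mathcal{D}_{\alpha})^{d}$, so $x\in(\funcf(\mathcal{D}_{\alpha})^{d})^{R}$.

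The second ingredient is that, by the construction in the proof of lemma~\ref{l_eta1}, the map $\eta_{\funcf}$ is defined purely at the level of underlying domains and does not refer to the partial equivalence relation carried by the domain-per it is applied to. Hence, under the identification $\mathcal{D}\cong\funcf(\mathcal{D})^{d}$, the element $\eta(x)=\eta_{\funcf}(x)$ occurring in the statement is literally the same element of the underlying function domain as the value at $x$ of the map
\[ \eta_{\funcf}:\funcf(\mathcal{D}_{\alpha})^{d}\rightarrow\Bigl[\mathcal{T}_{\funcf}\rightarrow\biguplus_{k\in K_{\funcf}}\funcf^{k}(\mathcal{D}_{\alpha})\Bigr] \]
obtained by applying lemma~\ref{l_eta1} with $\mathcal{D}_{\alpha}$ in place of $\mathcal{D}$ — a legitimate instance, since $\mathcal{D}_{\alpha}$ is convex, local and complete. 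By lemma~\ref{l_eta1} this map is equivariant, and by the first paragraph $x$ is one of its total arguments; therefore $\eta(x)$ is a total element of $[\mathcal{T}_{\funcf}\rightarrow\biguplus_{k\in K_{\funcf}}\funcf^{k}(\mathcal{D}_{\alpha})]$, which is precisely the assertion.

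The only delicate point is the bookkeeping of the first paragraph: that the incarnation of $x$ inside $\funcf(\mathcal{D})$ genuinely lands in the sub-domain-per $\funcf(\mathcal{D}_{\alpha})$, and that $\eta_{\funcf}$ evaluated at that stage agrees with $\eta$ evaluated at the top. With the normalising conventions of proposition~\ref{p_dense3} both facts are immediate, and this is the route I would take. Should one wish to avoid those conventions, the same conclusion is reached by a routine structural induction on $\funcf$ mirroring the inductive definition of $\eta_{\funcf}$: in the atomic case $\eta_{\funcf}(x)$ is the constant map with value $(0,x)$ and one checks directly that $x$ lies at stage $\alpha$ (or in a parameter); in the sum, product and function-space cases one feeds the components $x_{i}$ (respectively $x(t_{0})$) of $x$ — which lie in the corresponding stage $\funcf_{i}(\mathcal{D}_{\alpha})^{R}$ — to the induction hypothesis, and then uses lemma~\ref{l_equiv1} together with the fact that the canonical equiembedding $\biguplus_{k}\funcf^{k}(\mathcal{D}_{\alpha})\to\biguplus_{k}\funcf^{k}(\mathcal{D})$ reflects the per on total elements, in order to upgrade ``total in the top space and valued in the $\alpha$-stage at total inputs'' to ``total in the $\alpha$-stage space''. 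I expect no genuine obstacle beyond this bookkeeping.
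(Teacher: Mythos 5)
Your proposal is correct and follows essentially the same route as the paper: both arguments rest on the observations that $\eta_{\funcf}$ as a function of the underlying domain is independent of the per, and that $\rank(x)=\alpha+1$ places $x$ in $\funcf(\mathcal{D}_{\alpha})^{R}$, so that the equivariance of the level-$\alpha$ instance of $\eta_{\funcf}$ from lemma~\ref{l_eta1} yields totality of $\eta(x)$ in $[\mathcal{T}\rightarrow\biguplus_{k\in K}\funcf^{k}(\mathcal{D}_{\alpha})]$. The paper additionally records, via functoriality of $[\mathcal{T}\rightarrow\biguplus_{k}\funcf^{k}(\cdot)]$ applied to the equiembedding $f_{\alpha}$, the inclusion of the level-$\alpha$ total elements into the top-level ones, but this is context for the claim's later use rather than part of the core argument you were asked to supply.
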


This shows that the evaluation under $\eta $ of a total element will give a total element of strictly lower rank,
and hence that such an evaluation will proceed in a finite number of steps.

\begin{claim}
Let $(x,u),(x',u') \in D \times U $ and  assume that $ (x,u) \approx_{\mathcal{D}\times 
\mathcal{U}} (x',u') $.
Then $ M_{ (x,u)}  =M_{ (x',u')} < \omega $, and 
$ z^{m}_{ (x,u) } \approx z^{m}_{ (x',u') }$
for every $ m \leq M_{ (x,u)} $.
\end{claim}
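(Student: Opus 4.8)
The plan is to prove the claim by transfinite induction on $\rank(x)$. The claim is vacuous when $\funcf$ is trivial, so I assume $\funcf$ non-trivial; then $\mathcal{D}$ is the inductive limit of a $\gamma$-chain of dense domain-pers on a common underlying domain, and by proposition~\ref{p_dense3} no new total elements are added at limit stages, so $\rank(y)$ is a successor ordinal for every $y\in\mathcal{D}^R$. From $(x,u)\approx_{\mathcal{D}\times\mathcal{U}}(x',u')$ we get $x\approx_{\mathcal{D}}x'$ and $u\approx_{\mathcal{U}}u'$, hence $\rank(x)=\rank(x')=\alpha+1$ for some ordinal $\alpha$ (equivalent total elements have equal rank), and $u_m\approx_{\mathcal{T}}u'_m$ for every $m$; in particular the shifted sequences $\{u_{m+1}\}_m$ and $\{u'_{m+1}\}_m$ are again $\approx_{\mathcal{U}}$-equivalent.

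I first treat the initial evaluation step. By the previous claim, $\eta(x)$ and $\eta(x')$ both belong to $[\mathcal{T}\rightarrow\biguplus_{k\in K}\funcf^k(\mathcal{D}_\alpha)]^R$. Since $\eta$ is equivariant and $x\approx_{\mathcal{D}}x'$, evaluating at $u_0\approx_{\mathcal{T}}u'_0$ yields $z^0_{(x,u)}\approx z^0_{(x',u')}$ in $\biguplus_k\funcf^k(\mathcal{D})$, and additionally $z^0_{(x,u)},z^0_{(x',u')}\in(\biguplus_k\funcf^k(\mathcal{D}_\alpha))^R$. Here I use that the per of $\biguplus_k\funcf^k(\mathcal{D}_\alpha)$ and the per of $\biguplus_k\funcf^k(\mathcal{D})$ agree on the total elements of the former: the identity components agree on $\mathcal{D}_\alpha$-total elements by proposition~\ref{p_dense3}, and the constant components coincide. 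Now $\biguplus_{n\leq N}A_n$ is a union of summands of $\biguplus_k\funcf^k(\mathcal{D})$ and the per of a disjoint sum relates only elements lying in a common summand, so either both $z^0_{(x,u)}$ and $z^0_{(x',u')}$ lie in $\biguplus_{n\leq N}A_n$ or neither does. In the first case both evaluations halt at step $0$, so $M_{(x,u)}=M_{(x',u')}=\emptyset$, and the single required equivalence $z^0_{(x,u)}\approx z^0_{(x',u')}$ has been established.

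In the second case write $z^0_{(x,u)}=(k,d^0_{(x,u)})$ and $z^0_{(x',u')}=(k',d^0_{(x',u')})$. The summands of $\biguplus_k\funcf^k(\mathcal{D})$ outside $\biguplus_{n\leq N}A_n$ are exactly those for which $\funcf^k$ is an occurrence of the identity functor, so the equivalence of the $z^0$'s forces $k=k'$ with $\funcf^k=\id$ and $d^0_{(x,u)}\approx_{\mathcal{D}}d^0_{(x',u')}$; moreover $z^0_{(x,u)}\in(\biguplus_k\funcf^k(\mathcal{D}_\alpha))^R$ gives $d^0_{(x,u)}\in\mathcal{D}_\alpha^R$, hence $\rank(d^0_{(x,u)})\leq\alpha<\rank(x)$, and the same holds for $d^0_{(x',u')}$. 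I now apply the induction hypothesis to $d^0_{(x,u)}\approx_{\mathcal{D}}d^0_{(x',u')}$ and the shifted sequences. By construction $z^m_{(d^0_{(x,u)},\{u_{\ell+1}\}_\ell)}=z^{m+1}_{(x,u)}$ and $M_{(d^0_{(x,u)},\{u_{\ell+1}\}_\ell)}=M_{(x,u)}-1$ (with $\omega-1=\omega$), and likewise for the primed data, so the induction hypothesis delivers $M_{(x,u)}=M_{(x',u')}<\omega$ together with $z^m_{(x,u)}\approx z^m_{(x',u')}$ for $1\leq m\leq M_{(x,u)}$; combined with the step-$0$ equivalence this is exactly the assertion of the claim.

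The step I expect to be the main obstacle is the summand bookkeeping underlying the second and third paragraphs: one must use the precise correspondence between the atomic subfunctors $\funcf^k$ and the summands of $\biguplus_k\funcf^k(\mathcal{D})$ — so that elements outside $\biguplus_{n\leq N}A_n$ sit in identity summands and thereby give genuine elements of $\mathcal{D}$ of strictly smaller rank, and so that equivalence forces matching summands — and one must verify the compatibility of the $\mathcal{D}_\alpha$- and $\mathcal{D}$-versions of $\biguplus_k\funcf^k(-)$ on the relevant total elements. Once these are in place, the recursion is routine and the termination $M_{(x,u)}<\omega$ falls out of the induction rather than requiring a separate argument.
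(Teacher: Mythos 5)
Your proof is correct and rests on exactly the two ingredients the paper uses: equivariance of $\eta$ to propagate the equivalence $z^m_{(x,u)}\approx z^m_{(x',u')}$ one evaluation step at a time, and the preceding claim (that $\rank(x)=\alpha+1$ forces the evaluation into $\biguplus_k\funcf^k(\mathcal{D}_\alpha)$) to make the ranks strictly decrease and hence force termination. The only difference is packaging: the paper proves the equivalences by a plain induction on $m$ and then derives $M_{(x,u)}<\omega$ separately by contradiction from an infinite strictly decreasing sequence of ordinals, whereas you fold both into a single transfinite induction on $\rank(x)$ applied to the tail $(d^0_{(x,u)},\{u_{\ell+1}\}_\ell)$ — the content is the same.
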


An immediate consequence is that the evaluation paths of $ (x,u) $  and $ (x',u') $ are identical and finite.
Hence, $ n_{ (x,u)}  = n_{ (x',u') } $, and the respective evaluation results are equivalent in 
$  \biguplus_{ n \leq N}  \mathcal{A}_{n}  $, i.e. $ z^{M_{ (x,u) }}_{ (x,u) } \approx z^{M_{ (x,u) }}_{ (x',u') }$.
In particular, this shows that $ \zeta $ is equivariant, and therefore also that 
$  \bar{\eta } $ is equivariant.

Finally, we give a direct proof that  $  \bar{\eta } $ is equi-injective since $ \eta $ is:
Choose $ x, x' \in D $ and assume that  $  \bar{\eta } (x) \approx_{ [ \mathcal{U} \rightarrow \mathcal{E} ] }  \bar{\eta } ( x' ) $.
Let $u,u'\in U $ and assume that  $ u \approx_{ \mathcal{U} } u' $. Then $  \Eval ( \bar{ \eta} (x) , u ) \approx_{  \mathcal{E} }  
\Eval ( \bar{ \eta} (x') , u' ) $.
Firstly, this means that $ n_{ (x,u) }  = n_{ (x',u') } $, so the evaluation paths of $ (x,u) $ and $(x',u')$ are identical, 
and  the evaluation sequences are of the same length, say $M$.
Secondly, this means that $ z^{M}_{ (x,u)} \approx z^{M}_{ (x',u')} $, 
and since $ u_{ m} \approx_{\mathcal{T}}  u'_{m} $ for all $ m \leq M $ and 
$  \eta $ is equi-injective, 
we obtain  $ z^{m}_{ (x,u)} \approx z^{m}_{ (x',u')} $ 
by a backwards induction on $m$,
and ultimately $ x \approx_{\mathcal{D}} x' $.
\qed

We will now define a lower adjoint of $\bar{\eta} :  \mathcal{D} \to  \bar{\eta }[ \mathcal{D}]^{d}$,
and show that these two domain-pers are weakly isomorphic.

\begin{lem} \label{l_eta5}
Let $ \funcf : \cclcdomwp \rightarrow \cclcdomwp $ be a strictly positive functor with dense, admissible parameters,
and let $\mathcal{D}$ be a dense least fixed point of $\funcf$. 
Let  the domain-pers $  \mathcal{U}$ and $ \mathcal{E} $ and 
the equivariant and equi-injective map $  \bar{\eta } :  \mathcal{D} \rightarrow    [ \mathcal{U} \rightarrow 
\mathcal{E}  ]  $ be as defined in the proof of lemma~\ref{l_eta4}.

Then $  \mathcal{D}$ and $  \bar{\eta }[ \mathcal{D}]^{d}$ are weakly isomorphic domain-pers.
\end{lem}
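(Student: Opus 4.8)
The plan is to exhibit a weak isomorphism pair $(\bar\vartheta,\bar\eta)$ between $\mathcal D$ and $\bar\eta[\mathcal D]^{d}$, where $\bar\eta$ is (a corestriction of) the map constructed in lemma~\ref{l_eta4} and $\bar\vartheta$ is a lower adjoint of it which still has to be built. We may assume $\funcf$ is non-trivial, the trivial case being immediate. First I would check that $\bar\eta$ corestricts to an equivariant and equi-injective map $\mathcal D\rightarrow\bar\eta[\mathcal D]^{d}$. By lemma~\ref{l_image}, $\id$ from $\bar\eta[\mathcal D]$ to $[\mathcal U\rightarrow\mathcal E]$ is an equiembedding, so it is enough to see that $\bar\eta(x)$ lies in the underlying domain of $\bar\eta[\mathcal D]^{d}$ for every $x\in D$; this holds because $\mathcal D$ is dense, so every compact approximation of $x$ has a total extension, and applying the monotone map $\bar\eta$ shows that every compact approximation of $\bar\eta(x)$ lies below $\bar\eta$ of a $\mathcal D$-total element, hence below an element of $\bar\eta[\mathcal D]^{R}$. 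Equivariance and equi-injectivity of the corestriction are then immediate from lemma~\ref{l_eta4}.

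The heart of the proof is the construction of $\bar\vartheta$. Recall that $\bar\eta=\curry(\zeta)$, where $\zeta(x,u)$ is computed by iterating the one-step map $\eta$ of lemma~\ref{l_eta1} along the sequence $u$, so the lower adjoint must invert this iteration. I would define $\bar\vartheta$ on compact elements first. A compact $q\in(\bar\eta[\mathcal D]^{d})_{c}$ is, by the dense part construction, witnessed by some $x\in\mathcal D^{R}$ and can be written as a finite join $\bigsqcup_{j\in J}[p^{j};e^{j}]$ in which each $e^{j}$ is a pair consisting of a leaf value and a code for a \emph{finite} evaluation path, and each $p^{j}$ is a compact sequence over $T$ with finitely many nonbottom coordinates; hence only finitely many iteration steps are relevant to $q$. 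I would then define $\bar\vartheta(q)\in D$ by a bottom-up recursion on this depth: at the deepest level the leaf values together with the last coordinates of the $p^{j}$ are fed through $\vartheta_{\funcf}$ (lemma~\ref{l_eta2}) to produce approximations in $D$ of the penultimate evaluation results; these, together with the preceding coordinates of the $p^{j}$, are fed through $\vartheta_{\funcf}$ again, and so on up to the root. One then verifies, by induction on the depth and essentially reusing the case analysis of lemma~\ref{l_eta2} together with the consistency criterion of lemma~\ref{l_lwc3}, that $\bar\vartheta$ is well defined (independent of the witness $x$ and of the chosen presentation of $q$ as a join of step functions), monotone, and that $(\bar\vartheta,\bar\eta)$ is an adjunction pair on compacts, i.e. $\bar\vartheta(q)\sqsubseteq r\Iff q\sqsubseteq\bar\eta(r)$ for $q\in(\bar\eta[\mathcal D]^{d})_{c}$ and $r\in D_{c}$; moreover the $\prec$-transfer property of $\vartheta_{\funcf}$ propagates through the recursion to yield $q\prec_{\bar\eta[\mathcal D]^{d}}[\bar\eta(x)]\Rightarrow\bar\vartheta(q)\prec_{\mathcal D}[x]$. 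Extending by continuity gives a continuous lower adjoint $\bar\vartheta:\bar\eta[\mathcal D]^{d}\rightarrow\mathcal D$, and the $\prec$-transfer property, combined with equivariance of $\eta$ at each level, makes $\bar\vartheta$ equivariant.

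Given the adjunction pair, the weak isomorphism follows quickly. For $x\in\mathcal D^{R}$ the adjunction gives $\bar\vartheta(\bar\eta(x))\sqsubseteq x$; applying the monotone $\bar\eta$ together with the other adjunction inequality yields $\bar\eta(\bar\vartheta(\bar\eta(x)))=\bar\eta(x)$, and since $\bar\eta(x)$ is total and $\bar\eta$ is equi-injective this forces $\bar\vartheta(\bar\eta(x))\approx_{\mathcal D}x$, so $\bar\vartheta\circ\bar\eta\approx\id_{D}$ by lemma~\ref{l_equiv1}. Conversely, if $z$ is total in $\bar\eta[\mathcal D]^{d}$ then $z\approx\bar\eta(x)$ for some $x\in\mathcal D^{R}$ by definition of the image, so $\bar\vartheta(z)\approx_{\mathcal D}\bar\vartheta(\bar\eta(x))\approx_{\mathcal D}x$ by equivariance of $\bar\vartheta$ and the previous step, whence $\bar\eta(\bar\vartheta(z))\approx\bar\eta(x)\approx z$ by equivariance of $\bar\eta$; thus $\bar\eta\circ\bar\vartheta\approx\id$. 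Hence $(\bar\vartheta,\bar\eta)$ is a weak isomorphism pair and $\mathcal D$ and $\bar\eta[\mathcal D]^{d}$ are weakly isomorphic.

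The main obstacle is the second step: arranging the bottom-up recursion defining $\bar\vartheta$ so that it is visibly independent of the witness and of how $q$ is presented as a join of step functions, and pushing the adjunction and $\prec$-transfer properties of $\vartheta_{\funcf}$ through it — this is where all the bookkeeping of evaluation paths and codes is concentrated, and it needs care analogous to, but more involved than, the exponential case in the proof of lemma~\ref{l_eta2}.
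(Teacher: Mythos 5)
Your architecture coincides with the paper's: define $\bar{\vartheta}$ on compacts by a bottom-up recursion through the finite evaluation tree attached to a compact $\bigsqcup_{j\in J}[p^{j};(q^{j},n^{j})]$, pushing decorations through $\vartheta_{\funcf}$ level by level; establish that $(\bar{\vartheta},\bar{\eta})$ is an adjunction pair with the $\prec$-transfer property; and conclude the weak isomorphism. Your derivation of $\bar{\vartheta}\circ\bar{\eta}\approx\id_{D}$ from the two adjunction inequalities plus equi-injectivity is correct and is in fact a little slicker than the paper's treatment of that half.

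There is, however, a genuine gap in the other composite. You assert that ``the $\prec$-transfer property, combined with equivariance of $\eta$ at each level, makes $\bar{\vartheta}$ equivariant,'' and then you use equivariance of $\bar{\vartheta}$ to conclude $\bar{\eta}(\bar{\vartheta}(z))\approx z$. But the $\prec$-transfer property only bounds $\bar{\vartheta}(z)$ from \emph{above}: if $z\approx\bar{\eta}(x)$, then each compact approximation $q$ of $z$ gives $\bar{\vartheta}(q)\prec_{\mathcal{D}}[x]$, hence (using that $\mathcal{D}$ is local and complete) $\bar{\vartheta}(z)\sqsubseteq\bigsqcup[x]$. This is consistent with $\bar{\vartheta}(z)=\bot$, so it cannot by itself yield $\bar{\vartheta}(z)\approx x$. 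The missing ingredient is the \emph{lower} bound $z\sqsubseteq\bar{\eta}(\bar{\vartheta}(z))$ supplied by the adjunction, combined with convexity: from $z\sqsubseteq\bar{\eta}(\bar{\vartheta}(z))\sqsubseteq\bar{\eta}(\bigsqcup[x])$ and $z\approx\bar{\eta}(\bigsqcup[x])$, convexity (applied pointwise in $\mathcal{E}$, as the paper does, or via convexity of $[\mathcal{U}\to\mathcal{E}]$) forces $z\approx\bar{\eta}(\bar{\vartheta}(z))$, and only then does equi-injectivity of $\bar{\eta}$ give $\bar{\vartheta}(z)\approx x$, from which equivariance of $\bar{\vartheta}$ follows. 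In other words, the identity $\bar{\eta}\circ\bar{\vartheta}\approx\id$ on total elements must be proved first and equivariance of $\bar{\vartheta}$ deduced from it; your proposal reverses this dependency and the step as stated would not go through.
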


\proof
We will define an equivariant map $  \bar{\vartheta } :  \bar{\eta }[ \mathcal{D}]^{d}  \rightarrow   \mathcal{D}  $
such that $  \bar{\vartheta } \circ \bar{ \eta }  \approx_{[\mathcal{D}\to  \mathcal{D}]} \id_D  $ and such that
$ \bar{ \eta } \circ  \bar{\vartheta } \approx_{[ \bar{\eta }[ \mathcal{D}]^{d}\to \bar{\eta }[ \mathcal{D}]^{d}]} \id_F$,
where $F$ is the underlying domain of $ \bar{\eta }[ \mathcal{D}]^{d}$.

By means of the map $\vartheta_\funcf $ defined in  lemma~\ref{l_eta2}, we will first define  a monotone map 
$  \bar{\vartheta }: F_c \to D_c $ such that if $ q \in F_{c} $, $ x \in \mathcal{D}^{R} $ and 
$ q \prec_{ [ \mathcal{U} \rightarrow \mathcal{E}  ]} [ \bar{\eta } (x) ] $, then $   \bar{\vartheta }  (q) \prec_{ \mathcal{D}} [ x ] $.
This map extends uniquely to a continuous map $  \bar{\vartheta }: F \to D $, and we will then show that
$ ( \bar{\vartheta }  ,    \bar{ \eta }   ) $ is an adjunction pair.

The map $  \bar{\vartheta }$ is constructed as follows:
For each  $  \bigsqcup_{ j \in J } [ p^{j}; ( q^{j}, n^{j} ) ] \in F_c \setminus \{ \bot \}$, we will define an {\em evaluation tree} $T$ consisting of
finite, decreasing sequences of non-empty subsets of the index set $J$. This tree depends only on $ \{ (p^{j} , n^{j}) \}_{ j \in J } $
which is a  finite subset  of $ U_{c} \times \mathbb{N} $, since   we  w.l.o.g. assume that $( q^{j}, n^{j} )\neq \bot $ for all $ j \in J $.
In the next step, we decorate all the leaf nodes of $T$ using the finite subset $\{q^j\}_{j\in J} $  of $\biguplus_{n \leq N} A_n$.
Inductively we decorate the rest of the tree with the help of $\vartheta_{\funcf}$.
Ultimately, we decorate the empty node with an element of $D_c$, which we take as $ \bar{\vartheta } ( \bigsqcup_{ j \in J } [ p^{j}; ( q^{j}, n^{j} ) ]
)$.

Let $J$ be some finite index set. By a  finite, decreasing sequence $\varsigma $ of length  $  \vert \varsigma \vert $ over $J$,
 we will mean a non-empty finite list of non-empty sets $  \varsigma_{ \vert \varsigma \vert-1} \subseteq \dots \subseteq \varsigma_1 \subseteq 
\varsigma_0  \subseteq J$. 

For a finite subset $ \{ (p^{j} , n^{j}) \}_{ j \in J } $ of
$ U_{c} \times \mathbb{N} $, let $ T ( \{ (p^{j} , n^{j}) \}_{j \in J} )  $ be the set 
of finite, decreasing sequences $ \varsigma $ over $J $ such that
\begin{enumerate}[$\bullet$]
\item $ \{ p^{j}_m \}_{ j \in \varsigma_{m} } $ is consistent in $T_\funcf $ for every $ m <  \vert \varsigma \vert $; and
\item
there exist natural numbers $ n_{ \varsigma } $ and $ M_{ \varsigma }   $ such that
\begin{enumerate}[$-$]
\item $ n^{j} =  n_{ \varsigma }  $ for each $  j \in \varsigma_{0} $; and  
\item $ n_{ \varsigma } $ codes  a finite evaluation path of length $ M_{ \varsigma }  \geq  \vert \varsigma \vert-1  $.
\end{enumerate}
\end{enumerate}
Ordered by sequence extension, and with the empty sequence $e$ added as a root, $ T ( \{ 
(p^{j} , n^{j}) \}_{j \in J} )  $ is a tree.

The number  $ n_{ \varsigma }$ is determined by $ \varsigma_{0} $, the first entry of the sequence $ \varsigma $,
and $M_{ \varsigma } $ depends only on  $ n_{ \varsigma }$.
This means that  if  $ \varsigma \subseteq \tau $, then $ n_{ \varsigma } = n_{ \tau} $ and $M_{ \varsigma } = M_{ \tau } $.
By definition of the tree, we have $  \vert \varsigma \vert  \leq M_{  \varsigma } +1 $ for every $\varsigma$.
If $  \vert \varsigma \vert  =  M_{  \varsigma } +1 $, then no further extension is possible.
If  $  \vert \varsigma \vert  <  M_{  \varsigma } +1 $, then $ \varsigma $ has a trivial extension.
Hence,  a sequence $ \varsigma $ is  maximal if and only if $  \vert \varsigma \vert  =  M_{  \varsigma } +1 $.
Moreover, $ M_{ \varsigma }+1 $ is the upper bound on the length of an extension of  $ \varsigma $.
The tree is finite, since $J$ is finite and there is a finite number of $M_\varsigma $ to choose from.

For each $ \varsigma \in T ( \{ (p^{j} , n^{j}) \}_{j \in J} )  $, we define $ p^{ \varsigma}\in U_c $ as follows:
\begin{enumerate}[$\bullet$]
\item $ p^{ \varsigma}_{m} := \bigsqcup_{ j \in \varsigma_{m} }  p^{j}_{m}$ for every $ m <\vert \varsigma \vert $; and
\item $ p^{ \varsigma}_{m} := \bot $ for $ m \geq  \vert \varsigma \vert $.
\end{enumerate}

\begin{claim}
Let 
$  \bigsqcup_{ j \in J } [ p^{j}; ( q^{j}, n^{j} ) ] \sqsubseteq  \bigsqcup_{ k \in K } [ p^{k}; ( q^{k}, n^{k} ) ] $
(with $J \cap K = \emptyset$).

Then there exists a function
$ f: T ( \{ (p^{j} , n^{j}) \}_{j \in J} )  \rightarrow T ( \{ (p^{k} , n^{k}) \}_{k \in K} )  $ such that 
\begin{enumerate}[$\bullet$]
\item{}  $ p^{f( \varsigma)} \sqsubseteq p^{ \varsigma} $;
\item{}
$  \varsigma  \subseteq  \tau \Rightarrow  f( \varsigma )  \subseteq  f( \tau  )$;
\item{}
$  \vert f (  \varsigma ) \vert = \vert \varsigma \vert $; and
\item{} 
$ \varsigma $ is maximal in $ T ( \{ (p^{j} , n^{j}) \}_{j \in J })  $ 
$\Leftrightarrow $
$ f ( \varsigma ) $ is maximal in $ T ( \{ (p^{k} , n^{k}) \}_{k \in K} )  $.
\end{enumerate}
\end{claim}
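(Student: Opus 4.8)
The plan is to write $f$ down by an explicit formula on index sets and then verify the four bullets by elementary bookkeeping, the only delicate point being the exact content of the hypothesis. After discarding the constant-$\bot$ step functions on both sides, which affects neither the sums nor the trees, I would first unpack the inequality $\bigsqcup_{j\in J}[p^j;(q^j,n^j)]\sqsubseteq\bigsqcup_{k\in K}[p^k;(q^k,n^k)]$ in the function domain $[\mathcal{U}\to\mathcal{E}]$: evaluating both sides at $p^j$ gives, for every $j\in J$,
\[ (q^j,n^j)\ \sqsubseteq\ \bigsqcup\{(q^k,n^k):k\in K,\ p^k\sqsubseteq_U p^j\}. \]
Since the left-hand side is non-$\bot$, so is the right-hand side, and because the path-code coordinate of $\mathcal{E}=(\biguplus_{n\le N}\mathcal{A}_n)\otimes\mathcal{N}$ lives in the flat domain $\mathcal{N}$, this forces three facts: the set $S_j:=\{k\in K:p^k\sqsubseteq_U p^j\}$ is non-empty, every $k\in S_j$ has $n^k=n^j$, and $q^j\sqsubseteq\bigsqcup\{q^k:k\in S_j\}$. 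This is the only step that uses the structure of $\mathcal{E}$; everything after it is combinatorics of finite decreasing sequences of subsets.

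For a node $\varsigma=(\varsigma_0\supseteq\cdots\supseteq\varsigma_{L-1})$ of $T(\{(p^j,n^j)\}_{j\in J})$, with $L=|\varsigma|$, I would set $\tau_m:=\bigcup_{j\in\varsigma_m}S_j$ for $m<L$ and $f(\varsigma):=(\tau_0\supseteq\cdots\supseteq\tau_{L-1})$, with $f(e):=e$ on the root. Since $\varsigma_{m+1}\subseteq\varsigma_m$ and each $S_j$ is non-empty, the $\tau_m$ are non-empty and decreasing, so $|f(\varsigma)|=L=|\varsigma|$, which is the third bullet. For the first bullet, $p^{f(\varsigma)}_m=\bigsqcup_{k\in\tau_m}p^k_m$ for $m<L$, and each $p^k_m$ with $k\in\tau_m$ is below $p^j_m$ for some $j\in\varsigma_m$, hence below $p^\varsigma_m=\bigsqcup_{j\in\varsigma_m}p^j_m$; thus $p^{f(\varsigma)}_m\sqsubseteq p^\varsigma_m$ for $m<L$ while both are $\bot$ for $m\ge L$, i.e.\ $p^{f(\varsigma)}\sqsubseteq_U p^\varsigma$. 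The same bound shows $\{p^k_m\}_{k\in\tau_m}$ is consistent in $T_\funcf$, which is the first of the two conditions for $f(\varsigma)$ to be a node.

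It then remains to check the second node condition and preservation of maximality. For $j\in\varsigma_0$ one has $n^j=n_\varsigma$ by validity of $\varsigma$, and by the first step every $k\in S_j$ satisfies $n^k=n^j=n_\varsigma$; taking the union over $j\in\varsigma_0$, every $k\in\tau_0$ satisfies $n^k=n_\varsigma$, so I may put $n_{f(\varsigma)}:=n_\varsigma$ and hence $M_{f(\varsigma)}:=M_\varsigma$. Since $M_\varsigma\ge|\varsigma|-1=|f(\varsigma)|-1$, this shows $f(\varsigma)\in T(\{(p^k,n^k)\}_{k\in K})$. The monotonicity bullet is immediate because $\tau_m$ depends only on $\varsigma_m$: if $\varsigma$ is a prefix of $\tau$, then $f(\varsigma)$ is the length-$|\varsigma|$ prefix of $f(\tau)$. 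Finally, by the characterisation recorded above — a node $\varsigma'$ is maximal iff $|\varsigma'|=M_{\varsigma'}+1$ — together with $|f(\varsigma)|=|\varsigma|$ and $M_{f(\varsigma)}=M_\varsigma$, we get that $\varsigma$ is maximal iff $|\varsigma|=M_\varsigma+1$ iff $|f(\varsigma)|=M_{f(\varsigma)}+1$ iff $f(\varsigma)$ is maximal.

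I expect the genuine difficulty to be concentrated entirely in that first step — reading off from the single pointwise inequality both the non-emptiness of the sets $S_j$ and the rigidity $n^k=n^j$ on them, which is exactly where the flatness of the path-code component $\mathcal{N}$ does the work. Once the $S_j$ and the map $f$ are in place, the four bullets follow by the routine manipulations of decreasing index-set sequences sketched above.
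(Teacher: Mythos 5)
Your proof is correct and follows essentially the same route as the paper's: an explicit coordinate-wise construction of $f(\varsigma)$ from the sets of indices $k$ with $p^k$ below the $p^j$'s, with non-emptiness of these sets coming from $(q^j,n^j)\neq\bot$ and the rigidity $n^k=n^j$ coming from the flatness of the $\mathcal{N}$-component of $\mathcal{E}$. The only (immaterial) difference is that you take $f(\varsigma)_m=\bigcup_{j\in\varsigma_m}\{k\in K: p^k\sqsubseteq p^j\}$ where the paper takes the larger set $\{k\in K: p^k_n\sqsubseteq p^\varsigma_n \mbox{ for all } n\leq m\}$; both choices satisfy the four bullets.
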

We define the sequence $ f( \varsigma ) $ inductively, starting with the first entry. For the details, see  appendix~\ref{app.proofs}.
This shows that if  $ \bigsqcup_{ j \in J } [ p^{j}, ( q^{j}, n^{j} ) ]   \in  [ U \rightarrow E]_{c} $, then the
 evaluation tree $  T ( \{ (p^{j} , n^{j} ) \}_{j \in J } )  $ is uniquely defined up to isomorphism of trees.

For a given $ \bigsqcup_{ j \in J } [ p^{j}; ( q^{j}, n^{j} ) ]   \in  F_{c} $, 
we will now give a decoration $q^{ \varsigma } $ for each non-empty node $\varsigma$ 
of the evaluation tree, starting with the leaf nodes. For this purpose, we fix some $  x \in \mathcal{D}^{R} $ such that
$ \bigsqcup_{ j \in J } [ p^{j} ; ( q^{j}, n^{j} ) ]  \prec_{[\mathcal{U}\to\mathcal{E}]} [ \bar{ \eta } (x) ] $.

\begin{claim}
Let $ \varsigma $ be maximal 
with $ \vert \varsigma \vert  = M+1  $, and let $ j \in \varsigma_{ M } $.
If $ u \in \upset{ p^{ \varsigma } } \cap \mathcal{U}^{R}  $, 
then $ q^{ j } \prec_{( \biguplus_{ n \leq N}  \mathcal{A}_{n} )  } [ z^{ M }_{(x,u)} ]$, where
$z^{ M }_{(x,u)}$ is the evaluation result of $(x,u)$.
\end{claim}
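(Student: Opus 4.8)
The plan is to fix an arbitrary $ u \in \upset{p^{\varsigma}} \cap \mathcal{U}^{R} $ and to reduce everything to an auxiliary total input sequence that dominates $ p^{j} $. Since $ j \in \varsigma_{M} \subseteq \varsigma_{m} $ for all $ m \leq M = \vert\varsigma\vert-1 $, we have $ p^{j}_{m} \sqsubseteq p^{\varsigma}_{m} \sqsubseteq u_{m} $ for every $ m \leq M $. Because $ \mathcal{U} $ is dense, there is a $ v \in \mathcal{U}^{R} $ with $ v \sqsupseteq p^{j} $; define $ u' $ by $ u'_{m} := u_{m} $ for $ m \leq M $ and $ u'_{m} := v_{m} $ for $ m > M $. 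Then $ u' \in \mathcal{U}^{R} $, $ u' \sqsupseteq p^{j} $, and $ u' $ agrees with $ u $ on its first $ M+1 $ entries. The point of this construction is that the $ m $-th evaluation step reads only the $ m $-th entry of the input sequence, so the evaluations of $ (x,u) $ and $ (x,u') $ proceed in lockstep until one of them terminates.

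First I would determine the evaluation of $ (x,u') $. Since $ [ p^{j} ; ( q^{j} , n^{j} ) ] \sqsubseteq \bigsqcup_{k \in J} [ p^{k} ; ( q^{k} , n^{k} ) ] \prec_{ [ \mathcal{U} \to \mathcal{E} ] } [ \bar{\eta}(x) ] $, also $ [ p^{j} ; ( q^{j} , n^{j} ) ] \prec_{ [ \mathcal{U} \to \mathcal{E} ] } [ \bar{\eta}(x) ] $, so there is an $ h \approx_{ [ \mathcal{U} \to \mathcal{E} ] } \bar{\eta}(x) $ with $ [ p^{j} ; ( q^{j} , n^{j} ) ] \sqsubseteq h $. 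Evaluating at the total input $ u' \sqsupseteq p^{j} $ gives $ h(u') \sqsupseteq ( q^{j} , n^{j} ) $, while $ h(u') \approx_{ \mathcal{E} } \Eval( \bar{\eta}(x) , u' ) = \zeta(x,u') $, which is $ \mathcal{E} $-total (as $ \bar{\eta}(x) $ is equivariant), hence $ \neq \bot $, i.e. $ M_{(x,u')} < \omega $. Since we may assume $ ( q^{j} , n^{j} ) \neq \bot $ in the construction of $ \bar{\vartheta} $, both coordinates of $ h(u') $ are nonbottom; because $ \mathcal{E} = ( \biguplus_{ n \leq N } \mathcal{A}_{n} ) \otimes \mathcal{N} $ is a strict product and $ \mathcal{N} $ is flat, comparing coordinates in $ h(u') \approx_{ \mathcal{E} } \zeta(x,u') $ forces $ n_{(x,u')} = n^{j} = n_{\varsigma} $ and $ a \approx_{ ( \biguplus_{ n \leq N } \mathcal{A}_{n} ) } z^{ M_{(x,u')} }_{(x,u')} $, where $ a $ is the first coordinate of $ h(u') $ and $ q^{j} \sqsubseteq a $. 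Now maximality of $ \varsigma $ means $ \vert\varsigma\vert = M_{\varsigma}+1 $, so $ n_{\varsigma} $ codes an evaluation path of length exactly $ M $; by injectivity of the coding this path is the evaluation path of $ (x,u') $, hence $ M_{(x,u')} = M $. Thus $ z^{M}_{(x,u')} $ is the evaluation result of $ (x,u') $ and $ q^{j} \prec_{ ( \biguplus_{ n \leq N } \mathcal{A}_{n} ) } [ z^{M}_{(x,u')} ] $, witnessed by $ a $.

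It then remains to transfer this to $ (x,u) $. By a direct induction on $ m \leq M $, using $ u_{m} = u'_{m} $ and $ z^{m}_{(x,u')} \notin \biguplus_{ n \leq N } A_{n} $ for $ m < M_{(x,u')} = M $, one shows $ z^{m}_{(x,u)} = z^{m}_{(x,u')} $ for all $ m \leq M $: at the base $ z^{0}_{(x,u)} = \Eval( \eta(x) , u_{0} ) = \Eval( \eta(x) , u'_{0} ) = z^{0}_{(x,u')} $, and the step $ m \mapsto m+1 $ reuses $ z^{m} $ and reads $ u_{m+1} = u'_{m+1} $. In particular the evaluation of $ (x,u) $ also runs for exactly $ M $ steps, $ z^{M}_{(x,u)} = z^{M}_{(x,u')} \in \biguplus_{ n \leq N } A_{n} $ is its evaluation result, and $ q^{j} \prec_{ ( \biguplus_{ n \leq N } \mathcal{A}_{n} ) } [ z^{M}_{(x,u)} ] $ follows from the relation already obtained for $ z^{M}_{(x,u')} $.

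I expect the main obstacle to be the careful handling of the auxiliary input $ u' $ and the coordinatewise bookkeeping: ensuring $ u' $ is genuinely total in $ \mathcal{U} $ and dominates $ p^{j} $ everywhere while still agreeing with $ u $ on the first $ M+1 $ entries (this is exactly where density of $ \mathcal{U} $ is used), and checking that the strict product together with the flatness of $ \mathcal{N} $ really do pin down the code $ n_{(x,u')} $ and hence, via maximality of $ \varsigma $, the length of the evaluation. The rest is routine unwinding of the definitions of $ \zeta $, $ \bar{\eta} $ and of the evaluation sequence.
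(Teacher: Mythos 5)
Your proof is correct and follows essentially the same route as the paper's: both construct the auxiliary total input $u'$ agreeing with $u$ on the first $M+1$ entries while dominating $p^j$, deduce $(q^j,n^j)\prec_{\mathcal{E}}[\Eval(\bar{\eta}(x),u')]$ from the witnessing assumption, use the fact that $n^j=n_\varsigma$ codes a path of length $M$ (via the strict product with the flat $\mathcal{N}$) to pin down $M_{(x,u')}=M$, and finish by observing that the evaluation of $(x,u)$ agrees with that of $(x,u')$ through step $M$. You merely make explicit several steps the paper leaves implicit.
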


This shows that  $ \{ q^{j} : j \in \varsigma_{M_\varsigma} \} $ is consistent for a maximal $\varsigma$, since
$ \biguplus_{ n \leq N} \mathcal{A}_{n} $ is a local domain-per.
We let 
\[ q^{ \varsigma } := \bigsqcup_{ j \in \varsigma_{ M_\varsigma}}  q^{j} \in   \biguplus_{ n \leq N}  
A_{n} \subseteq  \  \biguplus_{k \in K_\funcf} \hat{\funcf^k} (D).\]

For a non-maximal $ \varsigma $, let $ S ( \varsigma ) $ be the set of immediate successors of $ \varsigma $
in the evaluation tree.
\begin{claim}
Let  $ \varsigma $ be non-maximal and non-empty,
and assume that
\[ \forall \tau \in S( \varsigma ) \; \forall v \in   \mathcal{U}^{R}  \; 
(  p^{ \tau } \sqsubseteq  v   \Rightarrow q^{ \tau } \prec_{ (\biguplus_{k \in K_\funcf}  \funcf^k (\mathcal{D})) } 
[ z^{\vert \tau \vert -1}_{(x,v)} ]) .\]
Then $  \{ [p_{\vert \varsigma \vert}^{ \tau } ; q^{ \tau } ] : \tau \in S ( \varsigma ) \}  $ is consistent and 
if  $ u \in \upset{ p^{ \varsigma } } \cap  \mathcal{U}^{R} $, then 
\[ \bigsqcup \{ [p_{\vert \varsigma \vert}^{ \tau } ; q^{ \tau } ] : \tau \in S ( \varsigma ) \}  
\prec_{\eta_\funcf[ \funcf (\mathcal{D})]^d}
[\eta_\funcf( d^{\vert \varsigma \vert-1}_{(x,u) })].\] 
\end{claim}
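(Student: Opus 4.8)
The plan is to deduce both assertions from a single pointwise estimate, by invoking Lemma~\ref{l_lwc3} and the remark following it in the function space $[\mathcal{T}_\funcf\rightarrow\biguplus_{k\in K_\funcf}\funcf^k(\mathcal{D})]$. Its codomain $\biguplus_{k\in K_\funcf}\funcf^k(\mathcal{D})$ is convex, local and complete: by Lemma~\ref{l_lwc4} this holds for each $\funcf^k(\mathcal{D})$, that is for $\mathcal{D}\cong\funcf(\mathcal{D})^d$ and for the positive parameters, and disjoint sums and liftings preserve these properties (as in the proof of Lemma~\ref{l_lwc4}); and $\mathcal{T}_\funcf$ is dense. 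Write $\ell:=|\varsigma|-1$, so every $\tau\in S(\varsigma)$ satisfies $\tau_m=\varsigma_m$ for $m\le\ell$, has one further nonempty entry $\tau_{\ell+1}\subseteq\varsigma_\ell$, and has $p^\tau_m=p^\varsigma_m$ for $m\le\ell$ while $p^\tau_{\ell+1}=\bigsqcup_{j\in\tau_{\ell+1}}p^j_{\ell+1}$. I would fix $u\in\upset{p^\varsigma}\cap\mathcal{U}^R$, which is nonempty since $\mathcal{U}$ is dense and the compact elements occurring in the evaluation tree lie in its dense part, and put $d:=d^{\ell}_{(x,u)}$.

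First I would check that $d$ is a well-defined, non-leaf, $\funcf(\mathcal{D})$-total element, so that $\eta_\funcf(d)\in[\mathcal{T}_\funcf\rightarrow\biguplus_{k\in K_\funcf}\funcf^k(\mathcal{D})]^R$. Since $\varsigma$ is non-maximal, its common code $n_\varsigma$ codes an evaluation path of length $M_\varsigma\ge|\varsigma|$; combining this with $\bigsqcup_{j\in J}[p^j;(q^j,n^j)]\prec_{[\mathcal{U}\rightarrow\mathcal{E}]}[\bar\eta(x)]$ and $u\sqsupseteq p^\varsigma$, the monotonicity claim in the proof of Lemma~\ref{l_eta4} forces $M_{(x,u)}\ge|\varsigma|$, so $z^{\ell}_{(x,u)}$ is not a leaf and $d=d^{\ell}_{(x,u)}$ is defined; it is $\funcf(\mathcal{D})$-total because $x$ and $u$ are total and, by the rank claim in the proof of Lemma~\ref{l_eta4}, iterated evaluation of a total element along a total input yields total elements of strictly decreasing rank. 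Consequently $\Eval(\eta_\funcf(d),t)$ is $\biguplus_k\funcf^k(\mathcal{D})$-total for every $t\in\mathcal{T}_\funcf^R$, so the estimate below is well-posed.

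The heart of the argument is the estimate: for every $t\in\mathcal{T}_\funcf^R$,
\[\bigsqcup\{\,q^\tau:\tau\in S(\varsigma),\ p^\tau_{|\varsigma|}\sqsubseteq t\,\}\ \prec_{\biguplus_{k\in K_\funcf}\funcf^k(\mathcal{D})}\ [\,\Eval(\eta_\funcf(d),t)\,].\]
To prove it, fix $t$ and set $S_t:=\{\tau\in S(\varsigma):p^\tau_{|\varsigma|}\sqsubseteq t\}$; if $S_t=\emptyset$ the left side is $\bot$ and the estimate is trivial. For $\tau\in S_t$, I would define $v^\tau\in U$ by $v^\tau_m:=u_m$ for $m\le\ell$, $v^\tau_{\ell+1}:=t$, and $v^\tau_m$ an arbitrary element of $\mathcal{T}_\funcf^R$ for $m>\ell+1$; then $v^\tau\in\mathcal{U}^R$ and $v^\tau\sqsupseteq p^\tau$. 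Since $v^\tau$ agrees with $u$ on coordinates $0,\dots,\ell$ and $d^{\ell}_{(x,\cdot)}$ depends only on those coordinates, $d^{\ell}_{(x,v^\tau)}=d$, hence
\[z^{|\tau|-1}_{(x,v^\tau)}=z^{|\varsigma|}_{(x,v^\tau)}=\Eval\bigl(\eta_\funcf(d^{\ell}_{(x,v^\tau)}),v^\tau_{\ell+1}\bigr)=\Eval(\eta_\funcf(d),t),\]
a value independent of $\tau$. The hypothesis, applied to $\tau$ and $v^\tau$, then gives $q^\tau\prec_{\biguplus_k\funcf^k(\mathcal{D})}[\Eval(\eta_\funcf(d),t)]$ for each $\tau\in S_t$. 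As $\biguplus_k\funcf^k(\mathcal{D})$ is local and complete, $w:=\bigsqcup[\Eval(\eta_\funcf(d),t)]$ exists, is $\approx$-equivalent to $\Eval(\eta_\funcf(d),t)$, and bounds every member of that partial equivalence class; therefore each $q^\tau$ with $\tau\in S_t$ lies below $w$, so $\bigsqcup\{q^\tau:\tau\in S_t\}\sqsubseteq w$ and the estimate follows.

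Finally I would read off both conclusions from the estimate. Applying the remark after Lemma~\ref{l_lwc3} with $\mathcal{B}=\mathcal{T}_\funcf$, the finite family $\{(p^\tau_{|\varsigma|},q^\tau)\}_{\tau\in S(\varsigma)}$ and $f=\eta_\funcf(d)$ shows that $\{[p^\tau_{|\varsigma|};q^\tau]:\tau\in S(\varsigma)\}$ is consistent; applying Lemma~\ref{l_lwc3} itself (with $\biguplus_k\funcf^k(\mathcal{D})$ weakly convex and strongly local, since it is convex, local and complete) gives $\bigsqcup\{[p^\tau_{|\varsigma|};q^\tau]:\tau\in S(\varsigma)\}\prec_{[\mathcal{T}_\funcf\rightarrow\biguplus_k\funcf^k(\mathcal{D})]}[\eta_\funcf(d)]$, and since the witnessing bound $\eta_\funcf(d)\sqcup\bigsqcup\{[p^\tau_{|\varsigma|};q^\tau]\}$ is total it lies in the dense part $\eta_\funcf[\funcf(\mathcal{D})]^d$, so in fact $\bigsqcup\{[p^\tau_{|\varsigma|};q^\tau]:\tau\in S(\varsigma)\}\prec_{\eta_\funcf[\funcf(\mathcal{D})]^d}[\eta_\funcf(d^{|\varsigma|-1}_{(x,u)})]$, as claimed. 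The main obstacle is the construction of the auxiliary inputs $v^\tau$ in the third paragraph: they must simultaneously extend $p^\tau$, agree with the prescribed $u$ precisely on the prefix of coordinates that determines $d^{|\varsigma|-1}$, and feed $t$ into the next evaluation step, so that all successor decorations become approximations of the single evaluation result $\Eval(\eta_\funcf(d),t)$, which is what lets locality and completeness combine them; the coordinate bookkeeping and the use of non-maximality to guarantee that the evaluation at $(x,u)$ has not yet halted at step $|\varsigma|-1$ are the accompanying technical points.
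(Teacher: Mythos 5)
Your proof is correct and follows essentially the same route as the paper's: for each total $t$ you substitute $t$ into coordinate $\vert\varsigma\vert$ of $u$ so that every successor decoration $q^{\tau}$ becomes an approximation of the single value $\Eval(\eta_\funcf(d^{\vert\varsigma\vert-1}_{(x,u)}),t)$, and then conclude via Lemma~\ref{l_lwc3} and the remark following it. The additional checks you supply (that $d^{\vert\varsigma\vert-1}_{(x,u)}$ is defined and total, and the passage from $\prec$ in the function space to $\prec$ in the dense image) are left implicit in the paper but are consistent with it.
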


This shows that we can apply $\vartheta_\funcf$ on $ \bigsqcup_{ 
\tau \in S ( \varsigma )}  [p^{ \tau }_{ \vert \varsigma \vert} ; q^{ \tau } ] $
if $ \varsigma $ is non-maximal and non-empty and $ q^\tau $ is well-defined for all $ \tau \in S ( \varsigma ) $.
With the additional condition that  $ \varsigma $ is non-empty, let
\[ q^{ \varsigma } := (k^{ \vert \varsigma \vert -1}_{\varsigma } , \vartheta_\funcf ( \bigsqcup_{ 
\tau \in S ( \varsigma )}  [p^{ \tau }_{ \vert \varsigma \vert} ; q^{ \tau } ] ) ) \in \biguplus_{k \in K_\funcf} \hat{\funcf^k} (D),\]
where  $ \{ k^{m}_{ \varsigma } \}_{ m <  M_{ \varsigma} } $ is the evaluation path coded by $ n_{ \varsigma } $.

We now have a decoration $ q^{ \varsigma } \in  \biguplus_{k \in K_\funcf} \hat{\funcf^k} (D) $ for all 
non-empty $\varsigma$, with the additional property that 
$q^{ \varsigma }  \prec_{ (\biguplus_{k \in K_\funcf}  \funcf^k (\mathcal{D})) } [ z^{\vert \varsigma \vert-1}_{(x,u)} ] $ 
for all $ u \in  \upset{ p^{ \varsigma } } \cap \mathcal{U}^{R}  $.
Inductively, we see that 
\[ \bigsqcup \{ [p_{0}^{ \varsigma  } ; q^{ \varsigma } ] : \varsigma \in S ( e ) \} \prec_{[\mathcal{T}_\funcf \to 
\biguplus_{k \in K_\funcf} \funcf^k ( \mathcal{D} )] } [\eta_\funcf (x)] .\] 
We can  now define:
\[ \bar{ \vartheta } (  \bigsqcup_{ j \in J } [ p^{j}, ( q^{j}, n^{j} ) ]   )
:= \vartheta_\funcf ( \bigsqcup \{ [p_{0}^{ \varsigma  } ; q^{ \varsigma } ] : \varsigma \in S ( e ) \}) \in D_c,\]
where $ e$ is the empty sequence. By lemma~\ref{l_eta2}, we even have 
$  \bar{ \vartheta } (  \bigsqcup_{ j \in J } [ p^{j}, ( q^{j}, n^{j} ) ] ) \prec_{\mathcal{D}} [x] $.

We take $ \bar{ \vartheta } ( \bot_F) := \bot_D$.
We show that $ \bar{ \vartheta } :  F_c \rightarrow D_{c} $ is a well-defined and monotone map by a leaf-to-root
induction on the evaluation tree.

\begin{claim}
Let $  \bigsqcup_{ j \in J } [ p^{j}; ( q^{j}, n^{j} ) ] \sqsubseteq  
\bigsqcup_{ k \in K } [ p^{k}; ( q^{k}, n^{k} ) ] \in  F_{c} $, and 
let $ f: T ( \{ (p^{j} , n^{j}) \}_{j \in J} )  \rightarrow T ( \{ (p^{k} , n^{k}) \}_{k \in K} )  $
be as in the claim above.
If $ \varsigma \in  T ( \{ (p^{j} , n^{j}) \}_{j \in J} ) $ is non-empty, then
$q^\varsigma \sqsubseteq q^{f(\varsigma)} $.
\end{claim}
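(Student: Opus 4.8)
The plan is to prove the claim by a leaf-to-root induction on the tree $T(\{(p^j,n^j)\}_{j\in J})$, using the map $f$ provided by the previous claim to line up nodes of the smaller tree with nodes of the larger one. First I would fix notation: since $\bigsqcup_{j\in J}[p^j;(q^j,n^j)]\sqsubseteq\bigsqcup_{k\in K}[p^k;(q^k,n^k)]$, for every $j\in J$ we have $q^j\sqsubseteq\bigsqcup_{k\in K}\{q^k:p^k\sqsubseteq p^j\}$ and $n^j=n^k$ for all $k\in K$ with $p^k\sqsubseteq p^j$ (because the codes $n^k$ must agree wherever the step functions overlap, as $\bar\eta(x)$ is a single function). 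From the properties of $f$ listed in the previous claim — $p^{f(\varsigma)}\sqsubseteq p^\varsigma$, $\varsigma\subseteq\tau\Rightarrow f(\varsigma)\subseteq f(\tau)$, $|f(\varsigma)|=|\varsigma|$, and $\varsigma$ maximal iff $f(\varsigma)$ maximal — I also record that $f$ sends immediate successors of $\varsigma$ to immediate successors of $f(\varsigma)$, and that $n_\varsigma=n_{f(\varsigma)}$, hence $M_\varsigma=M_{f(\varsigma)}$.

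Next, the induction. For the \emph{base case}, let $\varsigma$ be a maximal node, so $|\varsigma|=M_\varsigma+1$ and $f(\varsigma)$ is maximal with the same length. By definition, $q^\varsigma=\bigsqcup_{j\in\varsigma_{M_\varsigma}}q^j$ and $q^{f(\varsigma)}=\bigsqcup_{k\in f(\varsigma)_{M_\varsigma}}q^k$; I would check that $f(\varsigma)_{M_\varsigma}\supseteq\{k\in K: p^k\sqsubseteq p^j \text{ for some } j\in\varsigma_{M_\varsigma}\}$ from the construction of $f$, so that for each $j\in\varsigma_{M_\varsigma}$ we get $q^j\sqsubseteq\bigsqcup\{q^k:p^k\sqsubseteq p^j\}\sqsubseteq q^{f(\varsigma)}$, whence $q^\varsigma\sqsubseteq q^{f(\varsigma)}$. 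For the \emph{inductive step}, let $\varsigma$ be non-maximal and non-empty and assume $q^\tau\sqsubseteq q^{f(\tau)}$ for every $\tau\in S(\varsigma)$. By definition, $q^\varsigma=(k^{|\varsigma|-1}_\varsigma,\vartheta_\funcf(\bigsqcup_{\tau\in S(\varsigma)}[p^\tau_{|\varsigma|};q^\tau]))$ and similarly for $q^{f(\varsigma)}$, with the same first component since $n_\varsigma=n_{f(\varsigma)}$. Because $f$ restricts to a map $S(\varsigma)\to S(f(\varsigma))$ with $p^{f(\tau)}\sqsubseteq p^\tau$ and $|f(\tau)|=|\tau|$, and using the induction hypothesis $q^\tau\sqsubseteq q^{f(\tau)}$, I would argue that $\bigsqcup_{\tau\in S(\varsigma)}[p^\tau_{|\varsigma|};q^\tau]\sqsubseteq\bigsqcup_{\sigma\in S(f(\varsigma))}[p^\sigma_{|\varsigma|};q^\sigma]$ as step functions in $[T_{\funcf}\to\biguplus_{k\in K_\funcf}\hat{\funcf^k}(D)]$; monotonicity of $\vartheta_\funcf$ (lemma~\ref{l_eta2}) then yields $q^\varsigma\sqsubseteq q^{f(\varsigma)}$.

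The main obstacle I expect is the step-function comparison inside the inductive step: showing $\bigsqcup_{\tau\in S(\varsigma)}[p^\tau_{|\varsigma|};q^\tau]\sqsubseteq\bigsqcup_{\sigma\in S(f(\varsigma))}[p^\sigma_{|\varsigma|};q^\sigma]$. This is not immediate because the index sets $S(\varsigma)$ and $S(f(\varsigma))$ need not be in bijection; one must check that for each $\tau\in S(\varsigma)$ and each argument $t\in T_\funcf$ with $p^\tau_{|\varsigma|}\sqsubseteq t$, the value $q^\tau$ is dominated by $\bigsqcup\{q^\sigma:\sigma\in S(f(\varsigma)),\ p^\sigma_{|\varsigma|}\sqsubseteq t\}$. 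The clean way is to use $q^\tau\sqsubseteq q^{f(\tau)}$ together with $p^{f(\tau)}_{|\varsigma|}\sqsubseteq p^\tau_{|\varsigma|}\sqsubseteq t$ and $f(\tau)\in S(f(\varsigma))$, so $q^{f(\tau)}$ appears in that least upper bound. A minor technical point is the bookkeeping that the decorations $q^\varsigma$ and $q^{f(\varsigma)}$ lie in the same summand $\hat{\funcf^k}(D)$ with $k=k^{|\varsigma|-1}_\varsigma=k^{|f(\varsigma)|-1}_{f(\varsigma)}$, which follows from $n_\varsigma=n_{f(\varsigma)}$; without this the comparison $q^\varsigma\sqsubseteq q^{f(\varsigma)}$ in $\biguplus_{k\in K_\funcf}\hat{\funcf^k}(D)$ would be vacuous or false. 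Once all nodes are handled, applying the bound at the empty node's successors and monotonicity of $\vartheta_\funcf$ one last time gives $\bar\vartheta(\bigsqcup_{j\in J}[p^j;(q^j,n^j)])\sqsubseteq\bar\vartheta(\bigsqcup_{k\in K}[p^k;(q^k,n^k)])$, which is exactly what is needed to conclude that $\bar\vartheta$ is monotone.
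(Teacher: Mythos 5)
Your proposal is correct and follows essentially the same route as the paper: a leaf-to-root induction on the evaluation tree, with the base case handled by the chain $q^j\sqsubseteq\bigsqcup_{k\in K}\{q^k:p^k\sqsubseteq p^j\}\sqsubseteq\bigsqcup_{k\in f(\varsigma)_M}q^k$ and the inductive step by comparing the step functions over $S(\varsigma)$ and $S(f(\varsigma))$ via $f(\tau)\in S(f(\varsigma))$, $p^{f(\tau)}\sqsubseteq p^\tau$ and monotonicity of $\vartheta_\funcf$, together with $n_\varsigma=n_{f(\varsigma)}$ to match the first components. The "main obstacle" you isolate is resolved exactly as in the paper's proof.
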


If $  \bigsqcup_{ j \in J } [ p^{j}; ( q^{j}, n^{j} ) ] \sqsubseteq  
\bigsqcup_{ k \in K } [ p^{k}; ( q^{k}, n^{k} ) ] $, then for each sequence 
 $ \sigma \in T ( \{ (p^{j} , n^{j}) \}_{j \in J} ) $  of length $ 1$, there 
is a sequence $ f( \varsigma )  \in T ( \{ (p^{k} , n^{k}) \}_{k \in K} ) $ of length $1$, with 
$ p^{ f( \varsigma ) } \sqsubseteq p^{ \varsigma } $ and 
$ q^{ \varsigma } \sqsubseteq q^{ f( \varsigma ) } $. This shows that
\[ \bar{ \vartheta } (  \bigsqcup_{ j \in J } [ p^{j}; ( q^{j}, n^{j} ) ]  ) \sqsubseteq 
\bar{ \vartheta } (  \bigsqcup_{ k \in K } [ p^{k}; ( q^{k}, n^{k} ) ]  ) .\]
We have a unique continuous extension $ \bar{\vartheta} : F \to D $. 
We will now show that $ (\bar{\vartheta},\bar{\eta} ) $ is an adjunction pair.

\begin{claim}
Let $ \bigsqcup_{ j \in J } [ p^{j}; ( q^{j}, n^{j} ) ]  \in F_c$ and let $ r \in D_c$.
Then $ \bigsqcup_{ j \in J } [ p^{j}; ( q^{j}, n^{j} ) ]  \sqsubseteq \bar{ \eta } (r)  $ if and only if
$ q^{\varsigma } \sqsubseteq z^{ \vert \varsigma \vert-1}_{ (r, p^{ \varsigma } )}  $ for every non-empty $ \varsigma \in T $.
\end{claim}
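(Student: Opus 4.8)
The plan is to unwind both sides of the equivalence and connect them through the one-step adjunction $(\vartheta_\funcf,\eta_\funcf)$ of lemma~\ref{l_eta2}. Since $\bar{\eta}=\curry(\zeta)$ and $\zeta(r,-)$ is continuous, a step function $[p^j;(q^j,n^j)]$ lies below $\bar{\eta}(r)$ precisely when $(q^j,n^j)\sqsubseteq\zeta(r,p^j)$, so the left-hand side is equivalent to the assertion that $(q^j,n^j)\sqsubseteq\zeta(r,p^j)$ for every $j\in J$. We may assume $(q^j,n^j)\neq\bot$, and then $n^j$ is a genuine code of an evaluation path $\sigma^j$ of some length $L^j$ (this uses $\bigsqcup_{j\in J}[p^j;(q^j,n^j)]\in F_c$ and the density of $\mathcal{U}$); unwinding $\zeta$, the condition $(q^j,n^j)\sqsubseteq\zeta(r,p^j)$ then says that the evaluation of $(r,p^j)$ follows $\sigma^j$, terminates after exactly $L^j$ steps, and has $q^j\sqsubseteq z^{L^j}_{(r,p^j)}$. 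Before the induction I would record two routine facts about the evaluation: first, $z^m_{(r,u)}$ and the first $m$ entries of the evaluation path of $(r,u)$ depend monotonically on $r$ and on $u_0,\dots,u_m$ alone, and $z^{m+1}_{(r,u)}=\Eval(\eta_\funcf(d^m_{(r,u)}),u_{m+1})$ whenever $m<M_{(r,u)}$; second, if $\tau$ extends $\varsigma$ in $T$ then $p^\tau$ and $p^\varsigma$ agree in coordinates $0,\dots,|\varsigma|-1$, so once $z^{|\varsigma|-1}_{(r,p^\varsigma)}$ is non-terminal (i.e.\ not yet in $\biguplus_{n\leq N}A_n$) we have $z^{|\tau|-1}_{(r,p^\tau)}=\Eval(\eta_\funcf(d^{|\varsigma|-1}_{(r,p^\varsigma)}),p^\tau_{|\varsigma|})$. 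The only care needed is to truncate a compact sequence $p^j$ to its first $L^j+1$ coordinates before comparing it with $p^\varsigma$, since $p^j$ may carry irrelevant data in later coordinates.

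For the implication from left to right I would prove by leaf-to-root induction on $T$ that $q^\varsigma\sqsubseteq z^{|\varsigma|-1}_{(r,p^\varsigma)}$ for every non-empty $\varsigma$. Fix such a $\varsigma$ and choose $j$ in its last (hence smallest) entry, so $p^j_m\sqsubseteq p^\varsigma_m$ for $m<|\varsigma|$ and $n^j=n_\varsigma$; the hypothesis and the two facts above then force the evaluation of $(r,p^\varsigma)$ to follow the path coded by $n_\varsigma$ at least through step $|\varsigma|-1$. If $\varsigma$ is maximal, so $|\varsigma|-1=M_\varsigma$, then $z^{|\varsigma|-1}_{(r,p^\varsigma)}$ is the (terminal) evaluation result and dominates each $q^i$ with $i\in\varsigma_{|\varsigma|-1}$, hence $q^\varsigma=\bigsqcup_{i\in\varsigma_{|\varsigma|-1}}q^i\sqsubseteq z^{|\varsigma|-1}_{(r,p^\varsigma)}$. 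If $\varsigma$ is non-maximal, then $z^{|\varsigma|-1}_{(r,p^\varsigma)}$ is non-terminal with tag $k^{|\varsigma|-1}_\varsigma$; the induction hypothesis and the second fact give $q^\tau\sqsubseteq z^{|\tau|-1}_{(r,p^\tau)}=\Eval(\eta_\funcf(d^{|\varsigma|-1}_{(r,p^\varsigma)}),p^\tau_{|\varsigma|})$ for all $\tau\in S(\varsigma)$, so $\bigsqcup_{\tau\in S(\varsigma)}[p^\tau_{|\varsigma|};q^\tau]\sqsubseteq\eta_\funcf(d^{|\varsigma|-1}_{(r,p^\varsigma)})$, and the adjunction of lemma~\ref{l_eta2} yields $\vartheta_\funcf(\bigsqcup_{\tau\in S(\varsigma)}[p^\tau_{|\varsigma|};q^\tau])\sqsubseteq d^{|\varsigma|-1}_{(r,p^\varsigma)}$; together with the tag equality this is exactly $q^\varsigma\sqsubseteq z^{|\varsigma|-1}_{(r,p^\varsigma)}$.

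For the converse, assume $q^\varsigma\sqsubseteq z^{|\varsigma|-1}_{(r,p^\varsigma)}$ for all non-empty $\varsigma\in T$ and fix $j\in J$, writing $\sigma$ for the path coded by $n^j$, of length $L$. For $0\leq l\leq L$ let $\varsigma^{(l)}$ be the sequence of length $l+1$ all of whose entries equal $\{j\}$; these all lie in $T$, with $q^{\varsigma^{(l)}}$ carrying tag $\sigma(l)$ for $l<L$ and $q^{\varsigma^{(L)}}=q^j$. Feeding the hypothesis into a short induction on $l$, and using that each $q^{\varsigma^{(l)}}$ with $l<L$ lies in a $\mathcal{D}$-part of $\biguplus_{k\in K_\funcf}\funcf^k(\mathcal{D})$ (so forces the evaluation to continue), one obtains that the evaluation of $(r,p^j)$ follows $\sigma$ through step $L$, stays non-terminal before step $L$, and terminates at step $L$ with $z^L_{(r,p^j)}\sqsupseteq q^j$; hence $n_{(r,p^j)}=n^j$ and $\zeta(r,p^j)\sqsupseteq(q^j,n^j)$. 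Since $j$ was arbitrary, $\bigsqcup_{j\in J}[p^j;(q^j,n^j)]\sqsubseteq\bar{\eta}(r)$. The main obstacle is not a single deep idea but keeping four strands of data aligned throughout---a tree node $\varsigma$, the thread attached to an index $j$, the actual evaluation of $(r,p^\varsigma)$, and the one-step adjunction---while correctly handling the cases where an evaluation result is $\bot$ and the strict product in $\mathcal{E}$ collapses it.
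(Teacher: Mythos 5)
Your proof is correct and follows essentially the same route as the paper's: the forward direction is the same leaf-to-root induction on the evaluation tree, treating maximal nodes directly from the left-hand side and non-maximal nodes via the one-step adjunction $(\vartheta_\funcf,\eta_\funcf)$, and the converse recovers $(q^j,n^j)\sqsubseteq\zeta(r,p^j)$ from the decorations along the all-$\{j\}$ branch. If anything, your converse is slightly more explicit than the paper's (which invokes only the maximal nodes and asserts $n_{(r,p^j)}=n_\varsigma$ without spelling out how the non-maximal decorations pin down the evaluation path), but the underlying argument is the same.
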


The claim shows that  $ \bigsqcup_{ j \in J } [ p^{j}; ( q^{j}, n^{j} ) ]  \sqsubseteq \bar{ \eta } (r)  $ if and only if
$ q^\varsigma \sqsubseteq z^0_{(r,p_0^\varsigma)} = \Eval (\eta_\funcf (r),p_0^\varsigma) $ for each $ \varsigma \in S(e)$.
This is again equivalent to 
\[ \bigsqcup_{\varsigma \in S(e)} [p_0^\varsigma; q^\varsigma ] \sqsubseteq \eta_\funcf (r) , \]
and furthermore to 
$ \vartheta_\funcf ( \bigsqcup_{\varsigma \in S(e)} [p_0^\varsigma; q^\varsigma ] ) \sqsubseteq r $
since $ ( \vartheta_\funcf, \eta_\funcf ) $ is an adjunction pair.
This shows that $ (\bar{\vartheta},\bar{\eta} ) $ is an adjunction pair.

The monotone map $ \bar{ \vartheta } $ extends uniquely to a continuous map on $ \bar{ \eta }[ D]^{d} $, and
$ ( \bar{ \vartheta } , \bar{ \eta } ) $ is an adjunction pair with the required $ \prec $-property.

\begin{claim} 
Let $ y \in \bar{\eta}[ \mathcal{D}]^R $. Then  $  \bar{ \eta } (  \bar{ \vartheta } (y)) \approx_{ [ \mathcal{U} \rightarrow \mathcal{E} ]} y $.
\end{claim}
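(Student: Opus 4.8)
The plan is to exploit the fact that $(\bar{\vartheta},\bar{\eta})$ is an adjunction pair together with the $\prec$-property established just above the claim: for $q\in F_c$, $x\in\mathcal{D}^R$ and $q\prec_{[\mathcal{U}\to\mathcal{E}]}[\bar{\eta}(x)]$ we have $\bar{\vartheta}(q)\prec_{\mathcal{D}}[x]$. So fix $y\in\bar{\eta}[\mathcal{D}]^R$. By the definition of the per on $\bar{\eta}[\mathcal{D}]^d$ (definition~\ref{d_image} restricted to the dense part), there is some $x\in\mathcal{D}^R$ with $y\approx_{[\mathcal{U}\to\mathcal{E}]}\bar{\eta}(x)$, and every compact approximation $q$ of $y$ satisfies $q\prec_{\bar{\eta}[\mathcal{D}]^d}[\bar{\eta}(x)]$, hence $q\prec_{[\mathcal{U}\to\mathcal{E}]}[\bar{\eta}(x)]$. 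First I would show that $\bar{\vartheta}(y)\approx_{\mathcal{D}}x$: for each $q\in\textrm{approx}(y)$ we get $\bar{\vartheta}(q)\prec_{\mathcal{D}}[x]$, so $\bar{\vartheta}(q)$ extends to some total element in $[x]_{\mathcal{D}}$; using strong locality (which holds since $\mathcal{D}$ is local and complete, hence strongly local) and completeness of $\mathcal{D}$, the directed lub $\bar{\vartheta}(y)=\bigsqcup_{q\in\textrm{approx}(y)}\bar{\vartheta}(q)$ lies below $\bigsqcup[x]_{\mathcal{D}}\approx_{\mathcal{D}}x$, while $\bar{\vartheta}(\bar{\eta}(x))=x$ would already give $x\sqsubseteq\bar{\vartheta}(y)$ via the adjunction; together with convexity this yields $\bar{\vartheta}(y)\approx_{\mathcal{D}}x$. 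In particular $\bar{\vartheta}(y)\in\mathcal{D}^R$.

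Next, I would apply $\bar{\eta}$, which is equivariant by lemma~\ref{l_eta4}, to conclude $\bar{\eta}(\bar{\vartheta}(y))\approx_{[\mathcal{U}\to\mathcal{E}]}\bar{\eta}(x)\approx_{[\mathcal{U}\to\mathcal{E}]}y$, so $\bar{\eta}(\bar{\vartheta}(y))\approx_{[\mathcal{U}\to\mathcal{E}]}y$ as required. The only subtlety is that the claim asserts the equivalence in $[\mathcal{U}\to\mathcal{E}]$, which is exactly the ambient per of $\bar{\eta}[\mathcal{D}]^d$ by lemma~\ref{l_image}, so no further bookkeeping between the two pers is needed; I would invoke lemma~\ref{l_image} explicitly to justify that $x\approx_{\bar{\eta}[\mathcal{D}]^d}y$ implies $x\approx_{[\mathcal{U}\to\mathcal{E}]}y$.

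The step I expect to be the main obstacle is establishing $\bar{\vartheta}(y)\approx_{\mathcal{D}}x$ rather than merely $\bar{\vartheta}(y)\sqsubseteq$ some element of $[x]_{\mathcal{D}}$ or $\bar{\vartheta}(y)\in\mathcal{D}^R$. The adjunction gives $x\sqsubseteq\bar{\vartheta}(\bar{\eta}(x))$ and hence, by monotonicity and $y\approx\bar{\eta}(x)$ forcing $\bar{\eta}(x)$ and $y$ to have a common total upper bound argument, one direction of the comparison; for the other direction one must use the $\prec$-property to bound $\bar{\vartheta}(y)$ inside $\bigsqcup[x]_{\mathcal{D}}$ and then appeal to completeness to identify $\bigsqcup[x]_{\mathcal{D}}\approx_{\mathcal{D}}x$ and to convexity to pull $\bar{\vartheta}(y)$ itself into $[x]_{\mathcal{D}}$. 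Care is also needed because $\bar{\eta}(x)$ need not be $\sqsubseteq y$ — only $\approx$-related — so the argument must be routed through a common total element (e.g. $\bar{\eta}(x)\sqcup y$ where it exists, which it does by locality of $[\mathcal{U}\to\mathcal{E}]$ once one knows $\mathcal{E}$, and hence the exponential, is local) before applying the projection $\bar{\vartheta}$ and completeness.
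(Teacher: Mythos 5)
Your overall architecture is inverted relative to what the adjunction actually provides, and this creates a genuine gap at the central step. You propose to first prove $\bar{\vartheta}(y)\approx_{\mathcal{D}}x$ and then push this through the equivariance of $\bar{\eta}$. The upper bound half is fine and matches the paper: each $\bar{\vartheta}(q)$ for $q\in\textrm{approx}(y)$ satisfies $\bar{\vartheta}(q)\prec_{\mathcal{D}}[x]$, so $\bar{\vartheta}(y)\sqsubseteq\bigsqcup[x]_{\mathcal{D}}$. But the lower bound half does not go through. The adjunction (with $\bar{\vartheta}$ the \emph{lower} adjoint) gives only $\bar{\vartheta}(\bar{\eta}(x))\sqsubseteq x$, not the equality you assert; and even with equality, deducing $x\sqsubseteq\bar{\vartheta}(y)$ would require $\bar{\eta}(x)\sqsubseteq y$, which you correctly note fails in general. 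Your repair via the common upper bound $w=\bar{\eta}(x)\sqcup y$ does not help: applying the monotone map $\bar{\vartheta}$ to $y\sqsubseteq w$ and $\bar{\eta}(x)\sqsubseteq w$ only yields further \emph{upper} bounds ($\bar{\vartheta}(y)\sqsubseteq\bar{\vartheta}(w)$, etc.), never a total element of $[x]_{\mathcal{D}}$ below $\bar{\vartheta}(y)$. Transporting the unit $y\sqsubseteq\bar{\eta}(\bar{\vartheta}(y))$ to the $D$-side is also vacuous, since $\bar{\vartheta}\circ\bar{\eta}\circ\bar{\vartheta}=\bar{\vartheta}$. Finally, convexity cannot ``pull $\bar{\vartheta}(y)$ into $[x]_{\mathcal{D}}$'' from an upper bound alone: convexity places an element $z$ into the class only when it is sandwiched between two \emph{equivalent} elements, so without a total lower bound you cannot rule out, say, $\bar{\vartheta}(y)=\bot$ as far as this argument is concerned.

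The paper avoids this entirely by keeping the convexity argument on the $[\mathcal{U}\to\mathcal{E}]$ side, where the adjunction unit supplies the missing lower bound for free. After normalising $x=\bigsqcup[x]_{\mathcal{D}}$ (using locality and completeness of $\mathcal{D}$) so that $\bar{\vartheta}(y)\sqsubseteq x$, one evaluates at an arbitrary $u\in\mathcal{U}^{R}$ and obtains the sandwich
\[ y(u)\sqsubseteq\Eval(\bar{\eta}(\bar{\vartheta}(y)),u)\sqsubseteq\Eval(\bar{\eta}(x),u), \]
with the left inequality coming from $y\sqsubseteq\bar{\eta}(\bar{\vartheta}(y))$ and the right from monotonicity of $\bar{\eta}$. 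Since $y(u)\approx_{\mathcal{E}}\Eval(\bar{\eta}(x),u)$, convexity of $\mathcal{E}$ gives $\Eval(\bar{\eta}(\bar{\vartheta}(y)),u)\approx_{\mathcal{E}}y(u)$, and lemma~\ref{l_equiv1} finishes the claim. The statement you wanted to prove first, $\bar{\vartheta}(y)\approx_{\mathcal{D}}x$, is then obtained as a \emph{consequence} of the claim together with the equi-injectivity of $\bar{\eta}$ --- it is not available as an intermediate step.
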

In this proof, we use the fact that $\mathcal{D} $ is local and complete. See appendix~\ref{app.proofs} for details.
A direct consequence is that if $ \bar{\eta}(x)  \approx_{ [ \mathcal{U} \rightarrow \mathcal{E} ] } y $, then
$ \bar{\eta}(x)  \approx_{ [ \mathcal{U} \rightarrow \mathcal{E} ] }  \bar{ \eta } (  \bar{ \vartheta } (y)) $
which implies $ x \approx_{ \mathcal{D}} \bar{ \vartheta} (y) $, since $\bar{\eta} $ is equi-injective. 
In particular, if $ x \in \mathcal{D}^R $, then $ x \approx_{ \mathcal{D}} \bar{ \vartheta} ( \bar{\eta}(x)) $. 
This also shows that
 $  \bar{ \vartheta }:  \bar{\eta}[ \mathcal{D}]^d \to \mathcal{D} $ is equivariant:
If $ y \approx_{  \bar{\eta}[ \mathcal{D}]} y' $, then there is some
$ x \in \mathcal{D}^R $ such that $ y \approx_{ [ \mathcal{U} \rightarrow \mathcal{E} ]} 
\bar{ \eta } ( x) \approx_{ [ \mathcal{U} \rightarrow \mathcal{E} ] } y' $
and therefore $ \bar{ \vartheta} (y)  \approx_{ \mathcal{D}} 
x \approx_{ \mathcal{D}} \bar{ \vartheta} (y') $.
Hence,   $  ( \bar{ \vartheta }, \bar{\eta} ):  \bar{\eta}[ \mathcal{D}]^d \to \mathcal{D} $ is a weak isomorphism pair. \qed

We can now prove our main result.

\begin{thm} \label{t_dlfpadm}
Let $ \funcf : \cclcdomwp \to \cclcdomwp $  be a  strictly positive functor with dense, admissible  parameters.

Then the  dense  least fixed point of $ \funcf $ is admissible.
\end{thm}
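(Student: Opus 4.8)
The plan is to assemble the machinery built up in Lemmas~\ref{l_eta1}--\ref{l_eta5} together with the preservation results of Section~\ref{s_dwp}. The idea is to realise the dense least fixed point $ \mathcal{D} $ as something weakly isomorphic to a piece of an admissible function space, and then to propagate admissibility backwards along the chain of constructions: $ \mathcal{D} $ maps equivariantly and equi-injectively into an admissible function space, so its image is admissible, the dense part of that image is still admissible, and $ \mathcal{D} $ is weakly isomorphic to this dense part, hence admissible.

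In detail, I would proceed as follows. First apply Lemma~\ref{l_eta4} to obtain dense, admissible domain-pers $ \mathcal{U} $ and $ \mathcal{E} $ and an equivariant, equi-injective map $ \bar{\eta} : \mathcal{D} \rightarrow [ \mathcal{U} \rightarrow \mathcal{E} ] $. Next observe that $ [ \mathcal{U} \rightarrow \mathcal{E} ] $ is an admissible domain-per: it is the exponentiation of the admissible domain-per $ \mathcal{E} $ by the dense, admissible domain-per $ \mathcal{U} $, so Lemma~\ref{l_admdp1} applies (alternatively, the function-space clause of Lemma~\ref{l_adm5}, using that $ \mathcal{U}^{R} $ is dense in $ U $ and that $ [\mathcal{U}\to\mathcal{E}]^{R} $ is exactly the set of $ (\delta_{\mathcal{U}}, \delta_{\mathcal{E}}) $-total maps). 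By Lemma~\ref{l_image}, $ \id : \bar{\eta}[ \mathcal{D} ] \rightarrow [ \mathcal{U} \rightarrow \mathcal{E} ] $ is an equiembedding, so Observation~\ref{o_admdp5} gives that $ \bar{\eta}[ \mathcal{D} ] $ is admissible. By Observation~\ref{o_densepartadmissible}, its dense part $ \bar{\eta}[ \mathcal{D} ]^{d} $ is then admissible as well. Finally, Lemma~\ref{l_eta5} says that $ \mathcal{D} $ and $ \bar{\eta}[ \mathcal{D} ]^{d} $ are weakly isomorphic, so Lemma~\ref{l_admdp2}, applied to the admissible domain-per $ \bar{\eta}[ \mathcal{D} ]^{d} $, yields that $ \mathcal{D} $ is admissible. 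Countable-basedness of $ \mathcal{D} $, which is part of the implicit convention attached to admissibility, follows from Observation~\ref{o_lfp2} and the fact that passing to the dense part does not enlarge the underlying domain.

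The substance of the argument lies entirely in the preparatory lemmas, above all in Lemmas~\ref{l_eta1}, \ref{l_eta2}, \ref{l_eta4} and \ref{l_eta5}: the construction of the one-step evaluation adjunction $ (\vartheta_{\funcf}, \eta_{\funcf}) $, its iteration into $ (\bar{\vartheta}, \bar{\eta}) $, and the proof that the latter is an adjunction pair inducing a weak isomorphism between $ \mathcal{D} $ and $ \bar{\eta}[ \mathcal{D} ]^{d} $ --- this is the ``iterated evaluation'' picture of Example~\ref{e_eta}, where a total element of $ \mathcal{D} $ is coded as a well-founded evaluation tree branching over the non-positive parameters with leaves in the positive parameters. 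At the level of the theorem itself, the only thing requiring attention is to confirm that each link in the preservation chain is legitimate --- in particular that $ [ \mathcal{U} \rightarrow \mathcal{E} ] $ has the form required by Lemma~\ref{l_admdp1}, with $ \mathcal{E} $ in the role of a positive parameter and $ \mathcal{U} $ in the role of a dense, admissible non-positive parameter --- and I do not expect any genuine obstacle there.
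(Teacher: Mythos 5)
Your proposal follows exactly the paper's own proof: Lemma~\ref{l_eta4} and Lemma~\ref{l_eta5} supply $\bar{\eta}$ and the weak isomorphism, Lemma~\ref{l_adm5} gives admissibility of $[\mathcal{U}\to\mathcal{E}]$, Lemma~\ref{l_image} with Observation~\ref{o_admdp5} passes admissibility to $\bar{\eta}[\mathcal{D}]$, Observation~\ref{o_densepartadmissible} to its dense part, and Lemma~\ref{l_admdp2} back to $\mathcal{D}$. The argument is correct and there is nothing to add.
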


\proof
Let $  \mathcal{D}  $ be  the dense least fixed point of $ \funcf $. 
By lemma~\ref{l_eta4} and lemma~\ref{l_eta5}, we have dense, admissible domain-pers $ \mathcal{U} $
and $ \mathcal{E} $ and an equivariant map $\bar{\eta} : \mathcal{D} \to [\mathcal{U}\to \mathcal{E}] $
such that $ \mathcal{D} $ and $ \bar{\eta} [\mathcal{D}]^d $ are weakly isomorphic.

We have that $  [\mathcal{U}\to \mathcal{E}] $ is admissible since $ \mathcal{U} $, $ \mathcal{E} $ are dense, admissible
by lemma~\ref{l_adm5}.
Then $ id : \bar{\eta} [\mathcal{D}] \to  [\mathcal{U}\to \mathcal{E}] $ is an equiembedding by lemma~\ref{l_image}.
This shows that $  \bar{\eta} [\mathcal{D}] $ is admissible by  observation~\ref{o_admdp5}.
Admissibility is preserved under the dense part construction, see observation~\ref{o_densepartadmissible}, so 
$  \bar{ \eta} [ \mathcal{D} ]^{d}  $ is admissible. 
Finally, since $  \mathcal{D}  $ and  $  \bar{ \eta} [ \mathcal{D} ]^{d}  $  are  weakly isomorphic, we can use  lemma~\ref{l_admdp2}
and conclude that $  \mathcal{D}  $ is admissible.
\qed

\section{Strictly positive induction in $ \cqcb $}
\label{s_qcb}

\noindent We will now use the construction of a dense, admissible least fixed point of a strictly positive
functor $\funcf: \cclcdomwp \to \cclcdomwp$ with dense, admissible parameters
to define $qcb_0$ spaces  by strictly positive induction.

First, we show that the choice of convex, local and complete domain-pers as the objects in the
category $ \cclcdomwp$ was adequate.

\begin{lem} \label{l_lwc1}
Let $X$ be a topological space. Then $X$ is  a  $ qcb_{0} $ space  if and only if
there exists a countably based, dense, admissible, convex, local and complete domain-per 
$   \mathcal{D} $ such that $ \mathcal{Q}   \mathcal{D}  \cong X $.
\end{lem}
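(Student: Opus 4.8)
The plan is to prove both implications separately, the construction of such a domain-per from a $qcb_0$ space being the harder direction.

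Suppose first that $\mathcal{D}$ is a countably based, dense, admissible domain-per with $\mathcal{Q}\mathcal{D}\cong X$. By definition of an admissible domain-per, the associated domain representation $(\mathcal{D},\mathcal{D}^R,\delta_{\mathcal{D}})$ of $\mathcal{Q}\mathcal{D}$ is admissible; it is countably based since $\mathcal{D}$ is, and $\delta_{\mathcal{D}}$ is a quotient map because $\mathcal{Q}\mathcal{D}$ carries the quotient topology by construction. Hence $\mathcal{Q}\mathcal{D}$ has a countably based, admissible, quotient domain representation, so it is a $qcb_0$ space by corollary~\ref{c_adm7}, and therefore so is $X\cong\mathcal{Q}\mathcal{D}$. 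Neither density nor the order-theoretic conditions on $\mathcal{D}$ are needed for this direction.

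Conversely, let $X$ be a $qcb_0$ space, so that $X$ is $T_0$, sequential and has a countable pseudobase $\mathsf{P}$. I would take $(D,D^R,\delta)$ to be the standard representation of $X$ with respect to $\mathsf{P}$, as constructed in the proof of theorem~\ref{t_adm1}: so $D=\textrm{Idl}(\mathsf{P},\supseteq)$, $D^R$ consists of the ideals $I$ with $I\rightarrow_{\mathsf{P}}x$ for some (necessarily unique, by $T_0$) $x\in X$, and $\delta(I)=x$. This is a countably based, dense, admissible domain representation, and by lemma~\ref{l_adm6} $\delta$ is in addition a quotient map since $X$ is sequential. I would then let $\mathcal{D}:=\mathcal{D}^\delta$ be the associated domain-per; then $\mathcal{D}^R=D^R$ is dense in $D$, $\mathcal{D}$ is countably based, and $\mathcal{Q}\mathcal{D}\cong X$. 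Since the associated domain representation $(\mathcal{D},\mathcal{D}^R,\delta_{\mathcal{D}})$ of $\mathcal{Q}\mathcal{D}$ agrees with $(D,D^R,\delta)$ up to the homeomorphism $\mathcal{Q}\mathcal{D}\cong X$, and the latter is admissible, $\mathcal{D}$ is an admissible domain-per.

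It remains to verify that $\mathcal{D}$ is convex, local and complete, and this is the only delicate point. Here I would use the fact, recorded after theorem~\ref{t_adm1}, that in the standard representation every $x\in X$ has a greatest representative $I^x=\{B\in\mathsf{P}:x\in B\}$. Thus, whenever $\delta(I)=x$, the partial equivalence class $[I]_{\mathcal{D}}=\{J\in D^R:\delta(J)=x\}$ satisfies $J\subseteq I^x$ for all $J\in[I]_{\mathcal{D}}$ and contains $I^x$; hence $[I]_{\mathcal{D}}$ is consistent, giving locality, and $\bigsqcup[I]_{\mathcal{D}}=I^x\in[I]_{\mathcal{D}}$, so $I\approx_\delta\bigsqcup[I]_{\mathcal{D}}$, giving completeness. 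For convexity, suppose $I\approx_\delta J$, say $\delta(I)=\delta(J)=x$, and $I,L\sqsubseteq J$, i.e.\ $I\subseteq J$ and $L\subseteq J$ as ideals; then $I\subseteq I\sqcup L\subseteq J$, and a direct check of the defining clauses of $\rightarrow_{\mathsf{P}}$ --- every member of $I\sqcup L$ belongs to $J$ and hence contains $x$, while the witness required by the second clause can be chosen in $I\subseteq I\sqcup L$ --- shows $I\sqcup L\rightarrow_{\mathsf{P}}x$, so $I\approx_\delta I\sqcup L$. The main obstacle is precisely this last verification: it fails for an arbitrary countably based admissible domain representation of $X$, and goes through for the standard one only because of the existence of greatest representatives, which is why one must work with the standard representation rather than an arbitrary admissible one.
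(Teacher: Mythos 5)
Your proof is correct and follows essentially the same route as the paper: the standard representation with respect to a countable pseudobase, the greatest representative $I^x$ for locality and completeness, a direct check of the clauses of $\rightarrow_{\mathsf{P}}$ for convexity, and corollary~\ref{c_adm7} for the converse. The only (harmless) addition is your explicit appeal to lemma~\ref{l_adm6} to see that $\delta$ is a quotient map, which the paper leaves implicit.
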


\proof
Let $ X $ be a $ qcb_{0} $ space. 
Let $ (D, D^{R} , \delta ) $ be the standard dense and admissible representation of $X$ w.r.t to some
countable pseudobase $\mathsf{P}$, see the proof of theorem~\ref{t_adm1} for details,
and let 
$ \mathcal{D} $ be the associated domain-per.

If $ x \in \mathcal{D}^{R} $, then $ I^{\delta (x )} $ is a greatest representative for $ \delta (x) $, 
so $   I^{\delta (x )} = \bigsqcup [ x]_{ \mathcal{D}  } $.
Hence, $  \mathcal{D} $ is local and complete.

For the convexity, let $I,J,K$ be ideals over $ (\mathsf{P}, \supseteq)$ and 
assume that  $I \subseteq J \subseteq K $ and that $ I, K  \rightarrow_{ \mathsf{P} } x $.
Then $ x \in B $ for all $B \in J  $, since $ J \subseteq K $.
If $ x \in U $ and $U \subseteq X $ is open, then  there exists $ B \in I \subseteq J$ with
$ x \in B \subseteq U $. This shows that $ J \rightarrow_{ \mathsf{P} } x $. 

For the converse,  $ \mathcal{Q}   \mathcal{D}  $ is  a $ qcb$ space by since it is the quotient
space of $\mathcal{D}^R$, a countably based space. It is  a $ qcb_{0} $ space  by corollary~\ref{c_adm7}, since
$  \mathcal{D} $ is an admissible domain-per.
\qed

The  basic operations for $ qcb_{0} $ spaces are identity ($ \id $), disjoint sum ($ \cdot \uplus \cdot$), 
sequential product ($\cdot \times^{s}\cdot $), and $ \cqcb $-exponential ($ [ \cdot \Rightarrow^{s} \cdot] $), as defined in 
section~\ref{s_backg}.
An operation on $ qcb_{0} $ spaces is said to be {\em strictly positive} if it is constructed from a finite 
list  of $ qcb_{0} $ spaces (the positive parameters), using identity, disjoint sum, sequential
product and $ \cqcb $-exponentiation by  a fixed $ qcb_{0} $ space (a non-positive parameter).
The basic operations on $qcb_{0} $ spaces are representable by the corresponding basic operations 
on domain-pers:

\begin{lem} \label{l_qcb1}
Let $  \mathcal{D} $ and $ \mathcal{E} $  be admissible domain-pers. 
Then 
\[ \mathcal{Q}  ( \mathcal{D} +  \mathcal{E} ) \cong  ( \mathcal{Q}   \mathcal{D} ) \uplus  ( \mathcal{Q}   \mathcal{E} )  ; \] 
\[ \mathcal{Q}  ( \mathcal{D} \times  \mathcal{E} ) \cong  ( \mathcal{Q}   \mathcal{D} )  \times^{s}   ( \mathcal{Q}   \mathcal{E} )  . \]
If, in addition, $  \mathcal{D} $ is dense, then
$\mathcal{Q}  [ \mathcal{D} \rightarrow  \mathcal{E} ] \cong  [ \mathcal{Q}   \mathcal{D}  \Rightarrow^{s}  \mathcal{Q}   \mathcal{E}  ]$ .
\end{lem}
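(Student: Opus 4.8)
The plan is to verify each of the three isomorphisms by unwinding the definitions of the quotient functor $\mathcal{Q}$ and of the domain-per operations $+$, $\times$ and $[\cdot\to\cdot]$, and comparing with the known descriptions of $\uplus$, $\times^s$ and $[\cdot\Rightarrow^s\cdot]$ from Section~\ref{s_backg}. The cleanest route is to exploit the fact that all the domain-pers involved are admissible, so that (by Lemma~\ref{l_adm5}) the constructed domain-pers $\mathcal{D}+\mathcal{E}$, $\mathcal{D}\times\mathcal{E}$ and (using density of $\mathcal{D}$) $[\mathcal{D}\to\mathcal{E}]$ carry admissible representations of $(\mathcal{Q}\mathcal{D})\uplus(\mathcal{Q}\mathcal{E})$, $(\mathcal{Q}\mathcal{D})\times^s(\mathcal{Q}\mathcal{E})$ and $[\mathcal{Q}\mathcal{D}\Rightarrow^s\mathcal{Q}\mathcal{E}]$ respectively. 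Since an admissible domain-per $\mathcal{C}$ satisfies $\mathcal{Q}\mathcal{C}\cong X$ precisely when $(C,\mathcal{C}^R,\delta_\mathcal{C})$ is an admissible representation of $X$, and admissible countably based representations of a fixed space are unique up to the appropriate equivalence, matching the representation maps gives the result.

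First I would treat the disjoint sum. The total elements of $\mathcal{D}+\mathcal{E}$ are exactly $\{(0,x):x\in\mathcal{D}^R\}\cup\{(1,y):y\in\mathcal{E}^R\}$, and the per identifies $(0,x)\approx(0,x')$ iff $x\approx_\mathcal{D}x'$, similarly on the right, with no cross-identifications; hence the quotient set is the disjoint union of the quotient sets. For the topology one checks the quotient map $\delta_{\mathcal{D}+\mathcal{E}}$ is, up to this bijection, the copairing of $\delta_\mathcal{D}$ and $\delta_\mathcal{E}$, and since both inclusions $\mathcal{Q}\mathcal{D}\to\mathcal{Q}(\mathcal{D}+\mathcal{E})$ and $\mathcal{Q}\mathcal{E}\to\mathcal{Q}(\mathcal{D}+\mathcal{E})$ are continuous and the quotient topology is finest making them so, we get the homeomorphism with $(\mathcal{Q}\mathcal{D})\uplus(\mathcal{Q}\mathcal{E})$. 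Next, the Cartesian product: the total elements of $\mathcal{D}\times\mathcal{E}$ are $\mathcal{D}^R\times\mathcal{E}^R$ with componentwise per, so the quotient set is $(\mathcal{Q}\mathcal{D})\times(\mathcal{Q}\mathcal{E})$ as a set; the quotient map factors as the product of $\delta_\mathcal{D}$ and $\delta_\mathcal{E}$. Here the topology is not the plain product but its sequentialisation, and this is exactly why the right-hand side must be $\times^s$; one invokes Lemma~\ref{l_adm5} to see $(D\times E,\mathcal{D}^R\times\mathcal{E}^R,\varrho)$ is admissible, hence by Corollary~\ref{c_adm7} and Lemma~\ref{l_adm6} it represents a $qcb_0$ space, which must be $(\mathcal{Q}\mathcal{D})\times^s(\mathcal{Q}\mathcal{E})$ since the latter is the $qcb$ product.

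For the function space, assume $\mathcal{D}$ dense. The total elements of $[\mathcal{D}\to\mathcal{E}]$ are the equivariant maps, which are precisely the $(\delta_\mathcal{D},\delta_\mathcal{E})$-total continuous maps $D\to E$ (noted in the text), and $f\approx g$ iff the induced maps $f^\mathcal{Q}=g^\mathcal{Q}$ on quotients agree. Since $\mathcal{D}$ is dense, Lemma~\ref{l_adm3} says every sequentially continuous map $\mathcal{Q}\mathcal{D}\to\mathcal{Q}\mathcal{E}$ is $(\delta_\mathcal{D},\delta_\mathcal{E})$-representable, so $\mathcal{Q}[\mathcal{D}\to\mathcal{E}]$ is in bijection with the set of sequentially continuous functions, which underlies $[\mathcal{Q}\mathcal{D}\Rightarrow^s\mathcal{Q}\mathcal{E}]$. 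For the topology, Lemma~\ref{l_adm5}(third bullet) gives that $([D\to E],[D\to E]^R,\varrho)$ is an admissible representation of $[\mathcal{Q}\mathcal{D}\to_\omega\mathcal{Q}\mathcal{E}]$, and since it is a quotient representation (the domain-per quotient map is a quotient map and the space is sequential, being an exponential in $\cqcb$), Corollary~\ref{c_adm7} and Lemma~\ref{l_adm6} identify $\mathcal{Q}[\mathcal{D}\to\mathcal{E}]$ with the sequentialisation of $[\mathcal{Q}\mathcal{D}\to_\omega\mathcal{Q}\mathcal{E}]$, which is by definition $[\mathcal{Q}\mathcal{D}\Rightarrow^s\mathcal{Q}\mathcal{E}]$.

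The main obstacle is the function-space case, and specifically getting the topology right: one must be careful that the quotient topology on $[\mathcal{D}\to\mathcal{E}]^R/{\approx}$ genuinely matches the sequentialised $[\cdot\to_\omega\cdot]$ topology rather than some finer or coarser topology on the same set. This is handled not by a direct topological computation but by appealing to the already-established Lemma~\ref{l_adm5}, which packages exactly this verification; the density hypothesis on $\mathcal{D}$ is essential precisely here, both for that lemma and for the liftability of sequentially continuous functions via Lemma~\ref{l_adm3}. The sum and product cases are routine once one observes that $\uplus$ and $\times^s$ are the coproduct and product in $\cqcb$ and that $\mathcal{Q}$ sends the domain-per coproduct and product to colimits/limits of the representations.
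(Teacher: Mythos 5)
Your proposal is correct and follows essentially the same route as the paper: the paper's proof likewise invokes Lemma~\ref{l_adm5} (via Lemma~\ref{l_admdp1}) to see that $\mathcal{D}+\mathcal{E}$, $\mathcal{D}\times\mathcal{E}$ and (given density of $\mathcal{D}$) $[\mathcal{D}\to\mathcal{E}]$ induce admissible representations of $(\mathcal{Q}\mathcal{D})\uplus(\mathcal{Q}\mathcal{E})$, $(\mathcal{Q}\mathcal{D})\times(\mathcal{Q}\mathcal{E})$ and $[\mathcal{Q}\mathcal{D}\to_\omega\mathcal{Q}\mathcal{E}]$, and then identifies each $\mathcal{Q}(\cdot)$ with the corresponding sequentialisation. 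Your additional set-level bookkeeping (totals of the sum, totals of the function space being the $(\delta,\varepsilon)$-total maps, surjectivity via Lemma~\ref{l_adm3}) is detail the paper leaves implicit, but the argument is the same.
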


\proof
As already observed in lemma~\ref{l_admdp1}, the domain representations induced by
$ \mathcal{D} +  \mathcal{E} $ and $ \mathcal{D} \times  \mathcal{E}  $ 
are admissible domain representations of $ ( \mathcal{Q}   \mathcal{D} ) \uplus  ( \mathcal{Q}   \mathcal{E} ) $ and
$ (  \mathcal{Q}    \mathcal{D} )  \times  ( \mathcal{Q}   \mathcal{E} ) $, respectively.
On condition that  $  \mathcal{D} $ is dense, the domain representation induced by
$ [\mathcal{D} \rightarrow  \mathcal{E}] $ is an admissible domain representation of
 $ [ \mathcal{Q}   \mathcal{D}  \Rightarrow_{ \omega }  \mathcal{Q}   \mathcal{E}  ] $.

This implies that $ \mathcal{Q}  ( \mathcal{D} +  \mathcal{E} )$,  $  \mathcal{Q}  ( \mathcal{D} \times  \mathcal{E} ) $ 
and $ \mathcal{Q}  [ \mathcal{D} \rightarrow  \mathcal{E} ]$
are homeomorphic to the respective sequential closures
$ ( \mathcal{Q}    \mathcal{D} ) \uplus  ( \mathcal{Q}    \mathcal{E} )$,
$ ( \mathcal{Q}   \mathcal{D} )  \times^{s}    (\mathcal{Q}   \mathcal{E} )  $
and
$ [ \mathcal{Q}   \mathcal{D}  \Rightarrow^{s}  \mathcal{Q}   \mathcal{E}  ] $.
\qed

For the domain-pers we took the categorical approach to the problem of definitions by strictly
positive induction, similar to the technique for solving recursive domain equations. We then made a suitable choice of morphisms
which were, in a sense, embeddings. For the $qcb_0$ spaces, there is no obvious choice of embeddings.
We will therefore pass directly from a strictly positive operation on $qcb_0$ spaces to a functorial
representation over $\cclcdomwp$.

If $ \Gamma $ is a strictly positive operation on $ qcb_{0} $ spaces, we obtain a   strictly positive operation 
on domain-pers by replacing
\begin{enumerate}[$\bullet$]
\item  each (positive or non-positive) parameter $P$  in $ \Gamma $ by a countably based, dense, admissible,
convex, local and complete domain-per $ \mathcal{A} $  with  $ \mathcal{Q}   
\mathcal{A}  \cong P  $; and
\item 
each occurrence of one of the basic operations
$ \id $, $ \cdot \uplus \cdot$, $ \cdot \times^{s} \cdot $ or 
$[ \cdot \Rightarrow^{s} \cdot ] $ by the corresponding domain-per operation
$ \id $, $ \cdot + \cdot $, $ \cdot \times \cdot $ or $ [\cdot \to \cdot ] $.
\end{enumerate}
Strictly positive operations are functorial in $ \cclcdomwp $, and in combination with 
lemma~\ref{l_qcb1} this
implies that there exists a functor $ \funcf: \cclcdomwp \to \cclcdomwp $ such that 
for every domain-per $  \mathcal{D} $ and $ qcb$ space $X$, we have
\[  \mathcal{Q}  \mathcal{D}   \cong  X  \Rightarrow \mathcal{Q}  ( \funcf  \mathcal{D}  ) \cong \Gamma X .\]
We refer to $\funcf$  as the {\em functorial representation} of $ \Gamma $ over $ \cclcdomwp$.

\begin{prop} \label{p_qcb5}
Let $ \Gamma $ be a strictly positive operation on $ qcb_{0} $ spaces. 
Let $  \funcf $ be a functorial representation of $ \Gamma $ over $ \cclcdomwp$, 
and let $ \mathcal{D} $ be a  least fixed point of $ \funcf $.
Then $  \mathcal{Q}  \mathcal{D}  $ is a $qcb_0$ space and a fixed point of $ \Gamma $, i.e.
$ \mathcal{Q}  \mathcal{D}   \cong  \Gamma (\mathcal{Q}  \mathcal{D})  $.
\end{prop}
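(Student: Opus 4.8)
The plan is to reduce everything to the dense part of the least fixed point, where the characterisations of section~\ref{s_dwp} apply directly. Fix a least fixed point $\mathcal{D}$ of $\funcf$. By construction of the functorial representation every parameter of $\funcf$ is a countably based, dense, admissible, convex, local and complete domain-per, so by observation~\ref{o_lfp2} the domain-per $\mathcal{D}$ is countably based, and being an object of $\cclcdomwp$ it is convex, local and complete. Consider its dense part $\mathcal{D}^{d}$. By proposition~\ref{p_dense3} and the remarks following it, $\mathcal{D}^{d}$ is the dense least fixed point of $\funcf$; in particular $\funcf(\mathcal{D}^{d})^{d}\cong\mathcal{D}^{d}$, the domain-per $\mathcal{D}^{d}$ is again countably based, convex, local and complete, and by theorem~\ref{t_dlfpadm} it is admissible (this is where the density and admissibility of the parameters is used). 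Finally, since $\mathcal{D}^{d}$ has the same set of total elements as $\mathcal{D}$, carrying the same subspace topology, we have $\mathcal{Q}\mathcal{D}\cong\mathcal{Q}(\mathcal{D}^{d})$.

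The first assertion is then immediate: $\mathcal{D}^{d}$ is a countably based, dense, admissible, convex, local and complete domain-per, so lemma~\ref{l_lwc1} shows that $\mathcal{Q}(\mathcal{D}^{d})$, and hence $\mathcal{Q}\mathcal{D}$, is a $qcb_{0}$ space.

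For the second assertion I would apply the defining property of the functorial representation to $\mathcal{D}^{d}$ itself, taking $X:=\mathcal{Q}(\mathcal{D}^{d})$; since $\mathcal{D}^{d}$ is dense and admissible, the structural induction through lemma~\ref{l_qcb1} goes through (its function-space clause asks for a dense domain-per on the left, which is always a non-positive parameter, and an admissible subexpression on the right, which is supplied by lemma~\ref{l_admdp1}), giving $\mathcal{Q}(\funcf(\mathcal{D}^{d}))\cong\Gamma(\mathcal{Q}(\mathcal{D}^{d}))$. On the other hand, the dense part construction does not alter the associated quotient space, so $\funcf(\mathcal{D}^{d})^{d}\cong\mathcal{D}^{d}$ yields $\mathcal{Q}(\funcf(\mathcal{D}^{d}))\cong\mathcal{Q}(\funcf(\mathcal{D}^{d})^{d})\cong\mathcal{Q}(\mathcal{D}^{d})$. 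Combining the two, $\mathcal{Q}(\mathcal{D}^{d})\cong\Gamma(\mathcal{Q}(\mathcal{D}^{d}))$, and transporting this homeomorphism along $\mathcal{Q}\mathcal{D}\cong\mathcal{Q}(\mathcal{D}^{d})$ (recall that $\Gamma$, being built from $\uplus$, $\times^{s}$ and $[\cdot\Rightarrow^{s}\cdot]$, respects homeomorphisms) gives $\mathcal{Q}\mathcal{D}\cong\Gamma(\mathcal{Q}\mathcal{D})$.

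Essentially all of the work has already been done: the decisive inputs are theorem~\ref{t_dlfpadm} (the dense least fixed point is admissible) together with proposition~\ref{p_dense3} (the dense part of the $\gamma$-chain is again a $\gamma$-chain with the expected limit). The one subtlety, and the reason for passing to $\mathcal{D}^{d}$ rather than arguing with $\mathcal{D}$ directly, is that a least fixed point of $\funcf$ need not be dense, so neither lemma~\ref{l_lwc1} nor the function-space clause of lemma~\ref{l_qcb1} can be invoked for $\mathcal{D}$, whereas both apply to $\mathcal{D}^{d}$; once that reduction is made, the rest is bookkeeping.
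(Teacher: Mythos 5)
Your argument is correct, and for the first assertion ($\mathcal{Q}\mathcal{D}$ is $qcb_0$) it coincides with the paper's proof: pass to $\mathcal{D}^{d}$, invoke theorem~\ref{t_dlfpadm} for admissibility, note $\mathcal{Q}\mathcal{D}\cong\mathcal{Q}(\mathcal{D}^{d})$, and conclude by lemma~\ref{l_lwc1}. For the fixed-point isomorphism you take a slightly different route. The paper stays with $\mathcal{D}$ itself: from $\mathcal{D}\cong\funcf(\mathcal{D})$ in $\cclcdomwp$ it gets $\mathcal{Q}\mathcal{D}\cong\mathcal{Q}(\funcf\mathcal{D})$, and then applies the defining property of the functorial representation, $\mathcal{Q}\mathcal{D}\cong X\Rightarrow\mathcal{Q}(\funcf\mathcal{D})\cong\Gamma X$, with $X=\mathcal{Q}\mathcal{D}$ -- two lines. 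You instead apply that property to $\mathcal{D}^{d}$ and use $\funcf(\mathcal{D}^{d})^{d}\cong\mathcal{D}^{d}$ together with the invariance of $\mathcal{Q}$ under the dense-part construction. Both work; your version is longer but is the more scrupulous one, since the structural induction behind lemma~\ref{l_qcb1} needs the subexpressions of $\funcf$ evaluated at the argument to be admissible (via lemma~\ref{l_admdp1}), and that is guaranteed for the admissible $\mathcal{D}^{d}$ but is left implicit by the paper when it quotes the property ``for every domain-per''. One small correction to your closing remark: the function-space clause of lemma~\ref{l_qcb1} asks for density of the \emph{exponent}, which in a strictly positive operation is always a fixed non-positive parameter (dense by the definition of functorial representation), never the variable; so that clause is not what blocks arguing with $\mathcal{D}$ directly -- the only genuine issue is admissibility of $\mathcal{D}$, and the only place density of the fixed point itself is indispensable is lemma~\ref{l_lwc1} and theorem~\ref{t_dlfpadm}.
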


\proof
The functorial representation $\funcf$ of $ \Gamma $ has countably based, dense, admissible parameters by
definition. Then  $ \mathcal{D} $ is  countably based by observation~\ref{o_lfp2}.
The dense part of $ \mathcal{D} $ is a dense least fixed point of $\funcf$ and
therefore admissible by  theorem~\ref{t_dlfpadm}.
Since $  \mathcal{Q}  \mathcal{D}   \cong  \mathcal{Q} ( \mathcal{D}^d )$ by the definition
of dense part and $  \mathcal{Q} ( \mathcal{D}^d )$ is a $qcb_0$ space by 
lemma~\ref{l_lwc1}, this shows that  $  \mathcal{Q}  \mathcal{D} $ is a $qcb_0$ space.

Since  $ \mathcal{D} $ is a  least fixed point of $ \funcf $, we have $ \mathcal{D} \cong 
\funcf (\mathcal{D})$ and  $\mathcal{Q}  \mathcal{D} \cong  \mathcal{Q}  ( \funcf  \mathcal{D}  ) $.
Moreover, $  \mathcal{Q}  ( \funcf  \mathcal{D}  ) \cong \Gamma ( \mathcal{Q}  \mathcal{D} ) $
since $ \funcf$ is a functorial representation of $\Gamma$.
This shows that $ \mathcal{Q}  \mathcal{D}   \cong  \Gamma (\mathcal{Q}  \mathcal{D})  $.
\qed

This shows that  strictly positive operations on $ qcb_{0} $ spaces admit  fixed points.
Put differently, we can construct a solution of the strictly positive 'recursive $qcb_0$
equation' $ X = \Gamma (X) $ by means of  a functorial representation  of $ \Gamma $ over
$\cclcdomwp$. In order to say that this is a definition by strictly positive induction, we
need to show that it is a canonical solution, i.e.  independent of the chosen functorial 
representation of $\Gamma$.

\begin{prop} \label{p_qcb7}
Let $ \Gamma $ be a strictly positive operation on  $ qcb_{0} $ spaces.
If $ \funcf $ and  $ \funcg $ are functorial representations of $ \Gamma $ over $ \cclcdomwp $ 
and $ \mathcal{D} $ and $ \mathcal{E} $ are least fixed points of $ \funcf $ and $ \funcg $, respectively,
then $  \mathcal{Q}  \mathcal{D}  \cong  \mathcal{Q}  \mathcal{E}  $.
\end{prop}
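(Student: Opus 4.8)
The plan is to reduce the statement to the already-established uniqueness (up to isomorphism) of dense, admissible domain representations of a fixed topological space, namely Lemma~\ref{l_admdp4}. Concretely, by Proposition~\ref{p_qcb5}, both $\mathcal{Q}\mathcal{D}$ and $\mathcal{Q}\mathcal{E}$ are $qcb_0$ spaces and fixed points of $\Gamma$, so $\mathcal{Q}\mathcal{D}\cong\Gamma(\mathcal{Q}\mathcal{D})$ and $\mathcal{Q}\mathcal{E}\cong\Gamma(\mathcal{Q}\mathcal{E})$. I want to show these two $qcb_0$ spaces are homeomorphic. The natural route is to pass to the dense parts $\mathcal{D}^d$ and $\mathcal{E}^d$, which are dense least fixed points of $\funcf$ and $\funcg$ respectively (by Proposition~\ref{p_dense3} and the discussion following it), are countably based (Observation~\ref{o_lfp2}), and are admissible (Theorem~\ref{t_dlfpadm}); moreover $\mathcal{Q}\mathcal{D}\cong\mathcal{Q}(\mathcal{D}^d)$ and $\mathcal{Q}\mathcal{E}\cong\mathcal{Q}(\mathcal{E}^d)$. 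So it suffices to show $\mathcal{Q}(\mathcal{D}^d)\cong\mathcal{Q}(\mathcal{E}^d)$, and by Lemma~\ref{l_admdp4} this will follow once we know $\mathcal{D}^d$ and $\mathcal{E}^d$ are weakly isomorphic as domain-pers.

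First I would set up a transfinite comparison of the two $\gamma$-chains. Choose a common limit ordinal $\gamma_0$ large enough that both chains $(\{\mathcal{D}_\alpha\}_{\alpha\in\gamma_0},\ldots)$ from $\funcf$ and $(\{\mathcal{E}_\alpha\}_{\alpha\in\gamma_0},\ldots)$ from $\funcg$ have stabilised to a fixed point (using Lemma~\ref{l_lfp1} for each and taking the max). The key observation is that $\funcf$ and $\funcg$, being functorial representations of the \emph{same} strictly positive operation $\Gamma$, differ only in the choice of parameter domain-pers: for each positive or non-positive parameter $P$ of $\Gamma$, $\funcf$ uses some countably based, dense, admissible, convex-local-complete domain-per $\mathcal{A}_P$ with $\mathcal{Q}\mathcal{A}_P\cong P$, while $\funcg$ uses some $\mathcal{A}'_P$ with $\mathcal{Q}\mathcal{A}'_P\cong P$. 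By Lemma~\ref{l_admdp4} applied to each parameter, $\mathcal{A}_P$ and $\mathcal{A}'_P$ are weakly isomorphic; fix a weak isomorphism pair for each. Now build, by transfinite induction on $\alpha\leq\gamma_0$, a uniform weak isomorphism pair between the chains: at successor steps apply $\Gamma$ (which, as a strictly positive operation, sends weak isomorphism pairs to weak isomorphism pairs — this needs to be verified, but it is a routine structural induction using that $f\mapsto g\circ f\circ h$, the sum and the product all preserve equivariance and the $\approx$-relations, exactly as in Lemma~\ref{l_equiv2}); at limit steps apply Lemma~\ref{l_equid3}, which says a componentwise uniform family of weak isomorphism pairs induces a weak isomorphism pair on the inductive limits. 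This yields a weak isomorphism pair $(\varphi_{\gamma_0},\chi_{\gamma_0}):\mathcal{D}_{\gamma_0}\to\mathcal{E}_{\gamma_0}$.

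Then I would descend to the dense parts. Since $\mathcal{D}\cong\mathcal{D}_{\gamma_0}$ and $\mathcal{E}\cong\mathcal{E}_{\gamma_0}$ as least fixed points, and $\mathcal{D}^d,\mathcal{E}^d$ are their dense parts, the weak isomorphism $(\varphi_{\gamma_0},\chi_{\gamma_0})$ induces a homeomorphism $\mathcal{Q}\mathcal{D}\cong\mathcal{Q}\mathcal{E}$ directly by Lemma~\ref{l_admdp2} (the map on quotients $f^{\mathcal{Q}}$ is already a homeomorphism whenever a weak isomorphism pair exists). In fact this last step shows we never even need to descend to dense parts for the \emph{homeomorphism} conclusion: Lemma~\ref{l_admdp2} gives $\mathcal{Q}\mathcal{D}\cong\mathcal{Q}\mathcal{E}$ immediately from the weak isomorphism pair, and that is precisely the claim. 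So the proof reduces to: (i) get weak isomorphisms on parameters from Lemma~\ref{l_admdp4}; (ii) propagate them up the two parallel $\gamma$-chains by transfinite induction using preservation of weak isomorphisms under $\Gamma$ and Lemma~\ref{l_equid3}; (iii) conclude with Lemma~\ref{l_admdp2}.

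The main obstacle I anticipate is step (ii), and within it the successor step: one must check that a strictly positive \emph{operation} $\Gamma$ (not merely its functorial action on equiembeddings, which was Proposition~\ref{p_equim2}) carries a weak isomorphism pair $(\varphi,\chi)$ between convex-local-complete domain-pers to a weak isomorphism pair $(\Gamma\varphi,\Gamma\chi)$. The subtlety is that $\varphi,\chi$ are only equivariant maps, not equiembeddings, and $\chi\circ\varphi\approx\id$ only up to the per; one has to verify that applying the sum, product and exponentiation constructions preserves this "$\approx\id$" relation, which rests on Lemma~\ref{l_equiv1} and on the fact that the parameter domain-pers are dense (so that consistency of values at total inputs propagates). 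A secondary, more bureaucratic obstacle is ensuring the weak isomorphism pairs chosen at the parameters and at each chain level assemble into a genuine \emph{uniform} mapping in the sense of Definition~\ref{d_uniformmapping}, i.e.\ that the naturality squares $g_{\alpha,\beta}\circ\varphi_\alpha\approx\varphi_\beta\circ f_{\alpha,\beta}$ commute up to per; this should follow automatically from defining $\varphi_{\alpha+1}:=\funcf(\varphi_\alpha)$-style (reading $\funcf$ and $\funcg$ through the shared structure of $\Gamma$) together with functoriality, but it must be spelled out carefully at limit stages where Lemma~\ref{l_equid3} is invoked.
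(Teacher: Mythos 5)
Your proposal follows essentially the same route as the paper: weak isomorphisms of the parameters via Lemma~\ref{l_admdp4}, transfinite propagation along the two parallel $\gamma$-chains (with the ``mixed'' successor step reading $\funcf$ and $\funcg$ through the shared structure of $\Gamma$, and Lemma~\ref{l_equid3} at limit stages), and the conclusion via Lemma~\ref{l_admdp2}. The two obstacles you flag are exactly the two claims the paper proves in its appendix (preservation of weak isomorphism pairs under the paired assignment $\varphi\mapsto\varphi^{\funcf,\funcg}$, and the uniformity of the resulting families), and your observation that the detour through dense parts is unnecessary matches the paper, which never takes it.
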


\proof
We say that  strictly positive endofunctors over $ \cclcdomwp $ are {\em weakly equivalent} if we can obtain 
one from the other by replacing each parameter  by a weakly isomorphic domain-per.

\begin{claim}
Let $\funcf $ and $\funcg$ be weakly equivalent strictly positive endofunctors over $ \cclcdomwp $.
Then there exist assignments $ \varphi  \mapsto  \varphi^{ \funcf , \funcg } $ 
and $ \varphi  \mapsto  \varphi^{ \funcg , \funcf } $ 
from the class of equivariant maps into itself with the following properties:
\begin{enumerate}[(1)]
\item if $ \varphi :  \mathcal{D}  \rightarrow   \mathcal{E}  $, 
then $ \varphi^{ \funcf , \funcg } : \funcf (  \mathcal{D} ) \rightarrow  \funcg (  \mathcal{E} ) $ and
 $ \varphi^{ \funcg , \funcf } : \funcg (  \mathcal{D} ) \rightarrow  \funcf (  \mathcal{E} ) $;
\item if $ ( \varphi , \chi ) $ is a weak isomorphism of domain-pers $  \mathcal{D} $ and $   \mathcal{E}  $, 
then $ (  \varphi^{ \funcf , \funcg } ,  \chi^{ \funcg , \funcf } ) $ is a weak isomorphism of $ \funcf (  \mathcal{D} )  $ and $ \funcg (  \mathcal{E}  ) $; and
\item if $ \varphi $, $ \chi$ are equivariant maps and $f ,g $ are equiembeddings which satisfy
$ g \circ \varphi = \chi \circ f $, then
$ \funcg ( g)  \circ \varphi^{ \funcf , \funcg } = \chi^{ \funcf , \funcg } \circ \funcf ( f ) $.
\end{enumerate}
\end{claim}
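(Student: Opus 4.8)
\proof
The plan is a structural induction on the common shape of $\funcf$ and $\funcg$. Since these functors are weakly equivalent, they are built from the same tree of basic operations (identity, constant, disjoint sum, Cartesian product, exponentiation by a fixed domain-per), differing only in their parameters; for each parameter slot we fix, once and for all, a weak isomorphism pair between the domain-per occurring in $\funcf$ and the one occurring in $\funcg$. The assignments $\varphi \mapsto \varphi^{\funcf,\funcg}$ and $\varphi \mapsto \varphi^{\funcg,\funcf}$ are defined simultaneously by recursion on this shape.

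In the base cases: if $\funcf = \funcg$ is the identity functor, put $\varphi^{\funcf,\funcg} := \varphi$; if $\funcf$ is constantly $\mathcal{A}$ and $\funcg$ is constantly $\mathcal{A}'$, with fixed weak isomorphism pair $(a,a'): \mathcal{A} \rightarrow \mathcal{A}'$, put $\varphi^{\funcf,\funcg} := a$ and $\varphi^{\funcg,\funcf} := a'$, irrespective of $\varphi$. In the inductive step, if $\funcf = \funcf_0 + \funcf_1$ and $\funcg = \funcg_0 + \funcg_1$, put $\varphi^{\funcf,\funcg} := \varphi^{\funcf_0,\funcg_0} + \varphi^{\funcf_1,\funcg_1}$, and likewise with $\times$ in place of $+$; if $\funcf = [\mathcal{B} \rightarrow \funcf_1]$ and $\funcg = [\mathcal{B}' \rightarrow \funcg_1]$, with fixed weak isomorphism pair $(b,b')$ between $\mathcal{B}$ and $\mathcal{B}'$, put $\varphi^{\funcf,\funcg} := (b' \rightarrow \varphi^{\funcf_1,\funcg_1})$, that is, $x \mapsto \varphi^{\funcf_1,\funcg_1} \circ x \circ b'$. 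Each of these maps is equivariant by lemma~\ref{l_equiv2}, and inspection of the types confirms property~(1).

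Properties (2) and (3) are then proved by the same structural induction. For (2): the identity case is trivial and in the constant case $(a,a')$ is a weak isomorphism pair by choice; in the sum and product cases one uses that the sum and product of morphisms respect $\approx$, so that $\chi^{\funcg,\funcf}\circ\varphi^{\funcf,\funcg} \approx \id$ reduces to the componentwise inductive hypotheses; in the function-space case one evaluates $\chi^{\funcg,\funcf} \circ \varphi^{\funcf,\funcg} = (b \rightarrow \chi^{\funcg_1,\funcf_1}) \circ (b' \rightarrow \varphi^{\funcf_1,\funcg_1})$ on an equivariant $x$, obtaining $\chi^{\funcg_1,\funcf_1} \circ \varphi^{\funcf_1,\funcg_1} \circ x \circ b' \circ b \approx x$ from the inductive hypothesis $\chi^{\funcg_1,\funcf_1} \circ \varphi^{\funcf_1,\funcg_1} \approx \id$, from $b' \circ b \approx \id$, and from the equivariance of $x$, and then concludes via lemma~\ref{l_equiv1}; the other composite is handled symmetrically. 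For (3): one expands $\funcf(f)$ and $\funcg(g)$ as in the proof of proposition~\ref{p_equim2} (the action of a constant functor on a morphism being the identity, and $\funcf(f) = (\id_{\mathcal{B}} \rightarrow \funcf_1(f))$ in the exponential case), and the equation $\funcg(g) \circ \varphi^{\funcf,\funcg} = \chi^{\funcf,\funcg} \circ \funcf(f)$ reduces, clause by clause, to the inductive hypothesis applied to $\funcf_i, \funcg_i$; the hypothesis $g \circ \varphi = \chi \circ f$ carries over unchanged to the subfunctors, since the maps $\varphi, \chi, f, g$ occurring there are the same.

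I expect the only real subtlety to be the contravariance in the exponential clause: one must be careful to use the backward map $b' : \mathcal{B}' \rightarrow \mathcal{B}$ (and, in property~(3), the backward action of the equiembeddings) so that all composites land in the correct domain-pers. Everything else is a routine verification with the partial equivalence relations, lemma~\ref{l_equiv1} and lemma~\ref{l_equiv2}.
\qed
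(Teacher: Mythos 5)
Your construction coincides exactly with the paper's: the same simultaneous structural recursion, with a weak isomorphism pair fixed once per parameter slot, componentwise definitions for sum and product, and the contravariant use of the backward map $\psi_{\mathcal{B}',\mathcal{B}}$ in the exponential clause. The verifications of (1)--(3), which the paper explicitly leaves to the reader, are sketched correctly, so the proposal is a complete match.
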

\noindent The proof is straight-forward. Some more details are given in appendix~\ref{app.proofs}.

Now, let  $ \funcf $ and  $ \funcg $ be functorial representations of $ \Gamma $ over $ \cclcdomwp $. 
Then  $ \funcf $ and  $ \funcg $  are  weakly equivalent:
A parameter  in $ \Gamma $ is represented by a dense, admissible parameter  in $ \funcf $ and  by a dense, admissible parameter
in $\funcg$. By lemma~\ref{l_admdp4},  these parameters are  weakly isomorphic.

If $ \beta $ is an ordinal, let 
$   ( \{ \mathcal{D}_{\alpha}  \}_{ \alpha \in  \beta }, \{ f_{\alpha , \alpha' } \}_{ \alpha \leq \alpha'  \in \beta }) $ 
be  the $ \beta $-chain constructed from $ \funcf $, and let
$   ( \{ \mathcal{E}_{\alpha}  \}_{ \alpha \in  \beta }, \{ g_{\alpha , \alpha' } \}_{ \alpha \leq \alpha'  \in \beta }) $ 
be  the $ \beta $-chain constructed from $ \funcg $.
Choose a limit ordinal $\gamma$  such that $  \mathcal{D}_{ \gamma}  $ and $ \mathcal{E}_{ \gamma}  $ are  least fixed points 
of $ \funcf $ and $ \funcg $, respectively.

\begin{claim}
There are families $\{\varphi_{ \beta }: \mathcal{D}_{\beta} \to \mathcal{E}_{\beta} \}_{\beta\leq\gamma} $ 
and  $\{\chi_{ \beta }: \mathcal{E}_{\beta} \to \mathcal{D}_{\beta} \}_{\beta\leq\gamma} $ of
equivariant maps such that
 each $ ( \varphi_{ \beta } , \chi_{ \beta } ) $ is  a weak isomorphism pair.
\end{claim}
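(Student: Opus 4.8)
The plan is to construct the two families $\{\varphi_\beta\}$ and $\{\chi_\beta\}$ simultaneously by transfinite induction on $\beta\leq\gamma$, using the machinery already set up. The base case $\beta=0$ is trivial, since $\mathcal{D}_0=\mathcal{E}_0$ is the initial object and the identity serves as a weak isomorphism pair. For the successor step, suppose $(\varphi_\beta,\chi_\beta):\mathcal{D}_\beta\to\mathcal{E}_\beta$ is a weak isomorphism pair. Since $\funcf$ and $\funcg$ are weakly equivalent strictly positive functors, the Claim above supplies assignments $\varphi\mapsto\varphi^{\funcf,\funcg}$ and $\varphi\mapsto\varphi^{\funcg,\funcf}$; I would simply set $\varphi_{\beta+1}:=\varphi_\beta^{\funcf,\funcg}:\funcf(\mathcal{D}_\beta)\to\funcg(\mathcal{E}_\beta)$, i.e. $\mathcal{D}_{\beta+1}\to\mathcal{E}_{\beta+1}$, and $\chi_{\beta+1}:=\chi_\beta^{\funcg,\funcf}$. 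By property~(2) of the Claim, $(\varphi_{\beta+1},\chi_{\beta+1})$ is again a weak isomorphism pair. Property~(3) is what guarantees the naturality condition $g_{\beta,\beta+1}\circ\varphi_\beta=\varphi_{\beta+1}\circ f_{\beta,\beta+1}$ (and its mirror for $\chi$), because at domain level the successor embeddings are $f_{\beta,\beta+1}=\funcf(f_{\beta-1,\beta})$-type maps and the $h_\beta$-style compatibility holds by the induction hypothesis — one applies~(3) to $\varphi=\varphi_{\beta-1}$, $\chi=\varphi_{\beta}$, together with the relevant equiembeddings. So the families $\{\varphi_\beta\}$ restricted to successor ordinals form uniform mappings in the sense of Definition~\ref{d_uniformmapping} on each segment.

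For the limit step, suppose $\gamma'\leq\gamma$ is a limit ordinal and $(\varphi_\alpha,\chi_\alpha)$ are defined for all $\alpha<\gamma'$, forming uniform mappings $\varphi:(\mathcal{D},f)\to(\mathcal{E},g)$ and $\chi:(\mathcal{E},g)\to(\mathcal{D},f)$ over the directed order $\gamma'$, with each $(\varphi_\alpha,\chi_\alpha)$ a weak isomorphism pair. Then Lemma~\ref{l_equid3} applies directly: it produces a unique equivariant $\varphi_{\gamma'}:\mathcal{D}_{\gamma'}\to\mathcal{E}_{\gamma'}$ with $\varphi_{\gamma'}\circ f_\alpha=g_\alpha\circ\varphi_\alpha$ for all $\alpha<\gamma'$, and symmetrically $\chi_{\gamma'}$; moreover the ``Moreover'' clause of Lemma~\ref{l_equid3} tells us that $(\varphi_{\gamma'},\chi_{\gamma'})$ is a weak isomorphism pair. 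This closes the induction and in particular yields the desired $(\varphi_\gamma,\chi_\gamma):\mathcal{D}_\gamma\to\mathcal{E}_\gamma$.

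The main obstacle — and the only place requiring care — is verifying the naturality/commutativity square at successor steps, i.e.\ that the assignments from the Claim genuinely interact with the successor embeddings $f_{\beta,\beta+1}=\funcf(f_{\beta-1,\beta})$ and $g_{\beta,\beta+1}=\funcg(g_{\beta-1,\beta})$ in the way property~(3) demands, so that the whole family (across successor and limit stages) assembles into honest uniform mappings feeding Lemma~\ref{l_equid3} at each limit. This is bookkeeping about how the inductively defined $\beta$-chain's transition maps are obtained by applying $\funcf$ (resp.\ $\funcg$) to earlier transition maps, combined with the induction hypothesis that the square commutes one level down; property~(3) of the Claim is precisely designed to propagate it, so the verification is mechanical but needs to be stated. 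Once this is in place, the conclusion of Proposition~\ref{p_qcb7} follows: applying $\mathcal{Q}$ to $(\varphi_\gamma,\chi_\gamma)$ and using Lemma~\ref{l_admdp2} gives $\mathcal{Q}\mathcal{D}\cong\mathcal{Q}\mathcal{E}$.
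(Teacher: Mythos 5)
Your proposal follows essentially the same route as the paper's own proof: the same transfinite induction with the uniform-mapping condition carried as an auxiliary induction hypothesis, the same use of property~(2) of the assignments at successor stages and of Lemma~\ref{l_equid3} at limits, and the same reliance on property~(3) to propagate the commutativity squares. The one step you flag as "mechanical but needs to be stated" is indeed where the paper spends its effort (a double transfinite induction on $\alpha \leq \alpha'$, with the case of a limit $\alpha$ handled via the universality of the inductive limit), but you have correctly identified both the obstacle and the tool that resolves it.
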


This proof is by transfinite induction on $\beta \leq \gamma $ and make use of the assignments  $ \varphi  \mapsto  \varphi^{ \funcf , \funcg } $ 
and $ \varphi  \mapsto  \varphi^{ \funcg , \funcf } $  defined above.
We also need an extra induction hypothesis and the notion of a uniform mapping ( definition~\ref{d_uniformmapping})
for the induction to go through.
The complete proof is given in appendix~\ref{app.proofs}. 

In particular,  the claim shows that $ ( \varphi_{ \gamma} ,  \chi_{ \gamma } ) $ is a weak isomorphism pair of
$ \mathcal{D}_{ \gamma } $ and $ \mathcal{E}_{ \gamma } $. By lemma~\ref{l_admdp2},
this implies that $  \mathcal{Q}  \mathcal{D}_{ \gamma }  \cong  \mathcal{Q}  \mathcal{E}_{ \gamma }  $.\qed

Now, we are ready to state and prove our main result.

\begin{thm} \label{t_fixedpoint}
Let $\Gamma$ be a strictly positive operation on $qcb_0$ spaces.
Then $\Gamma $ has a  fixed point  $ X \cong \Gamma(X) $.
Moreover, the fixed point is defined via a least fixed point construction in domain theory, and
it is is independent, up to homeomorphism, of the admissible standard domain representations
used for the parameters involved.
\end{thm}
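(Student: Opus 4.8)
The plan is to assemble the results of this section, treating the theorem essentially as a corollary of Propositions~\ref{p_qcb5} and~\ref{p_qcb7}. First, given the strictly positive operation $\Gamma$ on $qcb_0$ spaces, I would fix for each parameter $P$ of $\Gamma$ an admissible \emph{standard} domain representation with respect to some countable pseudobase of $P$; the proof of Lemma~\ref{l_lwc1} shows that the associated domain-per is countably based, dense, admissible, convex, local and complete, with quotient homeomorphic to $P$. Replacing the parameters of $\Gamma$ by these domain-pers and each basic $qcb_0$ operation ($\id$, $\uplus$, $\times^s$, $\Rightarrow^s$) by the corresponding operation on domain-pers ($\id$, $+$, $\times$, $[\cdot\to\cdot]$) produces the functorial representation $\funcf:\cclcdomwp\to\cclcdomwp$ of $\Gamma$, a strictly positive functor all of whose parameters are countably based, dense and admissible.

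Next I would apply Proposition~\ref{p_lfp2} to obtain a least fixed point $\mathcal{D}$ of $\funcf$, realised as $\mathcal{D}_{\gamma_0}$ for the transfinite chain built from $\funcf$, with $\gamma_0$ chosen as in Lemma~\ref{l_lfp1} so that $f_{\gamma_0,\gamma_0+1}$ is an isomorphism in $\cclcdomwp$. At the level of underlying domains this chain coincides with the classical $\omega$-chain, so $D_{\gamma_0}\cong D_\omega$ is precisely the domain-theoretic least fixed point of $\hat{\funcf}$; this is the sense in which the fixed point is defined by a least fixed point construction in domain theory. Setting $X:=\mathcal{Q}\mathcal{D}$, Proposition~\ref{p_qcb5} yields at once that $X$ is a $qcb_0$ space and that $X\cong\Gamma(X)$, the latter because $\mathcal{D}\cong\funcf(\mathcal{D})$, because $\mathcal{Q}$ carries equiembeddings to continuous maps, and because $\mathcal{Q}(\funcf\mathcal{D})\cong\Gamma(\mathcal{Q}\mathcal{D})$ by the defining property of a functorial representation.

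For the canonicity assertion I would observe that any two functorial representations $\funcf$, $\funcg$ of $\Gamma$ differ only in the domain-pers chosen for the parameters, and that by Lemma~\ref{l_admdp4} any two dense, admissible domain-pers with homeomorphic quotients are weakly isomorphic; hence $\funcf$ and $\funcg$ are weakly equivalent strictly positive functors, and Proposition~\ref{p_qcb7} gives $\mathcal{Q}\mathcal{D}\cong\mathcal{Q}\mathcal{E}$ for their respective least fixed points. In particular the space $X$ does not depend, up to homeomorphism, on the admissible standard representations chosen for the parameters, which completes the proof.

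I do not expect a genuine obstacle at this stage: the real content is carried by Theorem~\ref{t_dlfpadm} (admissibility of the dense least fixed point, used inside Proposition~\ref{p_qcb5}) and by the transfinite inductions behind Propositions~\ref{p_lfp2} and~\ref{p_qcb7}, all of which are already available. The only point deserving an explicit word is the identification, at the level of underlying domains, of the uncountable transfinite construction with the ordinary $\omega$-colimit, since it is this identification that makes $X$ legitimately a $qcb_0$ space — and hence a $qcb_0$ space defined by strictly positive induction.
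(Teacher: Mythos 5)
Your proposal is correct and follows exactly the paper's own route: the theorem is proved as a direct corollary of Proposition~\ref{p_qcb5} (existence of the fixed point via the functorial representation and its least fixed point in $\cclcdomwp$) and Proposition~\ref{p_qcb7} (independence of the chosen representation), with the extra remarks on standard representations and the identification $D_{\gamma_0}\cong D_\omega$ merely unpacking details the paper establishes earlier. No gaps.
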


\proof
The fixed point of $\Gamma $ exists by proposition~\ref{p_qcb5}.
It is independent of the choice of functorial representation of $\Gamma$ by proposition~\ref{p_qcb7}.
\qed

Applications of interest in analysis usually concern Hausdorff spaces, so we include the following important result.

\begin{prop} \label{p_qcb8}
If $ X$ is a $qcb_{0} $ space defined by a strictly positive induction in which all positive parameters involved are Hausdorff, then 
$ X$ is Hausdorff.
\end{prop}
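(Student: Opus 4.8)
The plan is to prove this by induction on the structure of the strictly positive operation $\Gamma$, exploiting the fact that the fixed point $X$ was constructed (via Proposition~\ref{p_qcb5}) as $\mathcal{Q}\mathcal{D}$, where $\mathcal{D}$ is the dense least fixed point of a functorial representation $\funcf$, and that $\mathcal{D}$ itself arises as the inductive limit of the $\gamma$-chain $(\{\mathcal{D}_\alpha\}_{\alpha\in\gamma},\dots)$ constructed from $\funcf$. The key observation is that the Hausdorff property of a $qcb_0$ space $\mathcal{Q}\mathcal{D}$ can be read off from a separation property at the level of the domain-per: namely, $\mathcal{Q}\mathcal{D}$ is Hausdorff iff for every pair of non-equivalent total elements $x,y\in\mathcal{D}^R$ there exist $\mathcal{D}$-total Scott-open-definable neighbourhoods whose $\delta_\mathcal{D}$-images are disjoint. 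So the first step is to establish a clean lemma: \emph{if all positive parameters are Hausdorff (as $qcb_0$ spaces), then each basic domain-per operation preserves this "total-separation" property, and it is also preserved by inductive limits of directed systems in $\cclcdomwp$.}

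The core of the argument is thus two preservation facts. First, a structural-induction step: disjoint sum, sequential product and $\cqcb$-exponentiation all preserve Hausdorffness of $qcb_0$ spaces --- for $\uplus$ and $\times^s$ this is standard, and for $[B\Rightarrow^s Y]$ with $Y$ Hausdorff one checks that if $f\neq g$ in the function space there is some $b$ with $f(b)\neq g(b)$, and disjoint neighbourhoods of $f(b),g(b)$ in $Y$ pull back to disjoint sub-basic opens $O(n_0;U)$, $O(n_0;V)$ separating $f$ and $g$. Combined with Lemma~\ref{l_qcb1} (which identifies $\mathcal{Q}$ of the domain-per operations with the $qcb_0$ operations), this handles every stage $\mathcal{D}_{\alpha+1}=\funcf(\mathcal{D}_\alpha)$: if $\mathcal{Q}\mathcal{D}_\alpha$ is Hausdorff, so is $\mathcal{Q}\mathcal{D}_{\alpha+1}=\Gamma(\mathcal{Q}\mathcal{D}_\alpha)$, since the positive parameters of $\funcf$ represent the Hausdorff positive parameters of $\Gamma$ (note: here we must allow $\mathcal{Q}\mathcal{D}_0$, the one-point or empty space, which is trivially Hausdorff). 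The base case and successor case are therefore routine.

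The step I expect to be the main obstacle is the \emph{limit stage}: showing that if $(\mathcal{D}_\gamma,\{f_\alpha\}_{\alpha\in\gamma})$ is the inductive limit of a $\gamma$-chain of $\cclcdomwp$-objects whose quotients $\mathcal{Q}\mathcal{D}_\alpha$ are all Hausdorff, then $\mathcal{Q}\mathcal{D}_\gamma$ is Hausdorff. One cannot simply take sequential limits, because $qcb_0$ spaces are not closed under filtered colimits with respect to arbitrary separation properties in a naive way; the point to use is the uniform-witness structure of $\approx_\gamma$ established in the proof of Proposition~\ref{p_equid2}. Given $x,y\in\mathcal{D}_\gamma^R$ with $x\not\approx_\gamma y$, pick a common witness level $\alpha$ so that $x=f_\alpha(x_\alpha)$, $y=f_\alpha(y_\alpha)$ with $x_\alpha,y_\alpha\in\mathcal{D}_\alpha^R$ and $x_\alpha\not\approx_\alpha y_\alpha$ (using that $\rank$ is $\approx_\gamma$-invariant and that $f_\alpha$ is an equiembedding, so non-equivalence is detected at level $\alpha$). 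By the induction hypothesis $\mathcal{Q}\mathcal{D}_\alpha$ is Hausdorff, so $[x_\alpha],[y_\alpha]$ have disjoint open neighbourhoods $U_\alpha,V_\alpha$; one then transports these along $f_\alpha^\mathcal{Q}:\mathcal{Q}\mathcal{D}_\alpha\to\mathcal{Q}\mathcal{D}_\gamma$. Since $f_\alpha^\mathcal{Q}$ is injective (equiembeddings reflect the per) and an embedding of $qcb_0$ spaces onto $f_\alpha(\mathcal{D}_\alpha^R)/_{\approx}$, and since $\mathcal{Q}\mathcal{D}_\gamma$ is $T_0$ and sequential with the images $f_\alpha^\mathcal{Q}[\mathcal{Q}\mathcal{D}_\alpha]$ exhausting it in a directed way, one extends $U_\alpha,V_\alpha$ to disjoint sequentially-open --- hence open --- sets in $\mathcal{Q}\mathcal{D}_\gamma$ by taking the sequential saturations; the Hausdorff separation of limits of convergent sequences is exactly what forces these saturations to remain disjoint. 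Finally, Hausdorffness is a topological property invariant under homeomorphism, so it passes from $\mathcal{Q}\mathcal{D}_\gamma$ to $\mathcal{Q}\mathcal{D}=\mathcal{Q}(\mathcal{D}_\gamma^d)$ (the dense-part construction does not change $\mathcal{Q}$) and hence to $X$, completing the proof.
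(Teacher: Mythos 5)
Your successor step is fine, but the limit stage --- which you yourself flag as the main obstacle --- is a genuine gap, not just a rough patch. ``Taking the sequential saturations'' of $U_\alpha,V_\alpha$ and asserting they stay disjoint is essentially assuming what is to be proved: without already knowing $\mathcal{Q}\mathcal{D}_\gamma$ is Hausdorff, a point of higher rank can be the limit of a sequence meeting $U_\alpha$ and of another meeting $V_\alpha$, so uniqueness of sequential limits (``the Hausdorff separation of limits of convergent sequences'') is not available to force disjointness. Two further problems compound this. First, the topology on $\mathcal{Q}\mathcal{D}_\gamma$ is the quotient of the Scott subspace topology on $\mathcal{D}_\gamma^R$, not the colimit topology of the $\mathcal{Q}\mathcal{D}_\alpha$; identifying the two is itself nontrivial (the paper only sketches it, via liftings along dense admissible representations, in a closing remark). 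Second, $f_\alpha^{\mathcal{Q}}$ is only a \emph{sequential} embedding --- the paper explicitly warns that these need not be topological embeddings since the quotients need not be countably based --- so you cannot transport opens from $\mathcal{Q}\mathcal{D}_\alpha$ to relatively open sets of the image inside $\mathcal{Q}\mathcal{D}_\gamma$. There is no general preservation of Hausdorffness under (possibly uncountable) inductive limits of $qcb_0$ spaces along continuous injections, and the paper's own remark that the chain of $qcb_0$ spaces is ``constructed from the limit and down and not from the bottom and up'' signals exactly why a stagewise induction at the space level is avoided.

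The paper's proof sidesteps the transfinite induction entirely: it takes the equivariant, equi-injective map $\bar{\eta}:\mathcal{D}\to[\mathcal{U}\to\mathcal{E}]$ from lemma~\ref{l_eta4}, passes to the induced continuous \emph{injection} $\bar{\eta}^{\mathcal{Q}}$ from $X$ into a space of continuous functions with values in $(\biguplus_{i\le N}A_i)\times\mathbb{N}$, carrying a topology finer than the compact-open one. Since the codomain of these functions is Hausdorff (here the Hausdorffness of the positive parameters enters, once and globally), two distinct functions are separated by basic compact-open sets of the form $\{f: f(u)\in V\}$, and pulling these back along the continuous injection separates the original points. If you want to salvage your statement, the cleanest repair is exactly this: replace the limit-stage argument by a single continuous injection of $X$ into a space you already know separates points.
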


\proof
Let $ \Gamma $ be the strictly positive operation used to define $ X$ and let $ A_{0} , \dots , A_{N} $ be the positive parameters involved.

Let $ \mathcal{D} $ be the dense least fixed point of some functorial representation of $ \Gamma $.
Consider the equivariant and equi-injective map
$ \overline{ \eta } :  \mathcal{D} \rightarrow [ \mathcal{U} \rightarrow \mathcal{E} ] $
defined in lemma~\ref{l_eta4}.
Let $\overline{ \eta }^{ \mathcal{Q}} $ be the induced continuous, injective map from $ X$ to 
$  [ \mathcal{Q}  \mathcal{U}  \rightarrow  ( \biguplus_{ i \leq N} A_{i} ) \times \mathbb{N}  ]$,
which is the set of continuous functions from $  \mathcal{Q}  \mathcal{U} $ into $  ( \biguplus_{ i \leq N} A_{i} ) \times \mathbb{N} $ with a topology which is finer than the compact-open topology.

Choose distinct points $ x, y \in X $. Then $  \overline{ \eta }^{ \mathcal{Q}} (x) \neq  \overline{ \eta }^{ \mathcal{Q}} (y) $, so they are evaluated differently for some $ u \in  \mathcal{Q}  \mathcal{U} $, that is $ \Eval (  \overline{ \eta }^{ \mathcal{Q}} (x) , u ) \neq \Eval (  \overline{ \eta }^{ \mathcal{Q}} (y) , u ) $.
Now, because $ ( \biguplus_{ i \leq N} A_{i} ) \times \mathbb{N}  $ is Hausdorff, these evaluation results can be separated by open neighbourhoods $ V_{x} $ and $ V_{y} $.
Then $  \{ f: f(u) \in V_{x}\} $ and  $  \{ f: f(u) \in V_{y}\} $  are  disjoint basic open sets in the compact-open topology, thus separating $  \overline{ \eta }^{ \mathcal{Q}} (x) $ and $  \overline{ \eta }^{ \mathcal{Q}} (y) $.
The inverse images are then disjoint open neighbourhoods separating $ x $ and $y$.
\qed

\begin{rem}
If $ f: \mathcal{D} \to \mathcal{E} $ is an equiembedding, then $f^{\mathcal{Q}} : \mathcal{Q}\mathcal{D} \to \mathcal{Q}\mathcal{E} $ is
an injective map.
Moreover, it has the property that a sequence is mapped to a convergent sequence if and only if it is itself convergent.
Note that these {\em sequential embeddings} are not necessarily embeddings in the topological sense, unless the spaces are countably based.

Unfortunately, there is no obvious way to lift a sequential embedding $f: X \to Y $ of $qcb_0$ spaces to an equiembedding of
representing domain-pers. Therefore, these embeddings are of limited interest.
\end{rem}

\begin{rem}
The fixed point of $\Gamma $ can be constructed as an inductive limit:
Let $ \funcf $ be a functorial representation of $ \Gamma $ over $\cclcdomwp$.
By proposition~\ref{p_dense3}, there exists a $\gamma$-chain 
 $ ( \{ \mathcal{D}_{ \alpha}  \}_{ \alpha \in \gamma } , \{ f_{\alpha , \beta } \}_{ \alpha \leq \beta \in \gamma  } ) $
of dense domain-pers and equiembeddings such that the inductive limit 
$ \mathcal{D}_\gamma$ is a dense least fixed point of $ \funcf$.
The dense least fixed point is admissible by theorem~\ref{t_dlfpadm} and so all the domain-pers are admissible 
by observation~\ref{o_admdp5}. This shows that 
$ ( \{ \mathcal{Q} \mathcal{D}_{ \alpha}  \}_{ \alpha \in \gamma } , \{ f^{ \mathcal{Q}}_{\alpha , \beta } \}_{ \alpha \leq \beta \in \gamma  } ) $
is a directed system of $qcb_0$ spaces and continuous functions.
Using the lifting of all continuous functions to dense, admissible domain representations (lemma~\ref{l_adm3}),
we can show that $ \mathcal{Q} \mathcal{D}_{ \gamma } $ is the inductive limit.

For what it is worth, the continuous functions of the directed system are sequential embeddings as described in the previous remark.
\end{rem}

The fixed point of $\Gamma$ is an example of an inductive limit, possibly uncountable,
of $qcb_0$ spaces. The category $\cqcb$ does not have uncountable inductive limits, so
the existence of the fixed point cannot be proved inductively within the class of $qcb_0$ 
spaces. On the contrary, 
the chain of $qcb_0$ spaces is constructed from the limit and down and not from the bottom and up.
Furthermore, this means that we do not know whether the fixed point is a {\em least} fixed point of some strictly positive
endofunctor over $\cqcb$.
For examples of initial algebras in $\cqcb$, see \cite{Bat_08}.



A natural extension of this work would be to  study positive inductive definitions in general.

\appendix
\section{Proof of claims} \label{app.proofs}

\subsection*{Proposition~\ref{p_lfp2}}

\begin{claimapp}
There exists a family  $  \{ h_{ \beta } :  \mathcal{D}_{ \beta } \rightarrow  \mathcal{E} \}_{ \beta \leq \gamma_0 } $ of equiembeddings such that,
for each $ \beta \in \gamma_0 $,
$ h_{ \beta } = g \circ \funcf (h_{ \beta } ) \circ f_{ \beta , \beta + 1}  = h_{ \beta + 1} \circ f_{ \beta , \beta + 1 } $. 
\end{claimapp}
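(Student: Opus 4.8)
The plan is to build a single family $\{h_\alpha : \mathcal{D}_\alpha \to \mathcal{E}\}_{\alpha \le \gamma_0}$ by transfinite recursion and to verify, by a simultaneous transfinite induction on $\beta \le \gamma_0$, the following strengthened statement: each $h_\alpha$ is an equiembedding, (a) $h_\alpha = h_{\alpha'} \circ f_{\alpha,\alpha'}$ whenever $\alpha \le \alpha' \le \gamma_0$, and (b) $h_{\alpha+1} = g \circ \funcf(h_\alpha)$ whenever $\alpha + 1 \le \gamma_0$. Property (a) makes $\{h_\alpha\}$ a cocone over the chain; combined with (b) it immediately yields the two equalities of the claim, since $h_{\alpha+1} \circ f_{\alpha,\alpha+1} = h_\alpha$ by (a) while $g \circ \funcf(h_\alpha) \circ f_{\alpha,\alpha+1} = h_{\alpha+1} \circ f_{\alpha,\alpha+1}$ by (b). Uniqueness of the $h_\alpha$ is not needed here.

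At $\beta = 0$: $\mathcal{D}_0$ is initial in $\cclcdomwp$, so I take $h_0$ to be the unique equiembedding $\mathcal{D}_0 \to \mathcal{E}$; conditions (a) and (b) are vacuous. At a limit ordinal $\lambda \le \gamma_0$: the induction hypothesis gives a cocone $\{h_\alpha\}_{\alpha < \lambda}$ over the $\lambda$-chain, and since $(\mathcal{D}_\lambda, \{f_{\alpha,\lambda}\}_{\alpha<\lambda})$ is its inductive limit in $\cclcdomwp$ (proposition~\ref{p_equid2}), I let $h_\lambda$ be the induced mediating morphism, which is automatically an equiembedding; property (a) for pairs ending in $\lambda$ is precisely the defining equality $h_\lambda \circ f_{\alpha,\lambda} = h_\alpha$, and (b) is vacuous at $\lambda$.

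At a successor $\beta + 1 \le \gamma_0$: I set $h_{\beta+1} := g \circ \funcf(h_\beta)$. This is an equiembedding by lemma~\ref{l_equim1}, since $g$ is an equiembedding (being a morphism of $\cclcdomwp$) and $\funcf(h_\beta)$ is an equiembedding by proposition~\ref{p_equim2}; property (b) holds by construction. For (a) it suffices, via $f_{\alpha,\beta+1} = f_{\beta,\beta+1} \circ f_{\alpha,\beta}$ and the hypothesis $h_\alpha = h_\beta \circ f_{\alpha,\beta}$, to check the single equality $h_\beta = h_{\beta+1} \circ f_{\beta,\beta+1}$. I would argue this by cases on $\beta$: for $\beta = 0$ it follows from initiality of $\mathcal{D}_0$; for $\beta = \delta + 1$ one has $f_{\beta,\beta+1} = \funcf(f_{\delta,\delta+1})$, so $h_{\beta+1} \circ f_{\beta,\beta+1} = g \circ \funcf(h_{\delta+1} \circ f_{\delta,\delta+1}) = g \circ \funcf(h_\delta) = h_{\delta+1} = h_\beta$ by the induction hypothesis for (a) and (b); for $\beta$ a limit one uses that $f_{\beta,\beta+1} \circ f_{\alpha,\beta} = f_{\alpha,\beta+1}$ and $f_{\alpha+1,\beta+1} = \funcf(f_{\alpha,\beta})$ to see that $h_{\beta+1} \circ f_{\beta,\beta+1}$ and $h_\beta$ agree after precomposition with every $f_{\alpha,\beta}$ ($\alpha < \beta$), and then concludes they are equal by uniqueness of the mediating morphism out of $\mathcal{D}_\beta$.

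The step I expect to be the main obstacle is exactly this last sub-case of the successor step, where $\beta$ is a limit: there the equality $h_\beta = h_{\beta+1} \circ f_{\beta,\beta+1}$ is not definitional and has to be recovered from how the connecting map $f_{\beta,\beta+1} : \mathcal{D}_\beta \to \funcf(\mathcal{D}_\beta)$ is assembled from the colimit cocone at level $\beta$ and the functorial action of $\funcf$, followed by the uniqueness half of the universal property of $\mathcal{D}_\beta$; one must take care to invoke only instances of (a) and (b) at strictly smaller levels. Everything else --- stability of equiembeddings under composition (lemma~\ref{l_equim1}) and under $\funcf$ (proposition~\ref{p_equim2}), and the existence and uniqueness of mediating morphisms in $\cclcdomwp$ (proposition~\ref{p_equid2}) --- is routine and already in place.
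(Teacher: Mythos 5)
Your proposal is correct and follows essentially the same route as the paper: the same recursion ($h_0$ by initiality, $h_{\beta+1}:=g\circ\funcf(h_\beta)$, mediating morphisms at limits), with the crucial equality $h_\beta=h_{\beta+1}\circ f_{\beta,\beta+1}$ at a limit $\beta$ recovered, exactly as in the paper, by precomposing with the cocone maps $f_{\alpha,\beta}$ and invoking uniqueness of the mediating morphism. The only difference is bookkeeping: the paper carries the single invariant $h_\beta=g\circ\funcf(h_\beta)\circ f_{\beta,\beta+1}$ through the induction, whereas you carry the cocone property together with $h_{\beta+1}=g\circ\funcf(h_\beta)$ and re-derive that equation by a case split on $\beta$ --- the underlying computations are identical.
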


\proof  The proof is by  transfinite induction on $ \beta $.

Let $ h_{0} $  be the unique equiembedding from $ \mathcal{D}_{0} $  into  $\mathcal{E} $.
Since $ g \circ \funcf (h_{ 0 } ) \circ f_{0,1} $  is an equiembedding  from $ \mathcal{D}_{0} $  
into  $\mathcal{E} $,  we have  $ h_{0} =  g \circ \funcf (h_{ 0 } ) \circ f_{0,1} $.

Assume that $  h_{ \beta } : \mathcal{D}_{ \beta } \to \mathcal{E} $ is an equiembedding satisfying
$  h_{ \beta } =  g \circ \funcf (h_{ \beta } ) \circ f_{ \beta , \beta + 1 } $. Let
$ h_{ \beta +1 } := g \circ \funcf ( h_{ \beta } ) $.
Then $ h_\beta =    h_{ \beta + 1} \circ f_{ \beta , \beta + 1 } $ and
\[ h_{ \beta +1 } =  g \circ \funcf ( h_{ \beta + 1} \circ f_{ \beta , \beta + 1 } )
= g \circ \funcf ( h_{ \beta + 1} ) \circ  \funcf ( f_{ \beta , \beta + 1 } )
= g \circ \funcf ( h_{ \beta + 1} ) \circ   f_{ \beta +1 , \beta + 2 }  . \]

Assume that $ h_{ \alpha } =  h_{ \alpha + 1}  \circ  f_{ \alpha , \alpha + 1 } $ for every $ \alpha \in\beta  $, where 
$ \beta\leq\gamma_0 $ is some limit ordinal.
Let $ h_{ \beta }:  \mathcal{D}_{ \beta } \rightarrow   \mathcal{E} $ be the mediating morphism from
$ ( \mathcal{D}_{\beta }, \{f_{\alpha,\beta}\}_{\alpha\in\beta} )$ to $ ( \mathcal{E} , \{ h_{ \alpha } \}_{ \alpha\in\beta  } ) $,
i.e. the unique 
equiembedding such that $ h_{ \alpha } = h_{ \beta} \circ f_{\alpha, \beta } $ for every $ \alpha\in\beta  $, which exists since
$ ( \mathcal{D}_{\beta }, \{f_{\alpha,\beta}\}_{\alpha\in\beta} )$ is the inductive limit.
Then, for each  $ \alpha\in\beta $, we have
\begin{eqnarray*} 
 h_{ \alpha } 
& = & g \circ \funcf (h_{ \alpha } ) \circ f_{ \alpha , \alpha + 1}  \\
& = & g \circ \funcf (  h_{ \beta} \circ f_{ \alpha , \beta } ) \circ f_{ \alpha , \alpha + 1}  \\
& = & g \circ \funcf (  h_{ \beta} ) \circ  \funcf ( f_{ \alpha , \beta } ) \circ f_{ \alpha , \alpha + 1}  \\
& = & g \circ \funcf (  h_{ \beta} ) \circ f_{ \alpha , \beta +1 } \\
& = & g \circ \funcf (  h_{ \beta} ) \circ f_{ \beta , \beta +1 } \circ f_{ \alpha, \beta }^-
\end{eqnarray*}
Together with the uniqueness of $ h_{ \beta } $, this implies that
$ h_{ \beta } =  g \circ \funcf (  h_{ \beta} ) \circ f_{ \beta , \beta +1 } $.
\qed

\begin{claimapp}
Let $ \omega \leq \beta \leq \gamma_0 $ and assume that $ f_{ \beta , \beta + 1} $ is an isomorphism. 
Then  $  h_{ \beta } : D_{ \beta } \to E $ is the unique 
$\hat{  \funcf  }  $-morphism from 
$  ( D_{ \beta } , f_{ \beta, \beta +1 }^{-} ) $
into  $ (E,g) $.
\end{claimapp}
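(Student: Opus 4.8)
The plan is to reduce the statement to the classical least-fixed-point theorem for the underlying functor $\hat{\funcf}$ on $\cdom$, together with the bookkeeping identities of the chain $(\mathcal{D}_\alpha, f_{\alpha,\alpha'})$. First I would dispose of \emph{existence}: the preceding claim gives $h_\beta = g\circ\funcf(h_\beta)\circ f_{\beta,\beta+1}$, and since $f_{\beta,\beta+1}$ is assumed to be an isomorphism we have $f_{\beta,\beta+1}\circ f_{\beta,\beta+1}^- = \id_{D_{\beta+1}}$; post-composing the displayed equality on the right with $f_{\beta,\beta+1}^-$ then yields $h_\beta\circ f_{\beta,\beta+1}^- = g\circ\hat{\funcf}(h_\beta)$, so $h_\beta$ is indeed an $\hat{\funcf}$-morphism from $(D_\beta, f_{\beta,\beta+1}^-)$ into $(E,g)$ (recall $(E,g)$ is an $\hat{\funcf}$-algebra, as already noted). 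It therefore remains to prove uniqueness, and for that it suffices to show that $(D_\beta, f_{\beta,\beta+1}^-)$ is an \emph{initial} $\hat{\funcf}$-algebra: then the unique $\hat{\funcf}$-morphism into $(E,g)$ exists and must coincide with $h_\beta$.

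The initiality is where the work lies, and I would prove it by transfinite induction on $\beta$ with $\omega\le\beta\le\gamma_0$. For the base case $\beta=\omega$, the classical result is exactly that $D_\omega$ is a least fixed point of $\hat{\funcf}$, with the canonical mediating isomorphism $D_\omega\to\hat{\funcf}(D_\omega)=D_{\omega+1}$ being $f_{\omega,\omega+1}$, so $(D_\omega, f_{\omega,\omega+1}^-)$ is an initial $\hat{\funcf}$-algebra. For the successor step I would use the general fact (recorded right after the definition of initial algebra) that $\hat{\funcf}$ sends an initial algebra $(D_\beta, f_{\beta,\beta+1}^-)$ to the initial algebra $(\hat{\funcf}(D_\beta), \hat{\funcf}(f_{\beta,\beta+1}^-))$, which by the construction of the chain equals $(D_{\beta+1}, f_{\beta+1,\beta+2}^-)$ since $f_{\beta+1,\beta+2}=\funcf(f_{\beta,\beta+1})$. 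For a limit ordinal $\lambda$ I would exhibit an $\hat{\funcf}$-algebra isomorphism $(D_\omega, f_{\omega,\omega+1}^-)\to(D_\lambda, f_{\lambda,\lambda+1}^-)$, namely $f_{\omega,\lambda}$: it is an isomorphism of the underlying domains because isomorphisms are preserved under inductive limits in $\cdom$, and it respects the algebra structures, i.e.\ $f_{\omega,\lambda}\circ f_{\omega,\omega+1}^- = f_{\lambda,\lambda+1}^-\circ\hat{\funcf}(f_{\omega,\lambda})$, which one checks by post-composing with the isomorphism $f_{\lambda,\lambda+1}$ and unwinding the identities $\hat{\funcf}(f_{\omega,\lambda})=\funcf(f_{\omega,\lambda})=f_{\omega+1,\lambda+1}$, $f_{\omega,\lambda+1}=f_{\lambda,\lambda+1}\circ f_{\omega,\lambda}=f_{\omega+1,\lambda+1}\circ f_{\omega,\omega+1}$, and $f_{\omega,\omega+1}\circ f_{\omega,\omega+1}^-=\id$. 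Since an $\hat{\funcf}$-algebra morphism that is an isomorphism of underlying domains is an isomorphism of $\hat{\funcf}$-algebras, and initial algebras are unique up to isomorphism, $(D_\lambda, f_{\lambda,\lambda+1}^-)$ is initial; in fact this same argument with $\lambda$ replaced by an arbitrary $\beta\ge\omega$ handles the whole induction at once.

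Putting the pieces together: $(D_\beta, f_{\beta,\beta+1}^-)$ is an initial $\hat{\funcf}$-algebra, hence there is exactly one $\hat{\funcf}$-morphism from it into $(E,g)$, and since $h_\beta$ is one such morphism it is \emph{the} one. I expect the only real obstacle to be the second paragraph — keeping careful track of which facts are immediate from the inductive definition of the chain (transitivity of the $f_{\alpha,\alpha'}$, the identity $f_{\alpha+1,\alpha'+1}=\funcf(f_{\alpha,\alpha'})$, and that $f_{\alpha,\alpha'}$ is an isomorphism of domains whenever $\omega\le\alpha\le\alpha'$) and assembling the naturality square for $f_{\omega,\beta}$ correctly; once that is in place, nothing deep remains.
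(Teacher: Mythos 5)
Your argument is correct and follows the paper's overall strategy: both proofs reduce uniqueness to showing that $(D_{\beta}, f_{\beta,\beta+1}^{-})$ is an \emph{initial} $\hat{\funcf}$-algebra, with the same base case at $\omega$ (the classical least-fixed-point theorem for $\hat{\funcf}$ over $\cdom$) and the same successor step (applying $\hat{\funcf}$ to an initial algebra yields an initial algebra). The only divergence is at limit ordinals. The paper argues uniqueness there via the universal property of the inductive limit: given a competing $\hat{\funcf}$-morphism $h'$ from $(D_{\beta}, f_{\beta,\beta+1}^{-})$ into $(E,g)$, it checks that $h'\circ f_{\alpha,\beta}$ is an $\hat{\funcf}$-morphism out of $(D_{\alpha}, f_{\alpha,\alpha+1}^{-})$ for each $\alpha\in\beta$, concludes $h'\circ f_{\alpha,\beta}=h_{\alpha}$ by the induction hypothesis, and hence $h'=h_{\beta}$ since $h_{\beta}$ is the mediating morphism. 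You instead exhibit $f_{\omega,\beta}$ as an algebra isomorphism from $(D_{\omega}, f_{\omega,\omega+1}^{-})$ to $(D_{\beta}, f_{\beta,\beta+1}^{-})$ and invoke uniqueness of initial algebras up to isomorphism; your naturality computation checks out, and, as you note, this handles all $\beta\geq\omega$ uniformly, collapsing the successor and limit steps. The small price is that you must know beforehand that $f_{\omega,\beta}$ is an isomorphism of the underlying domains, which itself needs the (routine, but not quite "immediate") transfinite induction showing that every $f_{\alpha,\alpha'}$ with $\omega\leq\alpha\leq\alpha'$ is an isomorphism in $\cdom$ --- a fact the paper also asserts without detailed proof --- whereas the paper's limit argument only consumes the induction hypothesis at the levels $\alpha\in\beta$. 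Either route is sound.
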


\proof The proof is by transfinite induction on  $ \beta \leq \gamma_0$.
From domain theory, we know that there exists an initial $ \hat{ \funcf }$-algebra $ (D, f) $.
In fact, we can use $ D= D_{ \omega } $ and $ f : \hat{ \funcf} ( D_{ \omega  } ) \rightarrow D_{ \omega } $ 
 the unique embedding such that 
$ f \circ  \hat{  \funcf  } (f_{n, \omega } ) = f_{n+1, \omega } $ for all $n \in \omega $.
In our notation, $f$ is $ f_{ \omega, \omega +1 }^{-} $,  an isomorphism,
thus $  h_{ \omega } $ is an $\hat{  \funcf  }  $-morphism and  unique by the initiality of $ (D_{ \omega } , f ) $.

Assume that $  h_{ \beta } : D_{ \beta } \to E $ is the unique 
$\hat{  \funcf  }  $-morphism from 
$  ( D_{ \beta } , f_{ \beta, \beta +1 }^{-} ) $
into  $ (E,g) $, and in particular that
$  ( D_{ \beta } , f_{ \beta, \beta +1 }^{-} ) $ is an initial $ \hat{  \funcf  }  $-algebra.
Then 
\[  ( D_{ \beta + 1 } , f_{ \beta +1 , \beta +2 }^{-} ) 
=  ( \hat{\funcf}  (  D_{ \beta }  ),  \hat{  \funcf  }   ( f_{ \beta, \beta +1 }^{-}  )) \]
is an initial $ \hat{  \funcf  }  $-algebra
and $  h_{ \beta }\circ  f_{ \beta, \beta +1 }^{-} = h_{ \beta  +1 } $ is the unique $\hat{  \funcf  }  $-morphism from 
$  ( D_{ \beta + 1 } , f_{ \beta +1 , \beta +2 }^{-} )  $ into $ (E,g) $.

Assume that $  h_{ \alpha } : D_{ \alpha } \to E $ is the unique $\hat{  \funcf  }  $-morphism from  
$  ( D_{ \alpha } , f_{ \alpha, \alpha +1 }^{-} ) $ into  $ (E,g) $ for every $ \alpha \in \beta $, 
for some limit ordinal  $\beta \leq \gamma_0$. Then $ f_{ \beta , \beta +1 } $ is an isomorphism, since every 
$ f_{ \alpha, \alpha +1 } $ is an isomorphism. This shows that $ h_{ \beta } $ is an $\hat{  \funcf  
}  $-morphism. Now, assume that $ h' : D_{ \beta } \rightarrow E $ is another $\hat{  \funcf  }  
$-morphism from  $ ( D_{ \beta }, f_{ \beta , \beta +1 }^{-} ) $ into $ (E,g) $. If $ \alpha \in 
\beta $, we have $ h' \circ f_{ \alpha , \beta } = 
g \circ \hat{  \funcf  } (h') \circ  f_{ \beta , \beta +1 }  \circ  f_{ \alpha , \beta } 
= g \circ \hat{  \funcf  } (h' \circ f_{ \alpha , \beta } ) \circ  f_{ \alpha, \alpha +1 }   $.
This shows that $ h' \circ f_{ \alpha , \beta } $ is an $ \hat{  \funcf  }  $-morphism from 
$  ( D_{ \alpha } , f_{ \alpha, \alpha +1 }^{-} ) $ into  $ (E,g) $ and that
 $ h_{ \alpha }   = h' \circ f_{ \alpha , \beta }  $. By definition of  $ h_{ \beta } $,
we have $ h_{ \beta } = h' $, so $ h_{ \beta } $ is unique.
\qed

\subsection*{Proposition~\ref{p_dense3}}

\begin{claimapp}
There exists a family $ \{ \Delta_{n}\}_{n\in\omega} $ of closed subsets  of $ D$, closed under binary lubs, such that 
\begin{enumerate}[(1)]
\item $ \Delta_{n } \subseteq \Delta_{n+1} $; and 
\item $ \Delta_{n} \cap  \mathcal{D}_{ \alpha}^{R} = \mathcal{D}_{n}^{R} $ for every  $ \alpha \geq 
\omega $. \end{enumerate}
\end{claimapp}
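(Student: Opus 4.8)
The plan is to build $\Delta_{n}$ by mimicking the construction $\mathcal{D}_{n+1}=\funcf(\mathcal{D}_{n})$. First, recall the normalisations already in force in the proof of Proposition~\ref{p_dense3}: the $\mathcal{D}_{\alpha}$ share a single underlying domain $D$, every transition map is $\id_{D}$, the relations $\approx_{\alpha}$ increase as recorded there (so $x\approx_{n}x'\Leftrightarrow x\approx_{\alpha}x'$ whenever $n\le\alpha$ and $x\in\mathcal{D}_{n}^{R}$), $\mathcal{D}_{n}^{R}$ increases with $n$, $\mathcal{D}_{\lambda}^{R}=\bigcup_{\beta<\lambda}\mathcal{D}_{\beta}^{R}$ for limit $\lambda$, $\funcf$ is non-trivial, and, as throughout the paper, $\bot_{D}\notin\mathcal{D}_{\alpha}^{R}$ for every $\alpha$; ``closed'' will mean closed in the Scott topology on $D$, i.e.\ downward-closed and closed under directed suprema.

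For a strictly positive functor $\funcg$ and a closed $S\subseteq D$ that is closed under binary lubs of consistent pairs, define $\Theta_{\funcg}(S)\subseteq\hat{\funcg}(D)$ by recursion on $\funcg$: on the identity, $\Theta(S)=S$; on a constant functor with value a parameter $\mathcal{A}$, $\Theta(S)=A$, the whole underlying domain of $\mathcal{A}$; $\Theta_{\funcg_{0}+\funcg_{1}}(S)=\{\bot\}\cup\{(0,z):z\in\Theta_{\funcg_{0}}(S)\}\cup\{(1,z):z\in\Theta_{\funcg_{1}}(S)\}$; $\Theta_{\funcg_{0}\times\funcg_{1}}(S)=\Theta_{\funcg_{0}}(S)\times\Theta_{\funcg_{1}}(S)$; and $\Theta_{[\mathcal{B}\to\funcg_{1}]}(S)=\{h:h(b)\in\Theta_{\funcg_{1}}(S)\ \mbox{for every}\ b\in B\}$. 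A routine structural induction shows $\Theta_{\funcg}(S)$ is again closed and closed under binary lubs, and that $\Theta_{\funcg}$ is monotone in $S$. (Equivalently: if $S$ is the fixed-point set of a continuous deflation $r\sqsubseteq\id_{D}$, then $\Theta_{\funcg}(S)$ is the fixed-point set of the deflation $\hat{\funcg}(r)$; fixed-point sets of deflations are automatically closed under binary lubs and directed suprema, and the structural induction just secures downward-closedness.) Now set $\Delta_{0}:=\{\bot\}$ and $\Delta_{n+1}:=\Theta_{\funcf}(\Delta_{n})$, read as a subset of $D$ through the identification of the underlying domain of $\funcf(\mathcal{D}_{n})$ with $D$. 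Each $\Delta_{n}$ is then closed and closed under binary lubs, and monotonicity of $\Theta_{\funcf}$ together with $\Delta_{0}=\{\bot\}\subseteq\Delta_{1}$ yields~(1) by induction on $n$.

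For~(2) I would prove the sharper statement $\Delta_{n}\cap\mathcal{D}_{\alpha}^{R}=\mathcal{D}_{n}^{R}$ for all $\alpha\ge n$, by induction on $n$; specialising to $\alpha\ge\omega$ gives the claim. The inclusion $\mathcal{D}_{n}^{R}\subseteq\Delta_{n}$ follows from a structural lemma $\funcg(\mathcal{D}_{n})^{R}\subseteq\Theta_{\funcg}(\Delta_{n})$ whose only substantial case is $\funcg=[\mathcal{B}\to\funcg_{1}]$: for arbitrary $b\in B$ one uses that $\mathcal{B}$ is dense, so $\upset{p}$ meets $\mathcal{B}^{R}$ for every $p\in B_{c}$, and that $\Theta_{\funcg_{1}}(\Delta_{n})$ is closed, in order to reduce to arguments $b\in\mathcal{B}^{R}$, where the structural hypothesis applies; together with $\mathcal{D}_{n}^{R}\subseteq\mathcal{D}_{\alpha}^{R}$ for $\alpha\ge n$ this gives one half of~(2). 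The reverse inclusion $\Delta_{n}\cap\mathcal{D}_{\alpha}^{R}\subseteq\mathcal{D}_{n}^{R}$ is the heart of the matter. For $n=0$ it is vacuous, as $\bot\notin\mathcal{D}_{\alpha}^{R}$. For the inductive step, assume it for $n$ and argue by transfinite induction on $\alpha\ge n+1$: the case $\alpha=n+1$ is trivial; for $\alpha=\beta+1$ with $\beta\ge n+1$ one reduces, via $\mathcal{D}_{\alpha}=\funcf(\mathcal{D}_{\beta})$ and $\mathcal{D}_{n+1}=\funcf(\mathcal{D}_{n})$, to a structural lemma $\Theta_{\funcg}(\Delta_{n})\cap\funcg(\mathcal{D}_{\beta})^{R}\subseteq\funcg(\mathcal{D}_{n})^{R}$, proved by structural induction on $\funcg$ using the outer hypothesis $\Delta_{n}\cap\mathcal{D}_{\beta}^{R}=\mathcal{D}_{n}^{R}$ at the identity leaf and, at function-space nodes, the fact that $\funcg_{1}(f_{n,\beta})\colon\funcg_{1}(\mathcal{D}_{n})\to\funcg_{1}(\mathcal{D}_{\beta})$ is an equiembedding with identity underlying map, so that by the third clause of Definition~\ref{d_equiembedding} any $\approx_{\beta}$-equivalence involving a $\funcg_{1}(\mathcal{D}_{n})$-total element is already an $\approx_{n}$-equivalence; the limit case of $\alpha$ uses $\mathcal{D}_{\alpha}^{R}=\bigcup_{\beta<\alpha}\mathcal{D}_{\beta}^{R}$, settling a witnessing $\beta\le n$ directly and $\beta\ge n+1$ by the transfinite induction hypothesis.

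I expect the main obstacle to be this last inclusion --- ruling out ``spurious'' total elements of $\Delta_{n}$ that are introduced at higher ordinals. What makes it go through is precisely the third clause in the definition of an equiembedding (Definition~\ref{d_equiembedding}): carried through the structural recursion, and leaning at function-space nodes on density of the non-positive parameters to pass from $\mathcal{B}^{R}$ up to all of $B$, it is exactly what lets one convert level-$\beta$ equivalences back to level-$n$ equivalences on already-total elements. The transfinite bookkeeping over $\alpha$ is then routine.
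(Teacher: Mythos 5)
Your proof is correct and follows essentially the same route as the paper's: your operator $\Theta_{\funcg}$ is, clause for clause, the paper's $\Delta_{n}^{\funcf'}$, and both arguments turn on the same two points at function-space nodes --- density of $\mathcal{B}$ to pass from $\mathcal{B}^{R}$ up to all of $B$ for one inclusion, and the equiembedding property to pull level-$\beta$ equivalences back to level $n$ for the other. The only differences are cosmetic ($\Delta_{0}=\{\bot\}$ versus $\emptyset$, and your extra transfinite induction on $\alpha$ where the paper simply carries an arbitrary $\alpha\geq\omega$ through the structural induction).
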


\proof
We define $  \Delta_{n} \subseteq D$ by induction on $ n \in \omega $.
We let $ \Delta_{0} := \emptyset $.
Take as induction hypothesis that $ \Delta_{n} \subseteq D$ is a closed subset, closed under binary lubs, such that
$ \Delta_{n } \subseteq \Delta_{n+1} $ and 
$ \Delta_{n} \cap  \mathcal{D}_{ \alpha}^{R} = \mathcal{D}_{n}^{R} $ for every  $ \alpha \geq 
\omega $. By induction on the structure of a strictly positive functor $ \funcf' $ with dense non-positive
parameters, we define 
a closed  subset $ \Delta_{n}^{ \funcf ' }  $ of $ \hat{ \funcf' } (D)  $, closed under binary lubs, 
satisfying
$ \Delta_{n}^{ \funcf ' } \cap  \funcf' (  \mathcal{D}_{ \alpha } ) ^{R} =   \funcf' (  \mathcal{D}_{ n} ) ^{R} $
for every $ \alpha \geq \omega $: \begin{enumerate}[$\bullet$]
\item{} 
If we have a constant functor, let $ \Delta_{n}^{ \mathcal{A}  }  := A $.
Clearly, $ A\cap \mathcal{A}^R =  \mathcal{A}^R$.
\item{} 
For the identity functor, let $ \Delta_{n}^{ \id }  :=  \Delta_{n}$.
Then $\Delta_n^\id\cap \mathcal{D}_{ \alpha }^{R}=  \mathcal{D}_{n}^{R} $ by the induction hypothesis on $n$.
\item{}
Let $ \Delta_{n}^{ \funcf_{0} + \funcf_{1} } :=  \bigcup_{i=0,1}\{ (i,x) : x \in   \Delta_{n}^{ \funcf_{i} } \} \cup \{ \bot \} $.
The verification is straight-forward.
\item{}
Let $ \Delta_{n}^{ \funcf_{0} \times \funcf_{1} } :=\Delta_{n}^{ \funcf_{0} }  \times \Delta_{n}^{ \funcf_{1} } $.
The verification is straight-forward.
\item{}
Let $ \Delta_{n}^{   [ \mathcal{B} \rightarrow  \funcf_{1} ] } 
:= \{  x \in [ B \rightarrow \hat{  \funcf_{1} } (D) ] : x  [ B  ] \subseteq \Delta_{n}^{ \funcf_{1} }  \} $.
We verify the set equality:
If for some  $ \alpha \geq \omega $, we have $ x \in  [  \mathcal{B} \rightarrow  \funcf_{1} (  \mathcal{D}_{ 
\alpha }  ) ] ^{R}  \cap \Delta_{n}^{ [ \mathcal{B} \rightarrow  \funcf_{1} ] } $
and $ b_{1} \approx_{  \mathcal{B} } b_{2} $, 
then $ x ( b_{1} )  \approx_{ \alpha }    x( b_{2} ) $.
Furthermore $ x ( b_{1} ) ,   x( b_{2} ) \in  \Delta_{n}^{ \funcf_{1} } \cap  \funcf_{1} (  \mathcal{D}_{ \alpha  } ) ^{R} =   \funcf_{1} (  \mathcal{D}_{ n} ) ^{R} $, which implies $ x ( b_{1} )  \approx_{n }    x( b_{2} ) $.
This shows that $ x \in [ \mathcal{B} \rightarrow  \funcf_{1} ( \mathcal{D}_{n} ) ]^{R} $.

If $ x \in [ \mathcal{B} \rightarrow  \funcf_{1} ( \mathcal{D}_{n} ) ]^{R} $,
then $ x [  \mathcal{B}^{R} ] \subseteq \funcf_{1} ( \mathcal{D}_{n} )^{R}  $.
This implies  $ x [ B ] \subseteq  \textrm{cl} ( \funcf_{1} ( \mathcal{D}_{n} )^{R} )  
 \subseteq \Delta_{n}^{ \funcf_{1} }   $, since $ \mathcal{B}^{R}$ is dense in
 $\mathcal{B}$. Hence, $ x \in  \Delta_{n}^{   [ \mathcal{B} \rightarrow  \funcf_{1} ] } $.
\end{enumerate}
In particular, this holds for the functor $\funcf $.
We let  $ \Delta_{n+1} := \Delta_{n}^{ \funcf } $. By the isomorphism of $ \hat{ \funcf } (D)  $  and $ D $, this 
can be considered as a closed subset of $D $ which is also closed under binary lubs. Then, for every $ \alpha \geq \omega $, 
\[   \mathcal{D}_{ n+1}^{R}  = \funcf (\mathcal{D}_n)^R
=  \Delta_{n}^{ \funcf  } \cap  \funcf(  \mathcal{D}_{ \alpha } ) ^{R}
=  \Delta_{n+1}  \cap   \mathcal{D}_{ \alpha +1 }^{R}
= \Delta_{n+1}    \cap   \mathcal{D}_{ \alpha  }^{R} \]
since $  \mathcal{D}_{ n+1}^{R} \subseteq \mathcal{D}_{ \alpha  }^{R} $.
\qed

\begin{claimapp}
Let $ n \geq 1 $. Then there exists a continuous map $ r_{n} : D \rightarrow D $ 
such that \begin{enumerate}[(1)]
\item $ \Delta_{n} = \{ x \in D:  r_{n } (x) = x \}$; and 
\item if  $ \alpha \geq n $, then $ r_{n } : \mathcal{D}_{ \alpha } \rightarrow  
\mathcal{D}_{n}  $ is equivariant. \end{enumerate} 
\end{claimapp}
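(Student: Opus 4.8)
The plan is to prove the claim by a double induction: the outer one on $n$, the inner one a structural induction on $\funcf$ that mirrors the construction of $\Delta_{n}=\Delta_{n-1}^{\funcf}$ from the previous claim. For each subfunctor $\funcf'$ of $\funcf$ I would build a \emph{continuous idempotent} $r_{n}^{\funcf'}:\hat{\funcf'}(D)\to\hat{\funcf'}(D)$ whose set of fixed points is the closed set $\Delta_{n-1}^{\funcf'}$ constructed previously, and which is \emph{equivariant as a self-map} $\funcf'(\mathcal{D}_{\alpha})\to\funcf'(\mathcal{D}_{\alpha})$ for every $\alpha\geq n-1$; then $r_{n}:=r_{n}^{\funcf}$. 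Property~(1) is then immediate, since $\mathrm{Fix}(r_{n})=\Delta_{n-1}^{\funcf}=\Delta_{n}$. For property~(2): idempotence gives $\mathrm{image}(r_{n})=\mathrm{Fix}(r_{n})=\Delta_{n}$; self-equivariance on $\mathcal{D}_{\alpha}$ for $\alpha\geq n$ gives $r_{n}[\mathcal{D}_{\alpha}^{R}]\subseteq\Delta_{n}\cap\mathcal{D}_{\alpha}^{R}=\mathcal{D}_{n}^{R}$, the last equality being the previous claim; and if $x\approx_{\alpha}x'$ then $r_{n}(x)\approx_{\alpha}r_{n}(x')$ with both sides in $\mathcal{D}_{n}^{R}$, whence $r_{n}(x)\approx_{n}r_{n}(x')$ because $\approx_{n}$ and $\approx_{\alpha}$ agree on $\mathcal{D}_{n}^{R}\times D$ when $n\leq\alpha$ (a fact established in the proof of Proposition~\ref{p_dense3}). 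So $r_{n}:\mathcal{D}_{\alpha}\to\mathcal{D}_{n}$ is equivariant, as required.

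For the inner structural induction the basic operations are treated in the obvious way: for a constant functor $\mathcal{A}$ take $r_{n}^{\mathcal{A}}:=\id_{A}$; for a disjoint sum $\funcf_{0}'+\funcf_{1}'$ act by $r_{n}^{\funcf_{0}'}$ on the first branch, by $r_{n}^{\funcf_{1}'}$ on the second, and fix $\bot$; for a product $\funcf_{0}'\times\funcf_{1}'$ take $r_{n}^{\funcf_{0}'}\times r_{n}^{\funcf_{1}'}$; for an exponential $[\mathcal{B}\to\funcf_{1}']$ by a dense non-positive parameter $\mathcal{B}$ take $g\mapsto r_{n}^{\funcf_{1}'}\circ g$. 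In each case continuity and idempotence are routine, the identification $\mathrm{Fix}(r_{n}^{\funcf'})=\Delta_{n-1}^{\funcf'}$ is exactly the clause used to \emph{define} $\Delta_{n-1}^{\funcf'}$, and self-equivariance of $r_{n}^{\funcf'}$ on $\funcf'(\mathcal{D}_{\alpha})$ follows from self-equivariance of the components via Lemma~\ref{l_equiv2} (plus Lemma~\ref{l_equiv1} for the function-space clause, and using that $\approx_{\beta}$ on $\mathcal{D}_{\beta}$ for a limit $\beta$ is the union of the $\approx_{\alpha}$, $\alpha<\beta$). For the identity functor $\id$ one has $\Delta_{n-1}^{\id}=\Delta_{n-1}$, and when $n\geq2$ I would simply put $r_{n}^{\id}:=r_{n-1}$, which has all the required properties by the outer induction hypothesis; since $r_{n-1}$ is then self-equivariant on $\mathcal{D}_{\alpha}$ for $\alpha\geq n-1$, the combined $r_{n}$ is self-equivariant on $\funcf(\mathcal{D}_{\alpha})=\mathcal{D}_{\alpha+1}$ for $\alpha\geq n-1$, i.e.\ on $\mathcal{D}_{\beta}$ for all $\beta\geq n$.

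The main obstacle, and the only genuinely non-routine point, is the base case $n=1$: the identity nodes would demand a continuous idempotent $r_{0}$ with fixed-point set $\Delta_{0}=\emptyset$, which cannot exist, as every continuous endomap of a pointed dcpo has a fixed point. I would therefore treat $n=1$ separately. One route is to bypass the structural recipe and take $r_{1}$ to be the canonical retraction $r_{1}(x):=\bigsqcup\{p\in\Delta_{1}\cap D_{c}:p\sqsubseteq x\}$ onto the Scott-closed, lub-closed set $\Delta_{1}$; this is automatically continuous and idempotent with fixed-point set exactly $\Delta_{1}$, and the remaining work is to check that it is an equivariant self-map of each $\mathcal{D}_{\alpha}$, $\alpha\geq1$, which I expect to come out of convexity, completeness and locality of the $\mathcal{D}_{\alpha}$ together with density of the parameters. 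The alternative route keeps the structural recipe but modifies the identity clause for $n=1$: since $\funcf$ is non-trivial, a subfunctor $\funcf'$ with $\funcf'(\mathcal{D}_{\alpha})$ non-trivial but $\Delta_{0}^{\funcf'}=\emptyset$ always has its identity occurrences strictly below a $+$, and one can then let $r_{1}^{\funcf'}$ collapse onto a single chosen total element of a positive parameter of $\funcf$ (available by non-triviality), verifying by an auxiliary structural induction that this collapse is simultaneously continuous, idempotent, of fixed-point set $\Delta_{0}^{\funcf'}$, and self-equivariant on $\funcf'(\mathcal{D}_{\alpha})$ for all $\alpha\geq0$. Pinning down $n=1$ is where the effort lies; everything else parallels the construction of the $\Delta_{n}$.
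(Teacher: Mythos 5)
Your overall architecture is the paper's: for $n\geq 2$ the map is obtained by letting $\funcf$ act on $r_{n-1}$ (your identity clause $r_{n}^{\id}:=r_{n-1}$ unfolds to exactly the paper's $r_{n+1}:=\funcf(r_{n})$, with $\hat{\funcf}$ viewed as a functor over $\cdomain$), and you have correctly isolated the one genuinely non-routine point, namely $n=1$, where a continuous idempotent with fixed-point set $\Delta_{0}=\emptyset$ cannot exist. Your derivation of property~(2) from self-equivariance plus idempotence plus $\Delta_{n}\cap\mathcal{D}_{\alpha}^{R}=\mathcal{D}_{n}^{R}$ is a perfectly good variant of the paper's direct verification. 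The problem is that neither of your two proposed resolutions of the base case, as stated, closes the gap.

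The first route fails outright. Take $\funcf(X)=\mathcal{A}+[\mathcal{N}\to X]$. Then $\Delta_{0}^{[\mathcal{N}\to\id]}=\{x: x[\mathbb{N}_{\bot}]\subseteq\Delta_{0}\}=\emptyset$, so $\Delta_{1}=\{(0,a):a\in A\}\cup\{\bot\}$. For a total $x=(1,\varphi)\in\mathcal{D}_{2}^{R}$ the only element of $\Delta_{1}$ below $x$ is $\bot$, so your canonical retraction sends $x$ to $\bot\notin\mathcal{D}_{1}^{R}$; it is therefore not equivariant $\mathcal{D}_{2}\to\mathcal{D}_{1}$, and no amount of convexity, locality or completeness repairs this. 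The equation $\Delta_{1}\cap\mathcal{D}_{\alpha}^{R}=\mathcal{D}_{1}^{R}$ constrains only the total elements that already lie in $\Delta_{1}$; it says nothing about where a retraction sends total elements outside $\Delta_{1}$. The second route is the right idea and is what the paper does, but your formulation is internally inconsistent: you ask for $r_{1}^{\funcf'}$ on a trivial subfunctor $\funcf'$ to be an idempotent with fixed-point set $\Delta_{0}^{\funcf'}=\emptyset$, which is again impossible (every element of the image of an idempotent is a fixed point). The correct execution never defines $r_{0}^{\funcf'}$ for trivial $\funcf'$ at all. By lemma~\ref{l_trivialfunctor}, in a non-trivial functor every maximal trivial subfunctor occurs as a summand $\funcf_{1}$ of a sum $\funcf_{0}+\funcf_{1}$ whose other summand $\funcf_{0}$ is non-trivial; one fixes $x_{0}\in\funcf_{0}(\mathcal{D}_{0})^{R}$ and lets the map on the sum send $(1,x)\mapsto(0,x_{0})$ and $(0,x)\mapsto(0,r_{0}^{\funcf_{0}}(x))$. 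The combined map on the sum is still idempotent with fixed-point set $\{(0,x):x\in\Delta_{0}^{\funcf_{0}}\}\cup\{\bot\}=\Delta_{0}^{\funcf_{0}+\funcf_{1}}$, because the cross-collapse $(1,x)\mapsto(0,x_{0})$ contributes no fixed points, and it is equivariant into $\funcf(\mathcal{D}_{0})$ since it sends every $1$-total element of the trivial branch to the fixed total element $(0,x_{0})$. With this correction the rest of your argument goes through, but as written the base case — which you yourself flag as the crux — is not established.
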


\proof We prove the claim by induction on $n$.
In the beginning of the proof of proposition~\ref{p_dense3}, we make the assumption that $\funcf $ is non-trivial.
Therefore, we begin by constructing a continuous map 
$  r_{0}^{ \funcf' } : \funcf'  ( D ) \to  \funcf' ( D )  $
satisfying
$ \Delta_{0}^{ \funcf' } = \{ x \in \hat{ \funcf' } (D) :  r_{0}^{\funcf'}  (x) = x \} $,
by induction on the structure of  a non-trivial, strictly positive  $ \funcf' $.
Non-triviality means that we only have to consider the four induction steps  of lemma~\ref{l_trivialfunctor}.
\begin{enumerate}[$\bullet$]
\item{} If $ \funcf ' $ is atomic, it is a constant functor and equal to some dense $ \mathcal{A} $.
Let  $ r_{0 }^{ \funcf' } := \id_{ A } $. Verification of the set-equality above is trivial.
\item{} If  $  \funcf_{0} +  \funcf_{1}  $ is non-trivial, then at least one of $  \funcf_{0} $ and $ \funcf_{1}$ is non-trivial.
If both are  non-trivial, let $  r_{0 }^{  \funcf_{0} +  \funcf_{1} }:=  r_{0 }^{\funcf_{0}} +  r_{0 }^{\funcf_{1}}  $. Verification is 
straight-forward.
If one of them, say $ \funcf_{1} $, is trivial, we fix  $ x_{0} \in \funcf_{0}  ( \mathcal{D}_{ 0 } )^{R} $.
Let $  r_{0 }^{  \funcf_{0} +  \funcf_{1} } $ be the strict function which maps
$ (0,x) $ to $ (0,  r_{0 }^{\funcf_{0}} (x)  ) $ and $ (1,x) $ to $ (0, x_{0}  )$.
This is a continuous function, since $ r_{0  }^{\funcf_{0}} $ is continuous by the induction hypothesis.
Then $ r_{0 }^{  \funcf_{0} +  \funcf_{1} }(0,x) =(0,x) $ if and only if  $ r_{0  }^{\funcf_{0}}(x)=x$, again by
the induction hypothesis. Keeping in mind that $\Delta_0^{\funcf_1} = \emptyset$ (because
$\funcf_1$ is trivial), we see that this verifies the set equality. 

\item{} 
If $  \funcf_{0} \times \funcf_{1}  $ is non-trivial, then both  $  \funcf_{0} $ and $ \funcf_{1}$ are non-trivial,
and we let $  r_{0  }^{  \funcf_{0} \times \funcf_{1}   }  :=   r_{0  }^{\funcf_{0}}  \times r_{0 }^{\funcf_{1}}  $.
Verification is again straight-forward.

\item{} 
Finally, if $  [\mathcal{B} \to \funcf_{1} ] $ is non-trivial  then $\funcf_{1} $ is non-trivial, so let
$ r_{0 }^{ [ \mathcal{B} \to \funcf_{1} ] }  := ( \id_{ B  } \rightarrow  r_{0 }^{ \funcf_{1} } )$.
Then $ x \in \Delta^{ [ \mathcal{B} \to \funcf_{1} ] }_{0 } $ if and only if 
$ x [  B ]  \subseteq  \Delta_{0}^{  \funcf_{1} } $, which by the induction hypothesis is equivalent to
$ r_{0}^{ \funcf_{1}  } (x (b) ) = x(b)  $ for every $ b \in B$, that is
$  r_{0 }^{ \funcf_{1} } \circ x  = x $.
\end{enumerate}
It is easily verified by structural induction on $ \funcf' $ that 
$ r_{0}^{ \funcf'} : \funcf'(  \mathcal{D}_{ \alpha} ) \rightarrow  \funcf'(  \mathcal{D}_{0 } )$ is equivariant.
The only case for which it is not immediate is  $ \funcf' =  \funcf_{0} +  \funcf_{1} $ with $ \funcf_{1} $ trivial:
If $ ( 0,x ) \approx_{  \funcf'(  \mathcal{D}_{ \alpha} ) } (0,y) $, then 
$  r_{0  }^{\funcf_{0}} (x) \approx r_{0 }^{\funcf_{0}} (y) $ by the induction hypothesis, and
if  $  x \in \funcf_{1} ( \mathcal{D} )^{R} $, then   $  r_{0 }^{  \funcf_{0} +  \funcf_{1} } (1,x) =(0, x_{0} )$.

In particular, this shows that we have a continuous map  $  r_{0 }^{ \funcf }  :D \to D $ satisfying
\[ \Delta_1 = \Delta_0^\funcf = \{ x \in D :   r_{0 }^{ \funcf } (x) = x \} .\]
For the induction start $ n= 1 $,
we let $ r_{1 }  :=  r_{0 }^{ \funcf }  $.
Then $ r_{1} :  \mathcal{D}_{ \alpha  } \rightarrow \mathcal{D}_{1} $  is equivariant by induction on $ \alpha \geq 1$:
\begin{enumerate}[$\bullet$]
\item
Let $ \alpha = \alpha' + 1 $.  Then  $ r_{0}^{ \funcf} : \funcf(  \mathcal{D}_{ \alpha'} ) \rightarrow  \funcf (  \mathcal{D}_{0 } )$ equivariant.
\item
Let $ \alpha $ be a limit ordinal. If  $ x \approx_{ \alpha } y $, then $ x \approx_{ \alpha' } y $ for some $ \alpha' \in \alpha $.
By the induction hypothesis, we have $r_1(x) \approx_1 r_1(y) $. This shows that $r_1$ is equivariant.
\end{enumerate}
For the induction step, assume that the claim holds for $n$.
Recall that strictly positive operations are functorial in $ \cdomain $ as well,
 so we may consider 
$ \hat{\funcf} $ as a functor over $ \cdomain $.
Then $ r_{n+1  } : = \funcf ( r_{n} )  $ is continuous, and
$  \Delta_{n+1} = \{  x \in D :   r_{n+1 }(x) = x \} $,
This can be verified by an induction on the structure of a non-trivial, strictly positive $ \funcf' $ as for $n=1 $ above,
since $\funcf $ is assumed to be non-trivial, and all the induction steps of lemma~\ref{l_trivialfunctor} is covered above.
Moreover, $ r_{n+1} :  \mathcal{D}_{ \alpha +1 } \rightarrow \mathcal{D}_{ n+1} $ is equivariant for every $ \alpha \geq n $, because $ \funcf $ is a functor over $ \cdomainwp $ as well,
and this extends to every limit ordinal  as it did for $ n=1 $. 
\qed

\begin{claimapp}
Let $ p \in D_{c} $ and  assume that $ \upset{p} \cap \mathcal{D}_{ \alpha }^{R} \neq \emptyset $ for some $\alpha \in \gamma$. 
Then $ p \in \bigcup_{n \in \omega } \Delta_{n} $.
\end{claimapp}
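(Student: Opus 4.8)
The plan is to prove the claim by transfinite induction on $\alpha$, the successor step being carried by an auxiliary structural induction on strictly positive functors that mirrors the inductive definition of the sets $\Delta_n^{\funcf'}$ from the previous claim.

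For $\alpha = 0$ there is nothing to do, since $\mathcal{D}_0$ is the initial object and hence $\mathcal{D}_0^R = \emptyset$, so the hypothesis is never met. For a limit ordinal $\alpha$, recall that $x \approx_\alpha x' \Leftrightarrow \exists \beta \in \alpha\,(x \approx_\beta x')$ for $x,x' \in D$; applying this to a total extension $x$ of $p$ in $\mathcal{D}_\alpha$ (with $x' = x$) gives $x \in \mathcal{D}_\beta^R$ for some $\beta < \alpha$, and the induction hypothesis at $\beta$ yields $p \in \bigcup_n \Delta_n$.

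The successor case $\alpha = \beta+1$ is the substantial one. Here $x \in \mathcal{D}_{\beta+1}^R = \funcf(\mathcal{D}_\beta)^R$, and I would extract from the induction hypothesis at $\beta$ --- that every compact $q$ with $\upset q \cap \mathcal{D}_\beta^R \neq \emptyset$ lies in $\bigcup_n \Delta_n$ --- the following statement, proved by structural induction on a strictly positive functor $\funcf'$ with dense non-positive parameters: \emph{for every compact $q' \in \hat{\funcf'}(D)$ with $\upset{q'} \cap \funcf'(\mathcal{D}_\beta)^R \neq \emptyset$ one has $q' \in \bigcup_m \Delta_m^{\funcf'}$}, also recording the routine fact that $\Delta_m^{\funcf'} \subseteq \Delta_{m+1}^{\funcf'}$. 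The constant case is immediate ($\Delta_m^{\mathcal{A}} = A$); the identity case is precisely the hypothesis at $\beta$; the sum and product cases follow by decomposing $q'$ and $x$ componentwise and passing to a common level $m$ via the monotonicity just noted. Instantiating the statement at $\funcf' = \funcf$ then gives $p \in \bigcup_m \Delta_m^\funcf = \bigcup_m \Delta_{m+1} \subseteq \bigcup_n \Delta_n$, closing the induction.

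The main obstacle is the function-space case $\funcf' = [\mathcal{B} \to \funcf_1]$. There $q'$ is a finite lub of step functions $\bigsqcup_{j \in J}[p^j;r^j]$ with $q' \sqsubseteq x$ for an equivariant $x \in [\mathcal{B} \to \funcf_1(\mathcal{D}_\beta)]^R$, and membership $q' \in \Delta_m^{[\mathcal{B}\to\funcf_1]}$ means $q'[B] \subseteq \Delta_m^{\funcf_1}$. Since $q'[B]$ is finite --- its elements are the $\bigsqcup\{r^j : p^j \sqsubseteq b\}$, one for each subset of $J$ --- it is enough to place each such value in some $\Delta_m^{\funcf_1}$ and take the maximum $m$. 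For a fixed value $q'(b)$, I would form the compact $p_b := \bigsqcup\{p^j : p^j \sqsubseteq b\} \in B_c$ and use the density of $\mathcal{B}$ to pick $b^* \in \mathcal{B}^R$ with $p_b \sqsubseteq b^*$ (the basic open set $\upset{p_b}$ is nonempty). Every step function active at $b$ is then active at $b^*$, so $q'(b) \sqsubseteq q'(b^*) \sqsubseteq x(b^*)$, and $x(b^*) \in \funcf_1(\mathcal{D}_\beta)^R$ because $x$ is equivariant and $b^* \in \mathcal{B}^R$. Hence $\upset{q'(b)} \cap \funcf_1(\mathcal{D}_\beta)^R \neq \emptyset$, and as $q'(b)$ is compact the structural induction hypothesis for $\funcf_1$ places it in some $\Delta_m^{\funcf_1}$, as needed.
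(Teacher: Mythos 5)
Your proposal is correct and follows essentially the same route as the paper: a transfinite induction on $\alpha$ whose successor step is carried by a structural induction on strictly positive functors, with the exponentiation case resolved by using the density of $\mathcal{B}$ to find total inputs and then applying the structural induction hypothesis to compact approximations of total elements of $\funcf_1(\mathcal{D}_\beta)$, finiteness giving a uniform level $n$. The only (cosmetic) difference is that in the function-space case you bound the finitely many \emph{values} $q'(b)$ directly rather than the individual $q^j$, which slightly reduces the reliance on the closure of $\Delta_n^{\funcf_1}$ under binary lubs that the paper's version uses implicitly.
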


\proof
We prove this claim by transfinite induction on $ \alpha $.

\begin{enumerate}[$\bullet$]
\item
Trivially true for $ \alpha = 0 $, since $ \mathcal{D}_{ 0 }^{R} = \emptyset $ .

\item
Assume that the claim holds for $ \alpha $.
By induction on the structure of a strictly positive $ \funcf ' $ with dense non-positive
parameters, we prove that $   \upset{p} \cap  \funcf' ( \mathcal{D}_{ \alpha } )^{R} \neq \emptyset  \Rightarrow  
p \in \bigcup_{n \in \omega } \Delta_{n}^{ \funcf' } $ for all $ p \in \hat{ \funcf ' } (D)_c $: 
All cases are trivially verified, except for the exponentiation: Assume that
$  \upset{ \bigsqcup_{j \in J } [p^{j} ; q^{j} ] } \cap  [  \mathcal{B} \rightarrow  \funcf_{1} (  \mathcal{D}_{  \alpha }  ) ] ^{R}  \neq \emptyset$.
Then for each $j\in J$ , we have $ \upset{ p^{j} }\cap\mathcal{B}^R\neq\emptyset$, since $\mathcal{B}$ is dense. 
This implies that  $ \upset{ q^{j} } \cap \funcf_{1} (  \mathcal{D}_{  \alpha }  ) ^{R}   \neq \emptyset  $, 
for each $j \in J$, and by the induction hypothesis, we then have 
$ \{q^{j}\}_{j\in J} \subseteq \bigcup_{n \in \omega }   \Delta_{n}^{ \funcf_{1}} $.
Since $J $ is finite, we have $ \{q^{j}\}_{j\in J} \subseteq  \Delta_{n}^{ \funcf_{1}} $ for some $n\in\omega$, and this shows
that  $   \bigsqcup_{ j \in J } [p^{j} ; q^{j} ] \in  \Delta_{n}^{[ \mathcal{B} \rightarrow   \funcf_{1}]} $.
In particular, we have $  \mathcal{D}_{ \alpha +1  } = \funcf (   \mathcal{D}_{ \alpha  } ) $ and
$ \upset{p} \cap \mathcal{D}_{ \alpha +1  }^{R} \neq \emptyset  \Rightarrow  p \in \bigcup_{n \in \omega } \Delta_{n} $.

\item
Let $\alpha\in \gamma$ be a limit ordinal and assume that the claim holds for all $ \alpha' \in  \alpha $
If $ \upset{p} \cap \mathcal{D}_{ \alpha }^{R} \neq \emptyset $, 
then $ \upset{p} \cap \mathcal{D}_{ \alpha' }^{R} \neq \emptyset $ for some  $ \alpha' \in  \alpha $,
so $ p \in \bigcup_{n \in \omega } \Delta_{n} $.\qed
\end{enumerate}

\subsection*{ Lemma~\ref{l_eta2}}

\begin{claimapp}
Let $ p \in E_{c}^{ \funcf }  $ and $ r \in  \hat{ \funcf } (D)_{c} $.
Then $   \vartheta_{ \funcf } (p) \sqsubseteq r \Leftrightarrow p \sqsubseteq  \eta_{ \funcf } (r) $.
\end{claimapp}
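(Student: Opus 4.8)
The plan is to prove the claim in the case $\funcf=\funcf_0+\funcf_1$ where it occurs, exploiting that the target splits as $\biguplus_{k\in K_\funcf}\hat{\funcf^k}(D)\cong(\biguplus_{k\in K_{\funcf_0}}\hat{\funcf^k}(D))\oplus(\biguplus_{k\in K_{\funcf_1}}\hat{\funcf^k}(D))$. Since $\vartheta_\funcf$ is strict and $\eta_\funcf(\bot)=\bot$, the case $p=\bot$ is immediate, so I would take $p=\bigsqcup_{j\in J}[p^j;q^j]$ with $J$ non-empty and every $q^j\neq\bot$. Then $p$ is witnessed by some $(i,x)\in\funcf(\mathcal{D})^R$, and, as established in the surrounding argument, the index $i$ is uniquely determined, each $q^j$ lying in the $i$-th summand $\biguplus_{k\in K_{\funcf_i}}\hat{\funcf^k}(D)$ and being $\neq\bot$. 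I would then record the two formulas in play: $\vartheta_\funcf(p)=(i,s)$ with $s:=\bigsqcup\{\vartheta_{\funcf_i}(\bigsqcup_{j\in J'}[p^j_i;q^j]):J'\subseteq J,\ \{p^j\}_{j\in J'}\text{ consistent}\}$, and $\eta_\funcf(i,r')=\lambda t.\,\Eval(\eta_{\funcf_i}(r'),t_i)$, so that $\Eval(\eta_\funcf(i,r'),p^j)=\Eval(\eta_{\funcf_i}(r'),p^j_i)$ for each $j\in J$.

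For the implication $p\sqsubseteq\eta_\funcf(r)\Rightarrow\vartheta_\funcf(p)\sqsubseteq r$, I would first note that $p\sqsubseteq\eta_\funcf(r)$ forces $r=(i,r')$ with $r'\in\hat{\funcf_i}(D)_c$: if $r=\bot$, or $r=(i',r')$ with $i'\neq i$, then $\Eval(\eta_\funcf(r),p^j)$ is either $\bot$ or lies in the $i'$-th summand, hence cannot dominate the non-bottom $q^j$ of the $i$-th summand, since comparable non-bottom elements of a separated sum lie in the same summand. Given $r=(i,r')$, the hypothesis reads $q^j\sqsubseteq\Eval(\eta_{\funcf_i}(r'),p^j_i)$ for all $j\in J$; for a consistent $J'\subseteq J$, monotonicity of $\Eval(\eta_{\funcf_i}(r'),\cdot)$ gives $\bigsqcup_{j\in J'}[p^j_i;q^j]\sqsubseteq\eta_{\funcf_i}(r')$, and since $\bigsqcup_{j\in J'}[p^j_i;q^j]\sqsubseteq\bigsqcup_{j\in J}[p^j_i;q^j]\in E_c^{\funcf_i}$ (witnessed by $x$) while $E^{\funcf_i}$, being a dense part, is downward closed, this sub-supremum lies in $E_c^{\funcf_i}$. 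Applying the induction hypothesis that $(\vartheta_{\funcf_i},\eta_{\funcf_i})$ is an adjunction pair yields $\vartheta_{\funcf_i}(\bigsqcup_{j\in J'}[p^j_i;q^j])\sqsubseteq r'$; taking the supremum over all consistent $J'$ gives $s\sqsubseteq r'$, i.e. $\vartheta_\funcf(p)=(i,s)\sqsubseteq(i,r')=r$.

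For the converse, $\vartheta_\funcf(p)=(i,s)\sqsubseteq r$ already forces $r=(i,r')$ with $s\sqsubseteq r'$ and $r'\in\hat{\funcf_i}(D)_c$ (as $(i,s)\neq\bot$); applying this to the singleton subsets $J'=\{j\}$, which are always consistent, gives $\vartheta_{\funcf_i}([p^j_i;q^j])\sqsubseteq r'$, whence by the induction hypothesis $[p^j_i;q^j]\sqsubseteq\eta_{\funcf_i}(r')$, i.e. $q^j\sqsubseteq\Eval(\eta_{\funcf_i}(r'),p^j_i)=\Eval(\eta_\funcf(r),p^j)$; since this holds for every $j\in J$ we obtain $p=\bigsqcup_{j\in J}[p^j;q^j]\sqsubseteq\eta_\funcf(r)$. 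I expect the only genuinely delicate point to be pinning down the summand of $r$, which is exactly where the standing assumption $q^j\neq\bot$ together with the fact that $p$ is witnessed by an element $(i,x)$ of $\funcf(\mathcal{D})^R$ enters; the rest is routine bookkeeping with step functions and two applications of the induction hypothesis.
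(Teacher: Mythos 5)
Your proof is correct and follows essentially the same route as the paper's: reduce to the non-bottom case, pin down the unique summand index $i$, and apply the induction hypothesis that $(\vartheta_{\funcf_i},\eta_{\funcf_i})$ is an adjunction pair to the projected step functions $\bigsqcup_{j\in J'}[p^j_i;q^j]$ for consistent $J'\subseteq J$. Your extra care in justifying that $r$ must lie in the $i$-th summand and that the sub-suprema remain in $E_c^{\funcf_i}$ only makes explicit what the paper leaves implicit.
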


\proof
Observe that since $ \eta_{ \funcf } $ is a strict map, we have
$ \vartheta_{ \funcf } (p) = \bot $ if and only if $ p = \bot $, 
so we prove the result for 
$  \bigsqcup_{j  \in J } [ p^{j}; q^{j} ] \in E_{c}^{ \funcf } $ and 
$ (i,r) \in \hat{ \funcf } (D)_{c} \setminus \{ \bot \} $.

Let  $   \vartheta_{ \funcf } ( \bigsqcup_{j  \in J } [ p^{j}; q^{j} ] ) \sqsubseteq (i,r) $.
Then $ i $ is the index determined by $  \bigsqcup_{j  \in J } [ p^{j}; q^{j} ] $ as above
and $ \vartheta_{ \funcf_{i} }  ( \bigsqcup_{ j \in J' } [ p^{j}_{i}; q^{j} ] ) \sqsubseteq r $ for every 
$ J' \subseteq J $ with $\{ p^{j} \}_{ j \in J'} $ consistent.
By the induction hypothesis
$  \bigsqcup_{ j \in J' } [ p^{j}_{i}; q^{j} ] \sqsubseteq  \eta_{ \funcf_{i}} ( r  ) $
for every such $J' \subseteq J $. In particular, if $ j \in J$, then
\[ q^{j} \sqsubseteq  \Eval ( \eta_{ \funcf_{i}} ( r )  ,  p^{j}_{i}  ) =  \Eval ( \eta_{ \funcf} ( i, r )  ,  p^{j}  )  .  \]
Hence, $ \bigsqcup_{j  \in J } [ p^{j}; q^{j} ] \sqsubseteq  \eta_{ \funcf }  (i, r  ) $.

Let  $ \bigsqcup_{j  \in J } [ p^{j}; q^{j} ] \sqsubseteq  \eta_{ \funcf }  (i, r  ) $.
Then  $  q^{j} \sqsubseteq  \Eval ( \eta_{ \funcf_{i}} ( r )  ,  p^{j}_{i}  ) $
for every $ j \in J $.
Moreover, $ i $ is the index determined by $  \bigsqcup_{j  \in J } [ p^{j}; q^{j} ] $, and
 $  \bigsqcup_{ j \in J' } [ p^{j}_{i}; q^{j} ] \sqsubseteq  \eta_{ \funcf_{i}} ( r  ) $
for $ J' \subseteq J $
on condition that $\{ p^{j} \}_{ j \in J'} $ is consistent.
This condition is essential, since it ensures that
\[ \bigsqcup_{ j \in J''} q^{j} \sqsubseteq 
\bigsqcup  \{  \Eval ( \eta_{ \funcf } (i,  r )  ,  p^{j}  )  : j \in J'' \}
\sqsubseteq   \bigsqcup \{ \Eval ( \eta_{ \funcf_{i}} ( r )  ,  p^{j}_{i}  ) : j \in J'' \} , \]
whenever $ J'' \subseteq J' $.
By the induction hypothesis,
$ \vartheta_{ \funcf_{i} }  ( \bigsqcup_{ j \in J' } [ p^{j}_{i}; q^{j} ] ) \sqsubseteq r $
for every such $ J' \subseteq J $,
so
$   \vartheta_{ \funcf } ( \bigsqcup_{j  \in J } [ p^{j}; q^{j} ] ) \sqsubseteq (i,r) $.
\qed

\begin{claimapp}
Let $ b \in \mathcal{B}^{R} $ and $p  \in \textrm{approx} (b) $.
Let $ \bigsqcup_{j  \in J } [  p^{j} ;  q^{j}  ]   \in  E_{c}^{  \funcf } $ be witnessed by
$  x \in [ \mathcal{B} \rightarrow \funcf_{1} (  \mathcal{D} ) ] ^{R} $.

Then $ \bigsqcup_{ j \in J } \{ [ p_{1}^{j} ; q^{j} ] : p_{0}^{j} \sqsubseteq p  \}  \in E_{c}^{ \funcf_{1} } $ is
witnessed by $ x(b) $.
\end{claimapp}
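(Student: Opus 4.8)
This claim is a compatibility statement: the data witnessing compactness of an element at the level of $\funcf=[\mathcal{B}\to\funcf_{1}]$ restricts, after ``plugging in'' the argument $b$, to data witnessing compactness at the level of $\funcf_{1}$. The plan is to verify directly the pointwise characterisation of ``witnessed by $x(b)$'' supplied by lemma~\ref{l_lwc3}, by specialising the corresponding pointwise characterisation of ``witnessed by $x$'' at a suitably chosen test point of $\mathcal{T}_{\funcf}$.

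First I would dispose of the preliminaries. Since $x$ is equivariant and $b\approx_{\mathcal{B}}b$, we get $x(b)\approx_{\funcf_{1}(\mathcal{D})}x(b)$, so $x(b)\in\funcf_{1}(\mathcal{D})^{R}$; writing $J_{p}:=\{j\in J: p^{j}_{0}\sqsubseteq p\}$, the element in question is $\bigsqcup_{j\in J_{p}}[p^{j}_{1};q^{j}]$, which is a well-defined compact element of $[T_{\funcf_{1}}\to\biguplus_{k\in K_{\funcf}}\hat{\funcf^{k}}(D)]$ because $\{[p^{j};q^{j}]:j\in J\}$ is consistent and $p^{j}_{0}\sqsubseteq p$ for $j\in J_{p}$ (alternatively this consistency drops out of lemma~\ref{l_lwc3} once the $\prec$-property below is in place). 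Next I would recall from the proof of lemma~\ref{l_eta1} that $\mathcal{T}_{\funcf}=\mathcal{B}\times\mathcal{T}_{\funcf_{1}}$ and $\eta_{\funcf}(x)=\lambda t.\Eval(\eta_{\funcf_{1}}(x(t_{0})),t_{1})$, so that $\mathcal{T}_{\funcf}^{R}$ consists exactly of pairs $(b',s)$ with $b'\in\mathcal{B}^{R}$ and $s\in\mathcal{T}_{\funcf_{1}}^{R}$, and $\Eval(\eta_{\funcf}(x),(b',s))=\Eval(\eta_{\funcf_{1}}(x(b')),s)$ for each such pair.

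Now fix an arbitrary $s\in\mathcal{T}_{\funcf_{1}}^{R}$ and feed the test point $t:=(b,s)\in\mathcal{T}_{\funcf}^{R}$ into the hypothesis that $\bigsqcup_{j\in J}[p^{j};q^{j}]$ is witnessed by $x$; this gives
\[ \bigsqcup_{j\in J}\{q^{j}:p^{j}\sqsubseteq(b,s)\}\;\prec_{\biguplus_{k\in K_{\funcf}}\funcf^{k}(\mathcal{D})}\;[\Eval(\eta_{\funcf_{1}}(x(b)),s)]. \]
Since $p^{j}\sqsubseteq(b,s)$ iff $p^{j}_{0}\sqsubseteq b$ and $p^{j}_{1}\sqsubseteq s$, and since $p\sqsubseteq b$ forces $\{j\in J_{p}:p^{j}_{1}\sqsubseteq s\}\subseteq\{j\in J:p^{j}\sqsubseteq(b,s)\}$, the left-hand side dominates $\bigsqcup_{j\in J_{p}}\{q^{j}:p^{j}_{1}\sqsubseteq s\}$; as $\prec$ is downward closed in its compact argument, this yields $\bigsqcup_{j\in J_{p}}\{q^{j}:p^{j}_{1}\sqsubseteq s\}\prec_{\biguplus_{k\in K_{\funcf}}\funcf^{k}(\mathcal{D})}[\Eval(\eta_{\funcf_{1}}(x(b)),s)]$. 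Since $s$ was arbitrary and $\biguplus_{k\in K_{\funcf}}\funcf^{k}(\mathcal{D})$ is convex, local and complete while $\mathcal{T}_{\funcf_{1}}$ is dense, lemma~\ref{l_lwc3} then gives $\bigsqcup_{j\in J_{p}}[p^{j}_{1};q^{j}]\prec_{\eta_{\funcf_{1}}[\funcf_{1}(\mathcal{D})]^{d}}[\eta_{\funcf_{1}}(x(b))]$, that is, $\bigsqcup_{j\in J_{p}}[p^{j}_{1};q^{j}]\in E_{c}^{\funcf_{1}}$ witnessed by $x(b)$, as required. I expect no genuine obstacle here; the only point asking for care is unfolding the product structure of $\mathcal{T}_{\funcf}$ and the definition of $\eta_{\funcf}$ to see that $(b,s)$ is the right test point and that it evaluates to $\Eval(\eta_{\funcf_{1}}(x(b)),s)$, after which the rest is monotonicity of $\bigsqcup$ together with the definition of $\prec$.
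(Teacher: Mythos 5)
Your proof is correct and follows essentially the same route as the paper's: both specialise the pointwise witness condition at test points of the form $(b,s)\in\mathcal{T}_{\funcf}^{R}=(\mathcal{B}\times\mathcal{T}_{\funcf_{1}})^{R}$ and use the identity $\Eval(\eta_{\funcf}(x),(b,s))=\Eval(\eta_{\funcf_{1}}(x(b)),s)$ together with downward closure of $\prec$. The only cosmetic difference is that you aggregate over $j$ for each fixed $s$ while the paper argues per individual $q^{j}$, which are interchangeable given locality of the target domain-per.
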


\proof
Let $j\in J $  and  $ t \in  \upset{p_{1}^{j}} \cap  \mathcal{T}_{ \funcf_{1}}^{R} $, and assume that $p^j \sqsubseteq p $.
Then we have $p^j =(p_0^j, p_1^j) \sqsubseteq (b,t)$.
By assumption, we then have $ q^{j} \prec [ \Eval ( \eta_{ \funcf} (x) , (b,t) ) ] $.
Since $  \Eval ( \eta_{ \funcf } (x) , (b,t) )   =  \Eval ( \eta_{ \funcf_{1}} (x (b)) ,t) $,
this implies $ q^{j} \prec [ \Eval ( \eta_{ \funcf_{1}} (x (b) ) ,t ) ] $.
\qed

\begin{claimapp}
Let $ \bigsqcup_{j  \in J } [  p^{j} ;  q^{j}  ]   \in  E_{c}^{  \funcf } $.

If $ J'$ is a subset of $ J $  such that $ \{   p_{0}^{j} \}_{ j \in J' } $ is consistent in $B$,
then $ \{  \vartheta_{ \funcf_{1} }  ( \bigsqcup_{ k \in J } \{ [ p_{1}^{k} ; q^{k} ] : p_{0}^{k} \sqsubseteq p_{0}^{j}  \}  ) \}_{ j \in J'} $
is consistent in $ \hat{ \funcf_{1} } (D)  $.
\end{claimapp}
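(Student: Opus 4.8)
The plan is to exhibit a single common upper bound for the family, using the two immediately preceding claims together with the monotonicity of $\vartheta_{\funcf_{1}}$.

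First I would dispose of the degenerate case $J' = \emptyset$, in which the empty family is vacuously consistent. So assume $J' \neq \emptyset$ and put $p := \bigsqcup_{j \in J'} p_{0}^{j} \in B_{c}$; this least upper bound exists precisely because $\{ p_{0}^{j} \}_{j \in J'}$ is assumed consistent in $B$. Fix a witness $x \in [ \mathcal{B} \rightarrow \funcf_{1}(\mathcal{D}) ]^{R}$ that $\bigsqcup_{j \in J}[ p^{j} ; q^{j} ] \in E_{c}^{\funcf}$. Since $\mathcal{B}$ is dense, $\upset{p} \cap \mathcal{B}^{R} \neq \emptyset$, so I may choose $b \in \mathcal{B}^{R}$ with $p \sqsubseteq b$; then $p \in \textrm{approx}(b)$, and likewise $p_{0}^{j} \in \textrm{approx}(b)$ for every $j \in J'$.

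Next I would invoke the preceding claim (on evaluation at $b$): applied with the compact element $p$ it yields $\bigsqcup_{k \in J}\{ [ p_{1}^{k} ; q^{k} ] : p_{0}^{k} \sqsubseteq p \} \in E_{c}^{\funcf_{1}}$, witnessed by $x(b)$; applied with each $p_{0}^{j}$ for $j \in J'$ it yields $\bigsqcup_{k \in J}\{ [ p_{1}^{k} ; q^{k} ] : p_{0}^{k} \sqsubseteq p_{0}^{j} \} \in E_{c}^{\funcf_{1}}$, again witnessed by $x(b)$. In particular $\vartheta_{\funcf_{1}}$ is defined on all of these elements. For each $j \in J'$ we have $p_{0}^{j} \sqsubseteq p$, hence $\{ k \in J : p_{0}^{k} \sqsubseteq p_{0}^{j} \} \subseteq \{ k \in J : p_{0}^{k} \sqsubseteq p \}$ and therefore $\bigsqcup_{k \in J}\{ [ p_{1}^{k} ; q^{k} ] : p_{0}^{k} \sqsubseteq p_{0}^{j} \} \sqsubseteq \bigsqcup_{k \in J}\{ [ p_{1}^{k} ; q^{k} ] : p_{0}^{k} \sqsubseteq p \}$. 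Applying the monotonicity of $\vartheta_{\funcf_{1}}$, available from the induction hypothesis on $\funcf_{1}$, I conclude that every member of the family $\{ \vartheta_{\funcf_{1}}( \bigsqcup_{k \in J}\{ [ p_{1}^{k} ; q^{k} ] : p_{0}^{k} \sqsubseteq p_{0}^{j} \} ) \}_{j \in J'}$ lies below the single element $\vartheta_{\funcf_{1}}( \bigsqcup_{k \in J}\{ [ p_{1}^{k} ; q^{k} ] : p_{0}^{k} \sqsubseteq p \} )$, so the family is consistent in $\hat{\funcf_{1}}(D)$, as required.

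The only routine facts I would still appeal to are that $\{ [ p_{1}^{k} ; q^{k} ] : p_{0}^{k} \sqsubseteq p' \}_{k \in J}$ is $[ T_{\funcf_{1}} \rightarrow \biguplus_{k \in K_{\funcf}} \hat{\funcf^{k}}(D) ]$-consistent for each $p' \in B_{c}$ (noted just before the claim) and the legitimacy of the choice of $b$ by density of $\mathcal{B}$. There is no genuine obstacle here; the only slightly delicate point is ensuring that $\vartheta_{\funcf_{1}}$ is actually applicable to the sub-step-functions, which is exactly what the evaluation-at-$b$ claim secures, after which everything is monotonicity.
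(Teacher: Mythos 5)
Your proof is correct, but it closes the argument differently from the paper. Both proofs begin identically: pick the witness $x$, form $p:=\bigsqcup_{j\in J'}p_0^j\in B_c$, and use density of $\mathcal{B}$ to choose $b\in\upset{p}\cap\mathcal{B}^R$, so that the preceding evaluation-at-$b$ claim applies. The paper then finishes by invoking the $\prec$-half of the induction hypothesis: each $\vartheta_{\funcf_1}(\bigsqcup_{k\in J}\{[p_1^k;q^k]:p_0^k\sqsubseteq p_0^j\})$ is $\prec_{\funcf_1(\mathcal{D})}[x(b)]$, and since $\funcf_1(\mathcal{D})$ is local the class $[x(b)]$ is consistent, which bounds the whole family. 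You instead apply the evaluation claim once more at the single compact element $p$ itself to get $\bigsqcup_{k\in J}\{[p_1^k;q^k]:p_0^k\sqsubseteq p\}\in E_c^{\funcf_1}$, and then use only the monotonicity of $\vartheta_{\funcf_1}$ to exhibit $\vartheta_{\funcf_1}(\bigsqcup_{k\in J}\{[p_1^k;q^k]:p_0^k\sqsubseteq p\})$ as an explicit common upper bound. Your route is self-contained at the level of the order structure and does not need locality of $\funcf_1(\mathcal{D})$ at this point, at the cost of one extra invocation of the evaluation claim; the paper's route leans on the $\prec$-property that is being carried through the induction anyway. Both are sound, and all the ingredients you appeal to (compactness of the finite lub $p$, consistency of the restricted step-function families, monotonicity of $\vartheta_{\funcf_1}$ from the induction hypothesis) are available exactly where you use them.
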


\proof
Choose $  x \in [ \mathcal{B} \rightarrow \funcf_{1} (  \mathcal{D} ) ] ^{R} $ witnessing  that  
$ \bigsqcup_{j  \in J } [  p^{j} ;  q^{j}  ]   \in  E_{c}^{  \funcf } $.

Assume that $J' \subseteq J $ and that $ \{   p_{0}^{j} \}_{ j \in J' } $ is consistent in $B$.
Let $j \in J $ and choose $ b \in \upset{ \bigsqcup_{ j \in J'} p_{0}^{j} } \cap \mathcal{B}^{R} $. 
Then  $ \bigsqcup_{ k \in J }  \{  [ p_{1}^{k} ; q^{k} ] : p_{0}^{k} \sqsubseteq p_{0}^{j} \} \in E_{c}^{ \funcf_{1}} $ 
is witnessed by $x(b)$ by the previous claim.
Moreover, by the induction hypothesis, we have
$ \vartheta_{ \funcf_{1}} (  \bigsqcup_{ k \in J }  \{  [ p_{1}^{k} ; q^{k} ] : p_{0}^{k} \sqsubseteq p_{0}^{j} \} ) 
\prec_{ \funcf_{1} ( \mathcal{D} )} [ x(b) ] $.
This is sufficient as $ \funcf_{1} ( \mathcal{D} ) $ is local.
\qed

\begin{claimapp}
Assume $  \bigsqcup_{j  \in J } [  p^{j} ;  q^{j}  ] \sqsubseteq \bigsqcup_{ k \in K} [ p^{k} ; q^{k} ]  \in E_{c}^{ \funcf }$.
Then
\[ \bigsqcup_{ j \in J} [ p_{0}^{j} ;  \vartheta_{ \funcf_{1} }  ( \bigsqcup_{ l \in J } \{ [ p_{1}^{l} ; q^{l} ] : p_{0}^{l} \sqsubseteq p_{0}^{j} \} ) ]  \sqsubseteq
\bigsqcup_{ k \in K} [ p_{0}^{k} ;  \vartheta_{ \funcf_{1} }  ( \bigsqcup_{ m \in K } \{ [ p_{1}^{m} ; q^{m} ] : p_{0}^{m} \sqsubseteq p_{0}^{k} \} ) ]  . \]
\end{claimapp}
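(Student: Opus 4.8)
The plan is to reduce the claimed comparison of two finite lubs of step functions in $[B\to\hat{\funcf_{1}}(D)]$ to a pointwise comparison, and then settle that comparison with the induction hypothesis that $(\vartheta_{\funcf_{1}},\eta_{\funcf_{1}})$ is an adjunction pair. Throughout I write $p^{j}=(p^{j}_{0},p^{j}_{1})$ for $j\in J$ and $p^{k}=(p^{k}_{0},p^{k}_{1})$ for $k\in K$, with $p^{j}_{0},p^{k}_{0}\in B_{c}$ and $p^{j}_{1},p^{k}_{1}\in(T_{\funcf_{1}})_{c}$, since both sides of the hypothesis lie in $[B\times T_{\funcf_{1}}\to\biguplus_{k\in K_{\funcf}}\hat{\funcf^{k}}(D)]$. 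I abbreviate $s^{j}:=\bigsqcup_{l\in J}\{[p^{l}_{1};q^{l}]:p^{l}_{0}\sqsubseteq p^{j}_{0}\}$ for $j\in J$ and $s^{k}:=\bigsqcup_{m\in K}\{[p^{m}_{1};q^{m}]:p^{m}_{0}\sqsubseteq p^{k}_{0}\}$ for $k\in K$, so that the two sides of the claim are $\bigsqcup_{j\in J}[p^{j}_{0};\vartheta_{\funcf_{1}}(s^{j})]$ and $\bigsqcup_{k\in K}[p^{k}_{0};\vartheta_{\funcf_{1}}(s^{k})]$, both of which are genuine elements of $[B\to\hat{\funcf_{1}}(D)]$ as established just before the claim. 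A finite lub of step functions lies below a continuous function $g$ precisely when each step function does, so it suffices to show, for each fixed $j\in J$, that $\vartheta_{\funcf_{1}}(s^{j})\sqsubseteq\bigsqcup\{\vartheta_{\funcf_{1}}(s^{k}):k\in K,\ p^{k}_{0}\sqsubseteq p^{j}_{0}\}$.

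First I would unpack the hypothesis. Evaluating $\bigsqcup_{l\in J}[p^{l};q^{l}]\sqsubseteq\bigsqcup_{m\in K}[p^{m};q^{m}]$ at the compact point $p^{l}$ gives $q^{l}\sqsubseteq\bigsqcup\{q^{m}:m\in K,\ p^{m}_{0}\sqsubseteq p^{l}_{0},\ p^{m}_{1}\sqsubseteq p^{l}_{1}\}$ for every $l\in J$; when $p^{l}_{0}\sqsubseteq p^{j}_{0}$ this yields $q^{l}\sqsubseteq\bigsqcup\{q^{m}:m\in K,\ p^{m}_{0}\sqsubseteq p^{j}_{0},\ p^{m}_{1}\sqsubseteq p^{l}_{1}\}$, and comparing step functions over $T_{\funcf_{1}}$ this says exactly $s^{j}\sqsubseteq\tilde{s}$, where $\tilde{s}:=\bigsqcup\{[p^{m}_{1};q^{m}]:m\in K,\ p^{m}_{0}\sqsubseteq p^{j}_{0}\}$ (this lub exists by the $[T_{\funcf_{1}}\to\biguplus_{k}\hat{\funcf^{k}}(D)]$-consistency remark noted before the definition of $\vartheta_{\funcf}$, applied to $\bigsqcup_{m\in K}[p^{m};q^{m}]$ and $p=p^{j}_{0}$). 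Moreover $\tilde{s}=\bigsqcup\{s^{k}:k\in K,\ p^{k}_{0}\sqsubseteq p^{j}_{0}\}$: the inequality $\sqsupseteq$ follows by taking $k=m$ for each $m\in K$ with $p^{m}_{0}\sqsubseteq p^{j}_{0}$, and $\sqsubseteq$ holds because every step function occurring in an $s^{k}$ with $p^{k}_{0}\sqsubseteq p^{j}_{0}$ has its first coordinate below $p^{k}_{0}\sqsubseteq p^{j}_{0}$.

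Finally I would close with the adjunction. Set $r:=\bigsqcup\{\vartheta_{\funcf_{1}}(s^{k}):k\in K,\ p^{k}_{0}\sqsubseteq p^{j}_{0}\}$; this lub exists in $\hat{\funcf_{1}}(D)$ because $\{p^{k}_{0}:k\in K,\ p^{k}_{0}\sqsubseteq p^{j}_{0}\}$ is bounded by $p^{j}_{0}$, so the immediately preceding consistency claim, applied to $\bigsqcup_{m\in K}[p^{m};q^{m}]\in E^{\funcf}_{c}$, makes $\{\vartheta_{\funcf_{1}}(s^{k}):p^{k}_{0}\sqsubseteq p^{j}_{0}\}$ consistent in $\hat{\funcf_{1}}(D)$. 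For each such $k$ the Galois property gives $s^{k}\sqsubseteq\eta_{\funcf_{1}}(\vartheta_{\funcf_{1}}(s^{k}))$, and since $\eta_{\funcf_{1}}$ is monotone and $\vartheta_{\funcf_{1}}(s^{k})\sqsubseteq r$ we get $s^{k}\sqsubseteq\eta_{\funcf_{1}}(r)$; hence $s^{j}\sqsubseteq\tilde{s}=\bigsqcup\{s^{k}:p^{k}_{0}\sqsubseteq p^{j}_{0}\}\sqsubseteq\eta_{\funcf_{1}}(r)$, and the Galois property again yields $\vartheta_{\funcf_{1}}(s^{j})\sqsubseteq r$, which is exactly what was needed. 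I expect the only real obstacle to be bookkeeping — keeping the two index sets and the nested lubs apart, and checking that $\tilde{s}$ and $r$ actually exist. In particular it is essential not to merely invoke monotonicity of $\vartheta_{\funcf_{1}}$, which would give only $\vartheta_{\funcf_{1}}(s^{j})\sqsubseteq\vartheta_{\funcf_{1}}(\tilde{s})$ — a strictly weaker bound — rather than the required bound by the lub $\bigsqcup\{\vartheta_{\funcf_{1}}(s^{k}):p^{k}_{0}\sqsubseteq p^{j}_{0}\}$; it is the adjunction that bridges this gap.
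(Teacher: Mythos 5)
Your proof is correct and takes essentially the same route as the paper's: fix $j\in J$, use the hypothesis to derive $\bigsqcup_{l\in J}\{[p_1^l;q^l]:p_0^l\sqsubseteq p_0^j\}\sqsubseteq\bigsqcup_{k\in K}\{[p_1^k;q^k]:p_0^k\sqsubseteq p_0^j\}$, and then transfer this inequality through $\vartheta_{\funcf_1}$. The one point where you are more careful than the paper is the final step: the paper justifies it by saying $\vartheta_{\funcf_1}$ is ``continuous'', whereas you rightly observe that monotonicity/Scott-continuity alone only yields $\vartheta_{\funcf_1}(s^j)\sqsubseteq\vartheta_{\funcf_1}(\tilde{s})$ and that one must use the fact that the lower adjoint carries the finite lub $\bigsqcup\{s^k:p_0^k\sqsubseteq p_0^j\}$ outside, which you derive explicitly from the Galois property of the adjunction pair $(\vartheta_{\funcf_1},\eta_{\funcf_1})$.
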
 

\proof
Fix $ j \in J $. If $ l \in J $ with $ p_{0}^{l} \sqsubseteq p_{0}^{j} $, then
\[ q^{l} \sqsubseteq  \bigsqcup_{ k \in K } \{  q^{k}  : p^{k} \sqsubseteq p^{l} \} 
\sqsubseteq  \bigsqcup_{ k \in K } \{  q^{k}  : p^{k}_{0} \sqsubseteq p^{j}_{0} \wedge  p^{k}_{1} \sqsubseteq p^{l}_{1}   \} . \]
This shows that $ \bigsqcup_{ l \in J }  \{ [ p_{1}^{l} ; q^{l} ] : p_{0}^{l} \sqsubseteq p_{0}^{j} \} 
\sqsubseteq \bigsqcup_{ k \in K }  \{ [ p_{1}^{k} ; q^{k} ] : p_{0}^{k} \sqsubseteq p^{j}_{0}  \} $, and 
we have
\[ \vartheta_{ \funcf_{1} }  (  \bigsqcup_{ l \in J }  \{ [ p_{1}^{l} ; q^{l} ] : p_{0}^{l} \sqsubseteq p_{0}^{j} \}  ) \sqsubseteq
\bigsqcup_{ k \in K }  \{  \vartheta_{ \funcf_{1} }  ( [ p_{1}^{k} ; q^{k} ] ) : p_{0}^{k} \sqsubseteq p^{j}_{0}  \} \]
since $ \vartheta_{ \funcf_{1} }  $ is continuous.
\qed

\begin{claimapp}
Let  $ \bigsqcup_{j  \in J } [  p^{j} ;q^{j}  ] \in    E_{c}^{  \funcf } $
and let $ r \in D_c^\funcf $. Then
$ \bigsqcup_{j  \in J } [  p^{j} ;q^{j}  ]  \sqsubseteq  \eta_{ \funcf } (r)$ 
if and only if,
for every $j\in J$,
\[ \bigsqcup_{k \in J } \{ [  p_{1}^{k} ;q^{k}  ]  : p_{0}^{k} \sqsubseteq p_{0}^{j} \}
\sqsubseteq \eta_{ \funcf_{1} } (r ( p_{0}^{j}  ) ) . \]
\end{claimapp}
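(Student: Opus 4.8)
The plan is to reduce both sides of the claimed equivalence to explicit pointwise inequalities between step functions and their continuous upper bounds, and then compare the two resulting families of conditions directly.

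First I would recall that in the case $\funcf = [\mathcal{B} \to \funcf_1]$ under consideration we have $\mathcal{T}_\funcf = \mathcal{B} \times \mathcal{T}_{\funcf_1}$ and, by the definition of $\eta_\funcf$ in the proof of lemma~\ref{l_eta1}, $\eta_\funcf(r)$ is the continuous function $(t_0,t_1)\mapsto\Eval(\eta_{\funcf_1}(r(t_0)),t_1)$ on $B\times T_{\funcf_1}$. Writing $p^j=(p_0^j,p_1^j)$ with $p_0^j\in B_c$ and $p_1^j\in (T_{\funcf_1})_c$, and using the elementary fact that a finite lub of step functions lies below a continuous function $f$ precisely when each individual step function $[p^j;q^j]$ does, and that $[p^j;q^j]\sqsubseteq f\iff q^j\sqsubseteq f(p^j)$, the left-hand side $\bigsqcup_{j\in J}[p^j;q^j]\sqsubseteq\eta_\funcf(r)$ unfolds to the family of conditions
\[ q^j \sqsubseteq \Eval(\eta_{\funcf_1}(r(p_0^j)),p_1^j) \qquad (j\in J). \]
Applying the same observation inside $[\mathcal{T}_{\funcf_1}\to\biguplus_{k\in K_\funcf}\funcf^k(\mathcal{D})]$ to each fixed $j$, the right-hand side's condition $\bigsqcup_{k\in J}\{[p_1^k;q^k]:p_0^k\sqsubseteq p_0^j\}\sqsubseteq\eta_{\funcf_1}(r(p_0^j))$ unfolds to: for every $k\in J$ with $p_0^k\sqsubseteq p_0^j$ one has $q^k\sqsubseteq\Eval(\eta_{\funcf_1}(r(p_0^j)),p_1^k)$. (Here I would note that these finite lubs exist because the set $\{[p_1^k;q^k]:p_0^k\sqsubseteq p\}_{k\in J}$ is $[T_{\funcf_1}\to\cdots]$-consistent for every $p\in B_c$, as established in the claim just above this one, so the equivalence $\bigsqcup S\sqsubseteq y\iff\forall s\in S\:(s\sqsubseteq y)$ is legitimate.)

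With both sides in this form, the two implications are immediate. From right to left, taking $k=j$ (which is admissible since $p_0^j\sqsubseteq p_0^j$) gives exactly $q^j\sqsubseteq\Eval(\eta_{\funcf_1}(r(p_0^j)),p_1^j)$ for each $j$. From left to right, fix $j$ and $k$ with $p_0^k\sqsubseteq p_0^j$; from the left-hand conditions $q^k\sqsubseteq\Eval(\eta_{\funcf_1}(r(p_0^k)),p_1^k)$, and monotonicity of $r$, of $\eta_{\funcf_1}$, and of $\Eval$ in its first argument yields $\Eval(\eta_{\funcf_1}(r(p_0^k)),p_1^k)\sqsubseteq\Eval(\eta_{\funcf_1}(r(p_0^j)),p_1^k)$, hence $q^k\sqsubseteq\Eval(\eta_{\funcf_1}(r(p_0^j)),p_1^k)$, which is the required condition.

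I do not expect a genuine obstacle here: the argument is purely a matter of unwinding the definitions of $\eta_\funcf$ and of step-function order. The only points requiring a little care are keeping the indexing for $\funcf$ and for $\funcf_1$ apart — the step functions $[p^j;q^j]$ live over $B\times T_{\funcf_1}$ whereas $[p_1^j;q^j]$ live over $T_{\funcf_1}$ — and invoking the previously established consistency of the relevant families so that the "lub below" reformulations are valid.
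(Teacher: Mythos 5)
Your proposal is correct and follows essentially the same route as the paper's own proof: both directions reduce to the pointwise conditions $q^j \sqsubseteq \Eval(\eta_{\funcf_1}(r(p_0^j)), p_1^j)$, with the forward implication obtained by the same monotonicity chain $\Eval(\eta_{\funcf_1}(r(p_0^k)), p_1^k) \sqsubseteq \Eval(\eta_{\funcf_1}(r(p_0^j)), p_1^k)$ for $p_0^k \sqsubseteq p_0^j$, and the converse by instantiating $k=j$.
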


\proof
Assume $  \bigsqcup_{j  \in J } [  p^{j} ;q^{j}  ]  \sqsubseteq  \eta_{ \funcf } (r) $.
Fix $ j \in J $ and let $ J' := \{ k \in J :  p_{0}^{k} \sqsubseteq p_{0}^{j} \} $.
Then for each $ k \in J' $, we have 
\[ q^{k} \sqsubseteq \Eval ( \eta_{ \funcf }  (r) , p^{k} ) 
\sqsubseteq \Eval ( \eta_{ \funcf_{1} }  (r (p_{0}^{k} )) , p_{1}^{k} ) 
\sqsubseteq \Eval ( \eta_{ \funcf_{1} }  (r (p_{0}^{j} )) , p_{1}^{k} ) . \]
 Hence, $  \bigsqcup_{ k \in J' }  \{  [ p_{1}^{k};   q^{k} ] \sqsubseteq  \eta_{ \funcf_{1} }  (r (p_{0}^{j} )) $.

Let  $ j \in  J $ and assume that  $ \bigsqcup_{ k \in J }  \{  [ p_{1}^{k};   q^{k} ] :  p_{0}^{k} \sqsubseteq 
p_{0}^{j}  \} \sqsubseteq  \eta_{ \funcf_{1} }  (r (p_{0}^{j} ) )$. If $ k \in J $ and
$ p_{0}^{k} \sqsubseteq p_{0}^{j} $, then
$ q^{k} \sqsubseteq  \Eval ( \eta_{ \funcf_{1} }  (r (p_{0}^{j} )) , p_{1}^{k} ) $. In particular,
$ q^{j} \sqsubseteq  \Eval ( \eta_{ \funcf_{1} }  (r (p_{0}^{j} )) , p_{1}^{j} ) =  \Eval ( \eta_{ \funcf }  (r) , p^{j} ) $.
This holds for all $ j \in J $, so $  \bigsqcup_{j  \in J } [  p^{j} ;q^{j}  ]  \sqsubseteq  \eta_{ \funcf } (r) $.
\qed

\subsection*{Lemma~\ref{l_eta4}}

\begin{claimapp}
$\mathcal{U} $ is dense and admissible.
\end{claimapp}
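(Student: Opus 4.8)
The plan is to establish density and admissibility of $\mathcal{U}$ separately, both by reducing to the corresponding properties of $\mathcal{T} = \mathcal{T}_\funcf$, which we already know from lemma~\ref{l_eta1} to be dense and admissible. Recall that $U$ is the domain of all sequences $\{x_m\}_{m\in\nat}$ over $T$ with the coordinatewise order, and $x\approx_\mathcal{U} y$ iff $x_m\approx_\mathcal{T} y_m$ for every $m$; thus $\mathcal{U}^R$ consists of the sequences over $\mathcal{T}^R$.

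For density: a compact element of $U$ is a sequence $p=\{p_m\}_{m\in\nat}$ with $p_m\in T_c$ for all $m$ and $p_m=\bot$ for all but finitely many $m$. First I would note that $\upset{p}\cap\mathcal{U}^R\neq\emptyset$ exactly when $\upset{p_m}\cap\mathcal{T}^R\neq\emptyset$ for every $m$: given total extensions $t_m\sqsupseteq p_m$ in $\mathcal{T}^R$ for each $m$, the sequence $\{t_m\}_m$ is a total extension of $p$ in $\mathcal{U}^R$, and conversely projecting a total extension of $p$ gives one for each $p_m$. Since $\mathcal{T}$ is dense and $p_m\in\{\bot\}$ for large $m$ with $\upset{\bot}\cap\mathcal{T}^R\neq\emptyset$ automatic (as $\mathcal{T}$ is non-trivial — it always contains $\{t\}$ with $t$ total), every compact $p\in U_c$ has $\upset{p}\cap\mathcal{U}^R\neq\emptyset$. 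This is exactly the condition that $\mathcal{U}^R$ is dense in $U$.

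For admissibility: $U$ is the countably based domain which is the countable product $\prod_{m\in\nat} T$ (more precisely, the domain of sequences ordered coordinatewise, which is the inverse limit / product of countably many copies of $T$). I would invoke lemma~\ref{l_adm5}: admissibility is preserved under finite products, and one checks by a standard argument that it is preserved under countable products as well — the representation map $\varrho:\prod_m \mathcal{T}^R\to\prod_m\mathcal{Q}\mathcal{T}$ is componentwise, and factorisation of a continuous $\varphi:F^R\to\prod_m\mathcal{Q}\mathcal{T}$ through $\varrho$ is obtained by factoring each component $\pi_m\circ\varphi$ through the admissible $\mathcal{T}$ and pairing the witnesses into a single continuous map $F\to\prod_m T = U$. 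Since $\mathcal{Q}\mathcal{U}\cong\prod_m\mathcal{Q}\mathcal{T}$ and each $\mathcal{Q}\mathcal{T}$ is a $qcb_0$ space (being the quotient of the admissible $\mathcal{T}$), the countable product is again a $qcb_0$ space by the closure properties of $\cqcb$ recorded in section~\ref{s_backg}, and $\mathcal{U}$ is an admissible, countably based domain-per representing it.

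The main obstacle, such as it is, is the countable-product case of admissibility, which is not literally stated as one of the three bullets of lemma~\ref{l_adm5} (only finite products appear there); I would either spell out the straightforward generalisation to countable products directly via the factorisation argument above, or observe that $\prod_{m\in\nat}\mathcal{T}$ arises as an inductive limit of the finite products $\prod_{m<n}\mathcal{T}$ under the obvious equiembeddings and appeal to the machinery of section~\ref{s_dwp} — but the direct factorisation argument is cleaner and is what I would write. Everything else is routine coordinatewise bookkeeping.
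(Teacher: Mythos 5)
Your argument is correct, but it reaches admissibility by a genuinely different route than the paper. The paper observes that countable sequences over $T$ are interchangeable with strict continuous functions $\mathbb{N}_\bot\to T$, characterises $\mathcal{U}$ as the domain-per $[\mathcal{N}\to_\bot\mathcal{T}]$, and then invokes the exponentiation clause of lemma~\ref{l_adm5} with $\mathcal{N}$ dense and admissible and $\mathcal{T}$ admissible; density is dismissed in one line. You instead view $U$ as the countable power $\prod_{m\in\nat}T$ and prove directly that admissibility is preserved under countable products, by factoring each component $\pi_m^{\mathcal{Q}}\circ\varphi$ through the admissible $\mathcal{T}$ and pairing the witnessing maps into a continuous map $F\to U$ --- and you correctly identify that this closure property is not literally one of the bullets of lemma~\ref{l_adm5}, supplying the missing factorisation argument yourself. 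The paper's route buys brevity by reusing an existing lemma, at the cost of the identification of sequences with strict maps (and of gliding over the fact that lemma~\ref{l_adm5} is stated for $[\cdot\to\cdot]$ rather than $[\cdot\to_\bot\cdot]$, which is harmless since the two agree on total elements when $\bot$ is not total); your route buys self-containedness and a reusable closure property. One small caution: your aside that $\mathcal{Q}\mathcal{U}\cong\prod_m\mathcal{Q}\mathcal{T}$ as topological spaces should strictly be the sequentialisation of the product (compare lemma~\ref{l_qcb1}), but your factorisation argument never uses this identification --- it works entirely with $\delta_{\mathcal{U}}$ into the quotient $\mathcal{Q}\mathcal{U}$ and the continuous projections $\pi_m^{\mathcal{Q}}$ --- so nothing breaks.
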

\proof
$\mathcal{U} $ is obviously dense, since $\mathcal{T} $ is dense.

We can characterise $  \mathcal{U} $  as the domain-per $ [ \mathcal{N} \rightarrow_{ \bot } \mathcal{T} ] $, since strict
functions $x: \mathbb{N}_\bot \to T $ are interchangeable with countable sequences over $ T $.
We know that $\mathcal{N}$ is dense and admissible and that $\mathcal{T}$ is admissible.
By lemma~\ref{l_adm5}, this means that $\mathcal{U} $ is  admissible.
\qed

\begin{claimapp}
Let $(x,u),(x',u')\in D \times U $ and assume  $ (x,u) \sqsubseteq (x',u') $.

Then $ M_{(x,u)} \leq M_{(x',u')} $, and 
$z^{m}_{ (x,u) } \sqsubseteq z^{m}_{ (x',u') }$
for every $ m \leq M_{(x,u) } $ finite.
Moreover, if $ M_{ (x,u) } <  M_{ (x',u') } $, then  
$z^{M_{(x,u)}}_{ (x,u) } = \bot $.
\end{claimapp}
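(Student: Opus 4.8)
The plan is to prove all three assertions simultaneously by a single induction on the finite stages $m$ of the evaluation sequence, the engine being the structural observation recorded just before the claim: in $\biguplus_{k\in K_\funcf}\hat{\funcf^{k}}(D)$, whenever $\bot\neq z\sqsubseteq z'$ the elements $z$ and $z'$ lie in one common summand, and hence either both lie in $\biguplus_{n\leq N}A_{n}$ or both lie in its complement. First I would note that $M_{(x,u)}$ is by construction an initial segment of $\omega$ (the sequence $z^{m}_{(x,u)}$ is only continued while $z^{m}_{(x,u)}\notin\biguplus_{n\leq N}A_{n}$), so ``$m\leq M_{(x,u)}$'' and ``$M_{(x,u)}\leq M_{(x',u')}$'' make sense as inequalities of ordinals $\leq\omega$. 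I would also record the monotonicity facts used repeatedly: the underlying map $\hat{\eta}:D\to[T\to\biguplus_{k}\hat{\funcf^{k}}(D)]$ of lemma~\ref{l_eta1} is continuous and so is $\Eval$, hence $(d,t)\mapsto\Eval(\eta(d),t)$ is monotone in both arguments. From $(x,u)\sqsubseteq(x',u')$, i.e.\ $x\sqsubseteq x'$ and $u_{m}\sqsubseteq u'_{m}$ for all $m$, this already yields $z^{0}_{(x,u)}=\Eval(\eta(x),u_{0})\sqsubseteq\Eval(\eta(x'),u'_{0})=z^{0}_{(x',u')}$.

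\textbf{The induction.} The statement to carry at a finite stage $m$ with $m\leq M_{(x,u)}$ would be: $m\leq M_{(x',u')}$ and $z^{m}_{(x,u)}\sqsubseteq z^{m}_{(x',u')}$. The base case $m=0$ is the computation just made. For the step, I assume the statement at $m$ and suppose $m<M_{(x,u)}$, i.e.\ $z^{m}_{(x,u)}\notin\biguplus_{n\leq N}A_{n}$; since $\bot\in\biguplus_{n\leq N}A_{n}$ this forces $z^{m}_{(x,u)}\neq\bot$. By the structural observation applied to $z^{m}_{(x,u)}\sqsubseteq z^{m}_{(x',u')}$, also $z^{m}_{(x',u')}\notin\biguplus_{n\leq N}A_{n}$, so $m<M_{(x',u')}$, and the two elements share a summand; writing $z^{m}_{(x,u)}=(k,d^{m}_{(x,u)})$ and $z^{m}_{(x',u')}=(k,d^{m}_{(x',u')})$ with $d^{m}_{(x,u)},d^{m}_{(x',u')}\in D$, I get $d^{m}_{(x,u)}\sqsubseteq d^{m}_{(x',u')}$. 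Applying monotonicity to $d^{m}_{(x,u)}\sqsubseteq d^{m}_{(x',u')}$ and $u_{m+1}\sqsubseteq u'_{m+1}$ gives $z^{m+1}_{(x,u)}=\Eval(\eta(d^{m}_{(x,u)}),u_{m+1})\sqsubseteq\Eval(\eta(d^{m}_{(x',u')}),u'_{m+1})=z^{m+1}_{(x',u')}$, which with $m+1\leq M_{(x',u')}$ is the statement at $m+1$.

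\textbf{Conclusions and expected difficulty.} Assertion (2) is then immediate. Assertion (1) follows by cases: if $M_{(x,u)}=\omega$ the induction gives $m\leq M_{(x',u')}$ for every finite $m$, whence $M_{(x',u')}=\omega$; if $M_{(x,u)}<\omega$ it is the instance $m=M_{(x,u)}$. For the last assertion, suppose $M_{(x,u)}=M<\omega$ and $M<M_{(x',u')}$; then $z^{M}_{(x',u')}\notin\biguplus_{n\leq N}A_{n}$ while $z^{M}_{(x,u)}\in\biguplus_{n\leq N}A_{n}$ (as $M\notin M_{(x,u)}$), and since $z^{M}_{(x,u)}\sqsubseteq z^{M}_{(x',u')}$ by (2), the structural observation forces $z^{M}_{(x,u)}=\bot$, as claimed. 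I expect no genuine obstacle here; the only care needed is the bookkeeping that comparable non-$\bot$ elements of the disjoint sum sit in a single summand, which simultaneously matches the path indices $k^{m}$ and pushes the order down to the $D$-components so that $\eta$ can be reapplied — everything else is monotonicity of $\eta$ and $\Eval$.
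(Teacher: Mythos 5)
Your proof is correct and follows essentially the same route as the paper's: monotonicity of $\eta$ and $\Eval$ gives $z^{m}_{(x,u)}\sqsubseteq z^{m}_{(x',u')}$ by induction on $m$, and the observation that comparable non-$\bot$ elements of $\biguplus_{k\in K_{\funcf}}\hat{\funcf^{k}}(D)$ lie on the same side of $\biguplus_{n\leq N}A_{n}$ yields both $M_{(x,u)}\leq M_{(x',u')}$ and the final assertion. Your version merely makes explicit the interleaving of the order claim with the well-definedness of $z^{m}_{(x',u')}$, which the paper leaves implicit in its ``easy induction.''
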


\proof
The monotonicity of $\eta$ gives $z^{m}_{ (x,u) } \sqsubseteq z^{m}_{ (x',u') }$ by an easy induction on $m$.
This implies that
\[ z^{m}_{ (x',u') } \in  \biguplus_{ n \leq N}  A_{n}  \Rightarrow  z^{m}_{ (x,u) }   \in  \biguplus_{ n \leq N}  A_{n} ,\] 
which shows that $ M_{(x,u)} \leq M_{(x',u')} $.

If $ M= M_{ (x,u) } < M_{ (x',u') } $, then $z^{M}_{ (x,u) }  \in  \biguplus_{ n \leq N}  A_{n} $ and
$z^{M}_{ (x',u') }  \notin \biguplus_{ n \leq N}  A_{n} $.
Since $z^{M}_{ (x,u) }  \sqsubseteq z^{M}_{ (x',u') } $,
this leaves $z^{M}_{ (x,u) }  = \bot $ as the only possibility.
\qed

\begin{claimapp}
Let $ \Delta $ be a non-empty directed subset of $ D \times U $.
For each $ m \in \omega $, let $ \Delta^{m} $ be the subset 
$ \{ (x,u) \in \Delta :  M_{(x,u) }\geq m \} $. 

If $ m \leq  M_{   \sqcup \Delta  }$ is finite, then
\begin{enumerate}[(1)]
\item $ \Delta^{m}  $ is  directed  with $ \bigsqcup  \Delta^{m} =  \bigsqcup \Delta $; and
\item $ \{ z^{m}_{ (x,u)}  : (x,u) \in \Delta^{m}  \} $ is directed with least upper bound $ z^{m}_{ \sqcup \Delta } $.
\end{enumerate}
\end{claimapp}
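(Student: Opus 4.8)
The plan is to prove parts (1) and (2) simultaneously by induction on $m$. For the base case $m=0$ one has $\Delta^0=\Delta$, since $M_{(x,u)}\geq 0$ for every $(x,u)$, so (1) is immediate; writing $\bigsqcup\Delta=(x^\ast,u^\ast)$, the map $(x,u)\mapsto z^0_{(x,u)}=\Eval(\eta(x),u_0)$ is the composite of the monotone map $(x,u)\mapsto(x,u_0)$ with the continuous map $(w,s)\mapsto\Eval(\eta(w),s)$, so directedness of $\Delta$ gives that $\{z^0_{(x,u)}:(x,u)\in\Delta\}$ is directed with least upper bound $\Eval(\eta(x^\ast),u^\ast_0)=z^0_{\sqcup\Delta}$, which is (2).

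For the inductive step, assume the claim for $m$ and suppose $m+1\leq M_{\sqcup\Delta}$; then $m\leq M_{\sqcup\Delta}$, so the induction hypothesis is available, and $z^m_{\sqcup\Delta}\notin\biguplus_{n\leq N}A_n$, say $z^m_{\sqcup\Delta}=(k^m_{\sqcup\Delta},d^m_{\sqcup\Delta})$. The crucial step is to show $\Delta^{m+1}=\Delta^m$. By the induction hypothesis, $z^m_{\sqcup\Delta}=\bigsqcup\{z^m_{(x,u)}:(x,u)\in\Delta^m\}$; by the observation in the proof of lemma~\ref{l_eta4}, $\biguplus_{n\leq N}A_n$ is upward closed in $\biguplus_{k\in K_\funcf}\hat{\funcf^k}(D)$, hence its complement is downward closed; and $z^m_{(x,u)}\sqsubseteq z^m_{\sqcup\Delta}$ for each $(x,u)\in\Delta^m$ by the preceding claim. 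So each such $z^m_{(x,u)}$ lies in the complement of $\biguplus_{n\leq N}A_n$, i.e.\ $M_{(x,u)}\geq m+1$, whence $\Delta^{m+1}=\Delta^m$; by the induction hypothesis this set is directed with $\bigsqcup\Delta^{m+1}=\bigsqcup\Delta$, giving (1) at level $m+1$.

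Part (2) at level $m+1$ then follows: on $\Delta^{m+1}=\Delta^m$ we have $z^{m+1}_{(x,u)}=\Eval(\eta(d^m_{(x,u)}),u_{m+1})$, which is monotone in $(x,u)$ by the preceding claim, so the family is directed. Each $z^m_{(x,u)}$ lies below $z^m_{\sqcup\Delta}$ and therefore in the summand indexed by $k^m_{\sqcup\Delta}$ (or equals $\bot$); since the projection of that summand onto its $D$-component is continuous, $\{d^m_{(x,u)}:(x,u)\in\Delta^m\}$ is directed with least upper bound $d^m_{\sqcup\Delta}$, and $\{u_{m+1}:(x,u)\in\Delta^m\}$ has least upper bound $u^\ast_{m+1}$ by (1). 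Applying continuity of $\eta$ and of $\Eval$ gives $\bigsqcup\{z^{m+1}_{(x,u)}:(x,u)\in\Delta^{m+1}\}=\Eval(\eta(d^m_{\sqcup\Delta}),u^\ast_{m+1})=z^{m+1}_{\sqcup\Delta}$, the last equality again because $z^m_{\sqcup\Delta}\notin\biguplus_{n\leq N}A_n$. I expect the main obstacle to be the book-keeping at the disjoint-union level, namely establishing $\Delta^{m+1}=\Delta^m$ and the parallel fact that the $D$-components $d^m_{(x,u)}$ form a directed set converging to $d^m_{\sqcup\Delta}$; both depend on the structural observation from the proof of lemma~\ref{l_eta4} that comparable non-$\bot$ elements of $\biguplus_{k\in K_\funcf}\hat{\funcf^k}(D)$ lie on the same side of $\biguplus_{n\leq N}A_n$, while the rest is just the induction hypothesis together with continuity of $\eta$ and $\Eval$.
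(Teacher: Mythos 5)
Your overall strategy (simultaneous induction on $m$, continuity of $\eta$ and $\Eval$, reading off $D$-components inside the summand of the disjoint sum) matches the paper's, but there is a genuine gap at the pivotal step: the identity $\Delta^{m+1}=\Delta^{m}$ is false in general, and the inference you use to get it is unsound. The structural observation from the proof of lemma~\ref{l_eta4} says that comparable \emph{non-bottom} elements of $\biguplus_{k\in K_\funcf}\hat{\funcf^k}(D)$ lie on the same side of $\biguplus_{n\leq N}A_n$; it does \emph{not} make the complement of $\biguplus_{n\leq N}A_n$ downward closed, precisely because $\bot$ belongs to $\biguplus_{n\leq N}A_n$ while lying below everything. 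An element $(x,u)\in\Delta^{m}$ with $z^{m}_{(x,u)}=\bot$ satisfies $z^{m}_{(x,u)}\sqsubseteq z^{m}_{\sqcup\Delta}$ yet has $M_{(x,u)}=m$ (the evaluation halts with $\bot$ as evaluation result, as the paper explicitly allows), so it drops out of $\Delta^{m+1}$. For instance, a directed set containing a small element whose evaluation bottoms out at stage $m$ together with larger elements that evaluate further gives $\Delta^{m+1}\subsetneq\Delta^{m}$. Since both your proof of (1) at level $m+1$ and your computation of the supremum in (2) are carried out over "$\Delta^{m+1}=\Delta^{m}$", the argument as written does not go through.

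The repair is the paper's route: first, $\Delta^{m}$ is \emph{upwards-closed} in $\Delta$ (because $(x,u)\sqsubseteq(x',u')$ implies $M_{(x,u)}\leq M_{(x',u')}$ by the preceding claim), so once it is non-empty it is automatically directed and cofinal, with $\bigsqcup\Delta^{m}=\bigsqcup\Delta$; second, non-emptiness of $\Delta^{m+1}$ follows not from downward closure but from the fact that $\biguplus_{n\leq N}A_n$ is a \emph{closed} subset, so its complement is Scott-open, and since the directed supremum $z^{m}_{\sqcup\Delta}=\bigsqcup\{z^{m}_{(x,u)}:(x,u)\in\Delta^{m}\}$ lies in that open set, some member of the directed set already does, i.e.\ some $(x,u)\in\Delta^{m}$ has $M_{(x,u)}\geq m+1$. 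With cofinality of $\Delta^{m+1}$ in $\Delta^{m}$ in hand, your continuity computation for (2) (restricting to $\Delta^{m+1}$, where the $d^{m}_{(x,u)}$ are actually defined) is then correct.
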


\proof
We have seen that $ (x,u ) \sqsubseteq (x',u') \Rightarrow M_{ (x,u)} \leq M_{ (x',u') } $,
which shows that $ \Delta^{m}  $ is an upwards-closed subset of $ \Delta $. Hence, $\Delta^m$ 
is directed with $ \bigsqcup  \Delta^{m} =  \bigsqcup \Delta $ whenever it is non-empty.

We prove $  \Delta^{m} \neq \emptyset $ and the second part of the claim  simultaneously by induction on $ m $:
\begin{enumerate}[$\bullet$] 
\item Let $m=0$.  Then  $ \Delta^{0} =   \Delta \neq \emptyset$.
Moreover, $z^{0}_{\sqcup \Delta} =  \bigsqcup  \{ z^{0}_{ (x,u)}  : (x,u) \in \Delta \} $, since
$\eta$ is continuous.
\item Let $ m < M_{   \sqcup \Delta  }$ and assume $  \Delta^{m} \neq \emptyset $ and
$ z^{m}_{ \sqcup \Delta } = \bigsqcup \{ z^{m}_{ (x,u)}  : (x,u) \in \Delta^{m}  \} $.
Then $ z^{m}_{ \sqcup \Delta }   \notin  \biguplus_{ n \leq N}  A_{n} $, 
and since $  \biguplus_{ n \leq N}  A_{n} $ is a closed subset,
there exists some $  (x,u) \in \Delta^{m} $ such that $ z^{m}_{ (x,u)} \notin  \biguplus_{ n \leq N}  A_{n} $
and  $ m < M_{ (x,u) }$. 
This shows that $ \Delta_{m+1} \neq \emptyset $.
Furthermore,
\begin{eqnarray*}
z^{m+1}_{ \sqcup \Delta } 
&= & 
\Eval ( \eta ( \bigsqcup \{ d^{m}_{ (x,u) } : (x,u) \in \Delta_{m} \} ) , \bigsqcup \{ u_{m+1}: (x,u ) \in   \Delta_{m} \} )  \\
&= &
\Eval ( \eta ( \bigsqcup \{ d^{m}_{ (x,u) } : (x,u) \in \Delta_{m+1} \} ) , \bigsqcup \{ u_{m+1}: (x,u ) \in   \Delta_{m+1} \} )  \\
&= &
 \bigsqcup \{ \Eval ( \eta (  d^{m}_{ (x,u) }) ,  u_{m+1}   ) : (x,u ) \in   \Delta_{m+1} \}  \\
& = & 
\bigsqcup \{ z^{m+1}_{ (x,u ) }  : (x,u) \in \Delta_{m+1}
\}.\rlap{\hbox to 212 pt{\hfill\qEd}} 
\end{eqnarray*}
\end{enumerate}

\begin{claimapp}
Let $ x \in  \mathcal{D}^{R} $ with $ \rank (x) = \alpha + 1 $. Then
$  \eta (x) \in [  \mathcal{T}  \rightarrow \biguplus_{ k \in K  } \funcf^{k} ( \mathcal{D}_{ \alpha } ) ]^{R} $.
\end{claimapp}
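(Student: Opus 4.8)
The plan is to deduce the claim from the equivariance of the map $\eta_\funcf$ of lemma~\ref{l_eta1}, applied not to the dense least fixed point $\mathcal{D}$ itself but to the approximating domain-per $\mathcal{D}_\alpha$. First I would unwind the hypothesis. Recall from the proof of lemma~\ref{l_eta4} that $\mathcal{D}$ is the inductive limit of the $\gamma$-chain of dense domain-pers $\mathcal{D}_\beta$ (the dense parts of the chain constructed from $\funcf$, cf.\ proposition~\ref{p_dense3}), and that $\rank(x)$ is the least $\beta$ at which $x$ becomes $\mathcal{D}_\beta$-total. Hence $\rank(x)=\alpha+1$ means $x\in\mathcal{D}_{\alpha+1}^R$, and since $\mathcal{D}_{\alpha+1}$ is, by that construction, the dense part of $\funcf(\mathcal{D}_\alpha)$ and thus has the same set of total elements, we may regard $x$ as an element of $\funcf(\mathcal{D}_\alpha)^R$. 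Note that $\mathcal{D}_\alpha$ is an object of $\cclcdomwp$, so it is convex, local and complete.

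Next I would instantiate lemma~\ref{l_eta1} with the convex, local and complete domain-per $\mathcal{D}_\alpha$ in place of the unspecified argument domain-per. This produces an equivariant and equi-injective map $\eta_\funcf:\funcf(\mathcal{D}_\alpha)^d\to[\mathcal{T}_\funcf\to\biguplus_{k\in K_\funcf}\funcf^k(\mathcal{D}_\alpha)]$. The crucial observation is that in the proof of lemma~\ref{l_eta1} the underlying continuous function $\eta_\funcf$, the domain-per $\mathcal{T}_\funcf$, and the underlying domain of $\biguplus_{k\in K_\funcf}\funcf^k(\cdot)$ are all defined by structural induction on $\funcf$ in a way that depends only on the underlying domains of the parameters and of the argument, never on the pers; indeed that construction is carried out ``independently of $\mathcal{D}$''. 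Therefore the underlying function of this $\eta_\funcf$ coincides with the underlying function of the map $\eta$ used in the proof of lemma~\ref{l_eta4} (the latter being $\eta_\funcf$ instantiated at $\mathcal{D}$ through $\mathcal{D}\cong\funcf(\mathcal{D})^d$), so on the element $x$ we have $\eta(x)=\eta_\funcf(x)$.

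Finally, since $x\in\funcf(\mathcal{D}_\alpha)^R$, i.e.\ $x\approx_{\funcf(\mathcal{D}_\alpha)}x$, the equivariance of the instantiated $\eta_\funcf$ gives $\eta_\funcf(x)\approx\eta_\funcf(x)$ in $[\mathcal{T}_\funcf\to\biguplus_{k\in K_\funcf}\funcf^k(\mathcal{D}_\alpha)]$, that is $\eta(x)\in[\mathcal{T}\to\biguplus_{k\in K}\funcf^k(\mathcal{D}_\alpha)]^R$, which is precisely the assertion. The step I expect to be delicate is the identification in the middle paragraph: one must check that the ambient domains involved (the underlying domain of $\funcf(\mathcal{D})^d$ and of each $\funcf^k(\mathcal{D}_\alpha)$) match those appearing when the functor is evaluated along the chain, so that $\eta$ and its per-refined copy are literally the same function with matching codomain. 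This is immediate from the inductive definition in lemma~\ref{l_eta1}, but it is the only point that genuinely needs to be spelled out.
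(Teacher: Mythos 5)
Your proposal is correct and follows essentially the same route as the paper's proof: the paper likewise reduces to $x\in\mathcal{D}_{\alpha+1}^R=\funcf(\mathcal{D}_\alpha)^R$, notes that the underlying domain function $\eta$ is independent of the per $\approx_\alpha$, and concludes by equivariance of $\eta_\funcf$ instantiated at the convex, local and complete domain-per $\mathcal{D}_\alpha$. Your middle paragraph merely spells out in more detail the identification that the paper compresses into the phrase ``the domain function $\eta$ is independent of the per''.
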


\proof
First note that the operation
$ [  \mathcal{T}  \rightarrow \biguplus_{ k \in K } \funcf^{k} ( \cdot  ) ] $
is strictly positive and functorial over $ \cclcdomwp $.
The equiembedding $ f_{\alpha } :  \mathcal{D}_{ \alpha }  \rightarrow  \mathcal{D} $ 
 is the identity map on $D$, so this gives us
$ [  \mathcal{T}  \rightarrow \biguplus_{ k \in K_{ \funcf} } \funcf^{k} ( \mathcal{D}_{ \alpha } ) ]^{R} 
\subseteq  [  \mathcal{T}   \rightarrow \biguplus_{ k \in K_{ \funcf} } \funcf^{k} ( \mathcal{D}  ) ]^{R} $.

Moreover,  the domain function $ \eta $ is independent of the per $\approx_\alpha$. Thus, if $  \rank (x) 
= \alpha + 1 $, then $ x \in  \mathcal{D}_{ \alpha + 1}^R  = \funcf (  \mathcal{D}_{ \alpha }  )^R $. This shows that
$ \eta (x) \in  [  \mathcal{T}  \rightarrow \biguplus_{ k \in K } \funcf^{k} ( \mathcal{D}_{ \alpha } ) ]^{R} $.
\qed

\begin{claimapp}
Let $(x,u),(x',u') \in D \times U $ and  assume that $ (x,u) \approx_{\mathcal{D}\times 
\mathcal{U}} (x',u') $.
Then $ M_{ (x,u)}  =M_{ (x',u')} < \omega $, and 
$ z^{m}_{ (x,u) } \approx z^{m}_{ (x',u') }$
for every $ m \leq M_{ (x,u)} $.
\end{claimapp}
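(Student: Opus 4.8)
The plan is to carry out two intertwined inductions on the evaluation index $m$: one showing that, because the input data is $\approx$-equivalent, the two evaluation sequences advance in lockstep and stay equivalent term by term, and one --- invoking the preceding claim on ranks --- showing that either evaluation sequence is finite. As standing facts, note that $(x,u)\approx_{\mathcal{D}\times\mathcal{U}}(x',u')$ gives, by symmetry and transitivity of the per, $x,x'\in\mathcal{D}^R$ and $u,u'\in\mathcal{U}^R$, hence $x\approx_{\mathcal{D}}x'$ and $u_m\approx_{\mathcal{T}}u'_m$ with $u_m,u'_m\in\mathcal{T}^R$ for every $m$. I would also isolate the elementary observation that if $z\approx z'$ in $\biguplus_{k\in K_\funcf}\funcf^{k}(\mathcal{D})$, then both are total and lie in a common summand, so $z=(k,d)$, $z'=(k,d')$ with $d\approx_{\funcf^{k}(\mathcal{D})}d'$; in particular neither is $\bot$, and $z\in\biguplus_{n\leq N}A_n$ iff $z'\in\biguplus_{n\leq N}A_n$.

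For the lockstep induction I claim that, for every $m$, if $z^{m}_{(x,u)}$ and $z^{m}_{(x',u')}$ are both defined, then $z^{m}_{(x,u)}\approx z^{m}_{(x',u')}$. For $m=0$, equivariance of $\eta$ gives $\eta(x)\approx\eta(x')$, and with $u_0\approx_{\mathcal{T}}u'_0$ we get $z^{0}_{(x,u)}=\Eval(\eta(x),u_0)\approx\Eval(\eta(x'),u'_0)=z^{0}_{(x',u')}$, both defined. For the step, assume $z^{m}_{(x,u)}\approx z^{m}_{(x',u')}$ with both defined. By the summand observation either both lie in $\biguplus_{n\leq N}A_n$, in which case the definition of $M$ stops both evaluations, $M_{(x,u)}=M_{(x',u')}=m$, and no $z^{m+1}$ is defined, or both have the form $(k,d^{m})$, $(k,d'^{m})$ with $d^{m}\approx_{\mathcal{D}}d'^{m}$, and then equivariance of $\eta$ and $u_{m+1}\approx_{\mathcal{T}}u'_{m+1}$ give $z^{m+1}_{(x,u)}=\Eval(\eta(d^{m}),u_{m+1})\approx\Eval(\eta(d'^{m}),u'_{m+1})=z^{m+1}_{(x',u')}$, both defined. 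A routine check then yields that $z^{m}_{(x,u)}$ is defined iff $z^{m}_{(x',u')}$ is, that $M_{(x,u)}=M_{(x',u')}$ as ordinals $\leq\omega$, and that $z^{m}_{(x,u)}\approx z^{m}_{(x',u')}$ for all $m\leq M_{(x,u)}$.

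It then remains to prove $M_{(x,u)}<\omega$, which is where the preceding claim enters. Since $\mathcal{D}$ is the dense least fixed point, the rank of any total element is a successor ordinal --- it is not $0$ (as $\mathcal{D}_0$ is trivial) and not a limit (a total element of a limit stage already occurs at an earlier stage). Putting $d^{-1}:=x$, I would prove by induction on $m<M_{(x,u)}$ that $d^{m}\in\mathcal{D}^R$ and $\rank(d^{m})<\rank(d^{m-1})$: writing $\rank(d^{m-1})=\alpha+1$, the preceding claim gives $\eta(d^{m-1})\in[\mathcal{T}\to\biguplus_{k}\funcf^{k}(\mathcal{D}_\alpha)]^R$, so evaluating this total map at $u_m\in\mathcal{T}^R$ puts $z^{m}_{(x,u)}$ in $(\biguplus_{k}\funcf^{k}(\mathcal{D}_\alpha))^R$; as $z^{m}_{(x,u)}=(k^{m},d^{m})\notin\biguplus_{n\leq N}A_n$ (because $m<M_{(x,u)}$), the index $k^{m}$ is an identity occurrence, so $d^{m}\in\mathcal{D}_\alpha^R\subseteq\mathcal{D}^R$ and $\rank(d^{m})\leq\alpha<\alpha+1=\rank(d^{m-1})$. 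An infinite evaluation would thus give an infinite strictly descending chain of ordinals, which is impossible; hence $M_{(x,u)}<\omega$, and then $M_{(x',u')}=M_{(x,u)}<\omega$ by the previous step. Assembling the two inductions yields the claim.

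The main obstacle is purely organisational: keeping \emph{``$z^{m}$ is defined''} aligned with \emph{``$m\leq M$''} on both sides, applying the summand observation exactly at the terminating index so that the two sequences stop simultaneously, and confirming that the rank claim is available at every successor rank met along the evaluation. No tool beyond the preceding claim, the equivariance and equi-injectivity of $\eta$, and the disjoint-sum per structure is needed.
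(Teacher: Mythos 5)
Your proposal is correct and follows essentially the same route as the paper's own proof: an induction on $m$ using the equivariance of $\eta$ (together with the disjoint-sum per structure) to get the term-by-term equivalence and the simultaneous termination, followed by the rank-descent argument via the preceding claim to rule out an infinite evaluation sequence. Your write-up merely makes explicit some details the paper leaves implicit, such as why every total element has successor rank and why a non-terminating $z^{m}$ must sit in an identity summand.
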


\proof
By induction on $ m $, we have $ z^{m}_{ (x,u)} \approx z^{m}_{ (x',u')} $ since $\eta$ is
equivariant. This means that the respective evaluation sequences  simultaneously reach 
$  \biguplus_{ n \leq N}  A_{n} $, and that $ M_{ (x,u) }  =M_{ (x',u')} $.

Moreover, if  $ m+1 < M_{ (x,u)} $ and $ \rank ( d^{m}_{ (x,u)}  )= \alpha + 1 $, then $ \rank (d^{m+1}_{ (x,u)} ) \leq  \alpha$.
If we assume $ M_{ (x,u)} = \omega $, we obtain an infinite evaluation sequence over $ \mathcal{D}^{R} $ and
the rank operation gives an infinite and strictly decreasing sequence of ordinals, which leads to a 
contradiction. Hence, $M_{ (x,u)} < \omega $.
\qed

\subsection*{Lemma~\ref{l_eta5}}

\begin{claimapp}
Let 
$  \bigsqcup_{ j \in J } [ p^{j}; ( q^{j}, n^{j} ) ] \sqsubseteq  \bigsqcup_{ k \in K } [ p^{k}; ( q^{k}, n^{k} ) ] $
(with $J \cap K = \emptyset$).
Then there exists a function
$ f: T ( \{ (p^{j} , n^{j}) \}_{j \in J} )  \rightarrow T ( \{ (p^{k} , n^{k}) \}_{k \in K} )  $ such that 
\begin{enumerate}[$\bullet$]
\item{}  $ p^{f( \varsigma)} \sqsubseteq p^{ \varsigma} $;
\item{}
$  \varsigma  \subseteq  \tau \Rightarrow  f( \varsigma )  \subseteq  f( \tau  )$;
\item{}
$  \vert f (  \varsigma ) \vert = \vert \varsigma \vert $; and
\item{} 
$ \varsigma $ is maximal in $ T ( \{ (p^{j} , n^{j}) \}_{j \in J })  $ 
$\Leftrightarrow $
$ f ( \varsigma ) $ is maximal in $ T ( \{ (p^{k} , n^{k}) \}_{k \in K} )  $.
\end{enumerate}
\end{claimapp}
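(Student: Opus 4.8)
The plan is to define $f$ entry by entry: to a finite decreasing sequence $\varsigma = ( \varsigma_{L-1} \subseteq \cdots \subseteq \varsigma_0 )$ over $J$ of length $L = |\varsigma|$ I would assign the sequence $f(\varsigma) = ( \tau_{L-1} \subseteq \cdots \subseteq \tau_0 )$ over $K$ given by
\[ \tau_m := \bigcup_{j \in \varsigma_m} \{\, k \in K : p^k \sqsubseteq p^j \,\} \qquad (m < L). \]
All four required properties then follow from a careful reading of the hypothesis $\bigsqcup_{j \in J}[p^j ; (q^j,n^j)] \sqsubseteq \bigsqcup_{k \in K}[p^k ; (q^k,n^k)]$, so the first thing I would do is unpack that inequality pointwise.

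Evaluating both step functions at $p^j$ yields $(q^j,n^j) \sqsubseteq \bigsqcup \{\, (q^k,n^k) : k \in K,\ p^k \sqsubseteq p^j \,\}$, the right-hand join existing because $\bigsqcup_{k \in K}[p^k;(q^k,n^k)]$ is a genuine compact element of $[U \to E]$. Since $\mathcal{E} = (\biguplus_{n \leq N}\mathcal{A}_n) \otimes \mathcal{N}$ is a \emph{strict} product whose second factor $\mathcal{N}$ is flat, and since we assume $(q^j,n^j) \neq \bot$ for every $j$, this inequality forces three consequences: (i) the set $\{\, k \in K : p^k \sqsubseteq p^j \,\}$ is non-empty for every $j \in J$; (ii) $n^k = n^j$ for every such $k$ (the second components occurring in the join are all actual natural numbers, hence all equal, and $n^j$ equals their common value); and (iii) $q^j \sqsubseteq \bigsqcup \{\, q^k : k \in K,\ p^k \sqsubseteq p^j \,\}$. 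Only (i) and (ii) are needed for the present claim; (iii) will be used later when the decorations $q^\varsigma$ are compared.

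With this the verification is routine bookkeeping. Each $\tau_m$ is non-empty by (i) (every $j \in \varsigma_m$ lies in $\varsigma_0 \subseteq J$, so $(q^j,n^j)\neq\bot$ applies), and $\tau_m \subseteq \tau_{m-1}$ because $\varsigma_m \subseteq \varsigma_{m-1}$. The set $\{\, p^k_m : k \in \tau_m \,\}$ is bounded above by $p^\varsigma_m = \bigsqcup_{j \in \varsigma_m} p^j_m$ (which exists since $\varsigma$ is in the tree), hence consistent, and its join is $p^{f(\varsigma)}_m \sqsubseteq p^\varsigma_m$; taking $m \geq L$ where both are $\bot$, this gives $p^{f(\varsigma)} \sqsubseteq p^\varsigma$. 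By (ii), every $k \in \tau_0$ has $n^k = n^j = n_\varsigma$ for the $j \in \varsigma_0$ witnessing $k \in \tau_0$, so we may take $n_{f(\varsigma)} := n_\varsigma$ and $M_{f(\varsigma)} := M_\varsigma$; then $M_{f(\varsigma)} \geq |f(\varsigma)| - 1$ holds because $|f(\varsigma)| = L = |\varsigma|$ by construction. Hence $f(\varsigma) \in T(\{(p^k,n^k)\}_{k\in K})$ with $|f(\varsigma)| = |\varsigma|$. If $\tau$ extends $\varsigma$ then the two agree on their first $L$ entries, so $f(\tau)$ and $f(\varsigma)$ agree there too, i.e.\ $f(\varsigma) \subseteq f(\tau)$. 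Finally, since a sequence $\sigma$ in either tree is maximal iff $|\sigma| = M_\sigma + 1$, and we have arranged $|f(\varsigma)| = |\varsigma|$ and $M_{f(\varsigma)} = M_\varsigma$, maximality of $\varsigma$ and of $f(\varsigma)$ are equivalent.

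The only genuinely delicate step is the pointwise unpacking in the second paragraph: extracting from a single inequality between joins of step functions into a strict product with a flat factor that dominated indices always exist and that the codes $n^k$ are forced to agree. Everything else is a direct consequence of the definition of $f$, so once that step is settled the four listed properties are immediate.
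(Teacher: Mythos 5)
Your proof is correct and follows essentially the same route as the paper: both define $f(\varsigma)$ entry by entry from domination of the $p^k$ by the $p^j$, and both rest on the same key unpacking of the hypothesis at each $p^j$, using strictness of $\mathcal{E}$ and flatness of $\mathcal{N}$ to get non-emptiness of the dominating index sets and agreement of the codes $n^k=n^j$. The only (harmless) difference is that the paper takes $f(\varsigma)_m=\{k: p^k_n\sqsubseteq p^\varsigma_n \mbox{ for all } n\leq m\}$, a superset of your $\tau_m=\bigcup_{j\in\varsigma_m}\{k:p^k\sqsubseteq p^j\}$; both choices satisfy the four listed properties and support the later decoration comparison.
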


\proof
Let $ \varsigma \in T ( \{ (p^{j} , n^{j}) \}_{j \in J} ) $.
We define a sequence $\varsigma'  $ over $K$ of length $\vert \varsigma \vert $
inductively:
Let $ \varsigma'_{0} := \{ k \in K : p_{0}^{k} \sqsubseteq p_{0}^{ \varsigma }  \} $ and let
$ \varsigma'_{m+1} := \{ k \in \varsigma'_{m} : p_{m+1}^{k} \sqsubseteq p_{m+1}^{ \varsigma }  \} $
for $m+1 < \vert \varsigma \vert $.

We show that  $ \varsigma'_{m}  \neq \emptyset $:
For each $ j \in J$,  there is some $k \in K $ such that $  p^{k} \sqsubseteq p^{j}$, since 
$ (q^{j} , n^{j} ) \neq \bot $ by assumption.
In particular, for each  $ j \in  \varsigma_{m} $  there is some $k \in K $ such that 
$  p^{k}_{m} \sqsubseteq  p^{j}_{m} \sqsubseteq p^{\varsigma}_m$ and $k \in  \varsigma'_{m}$. By  induction on $ m < \vert \varsigma \vert  $, this shows that
\[ \emptyset \neq \bigcap_{  n \leq  m  }  \{ k \in K :    p^{k}_{n}  \sqsubseteq p^{j}_{n} \} \subseteq \varsigma'_{m} .\]
Clearly, $\{  p_{m}^{k}\}_{ k \in \varsigma'_{m}} $ is bounded by $ p_{m}^{ \varsigma }  $ by definition of $  \varsigma'_{m}$.
Moreover, $ n^{j} = n^{k} $ whenever $  p^{k} \sqsubseteq p^{j} $, so if we take
$ M_{  \varsigma'  } = M_{ \varsigma } $ and $ n_{ \varsigma' } = n_{ \varsigma } $, we see that
$  \varsigma'  \in T ( \{ (p^{k} , n^{k}) \}_{k \in K} )  $.

Let $ f ( \varsigma ) :=  \varsigma' $. 
The inductive definition of $\varsigma'$ shows that $ f$ is monotone and that $  \vert f (  \varsigma ) \vert = \vert \varsigma \vert $.
Since $ M_{ f( \varsigma)  } = M_{ \varsigma } $, the function also preserves maximality. \qed

\begin{claimapp}
Let $ \varsigma $ be maximal 
with $ \vert \varsigma \vert  = M+1  $, and let $ j \in \varsigma_{ M } $.
If $ u \in \upset{ p^{ \varsigma } } \cap \mathcal{U}^{R}  $, 
then $ q^{ j } \prec_{(\biguplus_{ n \leq N}  \mathcal{A}_{n} )  } [ z^{ M }_{(x,u)} ]$, where
$z^{ M }_{(x,u)}$ is the evaluation result of $(x,u)$.
\end{claimapp}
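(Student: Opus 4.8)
The plan is to unwind the definitions of the evaluation tree and the decoration $q^{\varsigma}$, then push the inequality $\bigsqcup_{j\in J}[p^{j};(q^{j},n^{j})]\prec_{[\mathcal{U}\to\mathcal{E}]}[\bar{\eta}(x)]$ down to the leaf level of the tree. Recall that $x$ was fixed precisely so that this $\prec$-relation holds, i.e.\ there is some $y\approx_{[\mathcal{U}\to\mathcal{E}]}\bar{\eta}(x)$ with $\bigsqcup_{j\in J}[p^{j};(q^{j},n^{j})]\sqsubseteq y$. Since $\bar{\eta}=\curry(\zeta)$ and $\zeta(x,u)=(z^{M_{(x,u)}}_{(x,u)},n_{(x,u)})$ whenever the evaluation is finite, evaluating $y$ at any $u\in\mathcal{U}^{R}$ with $p^{\varsigma}\sqsubseteq u$ gives $\Eval(y,u)\approx_{\mathcal{E}}\Eval(\bar{\eta}(x),u)=\zeta(x,u)$.

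First I would observe that for a maximal $\varsigma$ with $\vert\varsigma\vert=M+1$, the definition of the tree forces $n^{j}=n_{\varsigma}$ for all $j\in\varsigma_0\supseteq\varsigma_M$, and $n_{\varsigma}$ codes an evaluation path of length exactly $M$. Fix $u\in\upset{p^{\varsigma}}\cap\mathcal{U}^{R}$. By construction $p^{\varsigma}_{m}=\bigsqcup_{k\in\varsigma_m}p^{k}_{m}$ for $m\le M$, so in particular $p^{j}=(p^{j}_{0},\dots)\sqsubseteq$ the relevant coordinates of $u$ for $j\in\varsigma_M\subseteq\varsigma_m$ at each level. Then $[p^{j};(q^{j},n^{j})]\sqsubseteq y$ together with $p^{j}\sqsubseteq u$ gives $(q^{j},n^{j})\sqsubseteq\Eval(y,u)$. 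Now I would use that $\Eval(y,u)\approx_{\mathcal{E}}\zeta(x,u)$: since $(q^{j},n^{j})\neq\bot$ with $n^{j}=n_{\varsigma}$ coding a path of length $M$, the first component $q^{j}\neq\bot$ lies in $\biguplus_{n\le N}A_{n}$, so $\Eval(y,u)\neq\bot$, hence $\zeta(x,u)\neq\bot$, meaning $M_{(x,u)}<\omega$ and $\zeta(x,u)=(z^{M_{(x,u)}}_{(x,u)},n_{(x,u)})$. Matching the $\mathcal{N}$-components forces $n_{(x,u)}=n^{j}=n_{\varsigma}$, and since $n_{\varsigma}$ codes a path of length $M$ we get $M_{(x,u)}=M$. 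Matching the $\biguplus_{n\le N}\mathcal{A}_{n}$-components gives $q^{j}\sqsubseteq$ something $\approx$-equivalent to $z^{M}_{(x,u)}$, i.e.\ $q^{j}\prec_{(\biguplus_{n\le N}\mathcal{A}_{n})}[z^{M}_{(x,u)}]$, which is the claim.

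The one point requiring care is that $\zeta(x,u)$ and $\Eval(y,u)$ are only $\approx_{\mathcal{E}}$-related, not equal, so I cannot directly read off $q^{j}\sqsubseteq z^{M}_{(x,u)}$; instead I get $q^{j}\sqsubseteq\Eval(y,u)$ and $\Eval(y,u)\approx_{\mathcal{E}}z^{M}_{(x,u)}$ (after projecting $\mathcal{E}=(\biguplus_{n\le N}\mathcal{A}_{n})\otimes\mathcal{N}$ onto its first factor, using that the $\mathcal{N}$-component is a concrete natural number so the $\otimes$ causes no trouble once both components are non-$\bot$). Then by definition of $\prec$, choosing the witness $\Eval(y,u)\in[z^{M}_{(x,u)}]$ shows $q^{j}\prec_{(\biguplus_{n\le N}\mathcal{A}_{n})}[z^{M}_{(x,u)}]$. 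I expect the main obstacle to be bookkeeping: correctly relating the abstract evaluation path coded by $n_{\varsigma}$ to the actual evaluation sequence $\{z^{m}_{(x,u)}\}$ produced by $\eta$ for the specific $u$, and checking that $p^{\varsigma}\sqsubseteq u$ genuinely supplies enough information at every level $m\le M$ for the step functions $[p^{j};(q^{j},n^{j})]$ with $j\in\varsigma_M$ to fire; this is where the fact that $\varsigma_M\subseteq\varsigma_m$ for all $m\le M$ and the coordinatewise definition of $p^{\varsigma}$ are used.
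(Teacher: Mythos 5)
Your overall strategy is the same as the paper's: fix the witness $x$ with $\bigsqcup_{j\in J}[p^{j};(q^{j},n^{j})]\prec[\bar{\eta}(x)]$, evaluate at a total input, match the $\mathcal{N}$-component to pin down the length of the evaluation path, and read off $q^{j}\prec[z^{M}_{(x,u)}]$ from the $\biguplus_{n\leq N}\mathcal{A}_{n}$-component. The handling of the $\approx_{\mathcal{E}}$ versus $=$ distinction and of the strict product is fine.

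There is, however, a genuine gap at the step where you conclude that the step function $[p^{j};(q^{j},n^{j})]$ fires at $u$. You write that $p^{\varsigma}\sqsubseteq u$ together with $\varsigma_{M}\subseteq\varsigma_{m}$ gives $p^{j}\sqsubseteq u$, but this only controls the coordinates $m\leq M$: by definition $p^{\varsigma}_{m}=\bot$ for all $m\geq\vert\varsigma\vert=M+1$, whereas $p^{j}\in U_{c}$ may well have non-bottom entries at coordinates beyond $M$ (nothing in the definition of $F_{c}$ or of the evaluation tree rules this out). So $u\in\upset{p^{\varsigma}}$ does not imply $p^{j}\sqsubseteq u$, and $(q^{j},n^{j})\sqsubseteq\Eval(y,u)$ does not follow. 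You flag this as the "main obstacle" but the resolution you sketch (the coordinatewise definition of $p^{\varsigma}$) only addresses levels $m\leq M$ and cannot close the gap.

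The paper's proof repairs exactly this point: it chooses an auxiliary $u'\in\upset{p^{j}}\cap\mathcal{U}^{R}$ with $u'_{m}=u_{m}$ for all $m\leq M$ (possible since $p^{j}_{m}\sqsubseteq p^{\varsigma}_{m}\sqsubseteq u_{m}$ at those levels and the remaining coordinates can be filled with totals above $p^{j}_{m}$ by density). The step function then fires at $u'$, giving $q^{j}\prec[z^{M}_{(x,u')}]$, and one concludes by observing that the evaluation result depends only on the first $M+1$ entries of the input sequence, so $z^{M}_{(x,u')}=z^{M}_{(x,u)}$. Your argument needs this detour (or an equivalent one) to be complete.
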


\proof
Let $ u \in \upset{ p^{ \varsigma } } \cap \mathcal{U}^{R}  $.
Choose some $ u' \in  \upset{ p^{ j } } \cap \mathcal{U}^{R} $ such that $ u'_{m} = u_{m} $ for every
$ m \leq M$, which is possible since $ p^{j}_{m} \sqsubseteq  p^{ \varsigma }_{m} $.
By assumption, $   (q^{j} , n^{j} )  \prec_{\mathcal{E}} [ \Eval ( \bar{ \eta } (x) , u') ]  $, and
since $ n^{j} = n_{ \varsigma }$ codes an evaluation path of length $ M $,
this implies that $ q^{j} \prec_{( \biguplus_{ n \leq N}  \mathcal{A}_{n}) } [ z^{M}_{ (x,u' ) }] $.
However, the evaluation result depends only on the first $M+1 $ entries of $u' $, so
$ z^{M}_{ (x,u' ) }  = z^{M}_{ (x,u) } $.
\qed

\begin{claimapp}
Let  $ \varsigma $ be non-maximal and non-empty,
and assume that
\[ \forall \tau \in S( \varsigma ) \; \forall v \in   \mathcal{U}^{R}  \; 
(  p^{ \tau } \sqsubseteq  v   \Rightarrow q^{ \tau } \prec_{ (\biguplus_{k \in K_\funcf}  \funcf^k (\mathcal{D})) } 
[ z^{\vert \tau \vert -1}_{(x,v)} ]) .\]
Then $  \{ [p_{\vert \varsigma \vert}^{ \tau } ; q^{ \tau } ] : \tau \in S ( \varsigma ) \}  $ is consistent and 
if  $ u \in \upset{ p^{ \varsigma } } \cap  \mathcal{U}^{R} $, then 
\[ \bigsqcup \{ [p_{\vert \varsigma \vert}^{ \tau } ; q^{ \tau } ] : \tau \in S ( \varsigma ) \}  
\prec_{\eta_\funcf[ \funcf (\mathcal{D})]^d}
[\eta_\funcf( d^{\vert \varsigma \vert-1}_{(x,u) })].\] 
\end{claimapp}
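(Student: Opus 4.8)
The plan is to reduce the claim, which concerns the successors of a node $\varsigma$ at level $\vert \varsigma\vert$, to a statement about one‑step evaluations via $\eta_\funcf$, and then invoke lemma~\ref{l_lwc3} exactly as in the final step of the proof of lemma~\ref{l_eta2}. First I would unwind the structure of the evaluation tree around $\varsigma$: for a non‑maximal non‑empty $\varsigma$, every immediate successor $\tau\in S(\varsigma)$ has $\vert\tau\vert=\vert\varsigma\vert+1$ and is determined by choosing a non‑empty subset $\tau_{\vert\varsigma\vert}\subseteq\varsigma_{\vert\varsigma\vert-1}$ with $\{p^j_{\vert\varsigma\vert}\}_{j\in\tau_{\vert\varsigma\vert}}$ consistent in $\mathcal{T}_\funcf$; in particular $p^\tau$ agrees with $p^\varsigma$ in all coordinates $m<\vert\varsigma\vert$ and has $p^\tau_{\vert\varsigma\vert}=\bigsqcup_{j\in\tau_{\vert\varsigma\vert}}p^j_{\vert\varsigma\vert}$. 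This means that if $u\in\upset{p^\varsigma}\cap\mathcal{U}^R$, then for each $\tau\in S(\varsigma)$ one can pick $v\in\upset{p^\tau}\cap\mathcal{U}^R$ agreeing with $u$ in the first $\vert\varsigma\vert$ coordinates, so the hypothesis of the claim gives $q^\tau\prec_{(\biguplus_{k\in K_\funcf}\funcf^k(\mathcal D))}[z^{\vert\varsigma\vert}_{(x,u)}]$, using that $z^m_{(x,u)}$ depends only on $u_0,\dots,u_m$.

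Next I would recall that $z^{\vert\varsigma\vert}_{(x,u)}=\Eval(\eta(d^{\vert\varsigma\vert-1}_{(x,u)}),u_{\vert\varsigma\vert})=\Eval(\eta_\funcf(d^{\vert\varsigma\vert-1}_{(x,u)}),u_{\vert\varsigma\vert})$ where, writing $t:=u_{\vert\varsigma\vert}\in\mathcal{T}_\funcf^R$, we have $p^\tau_{\vert\varsigma\vert}\sqsubseteq t$ precisely when $\tau$ is compatible with the choice of $u$. Thus for every $t\in\mathcal{T}_\funcf^R$, the element $\bigsqcup\{q^\tau : \tau\in S(\varsigma),\ p^\tau_{\vert\varsigma\vert}\sqsubseteq t\}$ is $\prec$-below $[\Eval(\eta_\funcf(d^{\vert\varsigma\vert-1}_{(x,u)}),t)]$ in $\biguplus_{k\in K_\funcf}\funcf^k(\mathcal D)$; here $d^{\vert\varsigma\vert-1}_{(x,u)}\in\funcf(\mathcal D)^R$ by the claim about ranks in the proof of lemma~\ref{l_eta4} (the evaluation sequence lands in successively lower ranks, so it is total at each finite stage), and $\funcf(\mathcal D)^d\cong\mathcal D$ is convex, local and complete while $\mathcal{T}_\funcf$ is dense. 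This is exactly the hypothesis of lemma~\ref{l_lwc3} applied to $\mathcal B:=\mathcal{T}_\funcf$, $\mathcal D:=\biguplus_{k\in K_\funcf}\funcf^k(\mathcal D)$ and $f:=\eta_\funcf(d^{\vert\varsigma\vert-1}_{(x,u)})$, with the finite step‑function set $\{[p^\tau_{\vert\varsigma\vert};q^\tau]\}_{\tau\in S(\varsigma)}$. Lemma~\ref{l_lwc3} then yields both that $\{[p^\tau_{\vert\varsigma\vert};q^\tau]:\tau\in S(\varsigma)\}$ is $[\mathcal{T}_\funcf\to\biguplus_k\funcf^k(\mathcal D)]$‑consistent (the remark after lemma~\ref{l_lwc3}, using density of $\mathcal{T}_\funcf$) and that $\bigsqcup_{\tau\in S(\varsigma)}[p^\tau_{\vert\varsigma\vert};q^\tau]\prec_{\eta_\funcf[\funcf(\mathcal D)]^d}[\eta_\funcf(d^{\vert\varsigma\vert-1}_{(x,u)})]$, which is the conclusion.

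The main obstacle I anticipate is bookkeeping rather than conceptual: matching up the coordinate indices so that $p^\tau_{\vert\varsigma\vert}$ lives in $\mathcal{T}_\funcf$ (not $\mathcal{T}_{\funcf_1}$ or a product factor), and verifying that for a fixed $u\in\upset{p^\varsigma}\cap\mathcal{U}^R$ the set $\{\tau\in S(\varsigma):p^\tau_{\vert\varsigma\vert}\sqsubseteq u_{\vert\varsigma\vert}\}$ is exactly the family of successors that survive when evaluating along $u$, so that the finite $\bigsqcup$ on the left of lemma~\ref{l_lwc3} is the right one. One should also double‑check that every leaf/successor has $q^\tau\neq\bot$ (this was arranged by the w.l.o.g.\ assumption $(q^j,n^j)\neq\bot$ earlier in the proof of lemma~\ref{l_eta5}), so that the step functions are non‑degenerate and $d^{\vert\varsigma\vert-1}_{(x,u)}$ is genuinely non‑terminating at the relevant stage. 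Once these identifications are made, the claim is an immediate instance of lemma~\ref{l_lwc3}.
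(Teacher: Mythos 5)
Your proposal is correct and follows essentially the same route as the paper: for each $t \in \mathcal{T}_{\funcf}^{R}$ one replaces $u_{\vert\varsigma\vert}$ by $t$ to produce a $v \in \upset{p^{\tau}}\cap\mathcal{U}^{R}$, uses that $d^{\vert\varsigma\vert-1}_{(x,u)}$ depends only on $u_0,\dots,u_{\vert\varsigma\vert-1}$ to identify $\Eval(\eta_\funcf(d^{\vert\varsigma\vert-1}_{(x,u)}),t)$ with $z^{\vert\tau\vert-1}_{(x,v)}$, and then concludes via lemma~\ref{l_lwc3}. The only blemish is the transient appearance of $[z^{\vert\varsigma\vert}_{(x,u)}]$ in your first paragraph before the correct $t$-parametrised statement, which your second paragraph fixes.
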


\proof
Let $ u \in \upset{ p^{ \varsigma } } \cap  \mathcal{U}^{R} $.
We must show that for each  $ t \in \mathcal{T}_{\funcf}^{R} $ and each $\tau \in S ( \varsigma )$, we have
\[ \bigsqcup \{ q^{ \tau} : p_{\vert \varsigma \vert}^{ \tau } \sqsubseteq t  \} \prec_{( \biguplus_{k \in K_\funcf}  \funcf^k (\mathcal{D}))}
 [  \Eval ( \eta_\funcf ( d^{\vert \varsigma \vert-1}_{(x,u) } ) , t )   ] .\]
Let  $u' \in \mathcal{U}^{R} $ be $u$ with $u_{\vert \varsigma \vert}$ replaced by $t$. 
If $ t \in \upset{ p_{\vert \varsigma \vert}^{ \tau } } $, then $ u' \in \upset{ p^{ \tau } } $ and 
$ q^{ \tau } \prec_{( \biguplus_{k \in K_\funcf}  \funcf^k (\mathcal{D})) } [ z^{\vert \tau \vert -1}_{(x,u')} ] $ by assumption.
On the other hand,
$ z^{\vert \varsigma \vert}_{(x,u)} =  z^{\vert \varsigma \vert}_{(x,u')} $ 
and  $ \Eval ( \eta_\funcf ( d^{\vert \varsigma \vert-1}_{(x,u) } ) , t ) =  z^{\vert \varsigma \vert}_{(x,u')} 
 =  z^{\vert \tau \vert -1}_{(x,u')}$. 
\qed

\begin{claimapp}
Let $  \bigsqcup_{ j \in J } [ p^{j}; ( q^{j}, n^{j} ) ] \sqsubseteq  
\bigsqcup_{ k \in K } [ p^{k}; ( q^{k}, n^{k} ) ] \in  F_{c} $, and 
let $ f: T ( \{ (p^{j} , n^{j}) \}_{j \in J} )  \rightarrow T ( \{ (p^{k} , n^{k}) \}_{k \in K} )  $
be as in the claim above.
If $ \varsigma \in  T ( \{ (p^{j} , n^{j}) \}_{j \in J} ) $ is non-empty, then
$q^\varsigma \sqsubseteq q^{f(\varsigma)} $.
\end{claimapp}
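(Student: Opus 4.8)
The plan is to prove $q^\varsigma\sqsubseteq q^{f(\varsigma)}$ by a leaf-to-root induction on the finite tree $T(\{(p^j,n^j)\}_{j\in J})$, with induction parameter $M_\varsigma+1-\vert\varsigma\vert$: this number is $0$ exactly when $\varsigma$ is maximal, and it strictly decreases when one passes from $\varsigma$ to an immediate successor (the code $n_\varsigma$, hence $M_\varsigma$, is constant along each branch, while $\vert\cdot\vert$ increases by one). Since the map $f$ from the preceding claim satisfies $\vert f(\varsigma)\vert=\vert\varsigma\vert$, preserves the tree order, preserves maximality, and was built with $n_{f(\varsigma)}=n_\varsigma$, the node $f(\varsigma)$ carries the same induction parameter as $\varsigma$, so $q^\varsigma$ and $q^{f(\varsigma)}$ are produced by the same clause of the decoration.

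For the base case, let $\varsigma$ be maximal with $\vert\varsigma\vert=M+1$; then $f(\varsigma)$ is maximal with $\vert f(\varsigma)\vert=M+1$ and $M_{f(\varsigma)}=M$, so $q^\varsigma=\bigsqcup_{j\in\varsigma_M}q^j$ and $q^{f(\varsigma)}=\bigsqcup_{k\in f(\varsigma)_M}q^k$. Fix $j\in\varsigma_M$. From $\bigsqcup_{j\in J}[p^j;(q^j,n^j)]\sqsubseteq\bigsqcup_{k\in K}[p^k;(q^k,n^k)]$ and $(q^j,n^j)\neq\bot$, comparing the $\biguplus_{n\le N}\mathcal{A}_n$-components gives $q^j\sqsubseteq\bigsqcup\{q^k:p^k\sqsubseteq p^j\}$. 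Now, by the inductive definition of $f$ in the preceding claim, $f(\varsigma)_M=\{k\in K:p^k_m\sqsubseteq p^\varsigma_m\mbox{ for all }m\le M\}$; since $j\in\varsigma_m$ for every $m\le M$ we have $p^j_m\sqsubseteq p^\varsigma_m$, so every $k$ with $p^k\sqsubseteq p^j$ lies in $f(\varsigma)_M$. Hence $q^j\sqsubseteq\bigsqcup_{k\in f(\varsigma)_M}q^k=q^{f(\varsigma)}$, and taking the supremum over $j\in\varsigma_M$ gives $q^\varsigma\sqsubseteq q^{f(\varsigma)}$.

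For the inductive step, let $\varsigma$ be non-maximal and non-empty, so $q^\varsigma=(k^{\vert\varsigma\vert-1}_\varsigma,\vartheta_\funcf(\bigsqcup_{\tau\in S(\varsigma)}[p^\tau_{\vert\varsigma\vert};q^\tau]))$ and similarly for $f(\varsigma)$ with $S(f(\varsigma))$ in place of $S(\varsigma)$. The first components coincide, since $n_{f(\varsigma)}=n_\varsigma$ and $\vert f(\varsigma)\vert=\vert\varsigma\vert$, so the evaluation path coded by $n_{f(\varsigma)}$ agrees with that coded by $n_\varsigma$ up to index $\vert\varsigma\vert-1$. For the second components it suffices, by monotonicity of $\vartheta_\funcf$ (lemma~\ref{l_eta2}), to prove $\bigsqcup_{\tau\in S(\varsigma)}[p^\tau_{\vert\varsigma\vert};q^\tau]\sqsubseteq\bigsqcup_{\tau'\in S(f(\varsigma))}[p^{\tau'}_{\vert f(\varsigma)\vert};q^{\tau'}]$. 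Given $\tau\in S(\varsigma)$, order- and length-preservation of $f$ put $f(\tau)\in S(f(\varsigma))$; the induction hypothesis gives $q^\tau\sqsubseteq q^{f(\tau)}$, and $p^{f(\tau)}\sqsubseteq p^\tau$ gives $p^{f(\tau)}_{\vert f(\varsigma)\vert}\sqsubseteq p^\tau_{\vert\varsigma\vert}$, whence $[p^\tau_{\vert\varsigma\vert};q^\tau]\sqsubseteq[p^{f(\tau)}_{\vert f(\varsigma)\vert};q^{f(\tau)}]\sqsubseteq\bigsqcup_{\tau'\in S(f(\varsigma))}[p^{\tau'}_{\vert f(\varsigma)\vert};q^{\tau'}]$. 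Each generator of the left-hand supremum is thus dominated by the right-hand side, so the inequality holds, and applying $\vartheta_\funcf$ finishes the step.

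I expect the only genuinely delicate points to be bookkeeping ones: first, verifying that $\varsigma$ and $f(\varsigma)$ are decorated by the same clause (which is precisely why the three preservation properties of $f$, together with $n_{f(\varsigma)}=n_\varsigma$, must be invoked), and second, in the base case, identifying exactly the index set over which $q^{f(\varsigma)}$ is a supremum and checking the inclusion $\{k:p^k\sqsubseteq p^j\}\subseteq f(\varsigma)_M$ from the explicit description of $f(\varsigma)_M$. Once this is in place, monotonicity of $\vartheta_\funcf$ and the elementary monotonicity of the step-function order ($p'\sqsubseteq p$ and $q\sqsubseteq q'$ imply $[p;q]\sqsubseteq[p';q']$) carry the rest, with no further computation.
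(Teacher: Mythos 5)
Your proposal is correct and follows essentially the same route as the paper's proof: a leaf-to-root induction on the evaluation tree, with the base case reducing to the inclusion $\{k : p^k \sqsubseteq p^j\} \subseteq f(\varsigma)_{M}$ and the inductive step reducing to domination of the step-function suprema over $S(\varsigma)$ by those over $S(f(\varsigma))$, followed by monotonicity of $\vartheta_{\funcf}$ and the identity $n_{\varsigma} = n_{f(\varsigma)}$ for the first components. The only difference is presentational: you make the induction parameter $M_{\varsigma}+1-\vert\varsigma\vert$ explicit, which the paper leaves implicit.
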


\proof
Let $ \varsigma $ be maximal with $ \vert \varsigma \vert = M+1 $. If $ j \in \varsigma_{M}$, then
\[q^{j} \sqsubseteq
\bigsqcup_{ k \in K } \{ q^{k} : p^{k} \sqsubseteq p^{j} \} 
 \sqsubseteq  
\bigsqcup_{ k \in K } \{ q^{k} : \forall m \leq M \:  ( p^{k}_{m} \sqsubseteq p^{j}_{m}) \} 
 \sqsubseteq 
\bigsqcup_{ k \in f( \varsigma )_{M}}   q^{k}\ .
\]
This shows that $ q^{ \varsigma } = \bigsqcup_{ j \in \varsigma_M} q^j \sqsubseteq q^{ f( \varsigma ) } $. 

Let $ \varsigma $ be non-maximal 
and take as induction hypothesis that $ q^{   \tau } \sqsubseteq q^{ f( \tau ) } $ for all $ \tau \in S ( \varsigma ) $.
Moreover,  if $ \tau \in S ( \varsigma ) $, then $ f( \tau ) \in S ( f( \varsigma ) ) $ and
$ q^{ \tau } \sqsubseteq \bigsqcup_{ \upsilon \in  S( f ( \varsigma )) } \{ q^{ \upsilon } : 
p_{ \vert \varsigma \vert}^{ \upsilon } \sqsubseteq p_{ \vert \varsigma \vert}^{ \tau}  \} $.
This proves that
\[   \bigsqcup \{ [p^{ \tau }_{ \vert \varsigma \vert} ; q^{ \tau } ] : \tau \in S ( \varsigma ) \} \sqsubseteq  
\bigsqcup \{ [p^{ \upsilon}_{ \vert \varsigma \vert} ; q^{ \upsilon } ] : \upsilon \in S ( f( \varsigma)  ) \} .\]
Furthermore, $ \vartheta_\funcf $ is monotone, and 
$ n_{ \varsigma } = n_{ f ( \varsigma ) } $ so the respective evaluation paths are identical. 
This shows that
\[ (k^{ \vert \varsigma \vert -1}_{\varsigma } , \vartheta_\funcf ( \bigsqcup_{ 
\tau \in S ( \varsigma )}  [p^{ \tau }_{ \vert \varsigma \vert} ; q^{ \tau } ] ) )
\sqsubseteq (k^{ \vert \varsigma \vert -1}_{\varsigma } , \vartheta_\funcf ( \bigsqcup_{ 
\upsilon \in S ( f( \varsigma) )}  [p^{ \tau }_{ \vert \varsigma \vert} ; q^{ \tau } ] ) )\ .\eqno{\qEd}\]

\begin{claimapp}
Let $ \bigsqcup_{ j \in J } [ p^{j}; ( q^{j}, n^{j} ) ]  \in F_c$ and let $ r \in D_c$.
Then $ \bigsqcup_{ j \in J } [ p^{j}; ( q^{j}, n^{j} ) ]  \sqsubseteq \bar{ \eta } (r)  $ if and only if
$ q^{\varsigma } \sqsubseteq z^{ \vert \varsigma \vert-1}_{ (r, p^{ \varsigma } )}  $ for every non-empty $ \varsigma \in T $.
\end{claimapp}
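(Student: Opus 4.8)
The plan is to establish both directions by induction through the finite evaluation tree $T=T(\{(p^j,n^j)\}_{j\in J})$, using three elementary facts. First, $\bigsqcup_{j\in J}[p^j;(q^j,n^j)]\sqsubseteq\bar{\eta}(r)$ holds if and only if $(q^j,n^j)\sqsubseteq\bar{\eta}(r)(p^j)=\zeta(r,p^j)$ for every $j\in J$, since a join of step functions lies below $g$ precisely when each step function does. Second, because $(q^j,n^j)\ne\bot$ lies in a strict product and $\mathcal{N}$ is flat, this inequality forces $n^j=n_{(r,p^j)}$, the code of the evaluation path of $(r,p^j)$; as the path coded by $n^j$ has length $M_{n^j}$ and consists of identity-occurrence indices, the evaluation of $(r,p^j)$ does not terminate before step $M_{n^j}$ and terminates there with $q^j\sqsubseteq z^{M_{n^j}}_{(r,p^j)}\in\biguplus_{n\le N}A_n$. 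Third, $z^m_{(r,u)}$ depends monotonically on, and only on, $u_0,\dots,u_m$; and if $\varsigma\subseteq\tau$ in $T$ then $p^\varsigma$ and $p^\tau$ agree on their first $|\varsigma|$ entries, while $p^j_m\sqsubseteq p^\varsigma_m$ for $m<|\varsigma|$ whenever $j\in\varsigma_{|\varsigma|-1}$.

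For the forward implication I would fix $r$ with $\bigsqcup_{j\in J}[p^j;(q^j,n^j)]\sqsubseteq\bar{\eta}(r)$ and prove $q^\varsigma\sqsubseteq z^{|\varsigma|-1}_{(r,p^\varsigma)}$ by induction on the height of the subtree rooted at $\varsigma$. For a maximal $\varsigma$ with $|\varsigma|=M+1$ and $j\in\varsigma_M$, the facts above give $q^j\sqsubseteq z^M_{(r,p^j)}\sqsubseteq z^M_{(r,p^\varsigma)}$ and $M_{(r,p^\varsigma)}=M$, whence $q^\varsigma=\bigsqcup_{j\in\varsigma_M}q^j\sqsubseteq z^{|\varsigma|-1}_{(r,p^\varsigma)}$. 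For a non-maximal $\varsigma$, the induction hypothesis and the agreement of $p^\varsigma,p^\tau$ on the first $|\varsigma|$ entries give $q^\tau\sqsubseteq z^{|\varsigma|}_{(r,p^\tau)}=\Eval(\eta_\funcf(d^{|\varsigma|-1}_{(r,p^\varsigma)}),p^\tau_{|\varsigma|})$ for each $\tau\in S(\varsigma)$, so $\bigsqcup_{\tau\in S(\varsigma)}[p^\tau_{|\varsigma|};q^\tau]\sqsubseteq\eta_\funcf(d^{|\varsigma|-1}_{(r,p^\varsigma)})$; since this element of $E^\funcf_c$ is available from the preceding claims and $(\vartheta_\funcf,\eta_\funcf)$ is an adjunction pair by lemma~\ref{l_eta2}, we obtain $\vartheta_\funcf(\bigsqcup_{\tau\in S(\varsigma)}[p^\tau_{|\varsigma|};q^\tau])\sqsubseteq d^{|\varsigma|-1}_{(r,p^\varsigma)}$, and matching the path index $k^{|\varsigma|-1}_\varsigma$ with $k^{|\varsigma|-1}_{(r,p^\varsigma)}$ — which is forced by the second fact applied to any $j\in\varsigma_{|\varsigma|-1}$, since then $n^j=n_\varsigma$ codes the evaluation path of $(r,p^j)$ — yields $q^\varsigma\sqsubseteq z^{|\varsigma|-1}_{(r,p^\varsigma)}$.

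For the converse, assuming the tree condition, I would fix $j\in J$, let $M:=M_{n^j}$, and take the canonical maximal node $\varsigma=\varsigma(j)$ defined by $\varsigma_l=\{k\in J:n^k=n^j,\ p^k_0\sqsubseteq p^j_0,\dots,p^k_l\sqsubseteq p^j_l\}$ for $l\le M$, so that $p^\varsigma_l=p^j_l$ for $l\le M$ and $j\in\varsigma_M$. Writing $\varsigma^{(l)}$ for the truncation of $\varsigma$ to length $l$, I would show by induction on $l=1,\dots,M+1$ that the evaluation of $(r,p^j)$ reaches step $l-1$, with $z^{l-1}_{(r,p^j)}=z^{l-1}_{(r,p^{\varsigma^{(l)}})}$ and path index equal to the $(l-1)$st entry of the path coded by $n^j$; this uses only the tree inequality $q^{\varsigma^{(l)}}\sqsubseteq z^{l-1}_{(r,p^{\varsigma^{(l)}})}$, the fact that $q^{\varsigma^{(l)}}$ carries that very path index as its first component, and forced agreement of indices under $\sqsubseteq$ in a disjoint union. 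At $l=M+1$ the node $\varsigma$ is maximal, $q^\varsigma=\bigsqcup_{k\in\varsigma_M}q^k\sqsupseteq q^j$ lies in $\biguplus_{n\le N}A_n$, so $z^M_{(r,p^j)}$ is the evaluation result, $n_{(r,p^j)}=n^j$, and $q^j\sqsubseteq z^M_{(r,p^j)}$; hence $(q^j,n^j)\sqsubseteq\zeta(r,p^j)$. As $j$ was arbitrary, the first fact gives $\bigsqcup_{j\in J}[p^j;(q^j,n^j)]\sqsubseteq\bar{\eta}(r)$.

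The main obstacle, and the source of essentially all the bookkeeping, is the coherence of the evaluation-path codes: one must verify at each node that the single index $k^{|\varsigma|-1}_\varsigma$ stored in $q^\varsigma$ is precisely the index produced by evaluating $(r,p^\varsigma)$ at step $|\varsigma|-1$, and that these pointwise agreements assemble into the global identity $n^j=n_{(r,p^j)}$ of whole paths — this is exactly why the tree condition must be imposed at every node, since the intermediate nodes are what carry the path data down the tree in the converse direction. The remaining ingredients — the reduction to a per-$j$ comparison, the monotonicity and finite dependence of $z^m_{(r,u)}$, and the appeal to the adjunction $(\vartheta_\funcf,\eta_\funcf)$ — are routine given the results already proved in this lemma.
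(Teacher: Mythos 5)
Your proposal is correct and follows essentially the same route as the paper: both directions rest on reducing $\sqsubseteq\bar{\eta}(r)$ to a per-$j$ step-function comparison, a leaf-to-root induction through the evaluation tree using the adjunction $(\vartheta_{\funcf},\eta_{\funcf})$ for the forward direction, and a reduction to maximal chains with $p^{\varsigma}$ agreeing with $p^{j}$ for the converse. Your only deviations are organisational --- you package the converse as a single root-to-leaf induction along truncations of one canonical maximal node, where the paper separates this into downward- and upward-propagation steps, and you are somewhat more explicit about the path-index coherence that the paper's converse paragraph leaves implicit.
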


\proof 
Assume that  $ \bigsqcup_{ j \in J } [ p^{j}; ( q^{j}, n^{j} ) ]  \sqsubseteq \bar{ \eta } (r)  $ 
and let $ \varsigma \in T $ be maximal.
For each $j \in J$, we have
$ (q^{j} , n^{j} ) \sqsubseteq \Eval (  \bar{ \eta } (r)  , p^{j} ) =  ( z^{M}_{(r,p^{j})} ,  n_{(r,p^{j})} ) $, with $M$ the length of
the evaluation path of  $(r,p^{j})  $, and this implies that $ q^j \sqsubseteq z^M_{(r,p^{j})} $ and $n^j =n_{(r,p^{j})} $.
Moreover, if $ j \in\varsigma_M$, then $ z^{M}_{(r,p^{j})} \sqsubseteq z^{M}_{ (r, p^{ \varsigma } )} $ since 
$p^j \sqsubseteq p^\varsigma $, and $  \vert \varsigma \vert = M+1 $ since $ n_\varsigma= n^j  $ which codes the evaluation path of  $(r,p^{j})  $.
This shows that $ q^\varsigma  \sqsubseteq z^{M}_{ (r, p^{ \varsigma } )} $ .

Assume that $ q^{\varsigma } \sqsubseteq z^{ \vert \varsigma \vert-1}_{ (r, p^{ \varsigma } )}  $  for every maximal $ \varsigma \in T $.
If  $ j \in J$, there is some maximal $ \varsigma $ with  $\varsigma_{M} = \{ j \} $. Then
$ q^{j}  =  q^{ \varsigma}  \sqsubseteq z^{M}_{ (r, p^{ \varsigma } )}  $ and $  n^{j}  = n_{ \varsigma } =  n_{(r,p^{j})} $;
in sum $(q^j, n^j) \sqsubseteq \Eval (\bar{\eta}(r), p^j) $.  This shows that 
$ \bigsqcup_{ j \in J } [ p^{j}; ( q^{j}, n^{j} ) ]  \sqsubseteq \bar{ \eta } (r)  $.

Let  $ \varsigma \in T $ be non-maximal and assume that
$ q^{\varsigma } \sqsubseteq z^{ \vert \varsigma \vert-1}_{ (r, p^{ \varsigma } )}  $.
Then 
\[ \vartheta_\funcf ( \bigsqcup_{ \tau \in S ( \varsigma )}  [p^{ \tau }_{ \vert \varsigma \vert} ; q^{ \tau } ] )  
\sqsubseteq d^{ \vert \varsigma \vert-1}_{ (r, p^{ \varsigma } )} \]
which, since $\eta_\funcf$ is the upper adjoint of $\vartheta_\funcf$ implies that
\[ \bigsqcup_{ \tau \in S ( \varsigma )}  [p^{ \tau }_{ \vert \varsigma \vert} ; q^{ \tau } ]  
\sqsubseteq \eta_\funcf ( d^{ \vert \varsigma \vert-1}_{ (r, p^{ \varsigma } )}) .\]
In particular, for each $  \tau \in S ( \varsigma )$, we have 
$ q^\tau \sqsubseteq \Eval ( \eta_\funcf ( d^{ \vert \varsigma \vert-1}_{ (r, p^{ \varsigma } )}), p^\tau_{ \vert \varsigma \vert}) =
  z^{\vert \varsigma \vert}_{ (r, p^{ \tau } )}  = z^{\vert \tau \vert-1}_{ (r, p^{ \tau } )}  $ .

Let  $ \varsigma \in T $ be non-maximal and assume that $  q^{ \tau } \sqsubseteq z^{\vert \tau \vert-1}_{ (r, p^{ \tau } )}  $
for all extensions $\tau $ of $\varsigma $ and that if $ \tau $ is a maximal extension of $ \varsigma $, then
 $n_\tau $ codes the evaluation path of $ (r, p^{ \tau } ) $.
Choose some  maximal extension $ \tau $ of $ \varsigma  $.
Then $ n_\varsigma = n_\tau$ and $ k^{ \vert \varsigma \vert-1}_{\varsigma }= k^{ \vert \varsigma \vert-1}_{ (r, p^{ \varsigma } )}   $
since $p^\varsigma_m=p^\tau_m $ for $m <  \vert \varsigma \vert$.
Moreover, for each  $  \tau \in S ( \varsigma )$, we have
 $  q^{ \tau } \sqsubseteq z^{\vert \tau \vert-1}_{ (r, p^{ \tau } )} =  \Eval ( \eta_\funcf ( d^{ \vert \varsigma \vert-1}_{ (r, p^{ \varsigma } )}), p^\tau_{ \vert \varsigma \vert})   $. Thus,
\[ \bigsqcup_{ \tau \in S ( \varsigma )}  [p^{ \tau }_{ \vert \varsigma \vert} ; q^{ \tau } ]  
\sqsubseteq \eta_\funcf ( d^{ \vert \varsigma \vert-1}_{ (r, p^{ \varsigma } )}) ,\]
which since $\vartheta_\funcf$ is the lower adjoint of $\eta_\funcf$ shows that
\[ \vartheta_\funcf ( \bigsqcup_{ \tau \in S ( \varsigma )}  [p^{ \tau }_{ \vert \varsigma \vert} ; q^{ \tau } ] )  
\sqsubseteq d^{ \vert \varsigma \vert-1}_{ (r, p^{ \varsigma } )} .\]
Hence, $ q^{\varsigma } \sqsubseteq z^{ \vert \varsigma \vert-1}_{ (r, p^{ \varsigma } )}  $.
\qed

\begin{claimapp} 
Let $ y \in \bar{\eta}[ \mathcal{D}]^R $. Then  $  \bar{ \eta } (  \bar{ \vartheta } (y)) \approx_{ [ \mathcal{U} \rightarrow \mathcal{E} ]} y $.
\end{claimapp}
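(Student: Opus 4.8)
The plan is to combine the adjunction pair $(\bar{\vartheta},\bar{\eta})$ with the locality and completeness of $\mathcal{D}$, which will force $\bar{\vartheta}(y)$ to lie below the greatest representative of $[x]_{\mathcal{D}}$ whenever $y \approx_{[\mathcal{U}\rightarrow\mathcal{E}]}\bar{\eta}(x)$. Fix $y \in \bar{\eta}[\mathcal{D}]^R$. By the definition of the image there is $x \in \mathcal{D}^R$ with $y \approx_{[\mathcal{U}\rightarrow\mathcal{E}]}\bar{\eta}(x)$, so $[y]_{[\mathcal{U}\rightarrow\mathcal{E}]} = [\bar{\eta}(x)]_{[\mathcal{U}\rightarrow\mathcal{E}]}$. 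Since $\mathcal{U}$ is dense and $\mathcal{E}$ is convex, local and complete, lemma~\ref{l_lwc4} shows that $[\mathcal{U}\rightarrow\mathcal{E}]$ is convex, local and complete as well; let $\bar{y}:=\bigsqcup[y]_{[\mathcal{U}\rightarrow\mathcal{E}]}$ and $\bar{x}:=\bigsqcup[x]_{\mathcal{D}}$ be the greatest representatives, so $y \approx_{[\mathcal{U}\rightarrow\mathcal{E}]}\bar{y}$, $y \sqsubseteq \bar{y}$ and $x \approx_{\mathcal{D}}\bar{x}$. From the adjunction pair property we already have $y \sqsubseteq \bar{\eta}(\bar{\vartheta}(y))$. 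Granting the upper estimate $\bar{\eta}(\bar{\vartheta}(y))\sqsubseteq\bar{y}$, convexity of $[\mathcal{U}\rightarrow\mathcal{E}]$ applied to $y \sqsubseteq \bar{\eta}(\bar{\vartheta}(y))\sqsubseteq\bar{y}$ together with $y \approx_{[\mathcal{U}\rightarrow\mathcal{E}]}\bar{y}$ gives $y \approx_{[\mathcal{U}\rightarrow\mathcal{E}]} y \sqcup \bar{\eta}(\bar{\vartheta}(y)) = \bar{\eta}(\bar{\vartheta}(y))$, which is the claim.

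It therefore remains to prove $\bar{\eta}(\bar{\vartheta}(y))\sqsubseteq\bar{y}$, and this is where the content sits. First I would show $\bar{\vartheta}(y)\sqsubseteq\bar{x}$. By continuity of $\bar{\vartheta}$ we have $\bar{\vartheta}(y)=\bigsqcup\{\bar{\vartheta}(q):q\in\textrm{approx}(y)\}$, so it suffices to check $\bar{\vartheta}(q)\sqsubseteq\bar{x}$ for every compact $q\sqsubseteq y$. Such a $q$ belongs to $F_c$, and by the description of the compact elements of a dense part, $F_c$ consists of those compacts of the underlying domain of $[\mathcal{U}\rightarrow\mathcal{E}]$ whose upper set meets $\bar{\eta}[\mathcal{D}]^R$; here this is automatic because $q\sqsubseteq y\in\bar{\eta}[\mathcal{D}]^R$. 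Since $q \sqsubseteq y \in [\bar{\eta}(x)]_{[\mathcal{U}\rightarrow\mathcal{E}]}$, we get $q \prec_{[\mathcal{U}\rightarrow\mathcal{E}]}[\bar{\eta}(x)]$, so the $\prec$-property of $\bar{\vartheta}$ established in lemma~\ref{l_eta5} yields $\bar{\vartheta}(q)\prec_{\mathcal{D}}[x]$, i.e. $\bar{\vartheta}(q)\sqsubseteq x'$ for some $x' \in [x]_{\mathcal{D}}$. Now locality and completeness of $\mathcal{D}$ enter: $[x]_{\mathcal{D}}$ is consistent with greatest element $\bar{x}\approx_{\mathcal{D}}x$, so $\bar{\vartheta}(q)\sqsubseteq\bar{x}$. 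Taking the least upper bound over all compact $q\sqsubseteq y$ gives $\bar{\vartheta}(y)\sqsubseteq\bar{x}$. Applying $\bar{\eta}$ and using its monotonicity and equivariance, $\bar{\eta}(\bar{\vartheta}(y))\sqsubseteq\bar{\eta}(\bar{x})\approx_{[\mathcal{U}\rightarrow\mathcal{E}]}\bar{\eta}(x)\approx_{[\mathcal{U}\rightarrow\mathcal{E}]}y$, hence $\bar{\eta}(\bar{\vartheta}(y))\in[y]_{[\mathcal{U}\rightarrow\mathcal{E}]}$ and therefore $\bar{\eta}(\bar{\vartheta}(y))\sqsubseteq\bar{y}$, as needed.

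The main obstacle I anticipate is bookkeeping rather than anything conceptual: checking that the $\prec$-property of $\bar{\vartheta}$ genuinely applies to the compact approximations of an \emph{arbitrary} total $y$ (which reduces to identifying $F_c$ correctly), and that the greatest representatives used are legitimate (for $\mathcal{D}$ this is exactly locality and completeness; for $[\mathcal{U}\rightarrow\mathcal{E}]$ it follows from lemma~\ref{l_lwc4}). Once these are in place, the remainder is the routine interplay of $\sqsubseteq$, greatest representatives and convexity recorded above. As the excerpt already notes, this claim at once yields the consequence needed afterwards: if $\bar{\eta}(x)\approx_{[\mathcal{U}\rightarrow\mathcal{E}]}y$ then $\bar{\eta}(x)\approx_{[\mathcal{U}\rightarrow\mathcal{E}]}\bar{\eta}(\bar{\vartheta}(y))$, and equi-injectivity of $\bar{\eta}$ gives $x \approx_{\mathcal{D}}\bar{\vartheta}(y)$.
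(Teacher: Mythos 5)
Your proof is correct and follows essentially the same route as the paper's: bound $\bar{\vartheta}(y)$ from above by applying the $\prec$-property to the compact approximations of $y$ and using locality and completeness of $\mathcal{D}$ to pass to the greatest representative of $[x]$, get the lower bound $y \sqsubseteq \bar{\eta}(\bar{\vartheta}(y))$ from the adjunction, and squeeze with convexity (the paper does the last step pointwise in $\mathcal{E}$ via lemma~\ref{l_equiv1}, whereas you invoke convexity of $[\mathcal{U}\to\mathcal{E}]$ through lemma~\ref{l_lwc4}, which amounts to the same computation). One small slip: the intermediate assertion ``hence $\bar{\eta}(\bar{\vartheta}(y))\in[y]_{[\mathcal{U}\to\mathcal{E}]}$'' does not follow from $\bar{\eta}(\bar{\vartheta}(y))\sqsubseteq\bar{\eta}(\bar{x})\approx y$ (it is in fact the statement being proved), but it is also not needed, since $\bar{\eta}(\bar{x})\in[y]$ already gives $\bar{\eta}(\bar{\vartheta}(y))\sqsubseteq\bar{\eta}(\bar{x})\sqsubseteq\bar{y}$ directly.
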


\proof
Fix some $ x \in \mathcal{D}^R $ such that $ \bar{ \eta } ( x) \approx_{ [ \mathcal{U} \rightarrow \mathcal{E} ] } y $.
Since $  \mathcal{D}$ is local and complete and $\bar{\eta} $ is equivariant, we can assume that $ x = \bigsqcup [x]$.
If $ q \in \textrm{approx} (y) $, then $  \bar{ \vartheta } (q)   \prec_{\mathcal{D}} [x] $ which implies 
 $  \bar{ \vartheta } (q)  \sqsubseteq x $. This shows that  $  \bar{ \vartheta } (y) =  \bar{ \vartheta } (\bigsqcup
\textrm{approx} (y)) \sqsubseteq x $.

Let $ u \in  \mathcal{U}^{R} $. Then  $ \Eval ( \bar{ \eta } (x ) , u ) \approx_{ \mathcal{E}} y(u) $ and since
  $ ( \bar{ \vartheta } , \bar{ \eta } ) $ is an adjunction pair,
\[ y(u) \sqsubseteq   \Eval ( \bar{ \eta } ( \bar{ \vartheta } (y ) ) , u )
 \sqsubseteq \Eval ( \bar{ \eta } (x ) , u ) .\]
This implies that $   \Eval ( \bar{ \eta } ( \bar{ \vartheta } (y ) ) , u )  \approx_{ \mathcal{E}} y(u) $ since $  \mathcal{E}$ is convex.
\qed

\subsection*{Proposition~\ref{p_qcb7}}

\begin{claimapp}
Let $\funcf $ and $\funcg$ be weakly equivalent strictly positive endofunctors over $ \cclcdomwp $.
Then there exist assignments $ \varphi  \mapsto  \varphi^{ \funcf , \funcg } $ 
and $ \varphi  \mapsto  \varphi^{ \funcg , \funcf } $ 
from the class of equivariant maps into itself with the following properties:
\begin{enumerate}[(1)]
\item if $ \varphi :  \mathcal{D}  \rightarrow   \mathcal{E}  $, 
then $ \varphi^{ \funcf , \funcg } : \funcf (  \mathcal{D} ) \rightarrow  \funcg (  \mathcal{E} ) $ and
 $ \varphi^{ \funcg , \funcf } : \funcg (  \mathcal{D} ) \rightarrow  \funcf (  \mathcal{E} ) $;
\item if $ ( \varphi , \chi ) $ is a weak isomorphism of domain-pers $  \mathcal{D} $ and $   \mathcal{E}  $, 
then $ (  \varphi^{ \funcf , \funcg } ,  \chi^{ \funcg , \funcf } ) $ is a weak isomorphism of $ \funcf (  \mathcal{D} )  $ and $ \funcg (  \mathcal{E}  ) $; and
\item if $ \varphi $, $ \chi$ are equivariant maps and $f ,g $ are equiembeddings which satisfy
$ g \circ \varphi = \chi \circ f $, then
$ \funcg ( g)  \circ \varphi^{ \funcf , \funcg } = \chi^{ \funcf , \funcg } \circ \funcf ( f ) $.
\end{enumerate}
\end{claimapp}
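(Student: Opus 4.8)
The plan is to construct the two assignments simultaneously by structural induction on the common shape of $\funcf$ and $\funcg$. Weak equivalence of the two functors means they are built from the same pattern of basic operations and their parameter positions match up, so before starting I would fix, once and for all and independently of any domain-per, a weak isomorphism pair between the parameter of $\funcf$ and the corresponding parameter of $\funcg$ at each parameter position; such pairs exist by the definition of weak equivalence. With these choices made, I would define $\varphi\mapsto\varphi^{\funcf,\funcg}$ (and symmetrically $\varphi\mapsto\varphi^{\funcg,\funcf}$) by: $\varphi^{\funcf,\funcg}:=\varphi$ when $\funcf=\funcg=\id$; $\varphi^{\funcf,\funcg}:=\alpha$ (discarding $\varphi$) when $\funcf$, $\funcg$ are the constant functors at $\mathcal{A}$, $\mathcal{B}$ and $(\alpha,\beta)$ is the chosen weak isomorphism pair; $\varphi^{\funcf,\funcg}:=\varphi^{\funcf_0,\funcg_0}+\varphi^{\funcf_1,\funcg_1}$ for $\funcf=\funcf_0+\funcf_1$, $\funcg=\funcg_0+\funcg_1$ (and the analogous clause for $\times$); and $\varphi^{\funcf,\funcg}:=(\beta'\to\varphi^{\funcf_1,\funcg_1})$, i.e.\ $h\mapsto\varphi^{\funcf_1,\funcg_1}\circ h\circ\beta'$, for $\funcf=[\mathcal{B}_\funcf\to\funcf_1]$ and $\funcg=[\mathcal{B}_\funcg\to\funcg_1]$ with chosen pair $(\beta,\beta')$ between the exponents. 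The reverse assignment $\varphi^{\funcg,\funcf}$ is defined by exchanging the roles of the two functors throughout, so in the exponential clause it precomposes with $\beta$ rather than $\beta'$.

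Next I would verify (1): that $\varphi^{\funcf,\funcg}$ has the stated domain and codomain and is equivariant, both immediate from the construction --- in the sum, product and exponential steps equivariance follows from lemma~\ref{l_equiv2} (in the exponential step because $\beta'$ and $\varphi^{\funcf_1,\funcg_1}$ are equivariant). For (2) I would run the same induction: the base cases say exactly that $(\alpha,\beta)$, resp.\ the identity pair, is a weak isomorphism; the sum and product steps follow componentwise from the induction hypothesis; and in the exponential step $\chi^{\funcg,\funcf}\circ\varphi^{\funcf,\funcg}$ sends $h$ to $(\chi^{\funcg_1,\funcf_1}\circ\varphi^{\funcf_1,\funcg_1})\circ h\circ(\beta'\circ\beta)$, which is $\approx\id$ on the function space because $\chi^{\funcg_1,\funcf_1}\circ\varphi^{\funcf_1,\funcg_1}\approx\id$ by induction, $\beta'\circ\beta\approx\id$ since the chosen pair is a weak isomorphism, $h$ is equivariant, and lemma~\ref{l_equiv1} reduces the check to total elements; the other composite is symmetric. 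For (3) I would again induct on the shape: the identity case is precisely the hypothesis $g\circ\varphi=\chi\circ f$, the constant case is the triviality $\alpha=\alpha$ (the functorial action of a constant functor on any morphism being the identity), the sum and product cases are componentwise, and in the exponential case both $\funcg(g)\circ\varphi^{\funcf,\funcg}$ and $\chi^{\funcf,\funcg}\circ\funcf(f)$ evaluate at $h$ to a map of the form $c\circ h\circ\beta'$, with $c=\funcg_1(g)\circ\varphi^{\funcf_1,\funcg_1}$ and $c=\chi^{\funcf_1,\funcg_1}\circ\funcf_1(f)$ respectively, so equality reduces to the induction hypothesis $\funcg_1(g)\circ\varphi^{\funcf_1,\funcg_1}=\chi^{\funcf_1,\funcg_1}\circ\funcf_1(f)$.

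I do not expect a genuine obstacle here; the argument is essentially bookkeeping. The one point requiring care is the exponential step: the auxiliary weak isomorphism between the exponents must be chosen once at the outset, so that $\varphi\mapsto\varphi^{\funcf,\funcg}$ is a well-defined assignment on the whole class of equivariant maps rather than something depending on the particular domain-pers involved; and one must keep straight, when moving between $\varphi^{\funcf,\funcg}$ and $\varphi^{\funcg,\funcf}$, which of $\beta$ and $\beta'$ is used for the precomposition. Everything else is the evident componentwise verification, which is why the paper defers the routine details to the appendix.
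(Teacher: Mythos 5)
Your proposal follows the paper's proof essentially verbatim: the same structural induction with fixed weak isomorphism pairs chosen for the parameters, componentwise definitions for sum and product, and contravariant precomposition with the chosen pair in the exponential step. The verifications of (1)--(3) you sketch are exactly the ones the paper declares "tedious but straight-forward" and leaves to the reader, and they check out, including the direction-of-precomposition bookkeeping in the exponential case.
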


\proof
We define the assignments $ \varphi  \mapsto  \varphi^{ \funcf , \funcg } $  and $ \varphi  \mapsto  \varphi^{ \funcg , \funcf } $  
by simultaneous induction on the structure of $ \funcf $ and $ \funcg $.
The tedious but straight-forward verifications of properties 1-3 at each induction step are left for the reader.
\begin{enumerate}[$\bullet$]
\item
If $ \funcf = \funcg = \id $, let 
$ \varphi^{ \funcf , \funcg } :=   \varphi^{ \funcg , \funcf } := \varphi $.
\item
If $ \funcf $ and $ \funcg $ are constant functors, equal to $ \mathcal{A} $ and $ \mathcal{A}' $, respectively,
choose some weak isomorphism pair $ ( \psi_{ \mathcal{A},  \mathcal{A}' }  , \psi_{  \mathcal{A}',  \mathcal{A} } ) :  \mathcal{A} \rightarrow  \mathcal{A}' $,
and let $  \varphi^{ \funcf , \funcg } :=   \psi_{ \mathcal{A},  \mathcal{A}' }   $ and
 $  \varphi^{ \funcg , \funcf }  :=  \psi_{ \mathcal{A}',  \mathcal{A} }   $ for all equivariant $ \varphi $.

\item
If $ \funcf = \funcf_{0} + \funcf_{1} $ and  $ \funcg = \funcg_{0} + \funcg_{1}$
with $ \funcf_{i} $ and $ \funcg_{i} $ weakly equivalent for $ i = 0 $ and for $ i=1 $,
let $  \varphi^{ \funcf , \funcg } :=  \varphi^{ \funcf_{0} , \funcg_{0} } +  \varphi^{ \funcf_{1} , \funcg_{1} } $
and   $ \varphi^{ \funcg , \funcf }:=  \varphi^{ \funcg_0 , \funcf_0 } +  \varphi^{ \funcg_1 , \funcf_1 }$.

\item
If $ \funcf = \funcf_{0} \times \funcf_{1} $ and $ \funcg = \funcg_{0} \times \funcg_{1}$
with $ \funcf_{i} $ and $ \funcg_{i} $ weakly equivalent for $ i = 0 $ and for $ i=1 $,
let $  \varphi^{ \funcf , \funcg } :=  \varphi^{ \funcf_{0} , \funcg_{0} } \times  \varphi^{ \funcf_{1} , \funcg_{1} } $ 
and $ \varphi^{ \funcg , \funcf } :=  \varphi^{ \funcg_0 , \funcf_0 } \times  \varphi^{ \funcg_1 , \funcf_1 } $.

\item
If $ \funcf = [ \mathcal{B}  \rightarrow  \funcf_{1} ] $ and $ \funcg =  [ \mathcal{B}'  \rightarrow  \funcg_{1} ] $
with  $  \funcf_{1} $  and $ \funcg_{1} $ weakly equivalent,
choose some weak isomorphism pair $ ( \psi_{ \mathcal{B},  \mathcal{B}' }  , \psi_{  \mathcal{B}',  \mathcal{B} } ) :  \mathcal{B} \rightarrow  \mathcal{B}' $.
Let $  \varphi^{ \funcf , \funcg } := (  \psi_{  \mathcal{B}',  \mathcal{B} } \rightarrow   \varphi^{ \funcf_{1} , \funcg_{1} } ) $
and let  $  \varphi^{ \funcg , \funcf } := (  \psi_{  \mathcal{B},  \mathcal{B}' } \rightarrow   \varphi^{ \funcg_{1} , \funcf_{1} } ) $.\qed
\end{enumerate}

\begin{claimapp}
There are families $\{\varphi_{ \beta }: \mathcal{D}_{\beta} \to \mathcal{E}_{\beta} \}_{\beta\leq\gamma} $ 
and  $\{\chi_{ \beta }: \mathcal{E}_{\beta} \to \mathcal{D}_{\beta} \}_{\beta\leq\gamma} $ of
equivariant maps such that each $ ( \varphi_{ \beta } , \chi_{ \beta } ) $ is  a weak isomorphism pair.
\end{claimapp}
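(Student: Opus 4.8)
The plan is to establish, by transfinite induction on $\delta\leq\gamma$, a statement slightly stronger than the claim: there are families $\{\varphi_{\beta}:\mathcal{D}_{\beta}\to\mathcal{E}_{\beta}\}_{\beta\leq\delta}$ and $\{\chi_{\beta}:\mathcal{E}_{\beta}\to\mathcal{D}_{\beta}\}_{\beta\leq\delta}$ of equivariant maps such that (a) each $(\varphi_{\beta},\chi_{\beta})$ is a weak isomorphism pair, and (b) $\{\varphi_{\beta}\}_{\beta\leq\delta}$ and $\{\chi_{\beta}\}_{\beta\leq\delta}$ are uniform mappings (definition~\ref{d_uniformmapping}) between the truncated chains, i.e.\ $g_{\alpha,\beta}\circ\varphi_{\alpha}=\varphi_{\beta}\circ f_{\alpha,\beta}$ and $f_{\alpha,\beta}\circ\chi_{\alpha}=\chi_{\beta}\circ g_{\alpha,\beta}$ for all $\alpha\leq\beta\leq\delta$. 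Condition (b) is the ``extra induction hypothesis'' alluded to above; it is what allows the limit step to invoke lemma~\ref{l_equid3}. For $\delta=0$ one takes $\varphi_{0}=\chi_{0}=\id$, which is legitimate since $\mathcal{D}_{0}=\mathcal{E}_{0}$ is the initial object of $\cclcdomwp$, and (a), (b) hold vacuously.

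For the successor step $\delta=\epsilon+1$ one uses the assignments $\varphi\mapsto\varphi^{\funcf,\funcg}$ and $\varphi\mapsto\varphi^{\funcg,\funcf}$ from the preceding claim: set $\varphi_{\epsilon+1}:=\varphi_{\epsilon}^{\funcf,\funcg}$ and $\chi_{\epsilon+1}:=\chi_{\epsilon}^{\funcg,\funcf}$, which are equivariant maps between $\mathcal{D}_{\epsilon+1}=\funcf(\mathcal{D}_{\epsilon})$ and $\mathcal{E}_{\epsilon+1}=\funcg(\mathcal{E}_{\epsilon})$ by property (1); property (2) then makes $(\varphi_{\epsilon+1},\chi_{\epsilon+1})$ a weak isomorphism pair, giving (a). Condition (b) at the new level is verified by a sub-induction on $\alpha\leq\epsilon+1$: the case $\alpha=\epsilon+1$ is trivial; for $\alpha=\alpha'+1$ one has $f_{\alpha'+1,\epsilon+1}=\funcf(f_{\alpha',\epsilon})$ and $g_{\alpha'+1,\epsilon+1}=\funcg(g_{\alpha',\epsilon})$, and property (3) applied to the level-$\epsilon$ identity $g_{\alpha',\epsilon}\circ\varphi_{\alpha'}=\varphi_{\epsilon}\circ f_{\alpha',\epsilon}$ yields precisely $g_{\alpha'+1,\epsilon+1}\circ\varphi_{\alpha'+1}=\varphi_{\epsilon+1}\circ f_{\alpha'+1,\epsilon+1}$, and symmetrically for $\chi$; for $\alpha=0$ both composites are equiembeddings out of the initial object, hence equal; and for limit $\alpha$ one precomposes with the cocone maps $f_{\nu,\alpha}$, $\nu<\alpha$, and combines the cases of the sub-induction already handled with (b) at level $\alpha\leq\epsilon$.

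At a limit ordinal $\delta$, hypothesis (b) says that $\{\varphi_{\beta}\}_{\beta<\delta}$ and $\{\chi_{\beta}\}_{\beta<\delta}$ are uniform mappings between the $\delta$-chains built from $\funcf$ and from $\funcg$, whose inductive limits are $\mathcal{D}_{\delta}$ and $\mathcal{E}_{\delta}$; by (a) every $(\varphi_{\beta},\chi_{\beta})$, $\beta<\delta$, is a weak isomorphism pair. Lemma~\ref{l_equid3} then produces equivariant $\varphi_{\delta}:\mathcal{D}_{\delta}\to\mathcal{E}_{\delta}$ and $\chi_{\delta}:\mathcal{E}_{\delta}\to\mathcal{D}_{\delta}$ with $\varphi_{\delta}\circ f_{\alpha,\delta}=g_{\alpha,\delta}\circ\varphi_{\alpha}$ and $\chi_{\delta}\circ g_{\alpha,\delta}=f_{\alpha,\delta}\circ\chi_{\alpha}$ — which is (b) at level $\delta$ — and, by the ``moreover'' part of that lemma, with $(\varphi_{\delta},\chi_{\delta})$ a weak isomorphism pair, which is (a). Taking $\delta=\gamma$ gives the claim, and in particular $(\varphi_{\gamma},\chi_{\gamma})$ is the weak isomorphism pair of $\mathcal{D}_{\gamma}$ and $\mathcal{E}_{\gamma}$ used afterwards. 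I expect the genuine work to lie in recognising the need for (b) and carrying it cleanly through the induction; once (b) is in place the successor step is a routine application of properties (2) and (3), the only slightly delicate point being the internal induction on $\alpha$ needed to cover limit ordinals.
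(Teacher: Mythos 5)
Your proposal is correct and follows essentially the same route as the paper's own proof: the same strengthened induction hypothesis (the uniform-mapping condition), the same use of properties (1)--(3) of the assignments $\varphi\mapsto\varphi^{\funcf,\funcg}$ at successor stages, the same appeal to lemma~\ref{l_equid3} at limit stages, and the same sub-induction (with the same case split, including precomposition with the cocone maps at limit $\alpha$) to verify uniformity. The only difference is organisational: the paper defines the whole family first and then verifies the two properties in separate inductions, while you carry both through a single induction on $\delta$.
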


\proof 
We define families $\{\varphi_{ \beta }: \mathcal{D}_{\beta} \to \mathcal{E}_{\beta} \}_{\beta\leq\gamma} $ 
and  $\{\chi_{ \beta }: \mathcal{E}_{\beta} \to \mathcal{D}_{\beta} \}_{\beta\leq\gamma} $ of
equivariant maps such that
\begin{enumerate}[(1)]
\item each $ ( \varphi_{ \beta } , \chi_{ \beta } ) $ is  a weak isomorphism pair; and
\item  
$ \{ \varphi_{ \alpha } \}_{ \alpha \in \beta } $ is a uniform mapping (see definition~\ref{d_uniformmapping}) from 
$   ( \{ \mathcal{D}_{\alpha}  \}_{ \alpha \in  \beta }, \{ f_{\alpha , \alpha' } \}_{ \alpha \leq \alpha'  \in \beta }) $ 
to $   ( \{ \mathcal{E}_{\alpha}  \}_{ \alpha \in  \beta }, \{ g_{\alpha , \alpha' } \}_{ \alpha \leq \alpha'  \in \beta }) $ 
 if $ \beta \leq \gamma$ is a limit ordinal.
\end{enumerate}
The second point is necessary as part of the induction hypothesis for the definition of $\varphi_\beta$ when $\beta$ is a limit ordinal.

Since $\funcf$ and $\funcg $ are weakly equivalent, we can choose 
 assignments $ \varphi  \mapsto  \varphi^{ \funcf , \funcg } $ and $ \varphi  \mapsto  \varphi^{ \funcg , \funcf } $ as in the claim above.

First, we define $ \varphi_{ \beta } $ by transfinite induction on $ \beta $:
\begin{enumerate}[$\bullet$]
\item Let $ \varphi_{ 0}  := \id_{ D_{0} }  = \id_{E_0}$.
\item If $ \beta$ is a successor ordinal, let  $ \varphi_{ \beta } := \varphi_{ \beta-1 }^{ \funcf , \funcg } $.
\item If $\beta $ is a limit ordinal, then $\{\varphi_{ \alpha } \}_{ \alpha \in \beta } $ is a uniform mapping 
from 
$   ( \{ \mathcal{D}_{\alpha}  \}_{ \alpha \in  \beta }, \{ f_{\alpha , \alpha' } \}_{ \alpha \leq \alpha'  \in \beta }) $ 
to $   ( \{ \mathcal{E}_{\alpha}  \}_{ \alpha \in  \beta }, \{ g_{\alpha , \alpha' } \}_{ \alpha \leq \alpha'  \in \beta }) $
by the induction hypothesis.
Let  $ \varphi_{ \beta } $ be the unique equivariant map from  $ \mathcal{D}_{\beta} $ into $\mathcal{E}_{\beta}$ such that
$ \varphi_{ \beta }\circ f_{\alpha , \beta} = g_{\alpha, \beta} \circ \varphi_\alpha $ for all $\alpha\in\beta$. This map exists by
 lemma~\ref{l_equid3}.
\end{enumerate}
We define $\chi_\beta$ symmetrically (just by swapping $\funcf$ and $\funcg$). 

\begin{enumerate}[(1)]
\item
By a transfinite induction on $\beta$, we show that
$ ( \varphi_{ \beta  } , \chi_{ \beta  } )$ is a weak  isomorphism pair:
\begin{enumerate}[$\bullet$]
\item For a successor ordinal $\beta $, this follows from the claim above since
$ ( \varphi_{ \beta  } , \chi_{ \beta  } )=  ( \varphi_{ \beta-1  }^{ \funcf , \funcg } , 
\chi_{ \beta-1  }^{ \funcg , \funcf } )$.
\item If $\beta$ is a limit ordinal, then it follows from the induction hypothesis by symmetry that $\{\chi_\alpha \}_{\alpha \in \beta} $ 
is a uniform mapping. Thus,  $ ( \varphi_{ \beta  } , \chi_{ \beta  } )$ is a weak  isomorphism pair 
by lemma~\ref{l_equid3}. \end{enumerate}

\item
Let  $\beta$ be a limit ordinal. 
It is sufficient to prove that
$g_{\alpha,\alpha'}\circ \varphi_\alpha = \varphi_{\alpha'} \circ f_{\alpha, \alpha'} $ for all $\alpha\leq\alpha'\in\beta$,
since this implies that $\{\varphi_{ \alpha } \}_{ \alpha \in \beta } $ is a uniform mapping.
We prove this by a transfinite induction on $ \alpha $ and $\alpha'$:
\begin{enumerate}[$\bullet$]
\item $g_{0,\alpha'}\circ \varphi_0 = \varphi_{\alpha'} \circ f_{0, \alpha'} $ trivially for all $\alpha'$.
\item If $\alpha,\alpha'$ both are successor ordinals, then
\[ g_{\alpha, \alpha'  } \circ \varphi_{\alpha }  
= \funcg (  g_{\alpha-1, \alpha'-1 } ) \circ \varphi_{\alpha-1 }^{ \funcf , \funcg } 
 =  \varphi_{\alpha'-1  }^{ \funcf , \funcg } \circ \funcf (  f_{\alpha-1, \alpha'-1 }  ) 
 =  \varphi_{\alpha'  } \circ f_{\alpha , \alpha'  } .\]
\item If $\alpha'$ is a limit ordinal, then $  \varphi_{\alpha'} \circ f_{\alpha, \alpha'} = g_{\alpha,\alpha'}\circ \varphi_\alpha$
by definition of $\varphi_{\alpha'}$.
\item If $\alpha$ is a limit ordinal, choose arbitrary $\alpha''\in\alpha$. By the induction hypothesis, we have 
 $  \varphi_{\alpha'} \circ f_{\alpha'', \alpha'} = g_{\alpha'',\alpha'}\circ \varphi_{\alpha''}$. Then
\begin{eqnarray*}
(g_{\alpha,\alpha'} \circ \varphi_{\alpha})\circ f_{\alpha'', \alpha} &
= g_{\alpha,\alpha'} \circ ( g_{\alpha'',\alpha} \circ \varphi_{\alpha''} )\\
& = g_{\alpha'',\alpha'} \circ  \varphi_{\alpha''} \\
& =  \varphi_{\alpha'} \circ f_{\alpha'',\alpha'} \\
&= ( \varphi_{\alpha'} \circ f_{\alpha,\alpha'})  \circ f_{\alpha'',\alpha}
\end{eqnarray*}
which shows that $g_{\alpha,\alpha'}\circ \varphi_\alpha = \varphi_{\alpha'} \circ f_{\alpha, \alpha'} $ since $\alpha''$ was arbitrary.\qed
\end{enumerate}
\end{enumerate}

\section{Notation list} \label{app.notation}

\noindent Notation which is repeated in different proofs is tentatively put together in this list.

\begin{enumerate}[$\bullet$]

\item  $D_0$ is the trivial domain. 
\item $\mathcal{D}_0$ is the trivial domain-per.

\item $\mathbb{N}_\bot $ is the flat domain of  natural numbers.
\item $\mathcal{N} $ is the domain-per with $\mathbb{N}_\bot $ as domain and equality restricted to $\mathbb{N} $ as partial equivalence relation.

\item Let $\mathcal{D}= (D, \approx) $ be a domain-per.
\begin{enumerate}[$-$]
\item  $\mathcal{D}^R $ is the subspace $ \{ x \in D : x \approx x \}$ of $D$.
\item If $ x \in  \mathcal{D}^R $, then 
$[x] = \{ x' \in D : x \approx x' \} $.
\item $ \mathcal{Q} \mathcal{D}  $ is the quotient space of  $ \mathcal{D}^{R}$ under
the equivalence relation $\approx\vert_{ \mathcal{D}^R } $.
\item $(D, \mathcal{D}^R , \delta_{\mathcal{D}})$ is a quotient domain representation of $ \mathcal{Q} \mathcal{D}  $.
\item If $p \in D_c $ and $x \in  \mathcal{D}^R $, then $ p \prec_{  \mathcal{D} }  [x] $, if  there exists some $ x' \in  [x]  $ with $ p \sqsubseteq x' $.
\item $ \mathcal{D}^{d} $ is  the dense part of  $ \mathcal{D} $, see definition~\ref{d_densepart}.
\end{enumerate}

\item Let $f: \mathcal{D} \to \mathcal{E} $ be an equivariant map. 
\begin{enumerate}[$-$]
\item  $f^{\mathcal{Q}}:  \mathcal{Q}\mathcal{D} \to  \mathcal{Q}\mathcal{E} $ is the continuous map $ f^{\mathcal{Q}} ([x]) = [f(x)] $.
\item $ f [  \mathcal{D} ] $ is the image of $\mathcal{D} $ under $f$, see definition~\ref{d_image}.
\end{enumerate}

\item  Let $ ( \{ \mathcal{D}_{i}  \}_{ i \in I }, \{ f_{i,j} \}_{ i \leq j \in I })  $ be a directed system over $(I,\leq)$ in $ \cclcdomwp $.
\begin{enumerate}[$-$]
\item $D_i$ is the underlying domain of $\mathcal{D}_{i}$.
\item $D_I $ is $     \{ x \in \prod_{i \in I} D_{i} : \forall i,j \in I  ( i\leq j \to f^{-}_{i,j} ( x_{j} ) = x_{i}) \}  $ with the product order.
\item $ f^{-}_{i}: D_I \to D_i $ is the projection $ f^{-}_{i} (x) = x_{i} $.
\item 
 $ x \approx_I x'$, and this is witnessed by $i$, if
\[ \exists  i \in I \ 
(x_i \approx_i x'_i \wedge  \forall k\geq i \: (f_{i,k}(x_i)\approx_k x_k \wedge
f_{i,k}(x'_i)\approx_k x'_k)).\]
\item  $ \mathcal{D}_{I} $ is  $D_I  $ with per $ \approx_I $.
\end{enumerate}

\item $\hat{\funcf} : \cdom \to \cdom $ is the underlying functor of $\funcf : \cclcdomwp \to \cclcdomwp $, as described in remark~\ref{r_undfunctor}.

\item Let $\funcf : \cclcdomwp \to \cclcdomwp $ be a strictly positive functor. Assume that all non-positive parameters are dense.

Let $\mathcal{D} $ be a convex, local and complete domain-per.
\begin{enumerate}[$-$]
\item $ \funcf (  \mathcal{D} )^{d} $ is the dense part of $ \funcf (  \mathcal{D} ) $ with underlying domain $D^\funcf $, see definition~\ref{d_densepart}.
\item $ \mathcal{T}_\funcf  $ is the dense domain-per defined in the proof of lemma~\ref{l_eta1}. The underlying domain is $T_\funcf$.
\item $ \{ \funcf^{k} \}_{ k \in K_{\funcf} } $ is the set of  atomic subfunctors  of $ \funcf $ with
repetition allowed.
\item 
$ \eta_\funcf : \funcf (  \mathcal{D} )^{d}  \rightarrow  [   \mathcal{T}_\funcf  \rightarrow  \biguplus_{ 
k \in K_\funcf }  \funcf^{k} ( \mathcal{D}  ) ] $  is the equivariant and equi-injective map
defined in the proof of lemma~\ref{l_eta1}.
\item  $ E^{ \funcf } $ is the underlying domain of $  \eta_\funcf [  \funcf  ( \mathcal{D} ) ]^{d} $, the dense part of the image of  $ \funcf (  \mathcal{D} )^{d} $ under $ \eta_\funcf $, see definition~\ref{d_image}.
\item  $ \vartheta_\funcf :   E^{ \funcf } \rightarrow D^{ \funcf }  $ is the lower adjoint of $ \eta_\funcf $, defined in the proof of lemma~\ref{l_eta2}.
\end{enumerate}

\item Let $ \funcf : \cclcdomwp \to \cclcdomwp $ be a strictly positive functor. Assume that all  parameters are dense and admissible.
\begin{enumerate}[$-$]
\item  $  \mathcal{D} $ is the dense  least fixed point of $ \funcf $ defined in subsection~\ref{ss_density}. The underlying domain is $D$.
\item  $  \mathcal{T} $ is  the dense and admissible  domain-per $ \mathcal{T}_\funcf  $ defined in the proof of lemma~\ref{l_eta1}.  The underlying domain is $T $.
\item  $  \eta : \mathcal{D}  \rightarrow^{} [  \mathcal{T}  \rightarrow \biguplus_{ k \in K } \funcf^{k} ( \mathcal{D} ) ]$
is the equivariant and equi-injective map $\eta_\funcf $ as defined in the proof of lemma~\ref{l_eta1}.
\item  $  \mathcal{U} $ is  the dense and admissible  domain-per defined in the proof of lemma~\ref{l_eta4}. The underlying domain is $U$.
\item $  \mathcal{A}_{0}, \ldots ,  \mathcal{A}_{N} $ are the positive parameters of $ \funcf $. The underlying domains are $A_0, \dots , A_N $.
\item $ \mathcal{E} $ is the domain-per $ (  \biguplus_{ n \leq N}  \mathcal{A}_{n}   ) \otimes \mathcal{N} $. The underlying domain is $E$.
\item $ \bar{\eta } :  \mathcal{D} \rightarrow    [ \mathcal{U} \rightarrow \mathcal{E}  ]  $  is the equivariant and equi-injective map
defined in the proof of lemma~\ref{l_eta4}.
\item If $(x,u ) \in D \times U $,
\begin{enumerate}[$*$]
\item the evaluation sequence  $ \{ d^{m}_{ (x,u) }  \}_{ m <  M } $ over $ D$,
\item the evaluation path  $ \{ k^{m}_{ (x,u) }  \}_{ m <  M } $ over $ K_\funcf$ and
\item the evaluation result $ z_{(x,u)}^M   \in  \biguplus_{ n \leq N}  A_{n} $ 
\end{enumerate} are  defined in  the proof of lemma~\ref{l_eta4}.
\item $  \bar{\eta }[ \mathcal{D}]^{d}$ is the dense part of the image of  $  \mathcal{D} $ under $ \bar{\eta } $, see definition~\ref{d_image}.
\end{enumerate}

\end{enumerate}	

\section*{Acknowledgements}

\noindent The research is funded by  the Norwegian Research Council as part of the
project 'Computability and Complexity in Type Theory'. The work is
carried out mainly at the Department of Mathematics, University of Oslo, but in
part also at the Department of Computer Science, University of Wales Swansea.

I am grateful to  Dag Normann for bringing up the main problem treated in 
this paper and for his continuous encouragement and guidance.
I would also like to thank Jens Blanck for useful comments and suggestions 
on an early draft of this paper, and Ulrich Berger, Fredrik Dahlgren and 
John V. Tucker for enlightening discussions within the field. 

Finally, I am thankful to an anonymous referee for spotting a crucial mistake in the first 
submission of this paper, thus giving inspiration for this improved version.
I also appreciate the  referees' suggestions and comments  which have  helped me to increase the readability of the paper.

\end{document}